\numberwithin{equation}{section}
\newtheorem{theorem}{Theorem}
\newtheorem{lemma}{Lemma}
\newtheorem{corollary}{Corollary}
\newtheorem{Proposition}{Proposition}
\newtheorem{RHP}{RHP}
\begin{document}

\title{ On the long-time asymptotics   of the  modified Camassa-Holm equation with step-like initial data}
\author{Yiling YANG$^1$\thanks{\ Email address: 19110180006@fudan.edu.cn } \  , Gaozhan LI$^1$\thanks{\ Email address: 20110180004@fudan.edu.cn } \   and  \  Engui FAN$^{1}$\thanks{\ Corresponding author and email address: faneg@fudan.edu.cn } }
\footnotetext[1]{ \  School of Mathematical Sciences  and Key Laboratory   for Nonlinear Science, Fudan   University, Shanghai 200433, P.R. China.}

\date{ }

\maketitle
\begin{abstract}
\baselineskip=18pt

We   study the long time asymptotic behavior for   the Cauchy problem   of the modified Camassa-Holm (mCH) equation with step-like initial data
\begin{align}
&m_{t}+\left(m\left(u^{2}-u_{x}^{2}\right)\right)_{x}=0, \quad m=u-u_{xx}, \nonumber \\
&u(x,0)=u_0(x)\to \left\{ \begin{array}{ll}
		A_1,   &\  x\to+\infty,\\[5pt]
	A_2,   &\  x\to-\infty,
	\end{array}\right.\nonumber
\end{align}
where  $A_1$ and $A_2$ are two positive constants.
 Our  main technical  tool is the representation of  the Cauchy problem
  with an associated matrix Riemann-Hilbert (RH) problem
and the consequent asymptotic analysis of this RH problem.   Based on the spectral analysis of the  Lax pair associated with the   mCH  equation and scattering matrix,  the solution of the  step-like initial problem  is  characterized   via the solution of a  RH problem  in the new scale $(y,t)$. 
 We adopt double  coordinates $(\xi, c)$ to  divide   the  half-plane  $\{ (\xi,c): \xi \in \mathbb{R}, \ c> 0, \ \xi=y/t\}$
   into four   asymptotic regions.
 Further using  the Deift-Zhou steepest descent method,
  we derive   different long time asymptotic expansion of the solution $u(y,t)$   in  different space-time   regions
   by the different choice of g-function.
The corresponding leading asymptotic  approximations  are given    with    the slow/fast  decay step-like background
wave in  genus-0 regions  and  elliptic  waves  in genus-2 regions.
The second term of the asymptotics is characterized by Airy function or parabolic cylinder model.
 Their residual error order is $\mathcal{O}(t^{-1})$ or $\mathcal{O}(t^{-2})$ respectively.\\[6pt]
\noindent {\bf Keywords:}   Modified Camassa-Holm   equation, step-like initial value,  Riemann-Hilbert problem,    steepest descent method,  long time asymptotics,
Airy functions, hyperelliptic functions.\\[6pt]
\noindent {\bf MSC 2020:} 35Q51; 35Q15; 37K15; 35C20.

\end{abstract}

\baselineskip=18pt

\tableofcontents

\section {Introduction}
\quad
The Camassa and Holm (CH) equation
\begin{align}
	&m_t+ (um )_x+  u_x m=0, \quad m=u-u_{x x}   \nonumber
\end{align}	
  was
first introduced by Camassa and Holm   in  \cite{Holm1} as a
model for shallow water waves, but it already appeared earlier in a list by Fuchssteiner
and Fokas \cite{Fuchssteiner1}.
The CH equation   has   attracted considerable interest and been studied extensively  due to their   rich mathematical structure and
 remarkable properties, such as   peakon  solutions,  bi-Hamiltonian,   algebro-geometric solutions,   local and
global well-posedness of the Cauchy problem     \cite{Constantin1,Constantin2,Fritz1,Qiao3, Monvel22, Minakov,
Eckhardt0, Eckhardt1, Eckhardt2,CH,CH1}.

It is observed that all nonlinear terms in the CH equation are quadratic. Over the last few years, various modifications and generalizations of the CH equation have been introduced.
For example,  Novikov  applied the  symmetry approach   to classify integrable equations of the form
$$(1-\partial_x^2) u_t=F(u, u_x, u_{xx}, \cdots) $$
into two integrable CH-type equations with cubic  nonlinearity     \cite{Novikov1}.
One is the well-known   mCH equation
 \begin{align}
	&m_{t}+\left(m\left(u^{2}-u_{x}^{2}\right)\right)_{x} =0, \quad m=u-u_{x x},\label{mcho}
\end{align}	
 and   another one is  called the  Novikov equation
 \begin{align}
&m_{t}+\left(m_xu+3mu_x\right)u=0,   \quad m=u-u_{x x}.
\end{align}
In an equivalent form, the mCH equation was given by Fokas \cite{Fokas}, Fuchssteiner \cite{BF1996}, Olver and
Rosenau \cite{PP1996}  and Qiao \cite{Qiao}, where the equation was derived from the two-dimensional
Euler system and   the M/W-shape solitons and peakon/cuspon solutions were presented. So the mCH equation (\ref{mcho})
   is also   referred to as the Fokas-Olver-Rosenau-Qiao equation  \cite{THFan},
but is mostly known as the  mCH  equation. The  mCH equations have non-smooth solitons (called peakons) as solutions \cite{Qiao,CS1,CS2}.
 The  stability and orbital stability  of peakons  for the mCH equation were further shown by Qu and Liu \cite{qu9,qu10}.
The well-posedness for  the  Cauchy problem  of the mCH equation (\ref{mcho}) was  studied  \cite{McLachlan,qu4,qu5}.
The local well-posedness and  the precise blow-up phenomena for the Cauchy problem of the mCH equation were   discussed \cite{McLachlan2,Qiao5}.
The  wave-breaking and peakons for the mCH equation were   investigated by   Gui,   Liu,   Olver  and   Qu   \cite{qu7}.
The  algebro-geometric quasiperiodic solutions were constructed by using algebro-geometric method \cite{THFan}.
With the aid of reciprocal transformation,  Backlund transformation  and
nonlinear superposition formula  for the mCH equation were given \cite{WLM}.  The  local well-posedness for classical solutions and global weak
solutions to the mCH equation (\ref{mcho}) were considered  in Lagrangian coordinates \cite{Gao2018}.
Applying the scaling transformation and taking parameter limit,  the  mCH  equation (\ref{mcho}) can  reduce  a   short pulse equation \cite{sp1,RHPsp,wellD}.

 Note that the soliton-type solutions of the mCH equation (\ref{mcho}) vanishing at infinity  are weak solutions in the form of peaked waves,
which are orbitally stable \cite{qu9,qu10}.
 On the other hand, adding to the original  mCH equation  (\ref{mcho})  a linear dispersion term $\kappa u_x$ with $\kappa >0$ leads to a form of the mCH equation \cite{Qiao2,qu8,Matsuno2}
\begin{align}\label{mch}
	&m_{t}+\left(m\left(u^{2}-u_{x}^{2}\right)\right)_{x}+\kappa u_{x}=0, \quad m=u-u_{x x},
\end{align}	
where $\kappa$ characterizes the effect of the linear dispersion.  It can be shown that  the    mCH equation  (\ref{mcho}) on a nonzero background  or  the mCH equation  (\ref{mch})
   with decaying  initial data
\begin{align}\label{mch1}
	&u(x, 0)=u_{0}(x), \quad x \in \mathbb{R},\  t>0
\end{align}	
may  support   smooth soliton solutions \cite{Matsuno2,Ivanov,Matsuno1}.   The mCH equation  (\ref{mch}) admits a Lax pair and  its  smooth dark soliton
solutions   were obtained   by the method of inverse scattering  transformation method \cite{Qiao2,Ivanov}.
  By using a reciprocal transformation and the  Hirota bilinear method,
Matsuno obtained the smooth bright multisoliton solutions for the  mCH equation (\ref{mch}) \cite{Matsuno2,Matsuno1}.
 Boutet de Monvel, Karpenko and Shepelsky first developed a RH approach to deal with  the mCH equation (\ref{mch}) with nonzero boundary
conditions \cite{Mon}.  They further present the results of
the asymptotic analysis in the solitonless case for the two space-time regions  $3/4 <\xi <1, \ 1< \xi <3$ with $\xi=y/t$ \cite{Mon2}.   Xu and Fan  applied Deift-Zhou
steepest decedent method to obtain  long-time asymptotic behavior of (\ref{mch})  with Schwartz initial value  \cite{Xurhp}.
 Very recently  we applied the  $\bar{\partial}$-steepest decedent method to  obtain its long-time asymptotics to the mCH equation (\ref{mch})
 with weighted Sobolev initial value \cite{YYLmch}.

The implementation of the rigorous asymptotic analysis to step-like Cauchy problems
for integrable equations started in the papers \cite{Monvel2009,Buck2007}, which extended the methods from Deift, Venakides, and Zhou \cite{gfunction}.  Since then, problems with step-like initial data  have
also been considered for a variety of integrable systems such as  the KdV equation \cite{KDV2012,KDV2016}, the focusing and defocusing NLS equations \cite{Biondini1,Biondini2,Monvel2011,MonvelCMP1,MonvelCMP2,lenells,jenkins},  the modified KdV equation \cite{MinmKDV1,MinmKDV2,MinmKDV3,MinmKDV4,MinmKDV5} and Camassa-Holm equation\cite{MinCH} among many
others. A wide range of important physical phenomena manifest themselves in the behavior of solutions of such problems for large times, e.g., collisionless and dispersive
shock waves \cite{NLS2018}, rarefaction waves \cite{jenkins}, modulated waves\cite{Deift1991}, elliptic waves\cite{Monvel2011} and so on.  The main feature in the long-time behavior
that distinguishes step-like initial conditions from decaying initial conditions is the
formation of an oscillatory region that connects the different behavior at $x\to\pm\infty$ of the
solution. These oscillatory regions are typically described by elliptic or hyperelliptic
modulated waves.

Very  recently,   Karpenko,  Shepelsky  and   Teschl  develop the RH formalism to the mCH equation  (\ref{mcho}) with   step-like initial data
  \begin{align}
  u(x,0)=u_0(x)\to \left\{ \begin{array}{ll}
		A_1,   &\  x\to+\infty,\\[5pt]
	A_2,   &\  x\to-\infty,
	\end{array}\right. \label{step-1}
\end{align}
where  $u(x, t)$ sufficiently fast approaches its large-$x$ limits.
A representation for the solution of this
problem was given in terms of the solution of an associated RH problem \cite{KSG2022}.

In this paper, we are interested in the long-time asymptotics   of the  mCH  equation (\ref{mcho})  with such  step-like initial data (\ref{step-1}).
For convenience to study the long time asymptotic behavior  of  the
  mCH  equation (\ref{mcho}),   we  properly deal with this  initial value.
Without loss of generality,  we  assume that $A_1<A_2$ in our initial value.
Further noting  that if   $u(x,t)$ is a solution of  the  mCH  equation (\ref{mcho}),
 then for an any constant $c\not =0$,  the function  $ c  u(x,-c^2t)$  also is a solution of the mCH equation (\ref{mcho}).
 So we can use   an  equivalent  scale transformation to $u(x,t)$ such that   $A_2=1$ and  $A_1=1/c$ with a constant $c>1$.
Finally  we consider  the following  step-like initial value in our paper
  \begin{align}
  	u(x,0)=u_0(x)\to \left\{ \begin{array}{ll}
  		1/c,   &\  x\to+\infty,\\[5pt]
  		 1,   &\  x\to-\infty.
  	\end{array}\right.\label{initial}
  \end{align}
In this way, we find that  the types   of   asymptotic expansions  for   the  mCH equation (\ref{mcho})
 are closely related to the scope of two parameter $\xi=y/t$ and $c$.   So in our paper   we adopt double  coordinates $(\xi, c)$  to divide
the upper half plane $\{(\xi, c): \xi \in \mathbb{R}, c>1\}$  into  four  different  space-time
  regions (see Figure \ref{result1}),
 in which  we will  present   different leading
 order     asymptotic approximations  for the  mCH equation (\ref{mcho})
  with step-like initial value (\ref{initial}),
 see  Theorem \ref{last} in the section \ref{sec9}.
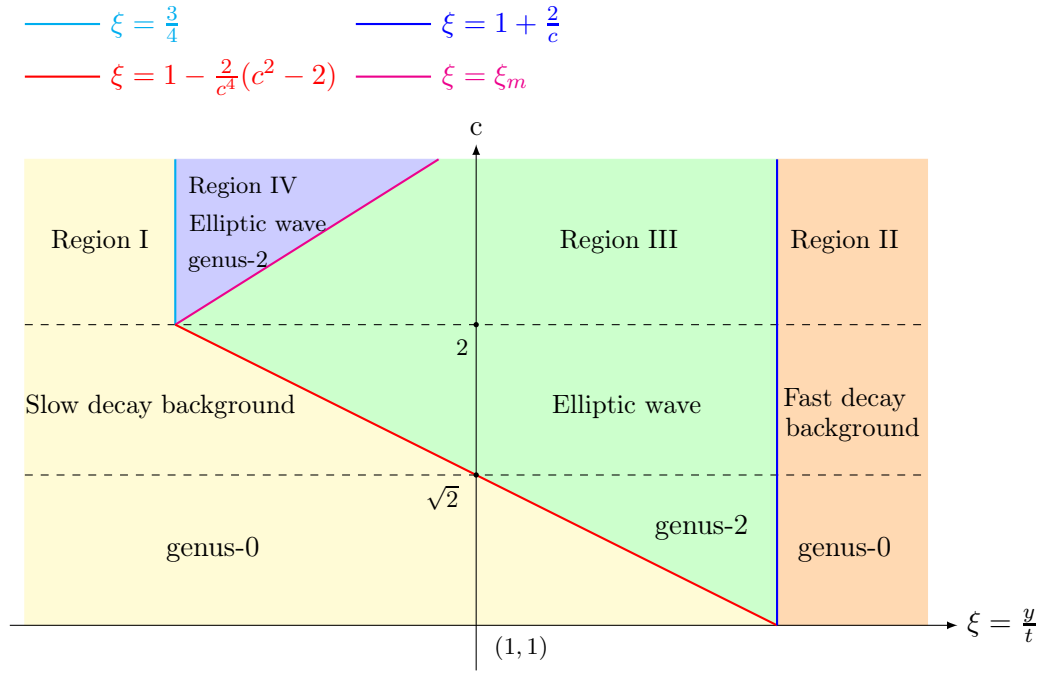
\begin{figure}
\begin{center}
\begin{tikzpicture}
\draw[yellow!20, fill=yellow!20](-4,4)--(-4,6.2)--(-6,6.2) --(-6,0)--(4,0)--(-4,4);
\draw[orange!30, fill=orange!30] (4,0)--(6,0)--(6,6.2)--(4,6.2)--(4,0);
\draw[green!20, fill=green!20] (4,0)--(4,6.2)--(-0.5,6.2)--(-4,4)--(0,2)--(4,0);
\draw[blue!20, fill=blue!20] (-4,4)--(-4,6.2)--(-0.5,6.2)--(-4,4);
\draw [ -latex ] (-6.2,0)--(6.4,0);
\draw [ dashed ] (-6,2)--(6,2);
\draw [ dashed ] (-6,4)--(6,4);
\draw [ -latex ](0,-0.6)--(0,6.4);
\draw[cyan,thick](-4,6.2)--(-4,4);
\draw[cyan,thick](-6,8)--(-5,8)node [right]{$\xi=\frac{3}{4}$};
\draw[red,thick](-6,7.3)--(-5,7.3)node [right]{$\xi=1-\frac{2}{c^4}(c^2-2)$};
\draw[magenta,thick](-4,4)--(-0.5,6.2);
\draw[magenta,thick](-1.6,7.3)--(-0.6,7.3)node [right]{$\xi=\xi_m$};
\draw[color=red,thick](-4,4)--(4,0);
\draw[blue,thick](4,0)--(4,6.2);
\draw[blue,thick](-1.6,8)--(-0.6,8)node [right]{$\xi=1+\frac{2}{c}$};
\node    at (  0.6,-0.3)  {\footnotesize $ (1,1)$};
\node    at (7,0)  {$\xi=\frac{y}{t}$};
\fill (0,2)  circle (1pt);
\fill (0,4)  circle (1pt);
 \node [right] at (-0.4, 3.7) {\footnotesize $2$};
\node    at (0,6.6 )  {c};
\node  [below]  at (-5,5.4) {\small Region  I };
\node  [below]  at (-4.2,3.2) {\small Slow decay background};
\node  [below]  at (-3.5,1.3) {genus-0};
\node  [below]  at (4.9,5.4) {\small Region II};
\node  [below]  at (4.9,3.3) {\small Fast decay };
 \node  [below]  at (5,2.9) {\small background  };
\node  [below]  at (4.9,1.3) {genus-0};
\node  [below]  at (-3.1,6.1) {\footnotesize Region IV};
\node  [below]  at (-2.9,5.6) {\footnotesize Elliptic wave};
\node  [below]  at (-3.3,5.1) {\footnotesize genus-2};
\node  [below]  at (1.9,5.4) {\small Region  III };
\node  [below]  at (2,3.2) {\small Elliptic wave  };
\node  [below]  at (3,1.6) {genus-2};
\node [left] at (-0.1,1.7)  {\footnotesize $\sqrt{2}$};
\end{tikzpicture}
\end{center}
\caption{\footnotesize  Asymptotic approximations    of the  mCH equation  in  different space-time-$(\xi,c)$   regions,
where  the  regions  I and II   corresponding to genus-0, they  are slow-decay and fast-decay background  regions, respectively;
  The  Regions III and   IV    corresponding to genus-2 region, they are the first-type  and second-type   elliptic wave regions.
  Here  $\xi_m$ is the critical condition that under the case of Region  III, the stationary phase point of $g$-function  merge $c$. }
\label{result1}
\end{figure}

Our paper is arranged as follows.  In Section \ref{sec2},
 we study   the eigenfunctions and the corresponding spectral functions associated with
 step-like initial value (\ref{initial}).   Further  we  analyze their   analyticity,  symmetries and asymptotic
to    construct   the   RH  problem  for  $M(z)$ of step-like initial value problem,  which will be used
 to analyze   long-time asymptotics  of the mCH equation in our paper.
 In Section \ref{sec3} and section \ref{sec+},    we  construct   the RH problem associated with the  Regions
    I   and  II,   further    transform   it  into  a model RH problem.
  In Sections  \ref{sec5} and \ref{sec8},  to analyze the RH problem  in the   regions III and IV,
  we introduce a $g$-function in  genus
   two Riemann surface and  transform  the original
 RH problem  to a hybrid RH problem $M^{(2)}(z)$, which is  further  decompose    into a $M^{mod}(z)$
 model problem and an inner local problems.
The  $M^{mod}(z)$  contributes to the leading term of the asymptotics and
 is given by Riemann theta functions attached to a hyperelliptic Riemann surface in  subsection \ref{secmod} and subsection \ref{secmod2} in different region.
 Finally, in Section \ref{sec9},   based on  a series of transformations  above,  a decomposition formula   for  $M(z)$
from which   we then obtain the   long-time   asymptotic behavior  for  the solutions of   the  Cauchy  problem of the mCH equation (\ref{mcho}) and  (\ref{initial})
 via a  reconstruction formula.   The main result is summarized  in the  Theorem \ref{last}.

\section {Direct scattering   and the   RH problem}\label{sec2}

\subsection{Spectral analysis on the Lax pair}

\quad The mCH equation (\ref{mcho})    admits the Lax pair \cite{Mon}
\begin{equation}
\Phi_x = X \Phi,\hspace{0.5cm}\Phi_t =T \Phi, \label{lax0}
\end{equation}
where
\begin{equation}
	X=\frac{1}{2}\left(\begin{array}{cc}
		-1 & z m \\
		-z m & 1
	\end{array}\right),\nonumber
\end{equation}
\begin{equation}
	T=\left(\begin{array}{cc}
		z^{-2}+\frac{u^2-u_x^2}{2} & -z^{-1} (u-u_x)-\frac{z}{2}(u^2-u_x^2)m \\[5pt]
		z^{-1} (u-u_x)+\frac{z}{2}(u^2-u_x^2)m & -z^{-2}-\frac{u^2-u_x^2}{2}
	\end{array}\right).\nonumber
\end{equation}
Since  the Lax pair (\ref{lax0}) admit spectral singularity at $z=\infty$ and $z=0$,
we  should  control  the  asymptotic behavior of   the  eigenfunction $\Phi$   as $z\to \infty$ and $z\to 0$    for any real constant $c$.

\noindent \textbf{Case I. $z=\infty$.}

We denote a matrix function relying on $c$
\begin{equation}
	D_c(z)=\frac{1}{2}\left(\begin{array}{cc}
		\phi_c(z)+\phi_c(z)^{-1} & \phi_c(z)^{-1}-\phi_c(z)\\
		\phi_c(z)^{-1}-\phi_c(z) & \phi_c(z)+\phi_c(z)^{-1}
	\end{array}\right),
\end{equation}
where $\phi_c(z)$ is a branch function given by
$$\phi_c(z)=\left( \frac{c+z}{c-z}\right) ^{1/4} \sim e^{\frac{3i\pi}{4}}+\mathcal{O}(z^{-1}),\hspace{0.5cm}z\to\infty. $$
 Obviously,
$$D_c(z+0i)=\left(\begin{array}{cc}
0 & -i\\
	-i & 0
\end{array}\right)D_c(z-0i), \ \ z\in [-c,c], $$
 and $\phi_c(z+0i)=i\phi_c(z-0i)$.  We  define   two gauge transformations
\begin{align}
	\Psi^\pm(z)=D_{c_\pm}(z)\Phi(z),
\end{align}
where  $c_+=c$, $c_-=1$, the  $\Psi^\pm(z)$  satisfy  the following  Lax pair
\begin{align}
	& \Psi_{x}^\pm  = -\frac{im\sqrt{z^2-c_\pm^2}}{2}\sigma_3\Psi^\pm+P^\pm\Psi^\pm,\label{lax0.1}\\
	& \Psi_{t}^\pm  =i\sqrt{z^2-c_\pm^2}\left( \dfrac{m(u^2-u_x^2)}{2}+\dfrac{1}{c_\pm z}\right) \sigma_3\Psi^\pm+L^\pm\Psi^\pm, \label{lax0.2}
\end{align}
where
\begin{align*}
	P^\pm=&i\dfrac{c_\pm m-1}{2\sqrt{z^2-c_\pm^2}}\left(\begin{array}{cc}
		c_\pm & z \\
		-z & -c_\pm
	\end{array}\right), \nonumber\\
	L^\pm=&i\left( \dfrac{c_\pm(u^2-u_x^2)(1-c_\pm m)}{2\sqrt{z^2-c_\pm^2}}-\dfrac{u-1/c_\pm}{\sqrt{z^2-c_\pm^2}}\right) \sigma_3+\dfrac{u_x}{c_\pm}\sigma_1\\
	&+i\left( \dfrac{z(u^2-u_x^2)(1-c_\pm m)}{2\sqrt{z^2-c_\pm^2}}-\dfrac{c_\pm u-1}{z\sqrt{z^2-c_\pm^2}}\right) \left(\begin{array}{cc}
		0 & 1 \\
		-1 & 0
	\end{array}\right).
\end{align*}
Further  we introduce a transformation
\begin{align}
	\mu^\pm(z)=\Psi^\pm(z)e^{itp_\pm(z)\sigma_3},\label{transmu}
\end{align}
where
\begin{align}
	tp_\pm(z)&=\dfrac{\sqrt{z^2-c_\pm^2}}{2}\left( \int_{+\infty}^{x}(m(s)-1/c_\pm)ds+\frac{x}{c_\pm}-\frac{2t}{c_\pm z^2}+\frac{t}{c_\pm^3} \right).
\end{align}
Then  $\mu^\pm(z)$ admit asymptotics
\begin{align}
	\mu^\pm (z) \sim I, \hspace{0.5cm} x \rightarrow \pm\infty,\nonumber
\end{align}
and satisfy  a new Lax pair
\begin{align}
	& \mu^{\pm }_x  = -i\partial_x(tp_\pm)[\sigma_3,\mu^\pm]+P^\pm\mu^\pm,\label{lax1.1}\\
	&\mu^{\pm }_t  =-i\partial_t(tp_\pm)[\sigma_3,\mu^\pm]+L^\pm\mu^\pm. \label{lax1.2}
\end{align}
 The above  Lax pair   can be written as a  full  derivative form,  which is  further
 integrated along $(\pm \infty, t ) \rightarrow (x,t)$ and  leads to   two  Volterra type integrals
\begin{equation}
	\mu^\pm(x,t;z)=I+\int^{x}_{\pm \infty}e^{\frac{i}{2}\sigma_3\sqrt{z^2-c_\pm^2}\int^{s}_{x}m(v)dv}\left[ P^\pm\mu^\pm (s,t;z) \right] ds\label{intmu}.
\end{equation}
Denote
$$\Sigma_-=[-1,1],\hspace{0.5cm}\Sigma_+=[-c,c], $$
then we can show  that 	$\mu^\pm(x,t;z)$ is analytical in $\mathbb{C}\setminus \Sigma_\pm$ respectively.

\begin{Proposition}\label{sym}
	The Jost functions $ \mu^\pm (z)$ admit two kinds of symmetries
\begin{equation}
\mu^\pm(z)=\sigma_1\overline{\mu^\pm(\bar{z})}\sigma_1=\sigma_2\overline{\mu^\pm(-z)}\sigma_2.\label{symPhi1}
\end{equation}
\end{Proposition}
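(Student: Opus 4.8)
The plan is to prove each identity by showing that the \emph{transformed} Jost function solves exactly the same Volterra integral equation (\ref{intmu}) as $\mu^\pm$, and then to conclude by uniqueness of its solution. Both relations in (\ref{symPhi1}) are pointwise statements in the spectral variable $z$, so for each fixed $z$ it is enough to regard the two sides as solutions of the linear integral equation in the $x$-variable. Since $\mu^\pm(x,t;z)$ was constructed precisely as the solution of (\ref{intmu}) normalized by $\mu^\pm\to I$ as $x\to\pm\infty$ (equivalently, as the unique solution of the linear ODE (\ref{lax1.1}) with that normalization), any function obeying the identical equation with the identical normalization must coincide with it. The uniqueness itself rests on the integrability of the Volterra kernel, which is already implicit in the construction of $\mu^\pm$ as an analytic function on $\mathbb{C}\setminus\Sigma_\pm$ via its Neumann series.

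For the first symmetry I would set $\nu(z):=\sigma_1\overline{\mu^\pm(\bar z)}\,\sigma_1$, take the complex conjugate of (\ref{intmu}) evaluated at $\bar z$, and then conjugate the resulting equation by $\sigma_1$. This needs three elementary covariance facts: (i) the potential $m$, together with $u$ and the real constants $c_\pm$, is real, so conjugation acts only on the explicit factors of $i$ and on $z$; (ii) the branch $\sqrt{z^2-c_\pm^2}$ satisfies Schwarz reflection, $\overline{\sqrt{\bar z^2-c_\pm^2}}=\sqrt{z^2-c_\pm^2}$; and (iii) the Pauli relation $\sigma_1\sigma_3\sigma_1=-\sigma_3$. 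Conjugating the exponential factor $e^{\frac{i}{2}\sigma_3\sqrt{z^2-c_\pm^2}\int_x^s m}$ reverses its sign, and fact (iii) reverses it back, so the original exponential is restored; a short matrix computation then gives $\sigma_1\big(\overline{P^\pm(\bar z)}\big)\sigma_1=P^\pm(z)$, while $\sigma_1 I\sigma_1=I$ handles the inhomogeneous term. Hence $\nu$ satisfies (\ref{intmu}) verbatim and $\nu=\mu^\pm$.

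For the second symmetry I would proceed along the same lines with the involution $z\mapsto -z$ and conjugation by $\sigma_2$, now invoking $\sigma_2\sigma_3\sigma_2=-\sigma_3$ together with the parity of the branch $\sqrt{z^2-c_\pm^2}$ under $z\mapsto -z$ and the explicit action of $\sigma_2(\cdot)\sigma_2$ on the matrix $\left(\begin{smallmatrix} c_\pm & z\\ -z & -c_\pm\end{smallmatrix}\right)$ appearing in $P^\pm$. The essential point, which is also where the main bookkeeping obstacle lies, is that the sign change acquired by the exponent from the behavior of $\sqrt{z^2-c_\pm^2}$ under the relevant involution must cancel against the sign change produced by $\sigma_j\sigma_3\sigma_j=-\sigma_3$, so that the exponential and $P^\pm$ are each reproduced exactly; keeping these signs and the branch consistent across the cut $\Sigma_\pm$ is the delicate part, whereas the invariance of the normalization $I$ and the passage from "same integral equation" to "same function" are routine. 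Once the kernel is shown to be reproduced, uniqueness again forces the transformed function to equal $\mu^\pm$, establishing the second identity in (\ref{symPhi1}).
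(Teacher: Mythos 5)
Your overall strategy---show that the transformed function solves the same Volterra equation (\ref{intmu}) with the same normalization at $x\to\pm\infty$ and invoke uniqueness---is the standard one, and since the paper states Proposition \ref{sym} without proof this is the natural route. Your verification of the first symmetry is correct: $\overline{\,\cdot\,(\bar z)}$ flips the sign of the exponent $\tfrac{i}{2}\sigma_3\sqrt{z^2-c_\pm^2}\int_x^s m\,dv$ via $\bar i=-i$ and Schwarz reflection, $\sigma_1\sigma_3\sigma_1=-\sigma_3$ flips it back, and $\sigma_1\bigl(\begin{smallmatrix}c_\pm&z\\-z&-c_\pm\end{smallmatrix}\bigr)\sigma_1=-\bigl(\begin{smallmatrix}c_\pm&z\\-z&-c_\pm\end{smallmatrix}\bigr)$ combines with $\bar i=-i$ to reproduce $P^\pm$ exactly.

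For the second symmetry, however, the cancellation you assert does not occur, and this is a genuine gap rather than routine bookkeeping. For the identity to be holomorphic the involution must be $z\mapsto-\bar z$ followed by entrywise conjugation. The branch of $\sqrt{z^2-c_\pm^2}$ is forced to satisfy $\sqrt{z^2-c_\pm^2}\sim z$ at infinity (so that $tp_\pm$ has the stated asymptotics), hence it is \emph{odd} under $z\mapsto-z$. Then $\overline{\sqrt{(-\bar z)^2-c_\pm^2}}=-\overline{\sqrt{\bar z^2-c_\pm^2}}=-\sqrt{z^2-c_\pm^2}$, so in the exponent the minus sign from $\bar i=-i$ and the minus sign from oddness cancel \emph{each other}, leaving the exponent unchanged; the subsequent conjugation $\sigma_2\sigma_3\sigma_2=-\sigma_3$ then produces a net sign flip that nothing compensates. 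The same computation gives $\sigma_2\overline{P^\pm(-\bar z)}\sigma_2=-P^\pm(z)$. Thus $\sigma_2\overline{\mu^\pm(-\bar z)}\sigma_2$ solves the integral equation with $\sqrt{z^2-c_\pm^2}$ replaced by its negative (the ``other sheet''), not equation (\ref{intmu}) itself, and uniqueness does not identify it with $\mu^\pm(z)$. What does close with the odd branch is $\sigma_2\mu^\pm(-z)\sigma_2=\mu^\pm(z)$ \emph{without} complex conjugation (both sign flips from oddness now cancel against $\sigma_2\sigma_3\sigma_2=-\sigma_3$ and against the sign of $\sigma_2\bigl(\begin{smallmatrix}c_\pm&-z\\z&-c_\pm\end{smallmatrix}\bigr)\sigma_2$), which is consistent with the symmetry $M(z)=\sigma_2M(-z)\sigma_2$ stated in RHP \ref{RHP1}; composing it with the first symmetry yields $\mu^\pm(z)=\sigma_3\overline{\mu^\pm(-\bar z)}\sigma_3$ rather than the $\sigma_2$-conjugated form. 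You should carry out this computation explicitly and either prove the no-conjugation version or pin down the branch convention under which your asserted cancellation actually holds; as written, the central step of your second argument fails.
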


Since   $D_{c_\pm}(z)^{-1}\Psi^\pm(z;x,t)$ are two fundamental matrix solutions of the  Lax  pair (\ref{lax0}), they satisfy  a linear  relation
\begin{equation}
	D_{c_+}(z)^{-1}\Psi^+(z;x,t)=D_{c_-}(z)^{-1}\Psi^-(z;x,t)S(z), \label{scattering}
\end{equation}
where $S(z)$ is  a  scattering matrix
\begin{equation}
		S(z) =\left(\begin{array}{cc}
			s_{11}(z) &s_{12}(z)   \\[4pt]
			 s_{21}(z) & s_{22}(z)
		\end{array}\right),\hspace{0.5cm}\det S(z) =1. \nonumber
\end{equation}
Combing the transformation (\ref{transmu})  with  the  equation (\ref{scattering}) gives
\begin{align}
	S(z)=e^{itp_-\sigma_3}(\mu^-)^{-1}D_{c_-}D_{c_+}^{-1}\mu^+e^{-itp_+\sigma_3}, \label{scattering23}
\end{align}
which is analytical  on $\mathbb{C}\setminus(\Sigma_+\cup\Sigma_-)$.
Defining  two reflection coefficients   by
\begin{equation}
	r_1(z)=\frac{s_{21}(z)}{s_{11}(z)},\hspace{0.5cm}r_2(z)=\frac{s_{12}(z)}{s_{22}(z)}.\label{symr}
\end{equation}
Let
\begin{align}
	\tilde{\mu}^\pm(z) =D_{c_\pm}^{-1}\mu^\pm (z),\label{transhmu}
\end{align}
then the  Volterra type integrals  (\ref{intmu}) are changed into
\begin{align}
	&\tilde{\mu}^\pm(z) =D_{c_\pm}^{-1} \nonumber\\
	&+\int^{x}_{\pm \infty}F (z) \left(X+ \frac{mi\sqrt{z^2-c_\pm^2}}{2}D_{c_\pm}^{-1}\sigma_3D_{c_\pm}\right) \tilde{\mu}^\pm e^{-\frac{i}{2}\sigma_3\sqrt{z^2-c_\pm^2}\int^{s}_{x}m(l)dl}ds,\label{inthmu}
\end{align}
where
\begin{align*}
	F(z) =D_{c_\pm}^{-1}e^{\frac{i}{2}\sqrt{z^2-c_\pm^2}\int^{s}_{x}m(l)dl\sigma_3}D_{c_\pm}
\end{align*}
is a analytical  function on $\mathbb{C}$.   Thus  we give the following proposition
\begin{Proposition} The Jost functions
	$\tilde{\mu}^\pm(z)$ and the  scattering matrix   $S(z)$  have   $-\frac{1}{4}$-weak singularity  at $z=\pm1 $ and $z=\pm c  $.
\end{Proposition}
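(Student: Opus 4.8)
The plan is to trace the claimed $-\tfrac14$ singularities back to their single source, the branch function $\phi_{c_\pm}(z)=\bigl((c_\pm+z)/(c_\pm-z)\bigr)^{1/4}$, and then to show that the integral equation (\ref{inthmu}) transmits but does not worsen this singularity. First I would record the local behaviour of $\phi_{c_\pm}$ at the endpoints: as $z\to c_\pm$ one has $\phi_{c_\pm}(z)\sim\mathrm{const}\cdot(c_\pm-z)^{-1/4}$ while $\phi_{c_\pm}(z)^{-1}\sim(c_\pm-z)^{1/4}$, and symmetrically $\phi_{c_\pm}(z)\sim(c_\pm+z)^{1/4}$, $\phi_{c_\pm}(z)^{-1}\sim(c_\pm+z)^{-1/4}$ as $z\to -c_\pm$. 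Since every entry of $D_{c_\pm}(z)$ and of its inverse is a linear combination of $\phi_{c_\pm}$ and $\phi_{c_\pm}^{-1}$, the inhomogeneous term $D_{c_\pm}^{-1}$ in (\ref{inthmu}) carries exactly a $(c_\pm\mp z)^{-1/4}$ factor, i.e. a weak (integrable) $-\tfrac14$ singularity, at $z=\pm c_\pm$, and is analytic elsewhere.

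The crux is then to show that the integral operator in (\ref{inthmu}) acts by multiplication by factors that are regular at $z=\pm c_\pm$. I would verify that the matrix coefficient $X+\tfrac{mi\sqrt{z^2-c_\pm^2}}{2}D_{c_\pm}^{-1}\sigma_3 D_{c_\pm}$ collapses to an entire (polynomial) matrix in $z$: using $\phi_{c_\pm}^{2}\pm\phi_{c_\pm}^{-2}$ one finds $D_{c_\pm}^{-1}\sigma_3D_{c_\pm}=(c_\pm^2-z^2)^{-1/2}\left(\begin{smallmatrix} c_\pm & -z\\ z & -c_\pm\end{smallmatrix}\right)$, so the explicit factor $\sqrt{z^2-c_\pm^2}$ cancels the square root and the bracket reduces (with the branch convention used here) to the polynomial matrix $\tfrac{c_\pm m-1}{2}\sigma_3$. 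Together with the fact, noted below (\ref{inthmu}), that $F(z)$ is entire, and with the observation that the trailing exponential $e^{-\frac{i}{2}\sigma_3\sqrt{z^2-c_\pm^2}\int_x^s m\,dl}$ is bounded near $z=\pm c_\pm$ (its argument vanishes there because $\sqrt{z^2-c_\pm^2}\to0$), every factor multiplying $\tilde\mu^\pm$ inside the integral is regular at the branch points. Expanding (\ref{inthmu}) as a Neumann series and arguing term by term, each iterate reproduces a single $D_{c_\pm}^{-1}$ factor multiplied by functions regular at $\pm c_\pm$; hence the full sum $\tilde\mu^\pm(z)$ has precisely the $-\tfrac14$ singularity of $D_{c_\pm}^{-1}$, located at $z=\pm c$ for $\tilde\mu^+$ and at $z=\pm1$ for $\tilde\mu^-$.

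For the scattering matrix I would first rewrite (\ref{scattering23}) in terms of the normalised functions. Substituting $\mu^\pm=D_{c_\pm}\tilde\mu^\pm$ makes the middle factors $D_{c_-}^{-1}D_{c_-}$ and $D_{c_+}^{-1}D_{c_+}$ cancel, leaving $S(z)=e^{itp_-\sigma_3}(\tilde\mu^-)^{-1}\tilde\mu^+e^{-itp_+\sigma_3}$. Because the Lax pair is traceless and $\det D_{c_\pm}=1$, one has $\det\tilde\mu^\pm\equiv1$, so $(\tilde\mu^-)^{-1}$ equals the adjugate of $\tilde\mu^-$ and therefore shares the $-\tfrac14$ singularity of $\tilde\mu^-$ at $z=\pm1$, while $\tilde\mu^+$ contributes the $-\tfrac14$ singularity at $z=\pm c$; the conjugating exponentials $e^{\pm itp_\pm\sigma_3}$ are bounded at the respective endpoints since $p_\pm\propto\sqrt{z^2-c_\pm^2}$ vanishes there. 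Consequently $S(z)$ has $-\tfrac14$ weak singularities at all four points $z=\pm1,\pm c$, as claimed.

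The main obstacle I anticipate is the rigorous control of the Neumann series near the branch points: one must show it converges uniformly on a punctured neighbourhood of each endpoint and that the $(c_\pm\mp z)^{-1/4}$ factor can be extracted past the $s$-integration without the iterated kernels generating higher negative powers through accumulation. Care is also needed on the cuts $\Sigma_\pm$ themselves, where $\sqrt{z^2-c_\pm^2}$ becomes imaginary and the trailing exponential is oscillatory rather than decaying, so that boundedness—not decay—of that factor is the property actually used; I would phrase everything in terms of one-sided boundary values taken from $\mathbb{C}\setminus\Sigma_\pm$ to keep the branch of the root unambiguous.
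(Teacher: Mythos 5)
Your proposal is correct and follows essentially the same route as the paper, which states this proposition without an explicit proof as an immediate consequence of the Volterra equation (\ref{inthmu}): the inhomogeneous term $D_{c_\pm}^{-1}$ carries the $(z\mp c_\pm)^{-1/4}$ behaviour, while $F(z)$ is entire and the bracketed coefficient is regular at the branch points, so the solution inherits exactly the singularity of $D_{c_\pm}^{-1}$. Your explicit computation showing the bracket collapses to $\tfrac{c_\pm m-1}{2}\sigma_3$, the Neumann-series norm estimate, and the reduction $S(z)=e^{itp_-\sigma_3}(\tilde\mu^-)^{-1}\tilde\mu^+e^{-itp_+\sigma_3}$ with $\det\tilde\mu^\pm\equiv1$ simply supply the details the paper leaves implicit.
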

\begin{corollary}
    We  have asymptotics
$$1-r_1(z)r_2(z)=\mathcal{O}(z\mp c)^{1/2}, \ \ z\to\pm c.$$
\end{corollary}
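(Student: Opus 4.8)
The plan is to turn the statement into an assertion about the single product $s_{11}s_{22}$, and then to read off its behaviour at $z=\pm c$ from the rank-one degeneration of the dressing matrix $D_{c_+}$. First I would use $\det S(z)=1$ together with the definition (\ref{symr}) of the reflection coefficients to write
\begin{align}
1-r_1(z)r_2(z)=\frac{s_{11}s_{22}-s_{12}s_{21}}{s_{11}s_{22}}=\frac{\det S(z)}{s_{11}(z)s_{22}(z)}=\frac{1}{s_{11}(z)s_{22}(z)}.\nonumber
\end{align}
So it suffices to prove that $s_{11}s_{22}$ blows up \emph{exactly} like $(z\mp c)^{-1/2}$ as $z\to\pm c$; note it is the \emph{lower} bound $|s_{11}s_{22}|\gtrsim|z\mp c|^{-1/2}$ that is needed to control $1-r_1r_2$ from above.

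Next I would exploit the representation (\ref{scattering23}). Since $\mu^\pm=D_{c_\pm}\tilde\mu^\pm$ by (\ref{transhmu}), the middle factor collapses and $S=e^{itp_-\sigma_3}M e^{-itp_+\sigma_3}$ with $M=(\tilde\mu^-)^{-1}\tilde\mu^+$ and $\det M=1$. As the conjugating matrices are diagonal, the phases cancel in the diagonal product, so $s_{11}s_{22}=M_{11}M_{22}$ (and likewise $r_1r_2=M_{12}M_{21}/(M_{11}M_{22})$). The task is therefore to show $M_{11}M_{22}\sim\text{const}\cdot(z\mp c)^{-1/2}$ with nonzero constant.

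Here the local structure of $M$ is decisive. The factor $\tilde\mu^-=D_{c_-}^{-1}\mu^-$ is analytic and invertible near $z=\pm c$ (its branch points sit at $\pm1$, and $c>1$), so the $-\tfrac14$ singularity of $M$ comes entirely from $\tilde\mu^+$, and through it from $D_{c_+}^{-1}=D_c^{-1}$ via the branch function $\phi_c$. The key point is that the leading singular part of $D_c^{-1}$ is \emph{rank one}: near $z=c$ one has $\phi_c\to\infty$ and $D_c^{-1}\sim\tfrac12\phi_c\,e_+e_+^{\mathsf T}$ with $e_+=(1,1)^{\mathsf T}$, while near $z=-c$ one has $\phi_c^{-1}\to\infty$ and $D_c^{-1}\sim\tfrac12\phi_c^{-1}e_-e_-^{\mathsf T}$ with $e_-=(1,-1)^{\mathsf T}$; in both cases the order is $(z\mp c)^{-1/4}$. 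Consequently $M\sim\tfrac12\phi_c^{\pm1}\big((\tilde\mu^-)^{-1}e_\pm\big)e_\pm^{\mathsf T}$, so \emph{both} $M_{11}$ and $M_{22}$ carry the factor $\phi_c^{\pm1}$ and their product is of order $\phi_c^{\pm2}\sim(z\mp c)^{-1/2}$.

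The one delicate step — the one I expect to be the real obstacle — is to check that this leading coefficient does not vanish, i.e. that $M_{11}$ and $M_{22}$ genuinely attain order $(z\mp c)^{-1/4}$, which is exactly the lower bound needed above. Writing $w=(\tilde\mu^-)^{-1}e_\pm=(\mu^-)^{-1}D_{c_-}e_\pm$ and using the eigenrelations $D_{c_-}e_+=\phi_{c_-}^{-1}e_+$, $D_{c_-}e_-=\phi_{c_-}e_-$, the coefficient is a nonzero multiple of $w_1w_2$. To see $w_1w_2\neq0$ I would invoke the first symmetry in Proposition \ref{sym}: at the real points $z=\pm c$ it forces $\mu^-(\pm c)=\left(\begin{smallmatrix}\alpha&\beta\\\bar\beta&\bar\alpha\end{smallmatrix}\right)$, and $\det\mu^-=1$ (the Lax matrices are traceless and $\mu^-\to I$) gives $|\alpha|^2-|\beta|^2=1$. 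A one-line computation then shows the two components of $(\mu^-)^{-1}e_\pm$ are complex conjugate up to sign, so they can vanish only if $|\alpha|=|\beta|$, which is excluded. This yields $s_{11}s_{22}=M_{11}M_{22}\sim\text{const}\cdot(z\mp c)^{-1/2}$ with nonzero constant, and taking reciprocals gives $1-r_1r_2=\mathcal O\big((z\mp c)^{1/2}\big)$.
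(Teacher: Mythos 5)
Your two opening reductions are exactly the argument the paper leaves implicit (the corollary is stated with no proof, as a consequence of the preceding $-\tfrac14$-weak-singularity proposition): $\det S=1$ gives $1-r_1r_2=1/(s_{11}s_{22})$, the diagonal phases cancel so $s_{11}s_{22}=M_{11}M_{22}$ with $M=(\tilde\mu^-)^{-1}\tilde\mu^+$, and the whole content of the corollary is the \emph{lower} bound $|M_{11}M_{22}|\gtrsim |z\mp c|^{-1/2}$. You are right that this non-degeneracy is the real issue, and your symmetry-plus-determinant mechanism ($\mu^-(\pm c)=\bigl(\begin{smallmatrix}\alpha&\beta\\ \bar\beta&\bar\alpha\end{smallmatrix}\bigr)$ with $|\alpha|^2-|\beta|^2=1$, so two conjugate components cannot both vanish) is the right tool. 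In that sense you go further than the paper does.

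The gap is in the identification of the leading singular part of $M$. You write $M\sim\tfrac12\phi_c^{\pm1}\bigl((\tilde\mu^-)^{-1}e_\pm\bigr)e_\pm^{\mathsf T}$, which amounts to replacing $\tilde\mu^+=D_{c}^{-1}\mu^+$ by $\tfrac12\phi_c\,e_+e_+^{\mathsf T}$ and treating $\mu^+$ as negligible. But $\mu^+$ is \emph{not} bounded at $z=\pm c$: the kernel $P^+$ of (\ref{intmu}) has a rank-one $(z^2-c^2)^{-1/2}$ singularity proportional to $e_\mp e_\pm^{\mathsf T}$, so $\mu^+$ carries an $(z\mp c)^{-1/2}$ singular part, and its product with the \emph{subleading} $\mathcal{O}((z\mp c)^{1/4})$ piece of $D_c^{-1}=\tfrac{\phi_c}{2}e_+e_+^{\mathsf T}+\tfrac{\phi_c^{-1}}{2}e_-e_-^{\mathsf T}$ contributes at the same order $(z\mp c)^{-1/4}$ as the term you keep. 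The clean statement comes from the regularized Volterra equation (\ref{inthmu}): since its kernel is analytic at $\pm c$ and acts by left multiplication, the inhomogeneous term $D_c^{-1}$ propagates to $\tilde\mu^+ = \tfrac{\phi_c}{2}\,V(x,t)\,e_+^{\mathsf T}+\mathcal{O}((z- c)^{1/4})$ near $z=c$, where $V$ solves that Volterra equation with datum $e_+$ at $x=+\infty$ — not $V=e_+$. Hence the coefficient you must show is nonzero is $[(\tilde\mu^-)^{-1}V]_1[(\tilde\mu^-)^{-1}V]_2$, and your computation addresses the wrong vector. The argument can be repaired: the symmetry $V=\sigma_1\overline{V}$ still makes the two components of $(\tilde\mu^-)^{-1}V$ complex conjugates, so they vanish simultaneously only if $V=0$ or $|\alpha|=|\beta|$; the latter is excluded by $\det\tilde\mu^-=1$, and $V\neq0$ follows because $V$ satisfies a first-order linear ODE in $x$ (the $x$-part of the Lax pair restricted to the singular direction at $z=c$) with $V\to e_+\neq0$, so it never vanishes. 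Without these two additional steps the non-vanishing — which is precisely what separates the corollary from the trivial bound $1-r_1r_2=\mathcal{O}(1)$ — is not established.
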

To construct the RH problem,   we need to consider the   jump  of the Jost functions
	$\tilde{\mu}^\pm(z)$ on the cut $\Sigma_\pm$.
\begin{Proposition}  The Jost functions $\mu^\pm(z)$  admit the jump relations

(i) \ For $z\in\Sigma_\pm$,
\begin{align}
	\mu^\pm(z+0i)=\sigma_1	\mu^\pm(z-0i)\sigma_1.
\end{align}

(ii)\ Especially   for $z\in[-1,1]$,
\begin{align}
	& \tilde{\mu}^\pm _{11}(z+0i)=i \tilde{\mu}^\pm_{12}(z-0i),\hspace{0.3cm} \tilde{\mu}^\pm_{21}(z+0i)=i \tilde{\mu}^\pm_{22}(z-0i),\nonumber\\
	&s_{11}(z+0i)=s_{22}(z-0i),\hspace{0.3cm}s_{12}(z+0i)=s_{21}(z-0i).\nonumber
\end{align}

(iii)\ For $z\in[-c,c]\setminus[-1,1]$,  $\tilde{\mu}^+(z)$ has same jump as above equation  while  $\tilde{\mu}^-(z)$ has no jump. And
\begin{align}
& s_{11}(z+0i)=is_{12}(z-0i),\hspace{0.3cm}s_{21}(z+0i)=is_{22}(z-0i).
\end{align}
\end{Proposition}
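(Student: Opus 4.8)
The plan is to establish (i) first, directly from the Volterra representation (\ref{intmu}), and then to deduce (ii) and (iii) from it together with the jump of $D_{c_\pm}$ and the scattering representation (\ref{scattering23}).

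For (i), the only quantity in (\ref{intmu}) producing a discontinuity across $\Sigma_\pm$ is the radical $w:=\sqrt{z^2-c_\pm^2}$, which satisfies $w|_{z+0i}=-\,w|_{z-0i}$ on $[-c_\pm,c_\pm]$. I would use the two elementary covariances $\sigma_1\sigma_3\sigma_1=-\sigma_3$ and $\sigma_1 M_0\sigma_1=-M_0$ for the matrix $M_0=\bigl(\begin{smallmatrix}c_\pm & z\\ -z & -c_\pm\end{smallmatrix}\bigr)$ carried by $P^\pm$; these give at once that both the kernel $P^\pm$ and the exponential factor $e^{\frac{i}{2}\sigma_3 w\int_x^s m\,dv}$ are odd in $w$ under conjugation by $\sigma_1$, i.e. $\sigma_1(\,\cdot\,)\sigma_1=(\,\cdot\,)\big|_{w\to-w}$. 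Conjugating the integral equation (\ref{intmu}) for $\mu^\pm(z-0i)$ by $\sigma_1$ and using $w|_{z-0i}=-w|_{z+0i}$, one checks that $\sigma_1\mu^\pm(z-0i)\sigma_1$ solves exactly the Volterra equation obeyed by $\mu^\pm(z+0i)$ (the inhomogeneity $I$ being $\sigma_1$-invariant). Uniqueness of the solution of the Volterra equation then forces $\mu^\pm(z+0i)=\sigma_1\mu^\pm(z-0i)\sigma_1$, which is (i).

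For (ii), I would combine (i) with the stated jump $D_{c_\pm}(z+0i)=-i\sigma_1 D_{c_\pm}(z-0i)$, valid for both $c_+$ and $c_-$ on $[-1,1]$, together with $(-i\sigma_1)^{-1}=i\sigma_1$. A one-line matrix computation then gives $\tilde\mu^\pm(z+0i)=i\,\tilde\mu^\pm(z-0i)\sigma_1$ on $[-1,1]$, whose $(1,1)$ and $(2,1)$ entries are precisely the claimed relations $\tilde\mu^\pm_{11}(z+0i)=i\tilde\mu^\pm_{12}(z-0i)$ and $\tilde\mu^\pm_{21}(z+0i)=i\tilde\mu^\pm_{22}(z-0i)$. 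For the scattering matrix I would rewrite (\ref{scattering23}) as $S=e^{itp_-\sigma_3}(\tilde\mu^-)^{-1}\tilde\mu^+e^{-itp_+\sigma_3}$, note that on $[-1,1]$ both phases flip sign ($e^{itp_\pm\sigma_3}(z+0i)=e^{-itp_\pm\sigma_3}(z-0i)$ since $tp_\pm\propto w$), and use the commutation $\sigma_1 e^{itp\sigma_3}=e^{-itp\sigma_3}\sigma_1$. Substituting the jump of $\tilde\mu^\pm$ and letting the $\sigma_1$'s absorb the phases collapses everything to $S(z+0i)=\sigma_1 S(z-0i)\sigma_1$, whose components are $s_{11}(z+0i)=s_{22}(z-0i)$ and $s_{12}(z+0i)=s_{21}(z-0i)$.

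Part (iii) is structurally identical but localized to $[-c,c]\setminus[-1,1]$, where only the objects built from $\sqrt{z^2-c^2}$ are discontinuous: $D_{c_+}$, $\mu^+$ and the phase $tp_+$ jump, while $D_{c_-}$, $\mu^-$ and $tp_-$ (whose radical $\sqrt{z^2-1}$ is analytic off $[-1,1]$) are continuous there. Hence $\tilde\mu^-=D_{c_-}^{-1}\mu^-$ has no jump, whereas $\tilde\mu^+$ retains $\tilde\mu^+(z+0i)=i\tilde\mu^+(z-0i)\sigma_1$; feeding these into $S$ and flipping only $e^{itp_+\sigma_3}$ yields $S(z+0i)=i\,S(z-0i)\sigma_1$, i.e. $s_{11}(z+0i)=is_{12}(z-0i)$ and $s_{21}(z+0i)=is_{22}(z-0i)$. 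The main obstacle is the bookkeeping in (i): one must track the branch of $w$ consistently and verify that $P^\pm$ and the oscillatory exponential transform with exactly the same sign under the simultaneous $\sigma_1$-conjugation and $w\mapsto-w$, since a single sign slip would invalidate the uniqueness argument. Once (i) and the jump of $D_{c_\pm}$ are secured, parts (ii) and (iii) reduce to careful but routine $2\times2$ matrix algebra.
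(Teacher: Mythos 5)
Your argument is correct: the $\sigma_1$-covariance of $P^\pm$ and of the $\sigma_3$-exponential under the sign flip of $\sqrt{z^2-c_\pm^2}$, combined with uniqueness for the Volterra equation (\ref{intmu}), gives (i), and parts (ii)--(iii) then follow from $D_{c_\pm}(z+0i)=-i\sigma_1 D_{c_\pm}(z-0i)$ and the representation $S=e^{itp_-\sigma_3}(\tilde{\mu}^-)^{-1}\tilde{\mu}^+e^{-itp_+\sigma_3}$ exactly as you describe. The paper states this proposition without proof, and your derivation supplies the omitted details along precisely the lines its setup (the gauge matrices $D_{c_\pm}$, the Volterra representation, and (\ref{scattering23})) presupposes, so there is nothing to correct.
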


From (\ref{scattering23}),  we have
\begin{align}
	S(z)\sim e^{\frac{1}{2}Hz\sigma_3 },\ \ z\to\infty,
\end{align}
where $H$ is conserved quantity with
$$H=(1-1/c)x+(1/c^3-1)t+\int_{-\infty}^{x}(m-1)ds+\int_{x}^{+\infty}(m-1/c)ds.$$

The zeros of $s_{11}(z)$ and $s_{22}(z)$ on $\mathbb{C}$   are known to
occur and they correspond to spectral singularities.  They are excluded from our analysis in the this paper.
 Thus, $r_1(z)$ and $r_2(z)$ are analytic in $\mathbb{C}\setminus(\Sigma_-\cup\Sigma_+)$.

\noindent \textbf{Case II: $z=0$}.

We rewrite  the Lax pair (\ref{lax0.1})-(\ref{lax0.2}) in the form
\begin{align}
	& \Psi^{\pm}_x = -\frac{i\sqrt{z^2-c_\pm^2}}{2c_\pm}\sigma_3\Psi^\pm+P_0^\pm\Psi^\pm,\\
	& \Psi^{\pm}_t  =i\sqrt{z^2-c_\pm^2}\left( \dfrac{1}{2c_\pm ^3}+\dfrac{1}{c_\pm z^2}\right) \sigma_3\Psi^\pm+L_0^\pm\Psi^\pm,
\end{align}
where
\begin{align*}
	P_0^\pm=&iz\dfrac{c_\pm m-1}{2c_\pm\sqrt{z^2-c_\pm^2}}\left(\begin{array}{cc}
		z & c_\pm \\
		-c_\pm & -z
	\end{array}\right), \nonumber\\
	L_0^\pm=&L_\pm+i\sqrt{z^2-c_\pm^2}\left( \dfrac{m(u^2-u_x^2)}{2}+\dfrac{1}{c_\pm z}-\dfrac{1}{2c_\pm ^3}+\dfrac{1}{c_\pm z^2}\right) \sigma_3.
\end{align*}
Making transformation
\begin{align}
\mu^\pm_0(z)=\Psi_\pm e^{iq^\pm\sigma_3},  \hspace{0.3cm}\ \ 	q^\pm=\frac{i\sqrt{z^2-c_\pm^2}}{2c_\pm}\left[x-\left(\dfrac{1}{c_\pm ^2}+\dfrac{2}{ z^2} \right)t  \right],
\end{align}
then  $\mu^\pm_0(z)$ admit a new Lax pair
\begin{align}
	& \mu^{\pm}_{0,x}(z)  = -\frac{i\sqrt{z^2-c_\pm^2}}{2c_\pm}[\sigma_3,\mu^\pm_0(z)]+P_0^\pm\mu^\pm_0(z),\\
	& \mu^{\pm}_{0,t}(z) =i\sqrt{z^2-c_\pm^2}\left( \dfrac{1}{2c_\pm ^3}+\dfrac{1}{c_\pm z^2}\right) [\sigma_3,\mu^\pm_0(z)]+L_0^\pm\mu^\pm_0(z).
\end{align}
It also can be written in to  two  Volterra type integrals
\begin{equation}
	\mu^\pm_0(z)=I+\int^{x}_{\pm \infty}e^{\frac{i}{2c_\pm}\sigma_3\sqrt{z^2-c_\pm^2}(s-x)}\left[ P_0^\pm\mu^\pm_0(s;z) \right] ds.
\end{equation}
 To reconstruct the potential $u(x,t)$,  we take $z=0$,  then
\begin{align}
P^0_\pm(0)=0, \ \  \ \ 	\mu^\pm_0(0)=I.\label{asymu0}
\end{align}
Expanding $\mu^\pm_0(z)$ at $z=0$ gives
 \begin{align}
 \mu^\pm_0(z)=I+\frac{z}{2}\left(\begin{array}{cc}
 		0 & -\int_{\pm\infty}^x(m-\frac{1}{c_\pm})e^{s-x}ds \\
 		\int_{\pm\infty}^x(m-\frac{1}{c_\pm})e^{x-s}ds & 0
 	\end{array}\right)+\mathcal{O}(z^2). \nonumber
 \end{align}
Because $\Psi^\pm(z)$ admit same Lax pair (\ref{lax0.1})-(\ref{lax0.2}), there exist two matrix function $C_\pm(z)$
independent of $x$ and $t$
\begin{align}
	\mu^\pm_0(z)e^{-iq^\pm\sigma_3}C_\pm(z)=\mu^\pm(z)e^{-itp^\pm\sigma_3}.
\end{align}
Since
$$q_\pm-tp_\pm=-\dfrac{1}{2}\sqrt{z^2-c_\pm^2}\int_{\pm\infty}^x(m-1/c_\pm)ds,$$
taking  the limits $x\to\pm\infty$, we obtain $C_\pm(z)\equiv I$. Then
\begin{align}
	\mu^\pm(0+0i)=\exp \left\lbrace i(q_\pm(0+0i)-tp_\pm (0+0i))\sigma_3\right\rbrace,
\end{align}
which combines  with $D_{c_\pm}(0+0i)=-i\sigma_1$ implies that
\begin{align}
	\tilde{\mu}^\pm(0+0i)=\left(\begin{array}{cc}
		0 & ie^{i(q_\pm(0+0i)-tp_\pm (0+0i))} \\
		ie^{-i(q_\pm(0+0i)-tp_\pm (0+0i))} & 0
	\end{array}\right).\label{tmu0}
\end{align}
Consequently,
\begin{align}
	S(0+0i)=\exp \left\lbrace i(q_-(0+0i)-q_+ (0+0i))\sigma_3\right\rbrace.\label{s0}
\end{align}

\subsection{Setting up a   RH problem  with step-like initial data  }

\quad
Define  a   sectionally analytical  matrix
\begin{equation}
	M(z)\triangleq M(z;x,t)=\left\{ \begin{array}{ll}
		\left( \frac{ \tilde{\mu}^-_1  (z) } {s_{11}(z)}e^{-it(p_+-p_-)},  \tilde{\mu}^+_2(z)\right),   &\text{as } z\in \mathbb{C}^+,\\[12pt]
		\left( \tilde{\mu}^+_1(z),\frac{ \tilde{\mu}^-_2(z)}{s_{22}(z)}e^{it(p_+-p_-)}\right)  , &\text{as }z\in \mathbb{C}^-,\\
	\end{array}\right.
\end{equation}
where $\tilde{\mu}^\pm_1 (z)$ and  $\tilde{\mu}^\pm_2 (z)$ denote  the first and second column of $\tilde{\mu}^\pm(z)$, respectively. Then
$M(z)$ solves the following RH problem.
\begin{RHP}\label{RHP1}
	Find a matrix-valued function $M(z)=M(z;x,t)$ which satisfies

$\blacktriangleright$ Analyticity: $M(z)$ is analytical  in $\mathbb{C}\setminus \mathbb{R}$;

$\blacktriangleright$ Symmetry: $M(z)=\sigma_2M(-z)\sigma_2$=$\sigma_1\overline{M(\bar{z})}\sigma_1$;

$\blacktriangleright$ Jump condition: $M$ has continuous boundary values $M_\pm(z)$ on $\mathbb{R}$ and
\begin{equation}
	M_+(z)=M_-(z)\tilde{V}(z),\hspace{0.5cm}z \in \mathbb{R},
\end{equation}
where
\begin{equation}
	\tilde{V}(z)=\left\{ \begin{array}{ll}
		\left(\begin{array}{cc}
			1-r_1r_2 & -r_2e^{-2itp_-}\\
			r_1e^{2itp_-} & 1
		\end{array}\right),   &\text{as } z\in\mathbb{R}\setminus[-c,c],\\[12pt]
		\left(\begin{array}{cc}
			0 & -r_2(z-0i)e^{-2itp_-}\\
			r_1(z+0i)e^{2itp_-} & 1
		\end{array}\right),   &\text{as } z\in[-c,c]\setminus[-1,1] ,\\[12pt]
		\left(\begin{array}{cc}
			0 & i\\
			i & 0
		\end{array}\right),   &\text{as } z\in[-1,1],
	\end{array}\right. \nonumber
\end{equation}

$\blacktriangleright$ Asymptotic behaviors
\begin{align}
	&M(z) = I+\mathcal{O}(z^{-1}),\hspace{0.5cm}z \rightarrow \infty;
\end{align}

$\blacktriangleright$ Singularity: $M(z)$ has singularity at $z=\pm1$ with
\begin{align}
	&M(z)\sim \left(\mathcal{O}(z\mp1)^{1/4},\mathcal{O}(z\mp1)^{-1/4} \right) ,\ z\to \pm 1\ \text{ in }\ \mathbb{C}^+,\\
	&M(z)\sim \left(\mathcal{O}(z\mp1)^{-1/4},\mathcal{O}(z\mp1)^{1/4} \right) ,\ z\to \pm 1 \ \text{ in }\ \mathbb{C}^-.
\end{align}
\end{RHP}

The solution of  mCH equation (\ref{mcho}) is difficult to reconstruct from the above RHP1,   since  $p_-(x,t,z)$ is still unknown.
 Boutet de Monvel and Shepelsky  proposed  an  idea  to  change  the spatial variable $x$ to variable $y$  \cite{CH,CH1}.
Following this idea, we introduce a  new   scale
\begin{equation}
	y(x,t)=x-\int_{x}^{-\infty} \left(m(s)-1\right) ds.
\end{equation}
The price to pay for this is that the solution of the initial problem can be given only implicitly
or perimetrically.  It will be given in terms of functions in the new scale, whereas the original scale will also be given in terms of functions in the new scale.
By the definition of the new scale $y(x, t)$, we define
\begin{equation}
	N(z)=N(z;y,t)\triangleq M(z;x(y,t),t).
\end{equation}
For convenience we denote $\xi=y/t$ with $$p_-=\frac{\sqrt{z^2-1}}{2}\left(\xi-1-2z^{-2} \right), $$ then  we can get the RH problem for the new variable $(y,t)$.
\begin{RHP}\label{RHP2}
	 Find a matrix-valued function $	N(z)\triangleq N(z;y,t)$ which satisfies

$\blacktriangleright$ Analyticity: $N(z)$ is meromorphic in $\mathbb{C}\setminus \mathbb{R}$;

$\blacktriangleright$ Symmetry: $N(z)=\sigma_2N(-z)\sigma_2$=$\sigma_1\overline{N(\bar{z})}\sigma_1$;

$\blacktriangleright$ Jump condition: $N$ has continuous boundary values $N_\pm(z)$ on $\mathbb{R}$ and
\begin{equation}
	N_+(z)=N_-(z)\tilde{V}(z),\hspace{0.5cm}z \in \mathbb{R},
\end{equation}
where
\begin{equation}
	\tilde{V}(z)=\left\{ \begin{array}{ll}
		\left(\begin{array}{cc}
			1-r_1r_2 & -r_2e^{-2itp_-}\\
			r_1e^{2itp_-} & 1
		\end{array}\right),   &\text{as } z\in\mathbb{R}\setminus[-c,c],\\[12pt]
		\left(\begin{array}{cc}
			0 & -r_2(z-0i)e^{-2itp_-}\\
			r_1(z+0i)e^{2itp_-} & 1
		\end{array}\right),   &\text{as } z\in[-c,c]\setminus[-1,1],\\[12pt]
		\left(\begin{array}{cc}
			0 & i\\
			i & 0
		\end{array}\right),   &\text{as } z\in[-1,1].
	\end{array}\right.\nonumber
\end{equation}

$\blacktriangleright$ Asymptotic behaviors: $N(z) = I+\mathcal{O}(z^{-1}),\hspace{0.5cm}z \rightarrow \infty;$

$\blacktriangleright$ Singularity: $N(z)$ has singularity at $z=\pm1$ with
\begin{align}
	&N(z)\sim \left(\mathcal{O}(z\mp1)^{1/4},\mathcal{O}(z\mp1)^{-1/4} \right) ,\ z\to \pm1\text{ in }\mathbb{C}^+,\\
	&N(z)\sim \left(\mathcal{O}(z\mp1)^{-1/4},\mathcal{O}(z\mp1)^{1/4} \right) ,\ z\to \pm1\text{ in }\mathbb{C}^-.
\end{align}
\end{RHP}
From the asymptotic behavior of the functions $\tilde{\mu}^\pm(z)$ and (\ref{s0}), we have
\begin{align}
	N(z)=&	\left(\begin{array}{cc}
		0 & if_1 \\
		if_1^{-1} & 0
	\end{array}\right)+iz	\left(\begin{array}{cc}
f_2 & 0\\
0  & f_3
\end{array}\right)+\mathcal{O}(z^2), \ \ z\to 0\in\mathbb{C}^+,\label{N0}
\end{align}
where
\begin{align}
	f_1&=\exp\left\lbrace -\frac{1}{2}\int_{-\infty}^x(m-1)ds\right\rbrace,\nonumber\\
	f_2&=\frac{e^{\frac{i}{2}\int_{-\infty}^x(m-1)ds}}{2c}\left(\int_{+\infty}^x(cm-1)e^{x-s}ds +e^{\frac{i}{2}\int_{+\infty}^x(cm-1)ds}\right),\nonumber\\
	f_3&=\frac{e^{-\frac{i}{2}\int_{-\infty}^x(m-1)ds}}{2}\left(- \int_{-\infty}^x(m-1)e^{x-s}ds +e^{-\frac{i}{2}\int_{-\infty}^x(m-1)ds}\right).\nonumber
\end{align}
Thus,
\begin{equation}
	x(y,t)=y-2\ln\left( -iM_{21}(0+0i)\right).\label{recons x}
\end{equation}
Combining with (\ref{mcho}), we arrive at following reconstruction formula
\begin{align}
&u(x,t)= f_1(x,t)f_2(x,t)+f_1(x,t)^{-1}f_3(x,t),\\
&\partial_xu(x,t)= -f_1(x,t)f_2(x,t)+f_1(x,t)^{-1}f_3(x,t),
\end{align}
 namely,
\begin{align}
	&u(x,t)= -N_{12}(0+0i)\lim_{z\to 0 }\frac{N_{11}(z)-N_{11}(0)}{z}\nonumber\\
	&\qquad\qquad -N_{12}(0+0i)^{-1}\lim_{z\to 0}\frac{N_{22}(z)-N_{22}(0)}{z}, \ \ z\in\mathbb{C}^+, \label{recons u}\\
& \partial_xu(x,t)= N_{12}(0+0i)\lim_{z\to 0 }\frac{N_{11}(z)-N_{11}(0)}{z}\nonumber\\
	&\qquad\qquad -N_{12}(0+0i)^{-1}\lim_{z\to 0 }\frac{N_{22}(z)-N_{22}(0)}{z}, \ \ z\in\mathbb{C}^+.
\end{align}
Moreover, the jump matrix $\tilde{V}(z)$  admits the  following  decomposition \\
On the interval  $ \mathbb{R}\setminus[-c,c]$,
\begin{align}
	\tilde{V}(z)=&\left(\begin{array}{cc}
	1 & -r_2e^{-2itp_-}\\
	0 & 1
\end{array}\right)\left(\begin{array}{cc}
1 & 0\\
r_1e^{2itp_-} & 1
\end{array}\right)\nonumber\\
=&\left(\begin{array}{cc}
	1 &0\\
	\frac{r_1e^{2itp_-}}{1-r_1r_2} & 1
\end{array}\right)(1-r_1r_2)^{\sigma_3}\left(\begin{array}{cc}
1 &\frac{ -r_2e^{-2itp_-}}{1-r_1r_2}\\
0 & 1
\end{array}\right).
\end{align}
On the interval   $ [-c,c]\setminus[-1,1]$,
\begin{align}
	\tilde{V}(z)
	=&\left(\begin{array}{cc}
		1 & -r_2(z-0i)e^{-2itp_-}\\
		0 & 1
	\end{array}\right)\left(\begin{array}{cc}
		1 & 0\\
		r_1(z+0i)e^{2itp_-} & 1
	\end{array}\right)\nonumber\\
	=&\left(\begin{array}{cc}
		1 &0\\
		\frac{r_1(z-0i)e^{2itp_-}}{1-r_1(z-0i)r_2(z-0i)} & 1
	\end{array}\right)\left(\begin{array}{cc}
	0 & \frac{-r_2(z-0i)}{e^{2itp_-}}\\
	\frac{e^{2itp_-}}{r_2(z-0i)} & 0
\end{array}\right)\left(\begin{array}{cc}
		1 &\frac{ -r_2(z+0i)e^{-2itp_-}}{1-r_1(z+0i)r_2(z+0i)}\\
		0 & 1
	\end{array}\right).\nonumber
\end{align}
The long-time asymptotic  of RHP\ref{RHP2}  is affected by the growth or decay of the exponential function $e^{\pm2itp_-}$   with
$$p_-=\frac{\sqrt{z^2-1}}{2}\left(\xi-1-2z^{-2} \right),\ \text{d}p_-=\frac{1}{2\sqrt{z^2-1}z^3}\left[ \left(  \xi-1)z^4+2z^2-4 \right) \right]. $$
So   to analyze the long-time asymptotics, we need control the real part of $ 2itp_-$. But  we find that in some cases,  this exponential function is not applicable. Its adaptation to problems with nonzero background has required the development of the g-function mechanism \cite{gfunction}. This mechanism is relevant when some entries of
the jump matrix grow exponentially or oscillate as $t\to\infty$. The general idea consists in replacing the original phase function in the jump matrix. This new function $g(\xi , z)$ need to be analytic on $\mathbb{C}$ except cut, and satisfies that,  after appropriately choosing triangular factorizations of the jump matrices and associated
deformations of the original RH problem, the jumps containing  exponentially growing entries, become   constant matrices (independent
of $z$, but dependent on $\xi$) of special structure whereas the other jumps decay exponentially to the identity matrix. And it must have same asymptotic properties as $z\to\infty,\ 0\in\mathbb{C}^+$ as $p_-$:
\begin{align}
	&p_-=\frac{1-\xi}{2}z+\mathcal{O}(z^{-1}),\ \frac{\text{d}}{\text{d}z}p_-=\frac{1-\xi	}{2}+\mathcal{O}(z^{-2}),\ z\to\infty;\\
	&p_-=\frac{i}{z^2}-\frac{i\xi}{2}+\mathcal{O}(z),\ \frac{\text{d}}{\text{d}z}p_-=\frac{-2i}{z^3}+\mathcal{O}(z),\ z\to0\in\mathbb{C}^+.
\end{align}
The structure of the limiting RH problem
is such that the problem can be solved explicitly in terms of Riemann theta functions
and Abel integrals on Riemann surfaces associated with the limiting RH problem. For
different ranges of the parameter $\xi= y/t$ , different Riemann surfaces
 may appear \cite{Biondini1,Biondini2,Monvel2011,MonvelCMP1,MonvelCMP2,Buck2007}. To find the g-function, we first consider the signature table
and stationary phase points of $p_-$ in Figure \ref{figp-}.

\begin{itemize}
\item[ (a):]  \  For the case  $\xi<\frac{3}{4}$,  there is  no  stationary point on $\mathbb{R}$;

\item[   (b):]\  For the case  $3/4<\xi<1$    there are four stationary points on $\mathbb{R}$
\begin{align}
	\pm\lambda_1=\pm \left(\dfrac{1-\sqrt{4\xi-3}}{1-\xi}\right)  ^{1/2},\  \pm\lambda_2=\pm \left(\dfrac{1+\sqrt{4\xi-3}}{1-\xi}\right)   ^{1/2}.
\end{align}
\item[   (c):]\  For the case  $1<\xi<3$,there are two stationary points on $\mathbb{R}$
\begin{align}
	\pm\lambda_1=\pm \left(\dfrac{ \sqrt{4\xi-3}-1}{\xi-1}\right)  ^{1/2}.
\end{align}
\item[  (d):]\  For the  case $3<\xi$, there are two  stationary points on $\mathbb{R}$
\begin{align}
	\pm\lambda_1=\pm \sqrt{  2} (\xi-1)^{-1/2}.
\end{align}

\end{itemize}

\begin{figure}[h]
	\centering
	\subfigure[]{
		\begin{tikzpicture}[node distance=2cm]
			\draw[lime!50, fill=lime!30] (-2.2,0)--(-2.2,1.5)--(2.2,1.5)--(2.2,0)--(0,0)arc (-90:270:0.4 and 0.7)--(-2.2,0);
			\draw[->](1,0)--(2.2,0)node[right]{ Re$z$};
			\draw[lime!50, fill=lime!30] (0,0)arc (90:450:0.4 and 0.7);
			\draw [](-0,-0.71) ellipse (0.4 and 0.7);
			\draw [](0,0.71) ellipse (0.4 and 0.7);		
			\coordinate (I) at (0,0);
			\fill (I) circle (0pt) node[below] {$0$};
			\coordinate (b) at (1,0);
			\fill (b) circle (1pt) node[below] {$1$};
			\coordinate (ba) at (-1,0);
			\fill (ba) circle (1pt) node[below] {$-1$};
			\coordinate (c) at (1.5,0);
			\fill (c) circle (1pt) node[below] {$c$};
			\coordinate (ca) at (-1.5,0);
			\fill (ca) circle (1pt) node[below] {$-c$};
			\draw[dashed](-1,0)--(1,0);
			\draw[](-2.2,0)--(-1,0);
			\draw[dashed](0,-1.5)--(0,1.5)node[above]{ Im$z$};
		\end{tikzpicture}
	}
	\subfigure[]{
		\begin{tikzpicture}[node distance=2cm]
			\draw[lime!50, fill=lime!30] (-1.4,0)--(-2.7,0)arc (-180:0:2.7 and 1.2)--(2.7,0)--(2,0)arc (0:-180:1 and 0.45)--(0,0)--(0,0)arc (0:-180:1 and 0.45)--(-2,0);
			\draw[lime!50, fill=lime!30] (-2.7,0)--(-2.7,0)arc (180:0:2.7 and 1.2)--(2.7,0)--(3.2,0)--(3.2,1.5)--(-3.2,1.5)--(-3.2,0)--(-2.7,0);
			\draw [](0,0) ellipse (2.7 and 1.2);
			\draw[dashed](0,-1.5)--(0,1.5)node[above]{ Im$z$};
			\draw[lime!50, fill=lime!30] (0,0)--(0,0)arc (0:180:1 and 0.45)--(0,0);
			\draw [](-1.01,0) ellipse (1 and 0.45);
			\draw[lime!50, fill=lime!30] (0,0)--(0,0)arc (180:0:1 and 0.45)--(0,0);
			\draw [](1.01,0) ellipse (1 and 0.45);		
			\coordinate (I) at (0,0);
			\fill (I) circle (0pt) node[below] {$0$};
			\coordinate (a) at (2,0);
			\fill (a) circle (1pt) node[below] {\footnotesize$\lambda_1$};
			\coordinate (aa) at (-2,0);
			\fill (aa) circle (1pt) node[below] {\footnotesize$-\lambda_1$};
			\coordinate (xa) at (2.7,0);
			\fill (xa) circle (1pt) node[below] {\footnotesize$\lambda_2$};
			\coordinate (xaa) at (-2.7,0);
			\fill (xaa) circle (1pt) node[below] {\footnotesize$-\lambda_2$};
			\coordinate (b) at (0.8,0);
			\fill (b) circle (1pt) node[below] {\footnotesize$1$};
			\coordinate (ba) at (-0.8,0);
			\fill (ba) circle (1pt) node[below] {\footnotesize$-1$};
			\coordinate (c) at (1.35,0);
			\fill (c) circle (1pt) node[below] {\footnotesize$\sqrt{2}$};
			\coordinate (ca) at (-1.35,0);
			\fill (ca) circle (1pt) node[below] {\footnotesize$-\sqrt{2}$};
			\draw[dashed](-2.3,0)--(2.3,0);
			\draw[](-3.3,0)--(-0.8,0);			
			\draw[->](0.8,0)--(3.3,0)node[right]{ Re$z$};
		\end{tikzpicture}
	}
	\subfigure[]{
		\begin{tikzpicture}[node distance=2cm]
			\draw[lime!50, fill=lime!30] (-2.6,0)--(-2.6,-1.5)--(2.6,-1.5)--(2.6,0)--(0.8,0)--(1.6,0)arc (0:-180:0.8 and 0.5)--(0,0)--(0,0)arc (0:-180:0.8 and 0.5)--(-2.6,0);
			\draw[dashed](0,-1.5)--(0,1.5)node[above]{ Im$z$};
			\draw[lime!50, fill=lime!30] (0,0)--(0,0)arc (0:180:0.8 and 0.5)--(0,0);
			\draw [](-0.81,0) ellipse (0.8 and 0.5);
			\draw[lime!50, fill=lime!30] (0,0)--(0,0)arc (180:0:0.8 and 0.5)--(0,0);
			\draw [](0.81,0) ellipse (0.8 and 0.5);		
			\coordinate (I) at (0,0);
			\fill (I) circle (0pt) node[below] {$0$};
			\coordinate (a) at (1.6,0);
			\fill (a) circle (1pt) node[below] {\footnotesize$\lambda_1$};
			\coordinate (aa) at (-1.6,0);
			\fill (aa) circle (1pt) node[below] {\footnotesize$-\lambda_1$};
			\coordinate (b) at (1,0);
			\fill (b) circle (1pt) node[below] {\footnotesize$1$};
			\coordinate (ba) at (-1,0);
			\fill (ba) circle (1pt) node[below] {\footnotesize$-1$};
			\coordinate (c) at (2.3,0);
			\fill (c) circle (1pt) node[below] {\footnotesize$\sqrt{2}$};
			\coordinate (ca) at (-2.3,0);
			\fill (ca) circle (1pt) node[below] {\footnotesize$-\sqrt{2}$};
			\draw[dashed](-2.3,0)--(2.3,0);
			\draw[](-2.6,0)--(-1,0);
				\draw[->](1,0)--(2.6,0)node[right]{ Re$z$};
		\end{tikzpicture}
	}
	\subfigure[]{
		\begin{tikzpicture}[node distance=2cm]
			\draw[lime!50, fill=lime!30] (-3,0)--(-3,-1.5)--(3,-1.5)--(3,0)--(0.8,0)--(1.6,0)arc (0:-180:0.8 and 0.5)--(0,0)--(0,0)arc (0:-180:0.8 and 0.5)--(-3,0);
			\draw[->](2.1,0)--(3,0)node[right]{ Re$z$};
			\draw[dashed](0,-1.5)--(0,1.5)node[above]{ Im$z$};
			\draw[lime!50, fill=lime!30] (0,0)--(0,0)arc (0:180:0.8 and 0.5)--(0,0);
			\draw [](-0.81,0) ellipse (0.8 and 0.5);
			\draw[lime!50, fill=lime!30] (0,0)--(0,0)arc (180:0:0.8 and 0.5)--(0,0);
			\draw [](0.81,0) ellipse (0.8 and 0.5);		
			\coordinate (I) at (0,0);
			\fill (I) circle (0pt) node[below] {$0$};
			\coordinate (a) at (1.6,0);
			\fill (a) circle (1pt) node[below] {\footnotesize$\lambda_1$};
			\coordinate (aa) at (-1.6,0);
			\fill (aa) circle (1pt) node[below] {\footnotesize$-\lambda_1$};
			\coordinate (b) at (2.1,0);
			\fill (b) circle (1pt) node[below] {\footnotesize$1$};
			\coordinate (ba) at (-2.1,0);
			\fill (ba) circle (1pt) node[below] {\footnotesize$-1$};
			\coordinate (c) at (2.6,0);
			\fill (c) circle (1pt) node[below] {\footnotesize$c$};
			\coordinate (ca) at (-2.6,0);
			\fill (ca) circle (1pt) node[below] {\footnotesize$-c$};
			\draw[dashed](-2.6,0)--(2.6,0);
			\draw[](-3,0)--(-2.1,0);
		\end{tikzpicture}
	}
	\caption{\footnotesize  In the white region, Im$p_-<0$, while in another region, Im$p_-<0$.  }
	\label{figp-}
\end{figure}
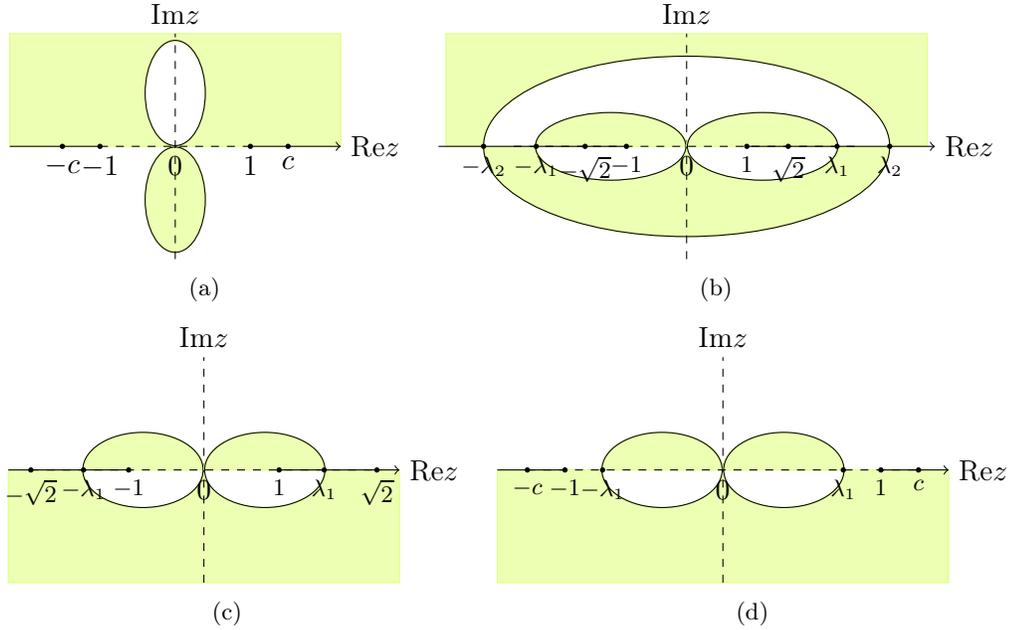
These figures mean that in the case of subfigure (a), we can deal with the jump reserve $p_-$. But in the other region of $\xi$, the applicability of $p_-$ need to discussed. In the following section, we will study different g-functions respectively.

\section{ Slow-decay   background  region    }\label{sec3}
\quad In this section, we will discuss the applicability of $p_-$ in different region of $\xi$.  The Region  I contains the following  three different cases:
$$(i)  \ \xi<\frac{3}{4}; \  \ \ (ii)\  1< c\leq \lambda_1, \  \frac{3}{4}<\xi<1; \ \ (iii)\ 1< c\leq \lambda_1, \  1<\xi,$$
 where $\lambda_1=\sqrt{\frac{1}{1-\xi}-\sqrt{(\frac{1}{1-\xi}-4)\frac{1}{1-\xi}}}$.
As shown in Figure \ref{figp-},
in the  case (i),  the jump on $\mathbb{R}\setminus[-1,1]$ is easy to deal with.
Specially in the case (ii)  and case (iii),  we have that $c<\lambda_1$.

\textbf{(i) } \  The case $\xi<\frac{3}{4}$.\\
 There has no stationary point, we define
\begin{align*}
&\Sigma^{\pm}(\xi)=e^{\pm i \psi  }\mathbb{R}^+\cup e^{i ( \pi  \mp\psi)}\mathbb{R}^+,  \ \ \overline{\Omega} = \mathbb{C}\setminus\left( \Omega^+\cup\Omega^-\right),\\
	&\Omega^\pm (\xi)=\left\lbrace z;z=e^{ \pm \phi i}l,\ l\in\mathbb{R},\ 0<\phi<\psi \right\rbrace \cup\left\lbrace z;z=e^{\pm\phi i}l,\ l\in\mathbb{R},\ \pi\psi<\phi<\pi \right\rbrace ,
\end{align*}
where   $\psi$ is a small enough positive angle such that it is non-intersect with the curve  Im$p_-(z)=0$.
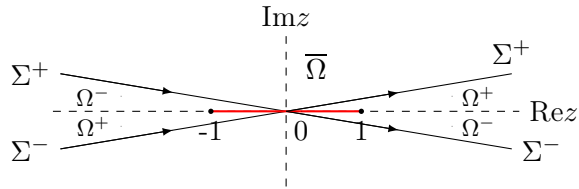
\begin{figure}[h]
	\centering
	\begin{tikzpicture}[node distance=2cm]
		\draw(0,0)--(3,0.5)node[above]{$\Sigma^+$};
		\draw(0,0)--(-3,0.5)node[left]{$\Sigma^+$};
		\draw(0,0)--(-3,-0.5)node[left]{$\Sigma^-$};
		\draw(0,0)--(3,-0.5)node[right]{$\Sigma^-$};
		\draw[dashed](-3.1,0)--(3.1,0)node[right]{ Re$z$};
		\draw[dashed](0,-1)--(0,1)node[above]{ Im$z$};
		\draw[red,thick](-1,0)--(1,0);
		\draw[-latex](-3,-0.5)--(-1.5,-0.25);
		\draw[-latex](-3,0.5)--(-1.5,0.25);
		\draw[-latex](0,0)--(1.5,0.25);
		\draw[-latex](0,0)--(1.5,-0.25);
		\coordinate (a) at (1,0);
		\fill (a) circle (1pt) node[below] {1};
		\coordinate (aa) at (-1,0);
		\fill (aa) circle (1pt) node[below] {-1};
		\coordinate (C) at (-0.2,2.2);
		\coordinate (D) at (2.2,0.2);
		\fill (D) circle (0pt) node[right] {\footnotesize $\Omega^+$};
		\coordinate (J) at (-2.2,-0.2);
		\fill (J) circle (0pt) node[left] {\footnotesize $\Omega^+$};
		\coordinate (k) at (-2.2,0.2);
		\fill (k) circle (0pt) node[left] {\footnotesize $\Omega^-$};
		\coordinate (k) at (2.2,-0.2);
		\fill (k) circle (0pt) node[right] {\footnotesize $\Omega^-$};
		\coordinate (I) at (0.2,0);
		\fill (I) circle (0pt) node[below] {$0$};
		\node at (0.4,0.6){$\overline{\Omega} $};
	\end{tikzpicture}
	\caption{\footnotesize  Figure of $\Sigma^{(1)}$ in the case of $\xi<\frac{3}{4}$. }
	\label{figR2}
\end{figure}

\textbf{(ii) }\ The case $ 1< c\leq \lambda_1$,  $\frac{3}{4}<\xi<1$.\\
Let curve $\Sigma_{j}^\pm, j=1,2 $ as Figure \ref{I2} shown.  Near the point $(0,0)$ $\Sigma_{1},\Sigma_{2}$ are the same as the case $\xi<\frac{3}{4}$.

\begin{figure}[h]
	\centering
	\begin{tikzpicture}[node distance=2cm]
		\draw[dashed](0,-1.5)--(0,1.5)node[above]{ Im$z$};
		\draw[dashed](-2,0)--(2,0);
		\draw(-1,0)--(-6,0);
		\draw(1,0)--(6,0)node[right]{ Re$z$};
          \draw[red,thick](-2,0)--(-1,0);
             \draw[red,thick](1,0)--(2,0);
		\coordinate (I) at (0,0);
		\fill (I) circle (0pt) node[below] {\footnotesize$0$};
		\coordinate (a) at (3,0);
		\fill (a) circle (1pt) node[below] {\footnotesize$\lambda_1$};
		\coordinate (aa) at (-3,0);
		\fill (aa) circle (1pt) node[below] {\footnotesize$-\lambda_1$};
		\coordinate (c) at (1,0);
		\fill (c) circle (1pt) node[below] {\footnotesize$1$};
		\coordinate (ca) at (-1,0);
		\fill (ca) circle (1pt) node[below] {\footnotesize$-1$};
		\fill (2,0) circle (1pt) node[below] {\footnotesize$c$};
		\fill (-2,0) circle (1pt) node[below] {\footnotesize$-c$};
		\coordinate (y) at (4.5,0);
		\fill (y) circle (1pt) node[below] {\footnotesize$\lambda_2$};
		\coordinate (cy) at (-4.5,0);
		\fill (cy) circle (1pt) node[below] {\footnotesize$-\lambda_2$};
		\draw(4.5,0)--(5.7,1.4)node[above]{\footnotesize$\Sigma_1^+$};
		\draw(-4.5,0)--(-5.7,1.4)node[left]{\footnotesize$\Sigma_1^+$};
		\draw(-4.5,0)--(-5.7,-1.4)node[left]{\footnotesize$\Sigma_1^-$};
		\draw(4.5,0)--(5.7,-1.4)node[right]{\footnotesize$\Sigma_1^-$};
		\draw[-latex](-5.7,-1.4)--(-5.1,-0.7);
		\draw[-latex](-5.7,1.4)--(-5.1,0.7);
		\draw[-latex](4.5,0)--(5.1,0.7);
		\draw[-latex](4.5,0)--(5.1,-0.7);
		\draw(-3,0)--(-1.5,0.6)node[above]{\footnotesize$\Sigma_1^+$};
		\draw(0,0)--(-1.5,0.6);
		\draw(3,0)--(1.5,0.6)node[above]{\footnotesize$\Sigma_1^+$};
		\draw(0,0)--(1.5,0.6);
		\draw(-3,0)--(-1.5,-0.6)node[below]{\footnotesize$\Sigma_1^-$};
		\draw(0,0)--(-1.5,-0.6);
		\draw(3,0)--(1.5,-0.6)node[below]{\footnotesize$\Sigma_1^-$};
		\draw(0,0)--(1.5,-0.6);
		\draw[-latex](-1.5,0.6)--(-0.75,0.3);
		\draw[-latex](-3,0)--(-2.25,0.3);
		\draw[-latex](-1.5,-0.6)--(-0.75,-0.3);
		\draw[-latex](-3,0)--(-2.25,-0.3);
		\draw[-latex](0,0)--(0.75,0.3);
		\draw[-latex](1.5,0.6)--(2.25,0.3);
		\draw[-latex](0,0)--(0.75,-0.3);
		\draw[-latex](1.5,-0.6)--(2.25,-0.3);
		\draw(-4.5,0)--(-3.75,0.6)node[above]{\footnotesize$\Sigma_2^+$};
		\draw(-4.5,0)--(-3.75,-0.6)node[below]{\footnotesize$\Sigma_2^-$};
		\draw(-3.75,0.6)--(-3,0);
		\draw(-3.75,-0.6)--(-3,0);
		\draw(4.5,0)--(3.75,0.6)node[above]{\footnotesize$\Sigma_2^+$};
		\draw(4.5,0)--(3.75,-0.6)node[below]{\footnotesize$\Sigma_2^-$};
		\draw(3.75,0.6)--(3,0);
		\draw(3.75,-0.6)--(3,0);
		\draw[-latex](-4.5,0)--(-4.125,0.3);
		\draw[-latex](-4.5,0)--(-4.125,-0.3);
		\draw[-latex](-3.75,0.6)--(-3.375,0.3);
		\draw[-latex](-3.75,-0.6)--(-3.375,-0.3);
		\draw[-latex](3.75,0.6)--(4.125,0.3);
		\draw[-latex](3.75,-0.6)--(4.125,-0.3);
		\draw[-latex](3,0)--(3.375,0.3);
		\draw[-latex](3,0)--(3.375,-0.3);
		\coordinate (r) at (-1.5,0.02);
		\fill (r) circle (0pt) node[below] {\footnotesize$\Omega_1^-$};
		\coordinate (r1) at (1.5,0.02);
		\fill (r1) circle (0pt) node[below] {\footnotesize$\Omega_1^-$};
		\coordinate (hh) at (-1.5,-0.02);
		\fill (hh) circle (0pt) node[above] {\footnotesize$\Omega_1^+$};
		\coordinate (hr) at (1.5,-0.02);
		\fill (hr) circle (0pt) node[above] {\footnotesize$\Omega_1^+$};
		\coordinate (r) at (-3.75,0.02);
		\fill (r) circle (0pt) node[below] {\footnotesize$\Omega_2^-$};
		\fill (3.75,0.02) circle (0pt) node[below] {\footnotesize$\Omega_2^-$};
		\coordinate (r) at (3.75,-0.02);
		\fill (r) circle (0pt) node[above] {\footnotesize$\Omega_2^+$};
		\fill (-3.75,0.02) circle (0pt) node[above] {\footnotesize$\Omega_2^+$};
		\coordinate (C) at (-0.2,2.2);
		\coordinate (D) at (5.2,0.6);
		\fill (D) circle (0pt) node[right] {\footnotesize $\Omega_1^+$};
		\coordinate (J) at (-5.2,-0.6);
		\fill (J) circle (0pt) node[left] {\footnotesize $\Omega_1^-$};
		\coordinate (k) at (-5.2,0.6);
		\fill (k) circle (0pt) node[left] {\footnotesize $\Omega_1^+$};
		\coordinate (k) at (5.2,-0.6);
		\fill (k) circle (0pt) node[right] {\footnotesize $\Omega_1^-$};
		\coordinate (v) at (0.2,1.2);
		\fill (v) circle (0pt) node[right] {\footnotesize $\overline{\Omega} $};	
	\end{tikzpicture}
	\caption{\footnotesize  Figure of $\Sigma^{(1)}$ in the case of  $ 1< c\leq \lambda_1$,  $\frac{3}{4}<\xi<1$. }
	\label{I2}
\end{figure}
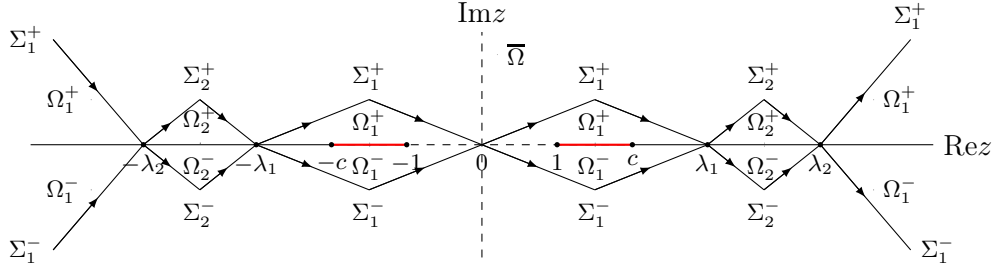

\textbf{(iii) }\ The case $ 1< c\leq \lambda_1, \  1<\xi$.\\
Let the  curve $\Sigma_{j}^\pm, j=1,2 $ as Figure \ref{I3} showing. The only difference from the case \textbf{(ii)} is there is only two phase point $\pm \lambda_1$.
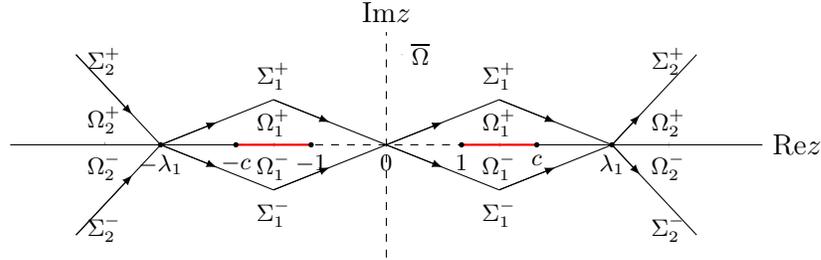
\begin{figure}[h]
	\centering
	\begin{tikzpicture}[node distance=2cm]
		\draw[dashed](0,-1.5)--(0,1.5)node[above]{ Im$z$};
		\draw[dashed](-2,0)--(2,0);
		\draw(-1,0)--(-5,0);
		\draw(1,0)--(5,0)node[right]{ Re$z$};
          \draw[red,thick](-2,0)--(-1,0);
             \draw[red,thick](1,0)--(2,0);
		\coordinate (I) at (0,0);
		\fill (I) circle (0pt) node[below] {\footnotesize$0$};
		\coordinate (a) at (3,0);
		\fill (a) circle (1pt) node[below] {\footnotesize$\lambda_1$};
		\coordinate (aa) at (-3,0);
		\fill (aa) circle (1pt) node[below] {\footnotesize$-\lambda_1$};
		\coordinate (c) at (1,0);
		\fill (c) circle (1pt) node[below] {\footnotesize$1$};
		\coordinate (ca) at (-1,0);
		\fill (ca) circle (1pt) node[below] {\footnotesize$-1$};
		\fill (2,0) circle (1pt) node[below] {\footnotesize$c$};
		\fill (-2,0) circle (1pt) node[below] {\footnotesize$-c$};

		\draw(-3,0)--(-1.5,0.6)node[above]{\footnotesize$\Sigma_1^+$};
		\draw(0,0)--(-1.5,0.6);
		\draw(3,0)--(1.5,0.6)node[above]{\footnotesize$\Sigma_1^+$};
		\draw(0,0)--(1.5,0.6);
		\draw(-3,0)--(-1.5,-0.6)node[below]{\footnotesize$\Sigma_1^-$};
		\draw(0,0)--(-1.5,-0.6);
		\draw(3,0)--(1.5,-0.6)node[below]{\footnotesize$\Sigma_1^-$};
		\draw(0,0)--(1.5,-0.6);
		\draw[-latex](-1.5,0.6)--(-0.75,0.3);
		\draw[-latex](-3,0)--(-2.25,0.3);
		\draw[-latex](-1.5,-0.6)--(-0.75,-0.3);
		\draw[-latex](-3,0)--(-2.25,-0.3);
		\draw[-latex](0,0)--(0.75,0.3);
		\draw[-latex](1.5,0.6)--(2.25,0.3);
		\draw[-latex](0,0)--(0.75,-0.3);
		\draw[-latex](1.5,-0.6)--(2.25,-0.3);
		\draw(-4.125,1.2)--(-3,0);
		\draw(-4.125,-1.2)--(-3,0);
		\draw(4.125,1.2)--(3.75,0.8)node[above]{\footnotesize$\Sigma_2^+$};
		\draw(4.125,-1.2)--(3.75,-0.8)node[below]{\footnotesize$\Sigma_2^-$};
		\draw(-4.125,1.2)--(-3.75,0.8)node[above]{\footnotesize$\Sigma_2^+$};
		\draw(-4.125,-1.2)--(-3.75,-0.8)node[below]{\footnotesize$\Sigma_2^-$};
		\draw(3.75,0.8)--(3,0);
		\draw(3.75,-0.8)--(3,0);
		\draw[-latex](-3.75,0.8)--(-3.375,0.4);
		\draw[-latex](-3.75,-0.8)--(-3.375,-0.4);
		\draw[-latex](3,0)--(3.375,0.4);
		\draw[-latex](3,0)--(3.375,-0.4);
		\coordinate (r) at (-1.5,0.02);
		\fill (r) circle (0pt) node[below] {\footnotesize$\Omega_1^-$};
		\coordinate (r1) at (1.5,0.02);
		\fill (r1) circle (0pt) node[below] {\footnotesize$\Omega_1^-$};
		\coordinate (hh) at (-1.5,-0.02);
		\fill (hh) circle (0pt) node[above] {\footnotesize$\Omega_1^+$};
		\coordinate (hr) at (1.5,-0.02);
		\fill (hr) circle (0pt) node[above] {\footnotesize$\Omega_1^+$};
		\coordinate (r) at (-3.75,0.02);
		\fill (r) circle (0pt) node[below] {\footnotesize$\Omega_2^-$};
		\fill (3.75,0.02) circle (0pt) node[below] {\footnotesize$\Omega_2^-$};
		\coordinate (r) at (3.75,-0.02);
		\fill (r) circle (0pt) node[above] {\footnotesize$\Omega_2^+$};
		\fill (-3.75,0.02) circle (0pt) node[above] {\footnotesize$\Omega_2^+$};
		\coordinate (C) at (-0.2,2.2);
		\coordinate (v) at (0.2,1.2);
		\fill (v) circle (0pt) node[right] {\footnotesize $\overline{\Omega} $};	
	\end{tikzpicture}
	\caption{\footnotesize  Figure of $\Sigma^{(1)}$ in the case of  $ 1< c\leq \lambda_1, \  1<\xi$. }
	\label{I3}
\end{figure}

 Moreover, in this region depending on $\xi$, $c$,  we introduce
a piecewise matrix interpolation function
\begin{equation}
	G(z;\xi,c)=\left\{ \begin{array}{ll}
	\left(\begin{array}{cc}
		1 & 0\\
		-r_1e^{2itp_-} & 1
	\end{array}\right),   &\text{as } z\in\Omega_1^+;\\[12pt]
	\left(\begin{array}{cc}
		1 & -r_2e^{-2itp_-}\\
		0 & 1
	\end{array}\right),   &\text{as } z\in\Omega_1^-;\\
	\left(\begin{array}{cc}
		1 &\frac{ r_2e^{-2itp_-}}{1-r_1r_2}\\
		0 & 1
	\end{array}\right) &\text{as } z\in\Omega_2^+;\\
	\left(\begin{array}{cc}
		1 &0\\
		\frac{r_1e^{2itp_-}}{1-r_1r_2} & 1
	\end{array}\right), &\text{as } z\in\Omega_2^-;\\
	I &\text{as } 	z \text{ in elsewhere},
\end{array}\right. \label{funcG-}
\end{equation}
To deal with the jump on $\mathbb{R}$, we denote a interval
\begin{align}
	I(\xi)=\left\{ \begin{array}{ll}
		\emptyset,   &\text{as } \xi<\frac{3}{4};\\[12pt]
		[-\lambda_2,-\lambda_1]\cup[\lambda_1,\lambda_2],   &\text{as } \frac{3}{4}<\xi<1;\\[12pt]
		(-\infty,-\lambda_1]\cup[\lambda_1,+\infty) &\text{as } \xi>1;
	\end{array}\right.
\end{align}
and   introduce  an  auxiliary function $\delta(z)=\delta(z;\xi,c)$, which satisfies  the following
scalar RH problem.

	(a) $\delta(z)$ is analytic on $\mathbb{C}\setminus I(\xi)$;

    (b)  $\delta_-(z)=\delta_+(z)(1-r_1r_2),\ \ z\in I(\xi), \ \ \delta_-(z)=\delta_+(z),\ \ z\in \mathbb{R} \setminus I(\xi);$

	(c) $\delta(z)\to 1$, \ as $z\to\infty\in\mathbb{C}\setminus I(\xi)$.\\
The solution of the RH problem can given by
\begin{align}
	\log \delta(z)=-\frac{1}{2\pi i}\int_{I(\xi)}\dfrac{\log(1-r_1(s)r_2(s))}{s-z}ds.
\end{align}
which has the following properties
	\begin{align}
		\delta(z)=&\exp\left\lbrace I_{\delta}^1\right\rbrace\cdot \left( 1+zI_{\delta}^2\right)
		+\mathcal{O}(z^2), \ z\to0\in\mathbb{C}^+, \nonumber
	\end{align}
where
	\begin{align}
		I_{\delta}^1=&-\frac{c}{2\pi }\int_{I(\xi)}\dfrac{\log(1-r_1(s)r_2(s))}{s}ds,\label{Ide0}\\
		I_{\delta}^2=&-\frac{c}{2\pi }\int_{I(\xi)}\dfrac{\log(1-r_1(s)r_2(s))}{s^2}ds.\label{Ide1}
	\end{align}
 Define a  new  matrix-valued   function
\begin{equation}
	M^{(1)}(z;\xi,c)\triangleq  N(z)G(z;\xi,c)\delta^{\sigma_3},
\end{equation}
which  then satisfies the following RH problem.

\begin{RHP}\label{RHP-1}
	Find a matrix-valued function  $M^{(1)}(z)= M^{(1)}(z;\xi,c )$ which satisfies
	
	$\blacktriangleright$ Analyticity: $M^{(1)}(z)$ is meromorphic in $\mathbb{C}\setminus  \Sigma^{(1)}$, where
	\begin{equation}
		\Sigma^{(1)}(\xi)= \left(\cup_{j=1}^4\Sigma_j(\xi)\right)\cup \left[-1,1\right] ;
	\end{equation}

	$\blacktriangleright$ Symmetry: $M^{(1)}(z)=\sigma_2M^{(1)}(-z)\sigma_2$=$\sigma_1\overline{M^{(1)}(\bar{z})}\sigma_1$;
	
	$\blacktriangleright$ Jump condition: $M^{(1)}$ has continuous boundary values $M^{(1)}_\pm(z)$ on  $ \Sigma^{(1)}(\xi)$ and
	\begin{equation}
		M^{(1)}_+(z)=M^{(1)}_-(z)V^{(1)}(z),\hspace{0.5cm}z \in  \Sigma^{(1)},
	\end{equation}
	where
	\begin{equation}
		V^{(1)}(z) =\left\{ \begin{array}{ll}
				\left(\begin{array}{cc}
					1 & 0\\
					r_1\delta^2e^{2itp_-} & 1
				\end{array}\right),   &\text{as } z\in\Sigma_1^+;\\[12pt]
				\left(\begin{array}{cc}
					1 & -r_2\delta^{-2}e^{-2itp_-}\\
					0 & 1
				\end{array}\right),   &\text{as } z\in\Sigma_1^-;\\
				\left(\begin{array}{cc}
					1 &-\frac{ r_2\delta^{-2}e^{-2itp_-}}{1-r_1r_2}\\
					0 & 1
				\end{array}\right) &\text{as } z\in\Sigma_2^+;\\
				\left(\begin{array}{cc}
					1 &0\\
					\frac{r_1\delta^2e^{2itp_-}}{1-r_1r_2} & 1
				\end{array}\right), &\text{as } z\in\Sigma_2^-;	\\
			\left(\begin{array}{cc}
				0 & i\\
				i & 0
			\end{array}\right), z\in[-1,1]
			\end{array}\right..
	\end{equation}
	
	$\blacktriangleright$ Asymptotic behaviors: $M^{(1)}(z) = I+\mathcal{O}(z^{-1}),\hspace{0.5cm}z \rightarrow \infty;$

	$\blacktriangleright$ Singularity: $M^{(1)}(z)$ has singularity at $z=\pm1$ with
	\begin{align}
		&M^{(1)}(z)\sim \mathcal{O}(z\mp1)^{-1/4} ,\ z\to \pm1\text{ in }\mathbb{C}\setminus \Sigma^{(1)}.
	\end{align}
\end{RHP}
We construct the solution $M^{(1)}(z)$ as follow
\begin{equation}
	M^{(1)}(z) =\left\{\begin{array}{ll}
		E(z;\xi,c)M^{mod1}(z;\xi,c), & z\notin U_{\pm \lambda_1}\cup U_{\pm \lambda_2},\\[4pt]
		E(z;\xi,c)M^{1,\pm}(z;\xi,c),  &z\in U_{\pm \lambda_1},\\[4pt]
		E(z;\xi,c)M^{2,\pm}(z;\xi,c),  &z\in U_{\pm \lambda_2},
	\end{array}\right. \label{transm0}
\end{equation}
where  we denote $ U_{\pm \lambda_j}$ as the   neighborhood of $\pm \lambda_j$
\begin{equation}
	U_{\pm \lambda_j}= \left\lbrace z:|z\pm \lambda_j|\leq \varrho \right\rbrace .
\end{equation}
Here, $\varrho$ is a small positive constant such that $\varrho<\min\left\lbrace \frac{z_j-1}{3}, \frac{z_j-c}{3},\epsilon \right\rbrace $.
In the case (i),  the   jump matrix exponentially  decays  to the identity matrix $I$ as $t\to\infty$ on $\Sigma_1$ and $\Sigma_2$ because of the absence of phase point, namely, $U(\pm \lambda_j)=\emptyset$.  Although $z=0$ is a pole of $p_-$,  the exponential function   decay to $0$ at a speed of $e^{-C(\xi,c)t/|z|^2}$. So this case is trivial. Then the   existence and uniqueness of $E(z;\xi,c)$ can be shown  by  a  small-norm RH problem \cite{RN9,RN10} with
\begin{equation}
	E(z)=I+\mathcal{O}(e^{-C(\xi,c)t}) \label{asyEr-1}
\end{equation}
with  $C(\xi,c)$ being  a positive constant rely on $\xi$ and $c$. However, in the case(ii) and (iii), the phase points have contribution on $t\to\infty$.  The jump matrix exponentially  decaying to the identity matrix $I$ as $t\to\infty$ away from phase points.
\subsection{A model RH problem on  cuts }
\quad The jump matrix exponentially  decays to the identity matrix $I$ as $t\to\infty$ on $\Sigma_1^\pm$, which finally leads to the model RH problem:
\begin{RHP}\label{model1}
	Find a matrix-valued function  $  M^{mod1}(z )$ which satisfies 	
	
	$\blacktriangleright$ Analyticity: $  M^{mod1}(z )$ is holomorphic in  $\mathbb{C}\setminus[-1,1]$;
	
	$\blacktriangleright$ Jump condition: $M^{mod1}$ has continuous boundary values $M^{mod1}_\pm(z)$

\quad on $[-1,1]$ and
	\begin{equation}
		M^{mod1}_+(z)=M^{mod1}_-(z)V^{mod1}(z),\hspace{0.5cm}z \in[-1,1],
	\end{equation}
	where
	\begin{equation}
		V^{mod1}(z)=\left(\begin{array}{cc}
			0 & i\\
			i & 0
		\end{array}\right), z\in[-1,1];
	\end{equation}

	$\blacktriangleright$ Asymptotic behaviors: $M^{mod1}(z) = I+\mathcal{O}(z^{-1}),\hspace{0.5cm}z \rightarrow \infty;$
 
	$\blacktriangleright$ Singularity: $M^{mod1}(z)$ has singularity at $z=\pm 1$ with:
	\begin{align}
		&M^{mod1}(z)\sim \mathcal{O}(z\mp c)^{-1/4} ,\ z\to \pm 1\text{ in }\mathbb{C}\setminus\mathbb{R}.
	\end{align}
\end{RHP}
The solution of  this  model RH problem  can be given  by
\begin{equation}
	M^{mod1}(z)=\frac{1}{\sqrt{2}}\left(\begin{array}{cc}
		\phi_1(z)+\phi_1(z)^{-1} & \phi_1(z)-\phi_1(z)^{-1}\\
		\phi_1(z)-\phi_1(z)^{-1} & \phi_1(z)+\phi_1(z)^{-1}
	\end{array}\right),
\end{equation}
where $\phi_1(z)=\left( \frac{1+z}{1-z}\right) ^{1/4}$.
As $z\to 0\in\mathbb{C}^+$,
\begin{equation}
	M^{mod1}(z)=\sqrt{2}\left(\begin{array}{cc}
		0 & i\\
		i & 0
	\end{array}\right)+\frac{zi}{\sqrt{2}}I+\mathcal{O}(z^2)
\end{equation}

\subsection{Localized RH problem near phase points}
\quad
As $t\to +\infty$, we consider to reduce the  RHP \ref{RHP-1} to a model RH problem  whose solution can be given explicitly in terms of parabolic cylinder
functions on every contour $\Sigma^{(0)}_{j,\pm}=\Sigma^{(1)}\cap U(\pm \lambda_j)$ respectively. And we only give the details of $\Sigma^{(0)}_{1,+}$, the model of other  critical point can be  constructed similar. We denote $\hat{\Sigma}^{(0)}_{1,+}$ as the contour $\{z=\lambda_1+le^{\pm\varphi i},\ l\in\mathbb{R}\}$ oriented from $\Sigma^{(0)}_{1,+}$, and $\hat{\Sigma}_{j,\pm}$ is the extension of $\Sigma_{j,\pm}$  respectively. And for $z$ near $\lambda_1$, note that $p_-''(\lambda_1)>0$, so we rewrite phase function $p_-$ as
\begin{align}
	p_-(z)=p_-(\lambda_1)+(z-\lambda_1)^2\frac{p_-''(\lambda_1)}{2}+\mathcal{O}((z-\lambda_1)^3).
\end{align}

\begin{figure}
	\centering
		\begin{tikzpicture}[node distance=2cm]
			\draw[->](0,0)--(2.5,1.4)node[right]{$\hat{\Sigma}_{3}$};
			\draw(0,0)--(-2.5,1.4)node[left]{$\hat{\Sigma}_{1}$};
			\draw(0,0)--(-2.5,-1.4)node[left]{$\hat{\Sigma}_{2}$};
			\draw[->](0,0)--(2.5,-1.4)node[right]{$\hat{\Sigma}_{4}$};
			\draw[dashed](-3.8,0)--(3.8,0)node[right]{\scriptsize Re$z$};
			\draw[->](-2.5,-1.4)--(-1.25,-0.7);
			\draw[->](-2.5,1.4)--(-1.25,0.7);
			\coordinate (A) at (-1.2,0.5);
			\coordinate (B) at (-1.2,-0.5);
			\coordinate (G) at (1.4,0.5);
			\coordinate (H) at (1.4,-0.5);
			\coordinate (I) at (0,0);
		\end{tikzpicture}
		\caption{\footnotesize The jump  contour $\hat{\Sigma}^{(0)}_+$   near the point $z_2$.}
	\label{figS00}
\end{figure}
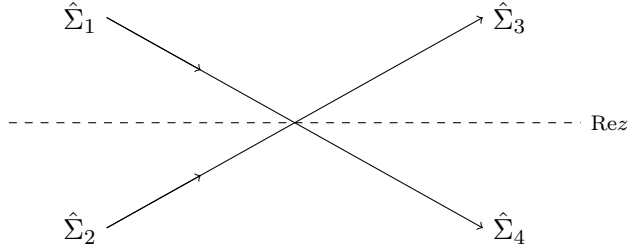
Consider following local RH problem
\begin{RHP}\label{RHPlo0}
	Find a matrix-valued function  $ M^{1,+}(z)$ with following properties
	
	$\blacktriangleright$ Analyticity: $M^{1,+}(z)$ is analytical  in $\mathbb{C}\setminus \hat{\Sigma}^{(0)}_+ $;

	$\blacktriangleright$ Jump condition: $M^{1,+}(z)$ has continuous boundary values $M^{1,+}_\pm$ on $\hat{\Sigma}^{(0)}_+$ and
	\begin{equation}
		M^{1,+}_+(z)=M^{1,+}_-(z)V^{1,+}(z),\hspace{0.2cm}z \in \hat{\Sigma}^{(0)}_+,
	\end{equation}
	where jump matrix $V^{1,+}(z)$ is given by (see Figure \ref{figS00})
	\begin{align}
		V^{1,+}(z)=\left\{ \begin{array}{ll}
			\left(\begin{array}{cc}
				1 & 0\\
				r_1\delta^2e^{2itp_-} & 1
			\end{array}\right),   &\text{as } z\in\hat{\Sigma}_1;\\[12pt]
			\left(\begin{array}{cc}
				1 & -r_2\delta^{-2}e^{-2itp_-}\\
				0 & 1
			\end{array}\right),   &\text{as } z\in\hat{\Sigma}_2;\\
			\left(\begin{array}{cc}
				1 &-\frac{ r_2\delta^{-2}e^{-2itp_-}}{1-r_1r_2}\\
				0 & 1
			\end{array}\right) &\text{as } z\in\hat{\Sigma}_3;\\
			\left(\begin{array}{cc}
				1 &0\\
				\frac{r_1\delta^2e^{2itp_-}}{1-r_1r_2} & 1
			\end{array}\right), &\text{as } z\in\hat{\Sigma}_4;	
		\end{array}\right..
	\end{align}
	
	$\blacktriangleright$ Asymptotic behaviors: $M^{1,+}(z) =  I+\mathcal{O}(z^{-1}),\hspace{0.5cm}z \rightarrow \infty.$
	 
\end{RHP}	

RHP \ref{RHPlo0} does not possess the symmetry condition, because it is a local model and will only be used for bounded values of $z$.
In order to motivate the model, let $\zeta = \zeta(z)$ denote the rescaled local variable
\begin{align}
	\zeta(z)=2t^{1/2}\sqrt{p_-''(\lambda_1)}(z-\lambda_1).
\end{align}
This map is a conformal bijection maps $U( \lambda_1)$ to an expanding neighborhood of $\zeta= 0$. We choose the branch which maps the upper half plane to the lower half plane. Moreover, we denote:
\begin{align}
	r_{\lambda_1}^\pm=r_1(\pm \lambda_1)\delta^2_{\lambda_1}(\pm z_2)e^{2itp_-(\pm z_2)}(4tp_-''(\lambda_1))^{i\nu(\pm\lambda_1)}.
\end{align}
where
 $$\nu(\lambda_1)=\frac{1}{2\pi}\log(1-r_1r_2(\lambda_1)).$$

Through this change of variable,  the jump $V^{1,+}(z)$ approximates to  the jump of a parabolic cylinder model problem as follow:
\begin{RHP}\label{RHPpc0}
	Find a matrix-valued function  $ M^{pc}(\zeta;\xi)$ with following properties:
	
	$\blacktriangleright$ Analyticity: $M^{pc}(\zeta;\xi)$ is analytical  in $\mathbb{C}\setminus \Sigma^{pc} $ with $\Sigma^{pc}=\left\lbrace\mathbb{R}e^{\varphi i} \right\rbrace \cup \left\lbrace\mathbb{R}e^{(\pi-\varphi) i} \right\rbrace$ shown in Figure \ref{sigpc0};

	$\blacktriangleright$ Jump condition: $M^{pc}$ has continuous boundary values $M^{pc}_\pm$ on $\Sigma^{pc}$ and
	\begin{equation}
		M^{pc}_+(\zeta;\xi)=M^{pc}_-(\zeta;\xi)V^{pc}(\zeta),\hspace{0.5cm}\zeta \in \Sigma^{\zeta},
	\end{equation}
	where
	\begin{align}
		V^{pc}(\zeta;\xi)=\left\{\begin{array}{ll}
			\left(\begin{array}{cc}
				1 & 0\\
				r_{\lambda_1}^+\zeta^{2i\nu(\lambda_1)}e^{\frac{i}{2}\zeta^2} & 1
			\end{array}\right),  & \zeta\in\mathbb{R}^+e^{\varphi i},\\[10pt]
			\left(\begin{array}{cc}
				1& -\bar{r}_{\lambda_1}\zeta^{-2i\nu(\lambda_1)}e^{-\frac{i}{2}\zeta^2}\\
				0&1
			\end{array}\right),   & \zeta\in \mathbb{R}^+e^{-\varphi i},\\[10pt]
			\left(\begin{array}{cc}
				1 & 0\\
				\frac{r_{\lambda_1}^+}{1-|r_{\lambda_1}^+|^2}\zeta^{2i\nu(\lambda_1)}e^{\frac{i}{2}\zeta^2} & 1
			\end{array}\right),   & \zeta\in \mathbb{R}^+e^{(-\pi+\varphi) i},\\[10pt]
			\left(\begin{array}{cc}
				1 & -\frac{\bar{r}_{\lambda_1}}{1-|r_{\lambda_1}^+|^2}\zeta^{-2i\nu(\lambda_1)}e^{-\frac{i}{2}\zeta^2}\\
				0 & 1
			\end{array}\right),   & \zeta\in \mathbb{R}^+e^{(\pi-\varphi) i}.
		\end{array}\right.
	\end{align}

	$\blacktriangleright$ Asymptotic behaviors: $M^{pc}(\zeta;\xi) =  I+M^{pc}_1\zeta^{-1}+\mathcal{O}(\zeta^{-2}),\hspace{0.5cm}\zeta \rightarrow \infty.$

\end{RHP}	
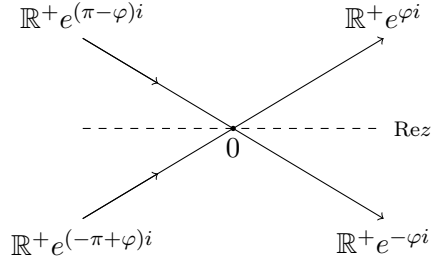
\begin{figure}
	\centering
	\begin{tikzpicture}[node distance=2cm]
		\draw[->](0,0)--(2,1.2)node[above]{$\mathbb{R}^+e^{\varphi i}$};
		\draw(0,0)--(-2,1.2)node[above]{$\mathbb{R}^+e^{(\pi-\varphi)i}$};
		\draw(0,0)--(-2,-1.2)node[below]{$\mathbb{R}^+e^{(-\pi+\varphi)i}$};
		\draw[->](0,0)--(2,-1.2)node[below]{$\mathbb{R}^+e^{-\varphi i}$};
		\draw[dashed](-2,0)--(2,0)node[right]{\scriptsize Re$z$};
		\draw[->](-2,-1.2)--(-1,-0.6);
		\draw[->](-2,1.2)--(-1,0.6);
		\coordinate (A) at (-1.2,0.5);
		\coordinate (B) at (-1.2,-0.5);
		\coordinate (G) at (1.4,0.5);
		\coordinate (H) at (1.4,-0.5);
		\coordinate (I) at (0,0);
		\fill (I) circle (1pt) node[below] {$0$};
	\end{tikzpicture}
	\caption{\footnotesize The jump  contour $\Sigma^{pc}$ for $ M^{pc}(\zeta;\xi)$.}
	\label{sigpc0}
\end{figure}
Then the solution of the RHP \ref{RHPlo0} can be given by  ( for example,  see \cite{HG2009} Theorem A.1-A.6)
\begin{align}
	M^{j,\pm}(z)=I+\frac{t^{-1/2}}{z\mp\lambda_j}\frac{i}{2\sqrt{p_-''(\pm\lambda_j)}} \left(\begin{array}{cc}
		0 & [M^{j,pc}_1]_{12}\\
		-[M^{j,pc}_1]_{21} & 0
	\end{array}\right)+\mathcal{O}(t^{-1}).\label{asyMlo0}
\end{align}
While the  RHP \ref{RHPpc0} has an explicit solution, which is expressed in terms of solutions of the parabolic cylinder equation
$$ \frac{\partial^2 D_a(z)}{\partial z^2}+\left(\frac{1}{2}-\frac{z^2}{2}+a \right) D_a(z)=0.$$
A derivation of this result is given in \cite{RN6}. Substitute above consequence into (\ref{asyMlo0}) and obtain
\begin{align}
	M^{j,\pm}(z)=I+\frac{t^{-1/2}}{z\mp\lambda_j}\frac{i}{2\sqrt{p_-''(\pm\lambda_j)}} \left(\begin{array}{cc}
		0 & \tilde{\beta}^{j,\pm}_{12}\\
		\tilde{\beta}^{j,\pm}_{21} & 0
	\end{array}\right)+\mathcal{O}(t^{-1}),\label{asyMpc}
\end{align}
where
\begin{align}
	&\tilde{\beta}^{j,\pm}_{12}=\frac{\sqrt{2\pi}e^{-\frac{1}{2}\pi\nu(\pm\lambda_j)}e^{\frac{\pi i}{4} i}}{r_{\pm\lambda_j}^\pm\Gamma(-i\nu(\pm\lambda_j))},\hspace{0.5cm}\tilde{\beta}^{j,\pm}_{21}\tilde{\beta}^{j,\pm}_{12}=-\nu(\pm\lambda_j), \ \ j=1,2.
\end{align}
We finally  obtain
\begin{Proposition}\label{asympc0}
	As $t\to+\infty$,
	\begin{align}
		M^{j,\pm}(z)=I+t^{-1/2} \frac{A_{j,\pm}(\xi)}{z\mp z_2} +\mathcal{O}(t^{-1}),\ j=1,2,
	\end{align}
	where
	\begin{align}
		A_{j,\pm}(\xi)=\frac{i}{2\sqrt{p_-''(\pm \lambda_j)}} \left(\begin{array}{cc}
			0 & \tilde{\beta}^{j,\pm}_{12}\\
			\tilde{\beta}^{j,\pm}_{21} & 0
		\end{array}\right).\label{pcA0}
	\end{align}
\end{Proposition}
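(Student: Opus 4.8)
The plan is to reduce the local problem RHP \ref{RHPlo0} to the exactly solvable parabolic cylinder model RHP \ref{RHPpc0} through the conformal rescaling $\zeta=\zeta(z)$, and then to transfer the large-$\zeta$ expansion of the model solution back to the $z$-plane. I carry out the details for $M^{1,+}$ near $z=\lambda_1$; the remaining cases $M^{j,\pm}$, $j=1,2$, follow verbatim after the substitution $\lambda_1\to\pm\lambda_j$, and in fact the four local solutions are linked by the symmetries $N(z)=\sigma_2 N(-z)\sigma_2=\sigma_1\overline{N(\bar z)}\sigma_1$, so that only one genuine computation is required.

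First I would localise and freeze the slowly varying data. Using $p_-(z)=p_-(\lambda_1)+\tfrac12 p_-''(\lambda_1)(z-\lambda_1)^2+\mathcal{O}((z-\lambda_1)^3)$ together with $\zeta=2t^{1/2}\sqrt{p_-''(\lambda_1)}\,(z-\lambda_1)$, one finds $2itp_-(z)=2itp_-(\lambda_1)+\tfrac{i}{2}\zeta^2+\mathcal{O}(t^{-1/2}|\zeta|^3)$. Absorbing the $z$-independent scalar $\delta_{\lambda_1}^{2}e^{2itp_-(\lambda_1)}(4tp_-''(\lambda_1))^{-i\nu(\lambda_1)\sigma_3}$ into the constant $r_{\lambda_1}^+$, and replacing $r_1(z)$, $\delta(z)$ by their values at $\lambda_1$ modulo the explicit $(z-\lambda_1)^{\pm i\nu(\lambda_1)}$ factor coming from $\delta$, the four ray-jumps $V^{1,+}$ on $\hat\Sigma_1,\dots,\hat\Sigma_4$ go over into the jumps $V^{pc}$ of RHP \ref{RHPpc0} on $\Sigma^{pc}$, up to a multiplicative error $I+\mathcal{O}(t^{-1/2}\log t)$ uniform on the rescaled contour.

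Next I would invoke the closed-form solution of the parabolic cylinder model. RHP \ref{RHPpc0} is solved explicitly in terms of the functions $D_a$ (see \cite{RN6}), and its solution obeys $M^{pc}(\zeta;\xi)=I+M^{pc}_1\zeta^{-1}+\mathcal{O}(\zeta^{-2})$ as $\zeta\to\infty$, with purely off-diagonal residue whose entries are given by $\tilde{\beta}^{1,+}_{12}=\dfrac{\sqrt{2\pi}\,e^{-\frac12\pi\nu(\lambda_1)}e^{\frac{\pi i}{4}}}{r_{\lambda_1}^+\,\Gamma(-i\nu(\lambda_1))}$ and $\tilde{\beta}^{1,+}_{21}\tilde{\beta}^{1,+}_{12}=-\nu(\lambda_1)$, exactly as in (\ref{asyMpc}). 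Since $\zeta^{-1}=\big(2\sqrt{p_-''(\lambda_1)}\big)^{-1}t^{-1/2}(z-\lambda_1)^{-1}$, pulling the $\zeta^{-1}$ coefficient back to the $z$-variable and collecting the prefactor $\tfrac{i}{2\sqrt{p_-''(\lambda_1)}}$ produces precisely $M^{1,+}(z)=I+t^{-1/2}A_{1,+}(\xi)(z-\lambda_1)^{-1}+\mathcal{O}(t^{-1})$ with $A_{1,+}$ as in (\ref{pcA0}).

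The main obstacle is the rigorous error bookkeeping in the reduction step. One must show that the cubic remainder in $p_-$, the freezing of $r_1$ and $\delta$ at $\lambda_1$, and the truncation of the expanding contour to $U(\lambda_1)$ jointly perturb the jump by a matrix that is small in $L^\infty\cap L^2$ on $\Sigma^{pc}$, so that a small-norm argument (as in \cite{RN9,RN10,HG2009}) identifies $M^{1,+}$ with the pulled-back model solution up to $\mathcal{O}(t^{-1})$; since $|\zeta|\sim t^{1/2}$ on $\partial U(\lambda_1)$ the discarded $\mathcal{O}(\zeta^{-2})$ term is itself $\mathcal{O}(t^{-1})$, consistent with the claimed remainder. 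The estimates stay uniform because each phase point $\lambda_j$ sits at an $\mathcal{O}(1)$ distance from the branch cut $[-1,1]$ and from $z=0$ throughout the region of $(\xi,c)$ under consideration, whence $p_-''(\lambda_j)$ and $\delta(\lambda_j)$ remain bounded away from $0$ and $\infty$; a mild extra care is needed at the endpoints of the $\xi$-range where two phase points coalesce, which is excluded here and treated in the neighbouring regions.
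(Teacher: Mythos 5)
Your proposal follows essentially the same route as the paper: rescale by $\zeta=2t^{1/2}\sqrt{p_-''(\pm\lambda_j)}(z\mp\lambda_j)$, identify the localized jumps with those of the parabolic cylinder model RHP \ref{RHPpc0}, invoke its explicit solution and the large-$\zeta$ residue $M^{pc}_1$ (the paper cites \cite{HG2009} and \cite{RN6} for exactly these two steps), and pull the $\zeta^{-1}$ coefficient back to the $z$-variable to obtain (\ref{pcA0}). Your additional remarks on the symmetry reduction and on the small-norm error bookkeeping are consistent elaborations of the same argument rather than a different method.
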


\subsection{The small norm RH problem  for error function }\label{erroranalysis}
\quad In this subsection,  we consider the error matrix-function $E(z;\xi,c)$ in this region.

\begin{RHP}\label{RHPE0}
	Find a matrix-valued function $E(z;\xi,c)$  with following properties:
	
	$\blacktriangleright$ Analyticity: $E(z;\xi,c)$ is analytical  in $\mathbb{C}\setminus  \Sigma^{ E } $, where
	$$\Sigma^{ E }= \partial U_\xi \cup
	\left[ \Sigma^{(1)}\setminus \left( U_\xi \cup  \left[-1,1\right]\right) \right], \ U_\xi =(\cup_{j=1,2} U_{\pm \lambda_j}) \cup   U_{\pm 1};$$

	$\blacktriangleright$ Asymptotic behaviors:
	\begin{align}
		&E(z;\xi,c) \sim I+\mathcal{O}(z^{-1}),\hspace{0.5cm}|z| \rightarrow \infty;
	\end{align}

	$\blacktriangleright$ Jump condition: $E(z;\xi,c)$ has continuous boundary values $E_\pm(z;\xi,c)$ on $\Sigma^{ E }$ satisfying
	$$E_+(z;\xi,c)=E_-(z;\xi,c)V^{E}(z),$$
	where the jump matrix $V^{E}(z)$ is given by
	\begin{equation}
		V^{ E }(z)=\left\{\begin{array}{llll}
			M^{mod1}(z_-)V^{(1)}(z)M^{mod1}(z_+)^{-1}, & z\in \Sigma^{ E }\setminus \partial U_\xi,\\[4pt]
			M^{j,\pm}(z)M^{mod1}(z)^{-1},  & z\in \partial U_{\xi},
		\end{array}\right. \label{deVE0}
	\end{equation}
	
\end{RHP}
We will show  that for large times, the error function  $E(z;\xi,c)$  solves following small norm  RH problem.

Out of $U_\xi $,  the jump $V^{ E }$ has the following estimates
\begin{equation}
	\parallel V^{ E }(z)-I\parallel_p\lesssim
	\exp\left\{-tK_p\right\},   z\in \Sigma^{ E }\setminus U_\xi , \ p\in[1,\infty].\label{VE-I2}
\end{equation}

For $z\in \partial U_\xi $,  $M^{mod1}(z)$ is bounded,  so   by using  (\ref{pcA0}),  we find that
\begin{align}
	| V^{ E }(z)-I|&=   \mathcal{O}(t^{-1/2}).\label{VE0}
\end{align}
Therefore,    the   existence and uniqueness  of  the RHP  \ref{RHPE0} can  shown  by using  a  small-norm RH problem \cite{RN9,RN10}.  Moreover, according to Beal-Coifman theory,
the solution of  the RHP \ref{RHPE0}  can be given by
\begin{equation}
	E(z;\xi,c)=I+\frac{1}{2\pi i}\int_{\Sigma^{E}}\dfrac{\left( I+\varpi(s)\right) (V^{E}(s)-I)}{s-z}ds,\label{Ez0}
\end{equation}
where the $\varpi\in L^\infty(\Sigma^{E})$ is the unique solution of following equation
\begin{equation}
	(1-C_E)\varpi=C_E\left(I \right),
\end{equation}
and  $C_E$ is  a integral operator: $L^\infty(\Sigma^{E})\to L^2(\Sigma^{E})$ defined by
\begin{equation}
	C_E(f)(z)=C_-\left( f(V^{E}(z) -I)\right),
\end{equation}
where the $C_-$ is the usual Cauchy projection operator on $\Sigma^{E}$
\begin{equation}
	C_-(f)(s)=\lim_{z\to \Sigma^{E}_-}\frac{1}{2\pi i}\int_{\Sigma^{E}}\dfrac{f(s)}{s-z}ds.
\end{equation}
By  (\ref{VE0}),   we have
\begin{equation}
	\parallel C_E\parallel\leq\parallel C_-\parallel \parallel V^{E}(z)-I\parallel_2 \lesssim \mathcal{O}(t^{-1/2}),
\end{equation}
which implies that  $1-C_E$ is invertible  for   sufficiently large $t$.    So  $\varpi$  exists and is unique.
Besides,
\begin{equation}
	\parallel \varpi\parallel_{L^\infty(\Sigma^{E})}\lesssim\dfrac{\parallel C_E\parallel}{1-\parallel C_E\parallel}\lesssim t^{-1/2}.
\end{equation}
In order to reconstruct the solution $u(y,t)$ of (\ref{mch}), we need the asymptotic behavior of $E(z;\xi,c)$ as $z\to 0\in \mathbb{C}^+$ and the long time asymptotic behavior of $E(0)$. Note that when we estimate its  asymptotic behavior, from (\ref{Ez0}) and (\ref{VE-I}) we only need to consider the calculation on $\partial U_\xi$ because it  approach zero exponentially on other boundary.
\begin{Proposition}\label{AsyE0I}
	As $z\to 0\in \mathbb{C}^+$, we have
	\begin{align}
		E(z;\xi,c)=E(0)+E_1z+\mathcal{O}(z^2),
	\end{align}
	where
	\begin{align}
		E(0)=I+\frac{1}{2\pi i}\int_{\Sigma^{E}}\dfrac{\left( I+\varpi(s)\right) (V^{E}-I)}{s}ds,
	\end{align}
	with long time asymptotic behavior
	\begin{equation}
		E(0)=I+t^{-1/2}H^{(0)}+\mathcal{O}(t^{-1}),
	\end{equation}
	where
	\begin{align}
		H^{(0)}=	 &\sum_{ p=\pm \lambda_j,j=1,2 }\frac{M^{mod1}(p)A_{j,\pm}(\xi)M^{mod1}(p)^{-1}}{p}.
	\end{align}
	Here  $A_{j,\pm}(\xi)$ is given by (\ref{pcA0}).
	And
	\begin{equation}
		E_1=\frac{1}{2\pi i}\int_{\Sigma^{E}}\dfrac{\left( I+\varpi(s)\right) (V^{E}-I)}{s^2}ds= t^{-1/2}H^{(1)}+\mathcal{O}(t^{-1}),\nonumber
	\end{equation}
	where
	\begin{align}
		H^{(1)}= &\sum_{ p=\pm z_2 }\frac{M^{mod}(p)A_\pm(\xi)M^{mod}(p)^{-1}}{p^2}.
	\end{align}
\end{Proposition}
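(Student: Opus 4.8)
The plan is to read the expansion coefficients directly off the Beals--Coifman representation (\ref{Ez0}) and then extract their leading large-$t$ behaviour by a residue computation on $\partial U_\xi$, using the parametrix estimate of Proposition \ref{asympc0}.

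First I would establish the Taylor expansion near $z=0$. Because the contour $\Sigma^E$ stays a fixed positive distance from the origin (the excised band $[-1,1]$ contains $0$, while the discs $U_{\pm\lambda_j}$ and $U_{\pm1}$ are centred at points with modulus at least $1$), the Cauchy kernel admits the uniformly convergent expansion
$$\frac{1}{s-z}=\frac{1}{s}+\frac{z}{s^2}+\mathcal{O}(z^2),\qquad z\to 0\in\mathbb{C}^+.$$
Inserting this into (\ref{Ez0}) and differentiating under the integral sign gives $E(z;\xi,c)=E(0)+E_1 z+\mathcal{O}(z^2)$ with
$$E(0)=I+\frac{1}{2\pi i}\int_{\Sigma^E}\frac{(I+\varpi(s))(V^E-I)}{s}\,ds,\qquad E_1=\frac{1}{2\pi i}\int_{\Sigma^E}\frac{(I+\varpi(s))(V^E-I)}{s^2}\,ds,$$
which are exactly the stated integral formulas.

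Next comes the large-$t$ evaluation. I split $\Sigma^E=\partial U_\xi\cup(\Sigma^E\setminus\partial U_\xi)$. On $\Sigma^E\setminus\partial U_\xi$ the estimate (\ref{VE-I2}) makes the integrand exponentially small, so this part contributes only $\mathcal{O}(e^{-ct})$. On $\partial U_\xi$ I use two facts: from $\parallel\varpi\parallel_{L^\infty}\lesssim t^{-1/2}$ together with $|V^E-I|=\mathcal{O}(t^{-1/2})$ one may replace $I+\varpi$ by $I$ at the cost of an $\mathcal{O}(t^{-1})$ error; and Proposition \ref{asympc0} gives, on each $\partial U_{\pm\lambda_j}$,
$$V^E-I=t^{-1/2}\,\frac{M^{mod1}(z)A_{j,\pm}(\xi)M^{mod1}(z)^{-1}}{z\mp\lambda_j}+\mathcal{O}(t^{-1}),$$
the conjugation by $M^{mod1}$ arising from the matching $V^E=M^{j,\pm}(M^{mod1})^{-1}$ of the local parametrix against the global model. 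Substituting this into the $1/s$ and $1/s^2$ integrals, the only singularity of the leading term inside the disc $U_{\pm\lambda_j}$ is the simple pole at $s=\pm\lambda_j$ (since $0\notin U_{\pm\lambda_j}$, the factors $1/s$ and $1/s^2$ are holomorphic there). Closing each circle and applying the residue theorem yields
$$E(0)=I+t^{-1/2}\sum_{p=\pm\lambda_j,\,j=1,2}\frac{M^{mod1}(p)A_{j,\pm}(\xi)M^{mod1}(p)^{-1}}{p}+\mathcal{O}(t^{-1}),$$
and the analogous computation with the extra weight $1/s$ produces $E_1=t^{-1/2}H^{(1)}+\mathcal{O}(t^{-1})$, the residues now carrying the factor $1/p^2$. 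The local matching near the band endpoints $\pm1$ enters only at order $\mathcal{O}(t^{-1})$, hence does not contribute to the $t^{-1/2}$ coefficient.

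The main obstacle is the residue step: one must verify that the leading $t^{-1/2}$ part of $V^E-I$ genuinely continues to a meromorphic function on the interior of each $U_{\pm\lambda_j}$ with a single simple pole, and that the orientation of $\partial U_\xi$ is the one for which closing the contour picks up precisely that residue (an incorrect orientation would flip the sign of $H^{(0)}$ and $H^{(1)}$). This is exactly where the explicit parabolic-cylinder form of $A_{j,\pm}(\xi)$ and the boundedness of $M^{mod1}$ on $\partial U_\xi$ are used; once the pole structure and orientation are fixed, the summation over the stationary points $\pm\lambda_1,\pm\lambda_2$ is routine.
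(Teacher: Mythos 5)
Your argument is correct and follows essentially the same route as the paper's proof: expand the Cauchy kernel $1/(s-z)$ at $z=0$ to read off the integral formulas for $E(0)$ and $E_1$, discard the exponentially small contribution of $\Sigma^{E}\setminus\partial U_\xi$ together with the $\mathcal{O}(t^{-1})$ product $\varpi\,(V^{E}-I)$, insert the parabolic-cylinder expansion of Proposition \ref{asympc0} in the conjugated form $V^{E}-I=t^{-1/2}M^{mod1}(s)A_{j,\pm}(\xi)M^{mod1}(s)^{-1}/(s\mp\lambda_j)+\mathcal{O}(t^{-1})$, and evaluate by residues at the simple poles $s=\pm\lambda_j$ (with the $1/s$, respectively $1/s^2$, weights holomorphic in each disc). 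The one caveat is that this conjugated form does not literally follow from the matching $V^{E}=M^{j,\pm}(M^{mod1})^{-1}$ that you cite — that identity would leave a non-decaying term $(M^{mod1})^{-1}-I$ — but rather requires the local parametrix to be $M^{mod1}M^{j,\pm}$ so that $V^{E}=M^{mod1}M^{j,\pm}(M^{mod1})^{-1}$, which is the convention the paper's own proof tacitly adopts, so you land on exactly the integrand used there.
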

\begin{proof}
	Substitute the long time asymptotic behavior of $V^{E}$, $\varpi(s)$ and Proposition \ref{asympc0} into $2\pi i(E(0)-I)$:
	\begin{align}
		&\int_{\Sigma^{E}}\dfrac{\left( I+\varpi(s)\right) (V^{E}-I)}{s}ds\nonumber\\
		&=\int_{\partial U_{(\xi)}}\dfrac{ 	M^{mod1}(s)(M^{j,\pm}(s)-I)M^{mod1}(s)^{-1}}{s}ds+\mathcal{O}(t^{-1})\nonumber\\
		&=t^{-1/2}\int_{\partial U_{(\xi)}}\dfrac{ M^{mod1}(s)A_{j,\pm}(\xi)M^{mod1}(s)^{-1}}{s(z\mp \lambda_j)}ds+\mathcal{O}(t^{-1}).
	\end{align}
	Then by residue theorem we finally arrive at the result.
\end{proof}

\section{Fast-decay    background  region  }\label{sec+}
\quad The Region  II  is corresponding to the case $\xi>1+2/c$. In this case, we introduce a new scalar function
\begin{align}
	X(z)=\sqrt{z^2-c^2},\hspace{0.3cm}\theta_+(z)=X(z)\left( \frac{1-\xi}{2}+\frac{1}{cz^2}\right),\label{theta+}
\end{align}
where   $X(z)$ is analytic on $\mathbb{C}\setminus[-c,c]$ and  take the single-valued analytic branch such that $X(z)\in i\mathbb{R}^+$ on $[-c,c]_+$.
 Then
\begin{align}
	\frac{\text{d}\theta_+}{\text{d} z}=\frac{1}{z^3	X(z)}\left( \frac{\xi-1}{2}z^4-\frac{z^2}{c}+2c \right).
\end{align}
further define $\lambda_1=\sqrt{\frac{2}{c(\xi-1)}}\in(0,1)$  satisfying   $\frac{\xi-1}{2}+\frac{1}{c\lambda_1^2}=0$. The sign of the imaginary part Im$\theta_+$ is shown in Figure \ref{figg1}.
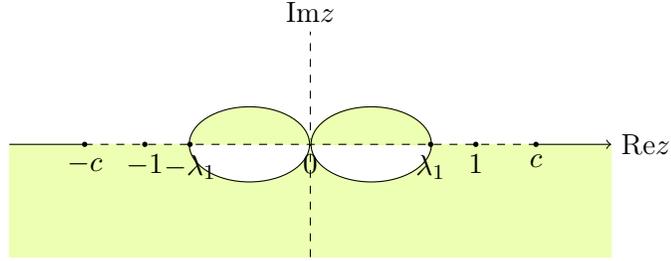
\begin{figure}[h]
	\centering
	\begin{tikzpicture}[node distance=2cm]
		\draw[lime!30, fill=lime!30] (-4,0)--(-4,-1.5)--(4,-1.5)--(4,0)--(0.8,0)--(1.6,0)arc (0:-180:0.8 and 0.5)--(0,0)--(0,0)arc (0:-180:0.8 and 0.5)--(-4,0);
		\draw[->](3,0)--(4,0)node[right]{ Re$z$};
		\draw[dashed](0,-1.5)--(0,1.5)node[above]{ Im$z$};
		\draw[lime!30, fill=lime!30] (0,0)--(0,0)arc (0:180:0.8 and 0.5)--(0,0);
		\draw [](-0.81,0) ellipse (0.8 and 0.5);
		\draw[lime!50, fill=lime!30] (0,0)--(0,0)arc (180:0:0.8 and 0.5)--(0,0);
		\draw [](0.81,0) ellipse (0.8 and 0.5);		
		\coordinate (I) at (0,0);
		\fill (I) circle (0pt) node[below] {$0$};
		\coordinate (a) at (1.6,0);
		\fill (a) circle (1pt) node[below] {$\lambda_1$};
		\coordinate (aa) at (-1.6,0);
		\fill (aa) circle (1pt) node[below] {$-\lambda_1$};
		\coordinate (b) at (2.2,0);
		\fill (b) circle (1pt) node[below] {$1$};
		\coordinate (ba) at (-2.2,0);
		\fill (ba) circle (1pt) node[below] {$-1$};
		\coordinate (c) at (3,0);
		\fill (c) circle (1pt) node[below] {$c$};
		\coordinate (ca) at (-3,0);
		\fill (ca) circle (1pt) node[below] {$-c$};
		\draw[dashed](-3,0)--(3,0);
		\draw[](-4,0)--(-3,0);
	\end{tikzpicture}
	\caption{\footnotesize  In the case $\xi>1+\frac{2}{c}$, Im$\theta_+(z)>0$ in yellow region while  Im$\theta_+(z)<0$ in white region.
		And critical line $ \text{Im }\theta(z)=0 $ is black solid line. }
	\label{figg1}
\end{figure}

Define
\begin{align}
	&\Sigma^{\pm}=\left\lbrace -1+e^{\pm\frac{3\pi i}{4}}\mathbb{R}^+\right\rbrace \cup\left\lbrace 1+e^{\pm\frac{\pi i}{4}}\mathbb{R}^+\right\rbrace, \\
 &\Omega^\pm =  \left\lbrace z; z=-1+e^{\pm\psi i}l,\ l\in\mathbb{R}^+,\  3\pi/4<\psi<\pi\right\rbrace \nonumber\\
	& \qquad\cup\left\lbrace z; z=1+e^{\pm\psi i}l,\ l\in\mathbb{R}^+,\ 0<\psi< {\pi }/{4}\right\rbrace, \\
	&\overline{\Omega}= \mathbb{C}\setminus(\Omega^+\cup\Omega^-).
\end{align}
Obviously,
\begin{align}
	&p_--\theta_+=\mathcal{O}\left(z^{-1}\right) , \text{ as }z\to\infty,\\
	&p_--\theta_+=-\frac{ci}{2}(\xi-1)+\frac{i}{2c^2}-\frac{i\xi}{2}+\mathcal{O}(z^2),\text{ as }z\to0\in\mathbb{C}^+.\label{the+0}
\end{align}
So we can use $\theta_+$ to replace $p_-$ in the exponential function. And we will utilize  these factorizations to deform the jump contours, so that the oscillating factor $e^{\pm2it\theta_+}$ are decaying in corresponding region respectively.
\begin{figure}[h]
	\centering
	\begin{tikzpicture}[node distance=2cm]
		\draw[lime!50, fill=lime!30] (-4,0)--(-4,-1.5)--(4,-1.5)--(4,0)--(0.8,0)--(1.6,0)arc (0:-180:0.8 and 0.5)--(0,0)--(0,0)arc (0:-180:0.8 and 0.5)--(-4,0);
		\draw[->](3,0)--(4,0)node[right]{ Re$z$};
		\draw[dashed](0,-1.5)--(0,1.5)node[above]{ Im$z$};
		\draw[lime!50, fill=lime!30] (0,0)--(0,0)arc (0:180:0.8 and 0.5)--(0,0);
		\draw [](-0.81,0) ellipse (0.8 and 0.5);
		\draw[lime!50, fill=lime!30] (0,0)--(0,0)arc (180:0:0.8 and 0.5)--(0,0);
		\draw [](0.81,0) ellipse (0.8 and 0.5);		
		\coordinate (I) at (0,0);
		\fill (I) circle (0pt) node[below] {$0$};
		\coordinate (a) at (1.6,0);
		\fill (a) circle (1pt) node[below] {$\lambda_1$};
		\coordinate (aa) at (-1.6,0);
		\fill (aa) circle (1pt) node[below] {$-\lambda_1$};
		\coordinate (b) at (2.2,0);
		\fill (b) circle (1pt) node[below] {$1$};
		\coordinate (ba) at (-2.2,0);
		\fill (ba) circle (1pt) node[below] {$-1$};
		\coordinate (c) at (3,0);
		\fill (c) circle (1pt) node[below] {$c$};
		\coordinate (ca) at (-3,0);
		\fill (ca) circle (1pt) node[below] {$-c$};
		\draw[dashed](-3,0)--(3,0);
		\draw[](-4,0)--(-3,0);
		\draw(2.2,0)--(3.8,1.4)node[above]{$\Sigma^+ $};
		\draw(-2.2,0)--(-3.8,1.4)node[left]{$\Sigma^+$};
		\draw(-2.2,0)--(-3.8,-1.4)node[left]{$\Sigma^-$};
		\draw(2.2,0)--(3.8,-1.4)node[right]{$\Sigma^-$};
		\draw[-latex](-3.8,-1.4)--(-3,-0.7);
		\draw[-latex](-3.8,1.4)--(-3,0.7);
		\draw[-latex](2.2,0)--(3,0.7);
		\draw[-latex](2.2,0)--(3,-0.7);
		\coordinate (C) at (-0.2,2.2);
		\coordinate (D) at (3.2,0.4);
		\fill (D) circle (0pt) node[right] {\footnotesize $\Omega^+$};
		\coordinate (J) at (-3.2,-0.4);
		\fill (J) circle (0pt) node[left] {\footnotesize $\Omega^-$};
		\coordinate (k) at (-3.2,0.4);
		\fill (k) circle (0pt) node[left] {\footnotesize $\Omega^+$};
		\coordinate (k) at (3.2,-0.4);
		\fill (k) circle (0pt) node[right] {\footnotesize $\Omega^-$};
		\coordinate (v) at (0.2,0.8);
		\fill (v) circle (0pt) node[right] {\footnotesize $\overline{\Omega} $};
	\end{tikzpicture}
	\caption{\footnotesize  The region of $\Omega^\pm$ and $\overline{\Omega}$. And $\Sigma^{(1)}=\Sigma^+\cup\Sigma^-\cup[-c,c]$.
The same as Figure \ref{figg1}, yellow region means Im$\theta_+(z)>0$ while white region means Im$\theta_+(z)<0$. }
	\label{figfj1}
\end{figure}
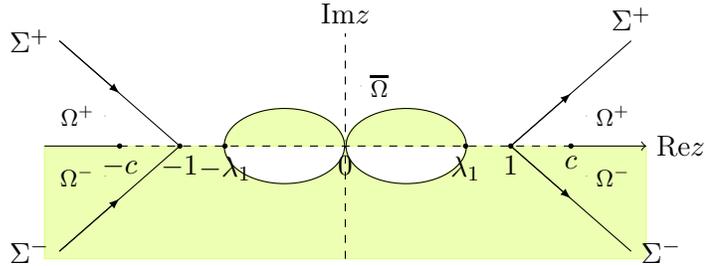

Similarly as in the above section, in this region of $\xi$, we introduce
a piecewise matrix interpolation function
\begin{equation}
G(z)=	G(z;\xi,c)=\left\{ \begin{array}{ll}
		\left(\begin{array}{cc}
			1 &\frac{ r_2e^{-2it\theta_+}}{1-r_1r_2}\\
			0 & 1
		\end{array}\right),   &\text{as } z\in\Omega^+;\\[12pt]
		\left(\begin{array}{cc}
			1 &0\\
			\frac{r_1e^{2it\theta_+}}{1-r_1r_2} & 1
		\end{array}\right),   &\text{as } z\in\Omega^-;\\
		I &\text{as } 	z \text{ in elsewhere},
	\end{array}\right..\label{funcG}
\end{equation}
Note that $1-r_1r_2=0$, so the matrix function $G(z)$ bring a new $-\frac{1}{4}$-singularity on $z=\pm c$.
We define  the new  matrix-valued   function $M^{(1)}(z)$
\begin{equation}
	M^{(1)}(z)\triangleq  M^{(1)}(z;y,t,c)=N(z)e^{it(p_--\theta_+)\sigma_3}G(z),\label{transm1}
\end{equation}
which then satisfies the following RH problem.

\begin{RHP}\label{RHP3}
	Find a matrix-valued function  $  M^{(1)}(z )$ which satisfies
	
	$\blacktriangleright$ Analyticity: $M^{(1)}(z)$ is meromorphic in $\mathbb{C}\setminus \left( \Sigma^{(1)}\cup\mathbb{R}\right) $, where
	\begin{equation}
		\Sigma^{(1)}=\Sigma_1\cup\Sigma_2\cup[-c,c] ,
	\end{equation}
	shown in Figure \ref{figfj1};

	$\blacktriangleright$ Symmetry: $M^{(1)}(z)=\sigma_2M^{(1)}(-z)\sigma_2$=$\sigma_1\overline{M^{(1)}(\bar{z})}\sigma_1$;
	
	$\blacktriangleright$ Jump condition: $M^{(1)}$  satisfies the jump condition
	\begin{equation}
		M^{(1)}_+(z)=M^{(1)}_-(z)V^{(1)}(z),\hspace{0.5cm}z \in \Sigma^{(1)}\cup\mathbb{R}, \  z\in \Sigma^{(1)}\cup\mathbb{R},
	\end{equation}
	where
	\begin{equation}
		V^{(1)}(z)=\left\{ \begin{array}{ll}
			\left(\begin{array}{cc}
				1 &\frac{ -r_2e^{-2it\theta_+}}{1-r_1r_2}\\
				0 & 1
			\end{array}\right),&\text{as } z\in\Sigma_1,\\[14pt]
			\left(\begin{array}{cc}
				1 &0\\
				\frac{r_1e^{2it\theta_+}}{1-r_1r_2} & 1
			\end{array}\right),&\text{as } z\in\Sigma_2,\\[14pt]
			(1-r_1r_2)^{\sigma_3},   &\text{as } z\in\mathbb{R}\setminus[-c,c],\\[6pt]
			\left(\begin{array}{cc}
				0 & -r_2(z-0i)\\
				r_1(z+0i) & 0
			\end{array}\right),   &\text{as } z\in[-c,c]\setminus[-1,1] ,\\[12pt]
			\left(\begin{array}{cc}
				0 & i\\
				i & 0
			\end{array}\right),   &\text{as } z\in[-1,1],
		\end{array}\right.;
	\end{equation}
	
	$\blacktriangleright$ Asymptotic behaviors: $M^{(1)}(z) = I+\mathcal{O}(z^{-1}),\hspace{0.5cm}z \rightarrow \infty;$

	$\blacktriangleright$ Singularity: $M^{(1)}(z)$ has singularity at $z=\pm1,\pm c$ with:
	\begin{align}
		&M^{(1)}(z)\sim \left(\mathcal{O}(z\mp1)^{1/4},\mathcal{O}(z\mp1)^{-1/4} \right) ,\ z\to \pm1\text{ in }\mathbb{C}^\pm,\\
		&M^{(1)}(z)\sim \left(\mathcal{O}(1),\mathcal{O}(z\mp c)^{-1/2} \right) ,\ z\to \pm c\text{ in }\mathbb{C}^+,\\
		&M^{(1)}(z)\sim \left(\mathcal{O}(z\mp c)^{-1/2},\mathcal{O}(1) \right) ,\ z\to \pm c\text{ in }\mathbb{C}^-.
	\end{align}
\end{RHP}

To deal with the jump on $\mathbb{R}$, we give a introduction of an auxiliary function $\delta(z)=\delta(z;\xi,c)$, which relies on $\xi$ and admits the following jump condition:
\begin{align*}
	&\delta_-(z)=\delta_+(z)(1-r_1r_2),\hspace*{0.5cm}z\in \mathbb{R}\setminus[-c,c];\\
	&\delta_-(z)\delta_+(z)=i[r_2]_-,\hspace*{1.2cm}z\in[-c,c] \setminus[-1,1];\\
	&\delta_-(z)\delta_+(z)=1,\hspace*{2cm}z\in[-1,1].
\end{align*}
Define the function
\begin{align}
	\log \delta(z)=&\frac{X(z)}{2\pi i}\left(\int_{-c}^{-1}+\int_{c}^{1} \right) \dfrac{\log(i[r_2]_-(s))}{(s-z)[X]_+(s)}ds -\frac{X(z)}{2\pi i}\int_{\mathbb{R}\setminus[-c,c]}\dfrac{\log(1-r_1(s)r_2(s))}{(s-z)X(s)}ds. \nonumber
\end{align}
\begin{Proposition}\label{prode1}
	The scalar function $\delta(z)$ satisfies the following properties \\
	(a) $\delta(z)$ is analytic on $\mathbb{C}\setminus\mathbb{R}$;\\
	(b) $\delta(z)$ has singularity at $z=\pm c$ with
	\begin{align}
		\delta(z)=\mathcal{O}(z-p)^{\mp 1/4},\hspace{0.3cm}z\to p\in \mathbb{C}^\pm\setminus\mathbb{R} ,\hspace{0.3cm}p=c,-c.
	\end{align}
	(c) As $z\to\infty\in\mathbb{C}\setminus\mathbb{R}$, $\delta(z)$ has limit  $\delta(\infty)$ with
	\begin{align}
		\log\delta(\infty)=&-\frac{1}{2\pi i}\left(\int_{-c}^{-1}+\int_{c}^{1} \right) \dfrac{\log(i[r_2]_-(s))}{[X]_+(s)}ds\nonumber\\
		&+\frac{1}{2\pi i}\int_{\mathbb{R}\setminus[-c,c]}\dfrac{\log(1-r_1(s)r_2(s))}{X(s)}ds.
	\end{align}
	(d) As $z\to0\in\mathbb{C}^+$,
	\begin{align}
		\delta(z)=&\exp\left\lbrace I_{\delta}^1\right\rbrace\cdot \left( 1+zI_{\delta}^2\right)
		+\mathcal{O}(z^2) .
	\end{align}
	Here,
	\begin{align}
		I_{\delta}^1=& \frac{c}{2\pi }\left(\int_{-c}^{-1}+\int_{c}^{1} \right) \dfrac{\log(i[r_2]_-(s))}{s[X]_+(s)}ds-\frac{c}{2\pi }\int_{\mathbb{R}\setminus[-c,c]}\dfrac{\log(1-r_1(s)r_2(s))}{sX(s)}ds,\label{Ide1}\\
		I_{\delta}^2=&\frac{c}{2\pi }\left(\int_{-c}^{-1}+\int_{c}^{1} \right) \dfrac{\log(i[r_2]_-(s))}{s^2[X]_+(s)}ds-\frac{c}{2\pi }\int_{\mathbb{R}\setminus[-c,c]}\dfrac{\log(1-r_1(s)r_2(s))}{s^2X(s)}ds.
	\end{align}
\end{Proposition}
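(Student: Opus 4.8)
The plan is to read off all four properties directly from the integral representation of $\log\delta$, treating $\delta$ as the solution of the scalar RH problem stated just above. It is convenient to abbreviate $\log\delta(z)=X(z)\left[C_1(z)-C_2(z)\right]$, where $C_1(z)=\frac{1}{2\pi i}\left(\int_{-c}^{-1}+\int_{c}^{1}\right)\frac{\log(i[r_2]_-(s))}{(s-z)[X]_+(s)}ds$ is supported on $[-c,-1]\cup[1,c]$ and $C_2(z)=\frac{1}{2\pi i}\int_{\mathbb{R}\setminus[-c,c]}\frac{\log(1-r_1r_2)(s)}{(s-z)X(s)}ds$. First I would establish (a): the densities are H\"older on the open subarcs; at $s=\pm1$ one has $[X]_+(\pm1)\neq0$; at the branch points $s=\pm c$ the factors $[X]_+^{-1}$ and $X^{-1}$ are $\mathcal{O}((s\mp c)^{-1/2})$ and hence integrable; and as $s\to\pm\infty$ the decay $r_1r_2\to0$ renders $\log(1-r_1r_2)/(sX)$ integrable. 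Therefore $C_1,C_2$ are analytic off their supports, which lie in $\mathbb{R}$, while $X$ is analytic off $[-c,c]\subset\mathbb{R}$; so $\log\delta$ and hence $\delta=e^{\log\delta}$ is analytic on $\mathbb{C}\setminus\mathbb{R}$. As a check that this formula indeed solves the prescribed scalar problem (and that the only singular set is $\mathbb{R}$), I would apply Plemelj: across $\mathbb{R}\setminus[-c,c]$ only $C_2$ jumps and $X$ is continuous, giving $\log\delta_+-\log\delta_-=-\log(1-r_1r_2)$; across $[-c,c]\setminus[-1,1]$ the relations $X_+=-X_-$ and $C_{1,+}-C_{1,-}=\log(i[r_2]_-)/[X]_+$ yield $\log\delta_++\log\delta_-=\log(i[r_2]_-)$; across $[-1,1]$ the additive density is regular and $X_++X_-=0$ forces $\log\delta_++\log\delta_-=0$.

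Parts (c) and (d) are then pure Taylor expansions of the kernel $X(z)/(s-z)$. For (c), as $z\to\infty$ one has $X(z)/(s-z)\to-1$ uniformly on the contours, so $\log\delta(\infty)$ equals the stated combination of integrals, the limit $-1$ flipping the sign of each term into its displayed form. For (d), I would use the branch value $X(0+0i)=ic$ together with $X(z)=ic+\mathcal{O}(z^2)$ (no linear term, since $X$ is even) and $\frac{1}{s-z}=\frac1s+\frac{z}{s^2}+\mathcal{O}(z^2)$, so that $\frac{X(z)}{s-z}=\frac{ic}{s}+\frac{ic\,z}{s^2}+\mathcal{O}(z^2)$. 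Collecting the $z^0$ and $z^1$ coefficients of $\log\delta$ gives exactly $I_\delta^1$ and $I_\delta^2$ as written (the prefactor $ic/(2\pi i)=c/(2\pi)$ producing the stated real constants), and exponentiating yields $\delta(z)=e^{I_\delta^1}(1+zI_\delta^2)+\mathcal{O}(z^2)$.

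The substantive part, and the main obstacle, is (b): the behavior at $z=\pm c$. Here $X(z)\sim\sqrt{2c}\,(z\mp c)^{1/2}$ vanishes while the Cauchy integrals pick up endpoint singularities from the $(s\mp c)^{-1/2}$ factor in their densities. I would isolate the local contribution of the endpoint $s=c$ (the case $-c$ being symmetric): the purely inverse-square-root part of each density produces, via the model computation $\frac{1}{2\pi i}\int_c^{\infty}\frac{(s-c)^{-1/2}}{s-z}\,ds=\frac{1}{2i}(c-z)^{-1/2}$, a term that is reduced to a bounded quantity upon multiplication by $X(z)\sim(z-c)^{1/2}$. The genuine singularity comes from the logarithmic enhancement of the density dictated by the Corollary $1-r_1r_2=\mathcal{O}((z\mp c)^{1/2})$: writing $\log(1-r_1r_2)=\tfrac12\log(s-c)+\mathcal{O}(1)$ near $c$ and using $\frac{1}{2\pi i}\int_c^{\infty}\frac{(s-c)^{-1/2}\log(s-c)}{s-z}\,ds\sim\frac{1}{2i}(c-z)^{-1/2}\log(c-z)$ (obtained by differentiating the beta-type integral $\int_0^\infty u^{\beta-1}(u-(z-c))^{-1}\,du=\frac{\pi}{\sin\pi\beta}(c-z)^{\beta-1}$ at $\beta=\tfrac12$), one finds that $-X(z)C_2(z)$ contributes $\mp\tfrac14\log(z\mp c)$ to $\log\delta$, the coefficient $\tfrac14$ being the product of the $\tfrac12$ from the Corollary and a further $\tfrac12$ from the square-root structure of $X$. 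Hence $\delta(z)=\mathcal{O}((z\mp c)^{\mp1/4})$, with the sign of the exponent fixed by the branch of $X$, which is $i$-valued in $\mathbb{C}^+$ and $-i$-valued in $\mathbb{C}^-$; this reproduces the claimed $\mathcal{O}((z-p)^{\mp1/4})$ for $z\to p\in\mathbb{C}^\pm$. The delicate points I would still have to justify carefully are that $\log(i[r_2]_-)$ carries no competing logarithmic enhancement at $s=c$, so that only the $C_2$-term contributes to the exponent, and the precise branch bookkeeping that assigns $-\tfrac14$ to the approach from $\mathbb{C}^+$.
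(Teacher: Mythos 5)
Your proposal is correct, and it is worth saying that the paper itself does not actually prove this proposition: its entire proof reads that (a), (c), (d) are trivial and that (b) is ``similar'' to \cite{lenells} and Appendix C of \cite{MonvelCMP2}. What you have written is, in substance, the argument those references carry out, so you are on the same route but supplying the details the authors outsource. Your treatment of (a), (c), (d) — analyticity of the Cauchy integrals off their real supports plus integrability of the $(s\mp c)^{-1/2}$ (and $(s\mp c)^{-1/2}\log$) endpoint densities, then expansion of the kernel $X(z)/(s-z)$ at $z=\infty$ (limit $-1$) and at $z=0$ (using $X(0+0i)=ic$ and the absence of a linear term in the even function $X$) — is exactly what ``trivial'' means here, and your endpoint analysis for (b) — the $(c-s)^{-1/2}$ part of each density being neutralized by $X(z)\sim(z-c)^{1/2}$, while the logarithmic enhancement forced by $1-r_1r_2=\mathcal{O}((s\mp c)^{1/2})$ produces the $\mp\tfrac14\log(z\mp c)$ term via the differentiated beta integral — is the standard Muskhelishvili computation, and the constant and branch bookkeeping check out. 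Two minor points to tighten. First, on the unbounded contour $\mathbb{R}\setminus[-c,c]$ the convergence $X(z)/(s-z)\to-1$ is not uniform in $s$; you should instead use the bound $|X(z)/(s-z)|\lesssim 1$ for $z$ bounded away from $\mathbb{R}$ together with integrability of $\log(1-r_1r_2)/X$ (which rests on decay of $r_1r_2$ at infinity) and dominated convergence. Second, your Plemelj verification of $\delta_+\delta_-=i[r_2]_-$ on $(1,c)$ tacitly reads the displayed $\int_c^1$ as $\int_1^c$; with the orientation as literally written the multiplicative jump comes out as $(i[r_2]_-)^{-1}$, and the paper is internally inconsistent on this orientation (compare the two definitions of $D$ in Section 5), so you have adopted the sign that makes the prescribed jump conditions hold. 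Neither point is a gap in substance.
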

\begin{proof}
	The proof of (a), (c) and (d) is trivial. And the proof of (b) is similar as \cite{lenells} and Appendix C in \cite{MonvelCMP2}.
\end{proof}
We give a new  transformation
\begin{align}
	M^{(2)}=\delta(\infty)^{-\sigma_3}M^{(1)}\delta^{\sigma_3}\label{trans2},
\end{align}
which then satisfies the following RH problem.

\begin{RHP}\label{RHP4}
	Find a matrix-valued function  $  M^{(2)}(z )$ which satisfies
	
	$\blacktriangleright$ Analyticity: $M^{(2)}(z)$ is meromorphic in $\mathbb{C}\setminus \Sigma^{(1)}$, where
	\begin{equation}
		\Sigma^{(1)}=\Sigma^+\cup\Sigma^-\cup[-c,c] ,
	\end{equation}
	shown in Figure \ref{figfj1};

	$\blacktriangleright$ Symmetry: $M^{(2)}(z)=\sigma_2M^{(2)}(-z)\sigma_2$=$\sigma_1\overline{M^{(2)}(\bar{z})}\sigma_1$;
	
	$\blacktriangleright$ Jump condition: $M^{(2)}$ has continuous boundary values $M^{(2)}_\pm(z)$ on $\Sigma^{(1)}$ and
	\begin{equation}
		M^{(2)}_+(z)=M^{(2)}_-(z)V^{(2)}(z),\hspace{0.5cm}z \in \Sigma^{(1)},
	\end{equation}
	where
	\begin{equation}
		V^{(2)}(z)=\left\{ \begin{array}{ll}
			\left(\begin{array}{cc}
				1 &\frac{ -r_2\delta^{-2}e^{-2it\theta_+}}{1-r_1r_2}\\
				0 & 1
			\end{array}\right),&\text{as } z\in\Sigma^+,\\[14pt]
			\left(\begin{array}{cc}
				1 &0\\
				\frac{r_1\delta^2e^{2it\theta_+}}{1-r_1r_2} & 1
			\end{array}\right),&\text{as } z\in\Sigma^-,\\[14pt]
			\left(\begin{array}{cc}
				0 & i\\
				i & 0
			\end{array}\right),   &\text{as } z\in[-c,c],
		\end{array}\right.
	\end{equation}
	
	$\blacktriangleright$ Asymptotic behaviors: $M^{(2)}(z) = I+\mathcal{O}(z^{-1}),\hspace{0.2cm}z \rightarrow \infty;$

	$\blacktriangleright$ Singularity: $M^{(2)}(z)$ has singularity at $z=\pm c$ with
	\begin{align}
		&M^{(2)}(z)\sim \mathcal{O}(z\mp c)^{-1/4},\ z\to \pm c\text{ in }\mathbb{C}\setminus\mathbb{R}.
	\end{align}
\end{RHP}
The jump matrix exponentially  decays to the identity matrix $I$ as $t\to\infty$ on $\Sigma^\pm_1$, which finally leads to the model RH problem.
\begin{RHP}\label{modelc}
	Find a matrix-valued function  $  M^{modc}(z )$ which satisfies 	
	
	$\blacktriangleright$ Analyticity: $  M^{modc}(z )$ is holomorphic in  $\mathbb{C}\setminus[-c,c]$;
	
	$\blacktriangleright$ Jump condition: $M^{modc}$ has continuous boundary values $M^{modc}_\pm(z)$ and
	\begin{equation}
		M^{modc}_+(z)=M^{modc}_-(z)V^{modc}(z),\hspace{0.5cm}z \in[-c,c],
	\end{equation}
	where
	\begin{equation}
		V^{modc}(z)=\left(\begin{array}{cc}
			0 & i\\
			i & 0
		\end{array}\right), z\in[-c,c];
	\end{equation}

	$\blacktriangleright$ Asymptotic behaviors
	\begin{align}
		&M^{modc}(z) = I+\mathcal{O}(z^{-1}),\hspace{0.5cm}z \rightarrow \infty;
	\end{align}

	$\blacktriangleright$ Singularity: $M^{modc}(z)$ has singularity at $z=\pm c$ with
	\begin{align}
		&M^{modc}(z)\sim \mathcal{O}(z\mp c)^{-1/4} ,\ z\to \pm c\text{ in }\mathbb{C}\setminus\mathbb{R}.
	\end{align}
\end{RHP}
We can construct the solution of model RH problem
\begin{equation}
	M^{modc}(z)=\frac{1}{\sqrt{2}}\left(\begin{array}{cc}
		\phi_c(z)+\phi_c(z)^{-1} & \phi_c(z)-\phi_c(z)^{-1}\\
		\phi_c(z)-\phi_c(z)^{-1} & \phi_c(z)+\phi_c(z)^{-1}
	\end{array}\right).
\end{equation}
As $z\to 0\in\mathbb{C}^+$,
\begin{equation}
	M^{modc}(z)=\sqrt{2}\left(\begin{array}{cc}
		0 & i\\
		i & 0
	\end{array}\right)+\frac{zi}{\sqrt{2}c}I+\mathcal{O}(z^2).
\end{equation}
Considering transformation
\begin{equation}
	E(z)=M^{(2)}(M^{modc}(z))^{-1},
\end{equation}
which has  jump matrix exponentially  decaying to the identity matrix $I$ as $t\to\infty$ on $\Sigma^\pm_1$. Then its    existence and uniqueness  can be shown  by  a  small-norm RH problem with
\begin{equation}
	E(z)=I+\mathcal{O}(t^{-2}).\label{asyEr1}
\end{equation}

\section{The first-type genus-2 elliptic wave  region     } \label{sec5}

 In  the  Region III, we need to introduce a new g-function  defined on genus 2
Riemann surface which has real  branch points $\pm 1$, $\pm c$ and $\pm z_0$ with $1<z_0<c$. And the  canonical homology basis $\left\lbrace a_j,b_j \right\rbrace_{j=1}^2 $ is shown in Figure \ref{figab}. Note that this region    contains  two  cases   \\
$3/4<\xi<1$:\ \ \ (1) $ 2<c^2<4$, $ 1-\frac{2(c^2-2)}{c^4}<\xi<1$;  \quad  \ \   (2)  $c^2>4$, $\xi_m<\xi<1$;\\
$1<\xi<+\infty$: \ (3)  $c^2<2$, $1+\frac{2(2-c^2)}{c^4}<\xi < 1+2/c $;\quad  (4)  $c^2>2$, $ 1< \xi < 1+2/c $. \\
In this two different cases, $g$ has different property. So after we proving the basic property of $g$, we will discuss this two different cases separately. Here, the definition of $\xi_m$ is means the
critical condition of $\xi$ that stationary phase point $z_2$ of the $g$-function in case (1) merge $c$. The existence of $\xi_m$ is given in the  Appendix \ref{xim}.
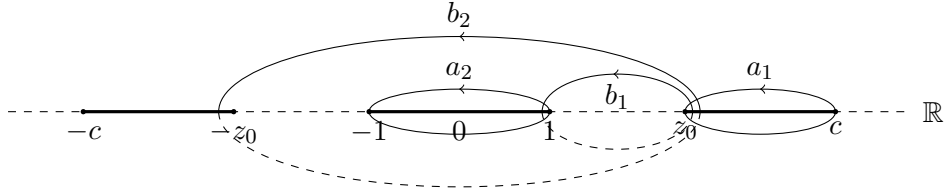
\begin{figure}[h]
	\centering
	\begin{tikzpicture}[node distance=2cm]
		\draw[dashed](-6,0)--(6,0)  node  at (6.3,0)  {$\mathbb{R}$};
		\draw(3.2,0)arc (0:180:3.2 and 1);
		\draw[dashed](-3.2,0)arc (180:360:3.2 and 1);
		\draw(1.1,0)arc (180:0:1 and 0.5);
		\draw [->](0.01,1)--(0,1);
		\draw[dashed](3.1,0)arc (0:-180:1 and 0.5);
		\draw [->](2.06,0.5)--(2.05,0.5);
		\draw [](4,0) ellipse (1 and 0.3);
		\draw [->](4.01,0.3)--(4,0.3);
		\draw [](0,0) ellipse (1.2 and 0.3);
		\draw [->](0.01,0.3)--(0,0.3);	
		\coordinate (a1) at (0,0.3);
		\fill (a1) circle (0pt) node[above] {$a_2$};
		\coordinate (a2) at (4,0.3);
		\fill (a2) circle (0pt) node[above] {$a_1$};
		\coordinate (b1) at (0,1);
		\fill (b1) circle (0pt) node[above] {$b_2$};
		\coordinate (r) at (2.1,0.5);
		\fill (r) circle (0pt) node[below] {$b_1$};	
		\coordinate (I) at (0,0);
		\fill (I) circle (0pt) node[below] {$0$};
		\coordinate (a) at (3,0);
		\fill (a) circle (1pt) node[below] {$z_0$};
		\coordinate (aa) at (-3,0);
		\fill (aa) circle (1pt) node[below] {$-z_0$};
		\coordinate (b) at (1.2,0);
		\fill (b) circle (1pt) node[below] {$1$};
		\coordinate (ba) at (-1.2,0);
		\fill (ba) circle (1pt) node[below] {$-1$};
		\coordinate (c) at (5,0);
		\fill (c) circle (1pt) node[below] {$c$};
		\coordinate (ca) at (-5,0);
		\fill (ca) circle (1pt) node[below] {$-c$};
		\draw[very thick](-5,0)--(-3,0);
		\draw[very thick](5,0)--(3,0);
		\draw[very thick](-1.2,0)--(1.2,0);
	\end{tikzpicture}
	\caption{\footnotesize  The  canonical homology basis $\left\lbrace a_j,b_j \right\rbrace_{j=1}^2 $ of the genius 2 Riemann surface. }
	\label{figab}
\end{figure}
 \subsection{Constructing the $g$-function  }
 To construct the g-function, we first introduce:
\begin{align}
	Y(z)=\left[\dfrac{z^2-z_0^2}{(z^2-1)(z^2-c^2)} \right]^{1/2},
\end{align}
where $z_0$ is a real number in $(1,c)$ and the branch of the square root is such that $Y(z)\in i\mathbb{R}^+$ for $z\in[z_0,c]$. And the d$g$ is the  derivative of g-function
\begin{align}
	\text{d}g=\dfrac{	Y(z)}{z^3}\left[ \frac{1-\xi}{2}z^4-\frac{c}{z_0}\left( 1+\frac{1}{c^2}-\frac{1}{z_0^2}\right) z^2+\frac{2c}{z_0}\right] \text{d}z.
\end{align}
Here, $\text{d}\hat{g}$ is a meromorphic differential defined on the 2-genus Riemann surface, with d$g$ on the upper sheet and $-$d$g$ on the lower sheet.
Similarly, simply calculation shows that
\begin{align*}
	&\frac{\text{d}g}{\text{d}z}-\frac{\text{d}p_-}{\text{d}z}=\mathcal{O}(  z^{-2}),\text{ as }z\to\infty;\\
	&\frac{\text{d}g}{\text{d}z}-\frac{\text{d}p_-}{\text{d}z}=\mathcal{O}(z),\text{ as }z\to0\in\mathbb{C}^+.
\end{align*}
Denote $\Sigma^{mod}$ as the union of three branch cuts:
\begin{align}
	\Sigma^{mod}=[-c,-z_0]\cup[-1,1]\cup[z_0,c].
\end{align}
Thus  the g-function is given by
\begin{align}
	g(z)=g(z;\xi,c)=\int_{c}^{z}\text{d}g,\ \ z\in\mathbb{C}\setminus\Sigma^{mod}.
\end{align}

\begin{Proposition}\label{prog}	
		There exist  a real number $z_0=z_0(\xi,c)$  in $(1,c)$ such that the function $g(z)$ defined above has the following properties
\begin{itemize}
	\item[{\rm (a)}]  The $a$-period of $g(z)$ is zero and the $b$-period of $g(z)$ is real;

	\item[{\rm (b)}] $g(z)$ satisfies the following jump conditions across $[-c,c]$:
		\begin{align}
			&g_-(z)+g_+(z)=0,\hspace{0.5cm}z\in(z_0,c),\\
			&g_-(z)-g_+(z)=0,\hspace{0.5cm}z\in(1,z_0)\cup(-z_0,-1),\\
			&g_-(z)+g_+(z)=B_1,\hspace{0.5cm}z\in(-1,1),\\
			&g_-(z)+g_+(z)=B_2,\hspace{0.5cm}z\in(-c,-z_0),
		\end{align}
	where  $B_j=B_j(\xi,c)=\frac{1}{2}\oint_{b_j}dg$ is real;

	\item[{\rm (c)}] $g(z)$ has another phase point $z_1=z_1(\xi)\in(z_0,c)$ which is the solution of equation $\frac{\xi-1}{2}z^4+\frac{c}{z_0}\left( 1+\frac{1}{c^2}-\frac{1}{z_0^2}\right) z^2-\frac{2c}{z_0}=0$;

	\item[{\rm (a)}] In Case (1), (2) with  $c^2>2$, $1>\xi$, $g(z)$ has another phase point $z_2=z_2(\xi)\in(c,+\infty)$, $z_2>z_1$ , which also is  a  solution of equation $\frac{\xi-1}{2}z^4+\frac{c}{z_0}\left( 1+\frac{1}{c^2}-\frac{1}{z_0^2}\right) z^2-\frac{2c}{z_0}=0$. When $c>2$, as $\xi\to \xi_m<1-\frac{2(c^2-2)}{c^4}$, $z_2(\xi)$ decreases to $c$.
\end{itemize}
\end{Proposition}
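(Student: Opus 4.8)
The plan is to treat $\mathrm{d}g$ as a meromorphic differential on the genus-$2$ hyperelliptic surface $\mu^2=(z^2-1)(z^2-c^2)(z^2-z_0^2)$, with $Y(z)=(z^2-z_0^2)/\mu$, and to regard $z_0\in(1,c)$ as the single free real parameter. The coefficients of the bracket
$$P(z)=\tfrac{1-\xi}{2}z^4-\tfrac{c}{z_0}\bigl(1+c^{-2}-z_0^{-2}\bigr)z^2+\tfrac{2c}{z_0}$$
in $\mathrm{d}g=\tfrac{Y(z)}{z^3}P(z)\,\mathrm{d}z$ are already pinned down by the singularity-matching with $\mathrm{d}p_-$ recorded just above the Proposition: the leading coefficient $\tfrac{1-\xi}{2}$ is forced at $z=\infty$, while $p_0=2c/z_0$ and $p_2=-\tfrac{c}{z_0}(1+c^{-2}-z_0^{-2})$ are forced by matching the $z^{-3}$ and $z^{-1}$ coefficients of $\mathrm{d}g/\mathrm{d}z$ against $\mathrm{d}p_-/\mathrm{d}z$ at $z=0$ (the $z^{-2}$ term vanishes automatically because $Y$ is even near $z=0$). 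Thus once $z_0$ is chosen, $\mathrm{d}g$ is completely determined.

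For part (a), I would fix $z_0$ by imposing the single normalization $\oint_{a_1}\mathrm{d}g=0$. The key reduction is the symmetry $\hat\iota:(z,\mu)\mapsto(-z,\mu)$, under which $\hat\iota^*\mathrm{d}g=\mathrm{d}g$ since $P$ is even while $z^3$, $\mathrm{d}z$ each supply one sign and $\mu$ is preserved. Hence the period around $[-c,-z_0]$ equals that around $[z_0,c]$; combined with the homology relation that the three loops encircling the three cuts sum to zero (up to orientation), one gets $\oint_{a_2}\mathrm{d}g=-2\oint_{a_1}\mathrm{d}g$, so $\oint_{a_1}\mathrm{d}g=0$ also forces $\oint_{a_2}\mathrm{d}g=0$. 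Writing $\oint_{a_1}\mathrm{d}g=2\int_{z_0}^{c}\tfrac{Y_+(s)}{s^3}P(s)\,\mathrm{d}s$ and using $Y_+\in i\mathbb{R}^+$ on $(z_0,c)$, the period is purely imaginary and its vanishing is a single real equation $A(z_0;\xi,c)=\int_{z_0}^{c}\tfrac{|Y_+(s)|}{s^3}P(s)\,\mathrm{d}s=0$. Existence of a root $z_0\in(1,c)$ I would obtain from the intermediate value theorem after determining the signs of $A$ as $z_0\to1^+$ (shrinking gap) and $z_0\to c^-$ (shrinking band), supplemented by a sign estimate for $\partial_{z_0}A$ for uniqueness; the reality of the $b$-periods $B_j=\tfrac{1}{2}\oint_{b_j}\mathrm{d}g$ then follows from the Schwarz reflection $\overline{\mathrm{d}g(\bar z)}=\mathrm{d}g(z)$, all branch points and all coefficients of $P$ being real. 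I expect this endpoint/monotonicity analysis to be the main obstacle, since both limits a priori send the band integral to zero, so the sign change must be extracted from the subleading behavior, and the four parameter regimes (1)--(4) may require separate verification.

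For part (b), once the $a$-periods vanish, the jump relations are read off from $g(z)=\int_c^z\mathrm{d}g$ together with the sign behavior of $Y$ across the cuts. Across each band ($(z_0,c)$ and $(-c,-z_0)$) one has $Y_+=-Y_-$, so $\mathrm{d}g_+=-\mathrm{d}g_-$ and $g_++g_-$ is locally constant; evaluating at the endpoint $z=c$ (where $g(c)=0$) gives $g_++g_-=0$ on $(z_0,c)$, and the analogous computation produces the constants $B_2$ on $(-c,-z_0)$ and $B_1$ on $(-1,1)$, which are precisely the $b$-periods. On the gaps $(1,z_0)\cup(-z_0,-1)$ there is no cut, so the continuity $g_+=g_-$ is exactly the vanishing of the enclosed $a$-period from (a); this is how parts (a) and (b) interlock.

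For parts (c) and (d), the phase points are the zeros of $P$, since $\mathrm{d}g/\mathrm{d}z=\tfrac{Y(z)}{z^3}P(z)$ and $Y$ vanishes only at $\pm z_0$. The identity $A(z_0;\xi,c)=0$ integrates the real function $P$ against the strictly positive weight $|Y_+|/s^3$ on $(z_0,c)$, so $P$ must change sign there and hence has a root $z_1\in(z_0,c)$, giving (c). As $P$ is quadratic in $z^2$, Vieta's formulas yield the second root from $z_1^2z_2^2=\tfrac{4c}{z_0(1-\xi)}$ and $z_1^2+z_2^2=\tfrac{2c}{z_0(1-\xi)}(1+c^{-2}-z_0^{-2})$; substituting $z_1^2\in(z_0^2,c^2)$ and using the case hypotheses on $(c,\xi)$ I would deduce $z_2^2>c^2$, i.e.\ $z_2\in(c,+\infty)$ with $z_2>z_1$. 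The threshold $\xi_m$ is characterized by the merging $z_2=c$, i.e.\ $P(c;z_0(\xi),\xi)=0$; implicit differentiation of $P(z_2(\xi))=0$ in $\xi$ gives the monotone decay $z_2(\xi)\downarrow c$ as $\xi\to\xi_m$, with the existence of $\xi_m$ deferred to the Appendix as stated.
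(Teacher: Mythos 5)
Your overall strategy coincides with the paper's: the coefficients of $P$ are fixed by matching $\mathrm{d}p_-$ at $0$ and $\infty$, the middle $a$-period vanishes by parity, the single remaining normalization $\int_{z_0}^{c}\frac{Y_+(s)}{s^3}P(s)\,\mathrm{d}s=0$ determines $z_0$ by an intermediate-value argument, the sign change of $P$ against the positive weight produces $z_1$, and Vieta gives $z_2$. But the heart of the matter --- the IVT sign analysis --- is precisely what you defer, and your diagnosis of it is off. You write that ``both limits a priori send the band integral to zero,'' which is not the case: as $z_0\to 1^+$ the band $[z_0,c]$ grows to $[1,c]$ and the integral tends to the finite, generically nonzero value $F(1)=-\theta_+(1^+)$, whose sign (in $i\mathbb{R}^{\pm}$ according as $\xi<1$ or $1<\xi<1+2/c$) is read off from the function $\theta_+$ of Section 4. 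Only the endpoint $z_0\to c^-$ degenerates, and there the paper extracts the sign from the derivative $\frac{\mathrm{d}F}{\mathrm{d}s}(c)=-\frac{i}{c^3}(c^2-1)^{-1/2}\bigl(\frac{\xi-1}{2}c^4+c^2-2\bigr)$; it is exactly the sign of the factor $\frac{\xi-1}{2}c^4+c^2-2$ that generates the regime boundaries $\xi\gtrless 1-\frac{2(c^2-2)}{c^4}$ and $\xi\gtrless 1+\frac{2(2-c^2)}{c^4}$ appearing in cases (1)--(4). Without these two explicit computations the proposition's case structure cannot be recovered, so this is a genuine gap rather than a routine verification.

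A secondary point: your symmetry bookkeeping gives $\oint_{a_2}\mathrm{d}g=-2\oint_{a_1}\mathrm{d}g$, which has a sign error. Since $Y(-z)=-Y(z)$ and $P$, $z^{-3}$ contribute one more sign, $[\mathrm{d}g/\mathrm{d}z]_+$ is odd on $(-1,1)$ and the two outer-band periods are negatives of each other; hence $\oint_{a_2}\mathrm{d}g=0$ holds identically (this is what the paper means by ``from the symmetry of $\mathrm{d}g$''), while $\oint_{a_1}\mathrm{d}g=0$ is an independent condition that must be imposed through the choice of $z_0$. The equation you end up solving is the right one, so this does not derail the argument, but the stated period relation is incorrect. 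Your parts (b) and (c) match the paper; for part (d) the Vieta computation is the paper's route as well, though the verification $z_2^2>c^2$ under the case hypotheses and the monotonicity $z_2(\xi)\downarrow c$ (which the paper gets from the signed formulas for $\partial F/\partial s$ and $\partial F/\partial\frac{1-\xi}{2}$) still need to be written out.
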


\begin{proof}
From the symmetry of d$g$, $a_2$-period of $g(z)$ is zero. Rewrite the function $	Y(z)$ as $Y(z;z_0)$. Let $F(s)$ be a function defined on $\mathbb{R}$ with
\begin{align}
	F(s)=\int_{s}^{c}\frac{[Y]_+(z;s)}{z^3}\left[ \frac{\xi-1}{2}z^4+\frac{c}{s}\left( 1+\frac{1}{c^2}-\frac{1}{s^2}\right) z^2-\frac{2c}{s}\right] \text{d}z.
\end{align}
Then we have $F(c)=0$ and
$$F(1)=\int_{1}^{c}\frac{1}{z^3}\left[ \left(z^2-c^2 \right)^{-1/2}\right] _+\left(  \frac{\xi-1}{2}z^4+\frac{z^2}{c}-2c\right)  \text{d}z=-\theta_+(1^+).$$
And we consider the $s$-derivative of  $F$  at $s=c$,
\begin{align}
	\frac{\text{d}F}{\text{d}s}(c)=-\frac{i}{c^3}\left(c^2-1\right)^{-1/2}\left(  \frac{\xi-1}{2}c^4+c^2-2\right).
\end{align}
In the case $\xi<1$, obviously, $F(1)\in i\mathbb{R}^-$. Thus, when $c^2>2$, $\xi>-\frac{2(c^2-2)}{c^4}+1$, $\frac{\text{d}F}{\text{d}s}(c)\in i\mathbb{R}^-$.
And in the case  $1<\xi<\frac{2}{c}+1$, from the property of $\theta_+$ in above section, we have that $F(1)\in i\mathbb{R}^+$. While when $c^2<2$, $1+\frac{2(2-c^2)}{c^4}<\xi<\frac{2}{c}+1$ and $c^2>2$, $1<\xi<\frac{2}{c}+1$, we have that $\frac{\text{d}F}{\text{d}s}(c)\in i\mathbb{R}^+$. So there must exist $z_0\in(1,c)$, such that $F(z_0)=0$. This also implies that there exist $z_1\in(z_0,c) $ such  that $f(z_1^2)=0$ with $$f(x)=\frac{\xi-1}{2}x^2+\frac{c}{z_0}\left( 1+\frac{1}{c^2}-\frac{1}{z_0^2}\right) x-\frac{2c}{z_0}.$$
By simply calculating  the $a_1$-period of $g(z)$ can be  zero and  the both $b$-period are real. Obviously, $f(0)<0$. So in the $\xi>1$ case, $f(x)$ only has one zero $z_1$ on $\mathbb{R}^+$.
And in the $\xi<1$ case, we note another real solution of $f(z)=0$ is $z_2^2$.

In addition, in the $\xi<1$ case, simple calculation gives that
\begin{align}
	&\dfrac{\partial 	F}{\partial s}(s;\xi)=-\int_{s}^{c}\frac{z(1-\xi)}{2\sqrt{(z^2-1)(z^2-c^2)(z^2-s^2)}s^3}(s^2-z_1^2)(s^2-z_2^2)dz,\\
	&\dfrac{\partial 	F}{\partial  \left( 1-\xi\right)/2 }(s;\xi)=\int_{s}^{c}\frac{z\sqrt{z^2-s^2}}{\sqrt{(z^2-1)(z^2-c^2)}}dz.
\end{align}
So when $\xi$ decrease from $1$, $z_0(\xi)$ increase in $(1,c)$ while  $z_2$ as a solution of $f(x)=0$ decrease. When $z_2$ merge $c$, we denote this critical condition as $\xi_m$.
\end{proof}
Next, because $g$ will have different sign table in $\xi>1$ and $\xi<1$, we will discuss $g$-function according to it. Denote constant
\begin{align}
	g(\infty)=\lim_{z\to \infty}g(z)-p_-(z).\label{ginf}
\end{align}

\subsection{ Opening the jump in the region  $1<\xi <+\infty$ }

\quad In this region, we give the signature table  of Im$g$ is given in Figure \ref{figdg}.
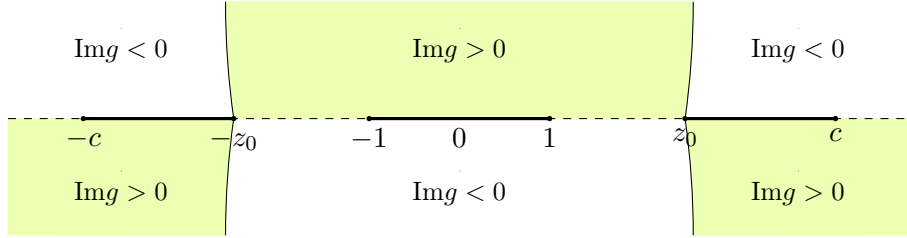
\begin{figure}[h]
	\centering
	\begin{tikzpicture}[node distance=2cm]
		\draw[lime!30, fill=lime!30](-6,0)--(-3,0)arc (165:180:3.2 and 6)--(-6,-1.55)--(-6,0);
		\draw[lime!30, fill=lime!30](6,0)--(3,0)arc (15:0:3.2 and 6)--(6,-1.55)--(6,0);
		\draw[lime!30, fill=lime!30](3,0)arc (-15:0:3.2 and 6)--(-3,1.55)--(-3,0)--(3,0);
		\draw[lime!30, fill=lime!30](-3,0)arc (195:180:3.2 and 6)--(3,1.55)--(3,0)--(-3,0);
		\draw[dashed](-6,0)--(6,0);
		\draw(3,0)arc (-15:0:3.2 and 6);
		\draw(3,0)arc (15:0:3.2 and 6);
		\draw[](-3,0)arc (165:180:3.2 and 6);
		\draw[](-3,0)arc (195:180:3.2 and 6);
		\coordinate (a1) at (0,1.2);
		\fill (a1) circle (0pt) node[below] {\small $\text{Im}g>0$};
		\coordinate (e1) at (-4.5,-0.7);
		\fill (e1) circle (0pt) node[below] {\small $\text{Im}g>0$};
		\coordinate (o1) at (4.5,-0.7);
		\fill (o1) circle (0pt) node[below] {\small $\text{Im}g>0$};	
		\coordinate (w1) at (0,-0.7);
		\fill (w1) circle (0pt) node[below] {\small $\text{Im}g<0$};
		\coordinate (e2) at (-4.5,1.2);
		\fill (e2) circle (0pt) node[below] {\small $\text{Im}g<0$};
		\coordinate (o2) at (4.5,1.2);
		\fill (o2) circle (0pt) node[below] {\small $\text{Im}g<0$};				
		\coordinate (I) at (0,0);
		\fill (I) circle (0pt) node[below] {$0$};
		\coordinate (a) at (3,0);
		\fill (a) circle (1pt) node[below] {$z_0$};
		\coordinate (aa) at (-3,0);
		\fill (aa) circle (1pt) node[below] {$-z_0$};
		\coordinate (b) at (1.2,0);
		\fill (b) circle (1pt) node[below] {$1$};
		\coordinate (ba) at (-1.2,0);
		\fill (ba) circle (1pt) node[below] {$-1$};
		\coordinate (c) at (5,0);
		\fill (c) circle (1pt) node[below] {$c$};
		\coordinate (ca) at (-5,0);
		\fill (ca) circle (1pt) node[below] {$-c$};
		\draw[very thick](-5,0)--(-3,0);
		\draw[very thick](5,0)--(3,0);
		\draw[very thick](-1.2,0)--(1.2,0);
	\end{tikzpicture}
	\caption{\footnotesize  In the purpler region, Im$g$>0 while in the white region, Im$g$<0. }
	\label{figdg}
\end{figure}
To open the  jump  contour $\mathbb{R}$,  we define
\begin{align*}
	\Sigma^{\pm}_1=&\left\lbrace z= -z_1+e^{\pm\frac{3\pi i}{4}}\mathbb{R}^+\right\rbrace \cup\left\lbrace z= z_1+e^{\pm\frac{\pi i}{4}}\mathbb{R}^+\right\rbrace ,\\
	\Sigma_{2}^\pm=&\left\lbrace z=-z_0+e^{\pm\psi i}l,\ l\in(0,\frac{z_0-1}{2\cos\psi})\right\rbrace \cup\left\lbrace z=z_0+e^{(\pi\mp\psi) i}l,\ l\in(0,\frac{z_0-1}{2\cos\psi})\right\rbrace\nonumber\\
	&\cup\left\lbrace z=1+e^{\psi i}l,\ l\in(0,\frac{z_0-1}{2\cos\psi})\right\rbrace \cup\left\lbrace z=-1+e^{(\pi-\psi) i}l,\ l\in(0,\frac{z_0-1}{2\cos\psi})\right\rbrace,
\end{align*}
where  $ \psi < \pi/4$  is chosen as a  small enough positive constant such that $\Sigma_{3}$ is  contained in the region of Im$g>0$. And further define
opened domains
\begin{align*}
	\Omega_1^\pm=&\left\lbrace z:  z=-z_1+e^{\pm \phi i}l,\ l\in\mathbb{R}^+,\ \pi>\phi>\frac{3\pi }{4}\right\rbrace \nonumber\\
	&\cup\left\lbrace z; z=z_1+e^{\pm \phi i}l,\ l\in\mathbb{R}^+,\ 0<\phi<\frac{\pi }{4}\right\rbrace ,\\
	\Omega_2^\pm=&\left\lbrace z:  z=-z_0+e^{\pm \phi i}l,\ l\in(0,\frac{z_0-1}{2\cos\psi}),\ 0<\phi<\psi\right\rbrace \nonumber\\
	&\cup\left\lbrace z:  z=z_0+e^{\pm\phi i}l,\ l\in(0,\frac{z_0-1}{2\cos\psi}),\ \pi>\phi>\pi\mp\psi\right\rbrace ,\nonumber\\
	&\cup\left\lbrace z:  z=1+e^{\pm\phi i}l,\ l\in(0,\frac{z_0-1}{2\cos\psi}),\ 0<\phi<\psi\right\rbrace \nonumber\\
	&\cup\left\lbrace z:  z=-1+e^{\pm\phi i}l,\ l\in(0,\frac{z_0-1}{2\cos\psi}),\ \pi>\phi>\pi\mp\psi\right\rbrace
\end{align*}
 Now we use $g$ to replace $p_-$ in the exponential function. And we will utilize  these factorizations to deform the jump contours, so that the oscillating factor $e^{\pm2it\theta_+}$ are decaying in corresponding region respectively.
\begin{figure}[h]
	\centering
	\begin{tikzpicture}[node distance=2cm]
		\draw[lime!30, fill=lime!30](-6,0)--(-3,0)arc (165:180:3.2 and 6)--(-6,-1.55)--(-6,0);
	\draw[lime!30, fill=lime!30](6,0)--(3,0)arc (15:0:3.2 and 6)--(6,-1.55)--(6,0);
	\draw[lime!30, fill=lime!30](3,0)arc (-15:0:3.2 and 6)--(-3,1.55)--(-3,0)--(3,0);
	\draw[lime!30, fill=lime!30](-3,0)arc (195:180:3.2 and 6)--(3,1.55)--(3,0)--(-3,0);
		\draw[dashed](0,-1.5)--(0,1.5)node[above]{ Im$z$};
		\draw[dashed](-6,0)--(6,0)node[right]{ Re$z$};
		\coordinate (I) at (0,0);
		\fill (I) circle (0pt) node[below] {$0$};
		\coordinate (a) at (3,0);
		\fill (a) circle (1pt) node[below] {$z_0$};
		\coordinate (aa) at (-3,0);
		\fill (aa) circle (1pt) node[below] {$-z_0$};
		\coordinate (b) at (1,0);
		\fill (b) circle (1pt) node[below] {$1$};
		\coordinate (ba) at (-1,0);
		\fill (ba) circle (1pt) node[below] {$-1$};
		\coordinate (c) at (5.5,0);
		\fill (c) circle (1pt) node[below] {$c$};
		\coordinate (ca) at (-5.5,0);
		\fill (ca) circle (1pt) node[below] {$-c$};
		\coordinate (y) at (4.5,0);
		\fill (y) circle (1pt) node[below] {$z_1$};
		\coordinate (cy) at (-4.5,0);
		\fill (cy) circle (1pt) node[below] {$-z_1$};
		\draw(4.5,0)--(5.7,1.4)node[above]{$\Sigma_1^+$};
		\draw(-4.5,0)--(-5.7,1.4)node[left]{$\Sigma_1^+$};
		\draw(-4.5,0)--(-5.7,-1.4)node[left]{$\Sigma_1^-$};
		\draw(4.5,0)--(5.7,-1.4)node[right]{$\Sigma_1^-$};
		\draw[-latex](-5.7,-1.4)--(-5.1,-0.7);
		\draw[-latex](-5.7,1.4)--(-5.1,0.7);
		\draw[-latex](4.5,0)--(5.1,0.7);
		\draw[-latex](4.5,0)--(5.1,-0.7);
		\draw(-3,0)--(-2,0.6)node[above]{$\Sigma_2^+$};
		\draw(-1,0)--(-2,0.6);
		\draw(3,0)--(2,0.6)node[above]{$\Sigma_2^+$};
		\draw(1,0)--(2,0.6);
		\draw(-3,0)--(-2,-0.6)node[below]{$\Sigma_2^-$};
		\draw(-1,0)--(-2,-0.6);
		\draw(3,0)--(2,-0.6)node[below]{$\Sigma_2^-$};
		\draw(1,0)--(2,-0.6);
		\draw[-latex](-2,0.6)--(-1.5,0.3);
		\draw[-latex](-3,0)--(-2.5,0.3);
		\draw[-latex](-2,-0.6)--(-1.5,-0.3);
		\draw[-latex](-3,0)--(-2.5,-0.3);
		\draw[-latex](1,0)--(1.5,0.3);
		\draw[-latex](2,0.6)--(2.5,0.3);
		\draw[-latex](1,0)--(1.5,-0.3);
		\draw[-latex](2,-0.6)--(2.5,-0.3);
		\coordinate (r) at (-2,0.02);
		\fill (r) circle (0pt) node[below] {\footnotesize$\Omega_2^-$};
		\coordinate (r1) at (2,0.02);
		\fill (r1) circle (0pt) node[below] {\footnotesize$\Omega_2^-$};
		\coordinate (hh) at (-2,-0.02);
		\fill (hh) circle (0pt) node[above] {\footnotesize$\Omega_2^+$};
		\coordinate (hr) at (2,-0.02);
		\fill (hr) circle (0pt) node[above] {\footnotesize$\Omega_2^+$};
		\coordinate (C) at (-0.2,2.2);
		\coordinate (D) at (5.2,0.6);
		\fill (D) circle (0pt) node[right] {\footnotesize $\Omega_1^+$};
		\coordinate (J) at (-5.2,-0.6);
		\fill (J) circle (0pt) node[left] {\footnotesize $\Omega_1^-$};
		\coordinate (k) at (-5.2,0.6);
		\fill (k) circle (0pt) node[left] {\footnotesize $\Omega_1^+$};
		\coordinate (k) at (5.2,-0.6);
		\fill (k) circle (0pt) node[right] {\footnotesize $\Omega_1^-$};
		\coordinate (v) at (0.2,1.2);
		\fill (v) circle (0pt) node[right] {\footnotesize $ \overline{\Omega} $};
		\draw[-latex](-3.73,0)--(-3.72,0);
		\draw[-latex](3.73,0)--(3.74,0);
		\draw[-latex](-0.01,0)--(0.01,0);
	\end{tikzpicture}
	\caption{\footnotesize  The opened jump contours $\Sigma_j^\pm, j=1,2$  and opened  regions    $\Omega_j^\pm$, $j=1,2$.
 The same as Figure \ref{figdg}, purple region means Im$g(z)>0$ while white region means Im$g(z)<0$. }
	\label{figfj2}
\end{figure}
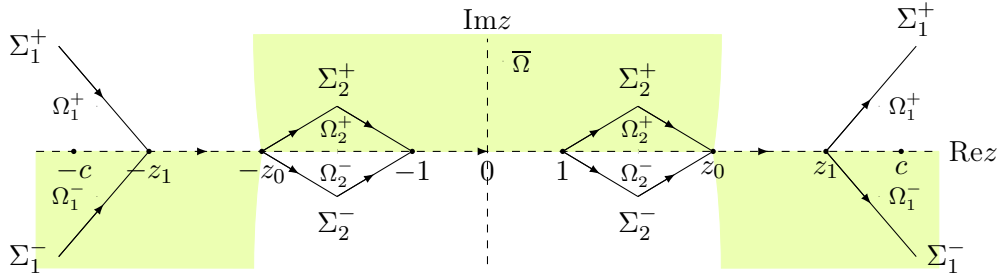

In this region of $\xi,c$, we introduce
a piecewise matrix interpolation function
\begin{equation}
	G(z)= G (z;\xi,c)=\left\{ \begin{array}{ll}
		\left(\begin{array}{cc}
			1 &\frac{ r_2e^{-2itg}}{1-r_1r_2}\\
			0 & 1
		\end{array}\right),   &\text{as } z\in\Omega_1^+;\\[12pt]
		\left(\begin{array}{cc}
			1 &0\\
			\frac{r_1e^{2itg}}{1-r_1r_2} & 1
		\end{array}\right),   &\text{as } z\in\Omega_1^-;\\[12pt]
	\left(\begin{array}{cc}
		1 & 0\\
		-r_1e^{2itg} & 1
	\end{array}\right),   &\text{as } z\in\Omega_2^+;\\[12pt]
	\left(\begin{array}{cc}
		1 & -r_2e^{-2itg}\\
		0 & 1
	\end{array}\right),   &\text{as } z\in\Omega_2^-;\\
		I &\text{as } 	z \text{ in elsewhere},
	\end{array}\right. \label{funcG2}
\end{equation}
Same as above section,  $ G (z)$ bring a new singularity.
We define  the new  matrix-valued   function $M^{(1)}(z)$
\begin{equation}
	M^{(1)}(z)\triangleq  e^{itg(\infty)\sigma_3}N(z)e^{it(p_--g)\sigma_3} G (z),\label{transm21}
\end{equation}
which then satisfies the following RH problem.

\begin{RHP}\label{RHP5}
	Find a matrix-valued function  $  M^{(1)}(z )$ which satisfies
	
	$\blacktriangleright$ Analyticity: $M^{(1)}(z)$ is meromorphic in $\mathbb{C}\setminus \left( \Sigma^{(1)}\cup(-\infty,-z_1)\cup(z_1,\infty)\right) $, where
	\begin{equation}
		\Sigma^{(1)}= \left(\cup_{j=1}^2\Sigma_j^\pm\right) \cup[-1,1]\cup[-z_1,-z_0]\cup[z_0,z_1],
	\end{equation}
	see  Figure \ref{figfj2};

	$\blacktriangleright$ Symmetry: $M^{(1)}(z)=\sigma_2M^{(1)}(-z)\sigma_2$=$\sigma_1\overline{M^{(1)}(\bar{z})}\sigma_1$;
	
	$\blacktriangleright$ Jump condition: $M^{(1)}$ has continuous boundary values $M^{(1)}_\pm(z)$ on the contour $ \Sigma^{(1)}\cup(-\infty,-z_1)\cup(z_1,\infty)$ and
	\begin{equation}
		M^{(1)}_+(z)=M^{(1)}_-(z)\hat{V}^{(1)}(z),\hspace{0.5cm}z \in  \Sigma^{(1)}\cup(-\infty,-z_1)\cup(z_1,\infty),
	\end{equation}
	where
	\begin{equation}
		\hat{V}^{(1)}(z)=\left\{ \begin{array}{ll}
			\left(\begin{array}{cc}
				1 &\frac{ -r_2e^{-2itg}}{1-r_1r_2}\\
				0 & 1
			\end{array}\right),&\text{as } z\in\Sigma_1^+,\\[14pt]
			\left(\begin{array}{cc}
				1 &0\\
				\frac{r_1e^{2itg}}{1-r_1r_2} & 1
			\end{array}\right),&\text{as } z\in\Sigma_1^-,\\[14pt]
			\left(\begin{array}{cc}
				1 & 0\\
				r_1e^{2itg} & 1
			\end{array}\right),   &\text{as } z\in\Sigma_2^+;\\[12pt]
			\left(\begin{array}{cc}
				1 & -r_2e^{-2itg}\\
				0 & 1
			\end{array}\right),   &\text{as } z\in\Sigma_2^-;\\[12pt]
		(1-r_1r_2)^{\sigma_3},   &\text{as } z\in\mathbb{R}\setminus[-c,c],\\[6pt]
		\left(\begin{array}{cc}
			0 & -r_2(z-0i)e^{-itB_2}\\
			r_1(z+0i)e^{itB_2} & 0
		\end{array}\right),   &\text{as } z\in[-c,-z_1] ,\\[12pt]
		\left(\begin{array}{cc}
		0 & -r_2(z-0i)e^{-itB_2}\\
		r_1(z+0i)e^{itB_2} & e^{it(g_--g_+)}
		\end{array}\right),   &\text{as } z\in[-z_1,-z_0] ,\\[12pt]
			\left(\begin{array}{cc}
			0 & ie^{-itB_1}\\
			ie^{itB_1} & 0
		\end{array}\right),   &\text{as } z\in[-1,1],\\[12pt]	
	\left(\begin{array}{cc}
		0 & -r_2(z-0i)\\
		r_1(z+0i) & e^{-2itg_+}
	\end{array}\right),   &\text{as } z\in[z_0,z_1] ,\\[12pt]
	\left(\begin{array}{cc}
				0 & -r_2(z-0i)\\
				r_1(z+0i) & 0
			\end{array}\right),   &\text{as } z\in[z_1,c],	
		\end{array}\right.;
	\end{equation}
	
	$\blacktriangleright$ Asymptotic behaviors:
	\begin{align}
		&M^{(1)}(z) = I+\mathcal{O}(z^{-1}),\hspace{0.5cm}z \rightarrow \infty;
	\end{align}
	
	$\blacktriangleright$ Singularity: $M^{(1)}(z)$ has singularity at $z=\pm1,\pm c$ with:
	\begin{align}
		&M^{(1)}(z)\sim \mathcal{O}(z\mp1)^{-1/4} ,\ z\to \pm1\text{ in }\mathbb{C}\setminus \Sigma^{(1)},\\
		&M^{(1)}(z)\sim \left(\mathcal{O}(1),\mathcal{O}(z\mp c)^{-1/2} \right) ,\ z\to \pm c\text{ in }\mathbb{C}^+,\\
		&M^{(1)}(z)\sim \left(\mathcal{O}(z\mp c)^{-1/2},\mathcal{O}(1) \right) ,\ z\to \pm c\text{ in }\mathbb{C}^-.
	\end{align}
\end{RHP}

To deal with the jump on $\mathbb{R}$, we give a introduction of an auxiliary function $D(z)$, which admits the following jump condition:
\begin{align*}
	&D_-(z)=D+(z)(1-r_1r_2),\hspace*{0.5cm}z\in \mathbb{R}\setminus[-c,c];\\
	&D_-(z)D_+(z)=i[r_2]_-,\hspace*{1.2cm}z\in[-c,c] \setminus[-z_0,z_0];\\
	&D_-(z)D_+(z)=1,\hspace*{2cm}z\in[-1,1].
\end{align*}
Define $Y_3(z)= (z^2-1)(z^2-c^2)Y(z),$ and
\begin{align}
	\log D(z)=&\frac{Y_3(z)}{2\pi i}\left(\int_{-c}^{-z_0}+\int_{c}^{z_0} \right) \dfrac{\log(i[r_2]_-(s))}{(s-z)[Y_3]_+(s)}ds\nonumber\\
	&-\frac{Y_3(z)}{2\pi i}\int_{\mathbb{R}\setminus[-c,c]}\dfrac{\log(1-r_1(s)r_2(s))}{(s-z)Y_3(s)}ds.
\end{align}
\begin{Proposition}\label{proD}
	The scalar function $D(z)$ satisfies the following properties  \\
	(a) $D(z)$ is analytic on $\mathbb{C}\setminus\left( (-\infty,-z_0)\cup[-1,1]\cup(z_0,\infty)\right) $;\\
	(b) $D(z)$ has singularity at $z=\pm c$ with:
	\begin{align}
		D(z)=\mathcal{O}(z-p)^{\mp 1/4},\hspace{0.3cm}z\to p\in \mathbb{C}^\pm\setminus\mathbb{R} ,\hspace{0.3cm}p=c,-c.
	\end{align}
	(c) As $z\to\infty\in\mathbb{C}\setminus\mathbb{R}$, $D(z)$ has limit  $D_\infty(z)$ with
	\begin{align}
		\log D_\infty(z)=&-\frac{1}{2\pi i}\left(\int_{-c}^{-z_0}+\int_{c}^{z_0} \right) \dfrac{\log(i[r_2]_-(s))}{[X]_+(s)}\left(z^2+zs+s^2-\frac{1+c^2+z_0^2}{2} \right) ds\nonumber\\
		&+\frac{1}{2\pi i}\int_{\mathbb{R}\setminus[-c,c]}\dfrac{\log(1-r_1(s)r_2(s))}{X(s)}\left(z^2+zs+s^2-\frac{1+c^2+z_0^2}{2} \right)ds.\nonumber
	\end{align}
(d) As $z\to0\in\mathbb{C}^+$,
\begin{align}
	D_\infty(z)=&D_\infty(0)\left(1 +D_\infty^{(1)}z\right) +\mathcal{O}(z^2),\\
	D(z)=&D(0)\left( 1+D^{(1)}z\right) +\mathcal{O}(z^2),
\end{align}
where
{\small\begin{align}
	&\log \left( D_\infty(0)\right) = \frac{1}{2\pi i}\left(\int_{-c}^{-z_0}+\int_{c}^{z_0} \right) \dfrac{\log(i[r_2]_-(s))}{[X]_+(s)}\left(\frac{1+c^2+z_0^2}{2} -s^2\right) ds\nonumber\\
	&\qquad \qquad\qquad+\frac{1}{2\pi i}\int_{\mathbb{R}\setminus[-c,c]}\dfrac{\log(1-r_1(s)r_2(s))}{X(s)}\left(s^2-\frac{1+c^2+z_0^2}{2} \right)ds,\nonumber\\[5pt]
&	D_\infty^{(1)}= -\frac{1}{2\pi i}\left(\int_{-c}^{-z_0}+\int_{c}^{z_0} \right) \dfrac{s\log(i[r_2]_-(s))}{[X]_+(s)}ds +\frac{1}{2\pi i}\int_{\mathbb{R}\setminus[-c,c]}\dfrac{s\log(1-r_1(s)r_2(s))}{X(s)}ds,\nonumber\\[5pt]
&	\log \left( D(0)\right) = \frac{cz_0}{2\pi }\left(\int_{-c}^{-z_0}+\int_{c}^{z_0} \right) \dfrac{\log(i[r_2]_-(s))}{s[Y_3]_+(s)}ds -\frac{cz_0}{2\pi }\int_{\mathbb{R}\setminus[-c,c]}\dfrac{\log(1-r_1(s)r_2(s))}{sY_3(s)}ds,\nonumber\\[5pt]
	&D^{(1)}= \frac{cz_0}{2\pi }\left(\int_{-c}^{-z_0}+\int_{c}^{z_0} \right) \dfrac{\log(i[r_2]_-(s))}{s^2[Y_3]_+(s)}ds -\frac{cz_0}{2\pi }\int_{\mathbb{R}\setminus[-c,c]}\dfrac{\log(1-r_1(s)r_2(s))}{s^2Y_3(s)}ds.\nonumber
\end{align}}
\end{Proposition}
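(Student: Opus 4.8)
The plan is to treat $D(z)$ as the essentially unique solution of a scalar, multiplicative Riemann-Hilbert problem and to read off all four properties directly from its explicit Cauchy-integral representation, exactly as was done for $\delta(z)$ in Proposition \ref{prode1}. The purpose of the radical $Y_3(z)=\sqrt{(z^2-1)(z^2-c^2)(z^2-z_0^2)}$, analytic on $\mathbb{C}\setminus\Sigma^{mod}$ with the branch fixed by Proposition \ref{prog}, is to convert the sign-alternating multiplicative jumps into the additive boundary relations satisfied by a Cauchy transform, so that the Plemelj--Sokhotski formula reproduces the prescribed jumps.

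First I would verify the three jump relations and thereby obtain (a). Writing $\log D=Y_3\cdot\Phi$ with $\Phi$ the bracketed sum of Cauchy integrals, I argue contour by contour. Across $[-1,1]$ neither integration contour is present, so only $Y_3$ jumps; since $[Y_3]_+=-[Y_3]_-$ there, one gets $\log D_++\log D_-=0$, i.e.\ $D_-D_+=1$. Across $\mathbb{R}\setminus[-c,c]$ the radical $Y_3$ is analytic while the second Cauchy integral jumps, and Plemelj gives $\log D_+-\log D_-=-\log(1-r_1r_2)$, i.e.\ $D_-=D_+(1-r_1r_2)$. Across $[z_0,c]\cup[-c,-z_0]$ the sign flip of $Y_3$ and the Plemelj jump of the first Cauchy integral combine to yield $D_-D_+=i[r_2]_-$. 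Collecting the loci where a jump survives gives exactly $(-\infty,-z_0)\cup[-1,1]\cup(z_0,\infty)$, which is (a); one must additionally check that at the soft endpoints $\pm z_0$ the $(z\mp z_0)^{\pm1/2}$ behaviour of $Y_3$ cancels the endpoint singularity of the first Cauchy integral, so that $D$ is genuinely analytic there.

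Properties (c) and (d) are then pure asymptotic expansions of the kernel. For (c) I expand $Y_3(z)/(s-z)$ for large $z$: since $Y_3(z)=z^3-\tfrac{1+c^2+z_0^2}{2}z+\mathcal{O}(z^{-1})$, polynomial division gives $Y_3(z)/(s-z)=-\bigl(z^2+zs+s^2-\tfrac{1+c^2+z_0^2}{2}\bigr)+\mathcal{O}(z^{-1})$, whose polynomial part is precisely the factor appearing in $\log D_\infty(z)$, the remainder decaying, so $\log D-\log D_\infty\to0$. For (d) I insert $\tfrac{1}{s-z}=\tfrac1s+\tfrac{z}{s^2}+\mathcal{O}(z^2)$ together with $Y_3(z)=Y_3(0)+\mathcal{O}(z^2)$, where $Y_3(0)=\pm i c z_0$ is fixed by the branch and $Y_3'(0)=0$ by evenness; matching orders of $z$ and using $Y_3(0)\tfrac{1}{2\pi i}=\pm\tfrac{cz_0}{2\pi}$ produces the stated $D(0)$, $D^{(1)}$ (and likewise $D_\infty(0)$, $D_\infty^{(1)}$) with their $cz_0$ prefactors.

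The main obstacle is (b), the precise $\mathcal{O}((z-p)^{\mp1/4})$ singularity at $p=\pm c$. Here the two relations $D_-D_+=i[r_2]_-$ on $[z_0,c]$ and $D_-=D_+(1-r_1r_2)$ on $(c,\infty)$ collide, and the local exponent is governed by $Y_3(z)\sim(z-c)^{1/2}$ together with the weak vanishing $1-r_1r_2=\mathcal{O}((z-c)^{1/2})$ noted earlier. I would carry out a local analysis of $\log D$ near $z=c$, isolating the $\mp\tfrac14\log(z-c)$ contribution generated by the endpoint of the Cauchy integral against the $(z-c)^{1/2}$ prefactor, the remaining terms being analytic. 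This endpoint computation is the delicate step and follows the arguments of \cite{lenells} and Appendix C of \cite{MonvelCMP2}; the statement at $z=-c$ is identical by the symmetry $z\mapsto-z$.
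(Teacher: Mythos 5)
Your proposal is correct and follows essentially the route the paper intends: the paper states Proposition \ref{proD} without proof, deferring (as in the analogous Proposition \ref{prode1}) to the observation that the jump verification via Plemelj and the expansions at $z=\infty$ and $z=0$ are direct, while the $(z\mp c)^{\mp 1/4}$ endpoint behaviour in (b) follows the arguments of \cite{lenells} and Appendix C of \cite{MonvelCMP2}. Your write-up supplies exactly these computations (including the correct expansions $Y_3(z)/(s-z)=-(z^2+zs+s^2-\tfrac{1+c^2+z_0^2}{2})+\mathcal{O}(z^{-1})$ and $Y_3(0)/(2\pi i)=\pm cz_0/(2\pi)$) and defers the same delicate endpoint step to the same references, so it matches the paper's approach.
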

By using $D(z)$,   we define  a new matrix function
\begin{align}
M^{(2)}=D_\infty^{-\sigma_3}M^{(1)}D^{\sigma_3}\label{trans22},
\end{align}
which then satisfies the following RH problem.

\begin{RHP}\label{RHP6}
	Find a matrix-valued function  $  M^{(2)}(z )$ which satisfies
	
	$\blacktriangleright$ Analyticity: $M^{(2)}(z)$ is meromorphic in $\mathbb{C}\setminus \Sigma^{(1)}$;

	$\blacktriangleright$ Symmetry: $M^{(2)}(z)=\sigma_2M^{(2)}(-z)\sigma_2$=$\sigma_1\overline{M^{(2)}(\bar{z})}\sigma_1$;
	
	$\blacktriangleright$ Jump condition: $M^{(2)}$ has continuous boundary values $M^{(2)}_\pm(z)$ on $\Sigma^{(1)}$ and
	\begin{equation}
		M^{(2)}_+(z)=M^{(2)}_-(z)V^{(2)}(z),\hspace{0.5cm}z \in \Sigma^{(1)},
	\end{equation}
	where
	\begin{equation}
		V^{(2)}(z)=\left\{ \begin{array}{ll}
			\left(\begin{array}{cc}
				1 &\frac{ -r_2D^{-2}e^{-2itg}}{1-r_1r_2}\\
				0 & 1
			\end{array}\right),&\text{as } z\in\Sigma_1^+,\\[14pt]
			\left(\begin{array}{cc}
				1 &0\\
				\frac{r_1D^2e^{2itg}}{1-r_1r_2} & 1
			\end{array}\right),&\text{as } z\in\Sigma_1^-,\\[14pt]
			\left(\begin{array}{cc}
				1 & 0\\
				r_1D^2e^{2itg} & 1
			\end{array}\right),   &\text{as } z\in\Sigma_2^+;\\[12pt]
			\left(\begin{array}{cc}
				1 & -r_2D^{-2}e^{-2itg}\\
				0 & 1
			\end{array}\right),   &\text{as } z\in\Sigma_2^-;\\[12pt]
			\left(\begin{array}{cc}
				0 & ie^{-itB_2}\\
				ie^{itB_2} & 0
			\end{array}\right),   &\text{as } z\in[-c,-z_1] ,\\[12pt]
			\left(\begin{array}{cc}
				0 & ie^{-itB_2}\\
				ie^{itB_2} & \frac{D_-}{D_+}e^{itB_2}e^{-2itg_+}
			\end{array}\right),   &\text{as } z\in[-z_1,-z_0] ,\\[12pt]
			\left(\begin{array}{cc}
				0 & ie^{-itB_1}\\
				ie^{itB_1} & 0
			\end{array}\right),   &\text{as } z\in[-1,1],\\[12pt]	
			\left(\begin{array}{cc}
				0 & i\\
				i & \frac{D_-}{D_+}e^{-2itg_+}
			\end{array}\right),   &\text{as } z\in[z_0,z_1] ,\\[12pt]
			\left(\begin{array}{cc}
				0 & i\\
				i & 0
			\end{array}\right),   &\text{as } z\in[z_1,c],	
		\end{array}\right.
	\end{equation}
	
	$\blacktriangleright$ Asymptotic behaviors: $M^{(2)}(z) = I+\mathcal{O}(z^{-1}),\hspace{0.2cm}z \rightarrow \infty;$
	
	$\blacktriangleright$ Singularity: $M^{(2)}(z)$ has singularity at $z=\pm c$ with
	\begin{align}
		&M^{(2)}(z)\sim \mathcal{O}(z\mp c)^{-1/4} ,\ z\to \pm c,\ \pm 1\text{ in }\mathbb{C}\setminus\mathbb{R}.
	\end{align}
\end{RHP}

Away from $\mathbb{R}$, the jump $\hat{V}^{(2)}(z)$ exponentially approaches the identity matrix as $t\to\infty$.
So we expect to only consider the jump on $\mathbb{R}$. To arrive at this goal, we
denote $ U(\xi)$ as the union set of neighborhood of $\pm z_0$:
\begin{equation}
	U(\xi)=U(\pm z_0),\ U(\pm z_0)= \left\lbrace z:|z\mp z_0|\leq \varrho \right\rbrace.
\end{equation}
Here, $\varrho$ is a small positive constant such that $\varrho<\min\left\lbrace \frac{z_0-1}{3}, \frac{z_1-z_0}{3} \right\rbrace $. In the case of \cite{YYLmch},
 $\theta(z_1)$ is in $\mathbb{R}$ such that as $t\to\infty$, the term $\theta''(z_1)(z-z_1)^2$ in the Taylor expansion of $\theta$ at $z=z_1$ is dominating. But in this paper, the phase point $\pm z_1$ is on the cut with Im$g(z_1)_+<0$ which means the exponential function in $\hat{V}^{(2)}(z)$  also decays exponentially on $\pm z_1$. In fact,  it is  also decays exponentially on $(z_0,z_1]\cup[-z_1,z_0)$.

Thus, the jump matrix $V^{(2)}(z)$    uniformly goes to  $I$  on     $\Sigma^{(1)}\setminus  U(\xi)$.
So outside the $U(\xi)$ there is only exponentially small error (in $t$) by completely ignoring the jump condition of  $M^{(2)}(z)$.
And this proposition enlightens  us to construct the solution $M^{(2)}(z)$ as follow
\begin{equation}
M^{(2)}(z)=M^{(2)}(z;\xi,c)=\left\{\begin{array}{ll}
		E(z;\xi,c)M^{mod}(z;\xi,c) & z\notin U(\xi)\\
		E(z;\xi,c)M^{lo,+}(z;\xi,c)  &z\in U(z_0)\\
		E(z;\xi,c)M^{lo,-}(z;\xi,c)  &z\in U(-z_0)\\
	\end{array}\right..\label{transm4}
\end{equation}
Here  $M^{mod}(z)$ is the model  RH problem  on the Riemann surface, which solution is given by theta function in Subsection \ref{secmod}. $M^{lo,\pm}(z)$ are local model of $\pm z_0$ which solution can be expressed in terms of Airy functions shown in Subsection \ref{seclo}. And  $E(z;\xi,c)$ is the error function, which will be discussed in subsection \ref{secer}.
 by the small-norm RH problem theory.

\subsubsection{Model RH problem on Riemann surface}\label{secmod}

\quad We  consider  the following model RH problem with  its  jump matrix   on $\mathbb{R}$.

\begin{RHP}\label{RHP7}
	Find a matrix-valued function $M^{mod}(z)$  with following identities:
	
	$\blacktriangleright$ Analyticity: $M^{mod}(z)$ is analytical  in $\mathbb{C}\setminus  \Sigma^{cut} $, with $$\Sigma^{cut}=[-c,-z_0]\cup[-1,1]\cup[z_0,c];$$

	$\blacktriangleright$ Asymptotic behaviors: $M^{mod}(z) \sim I+\mathcal{O}(z^{-1}),\hspace{0.2cm}|z| \rightarrow \infty;$

	$\blacktriangleright$ Jump condition: $M^{mod}(z)$  satisfies the jump relation
	$$M^{mod}_+(z)=M^{mod}_-(z)V^{mod}(z), \ \ \ z\in \Sigma^{cut},$$
	where the jump matrix $V^{mod}(z) $ is given by
	\begin{equation}
		\hat{V}^{mod}(z)=\left\{ \begin{array}{ll}
			\left(\begin{array}{cc}
				0 & ie^{-itB_2}\\
				ie^{itB_2} & 0
			\end{array}\right),   &\text{as } z\in[-c,-z_0] ,\\[12pt]
			\left(\begin{array}{cc}
				0 & ie^{-itB_1}\\
				ie^{itB_1} & 0
			\end{array}\right),   &\text{as } z\in[-1,1],\\[12pt]	
			\left(\begin{array}{cc}
				0 & i\\
				i & 0
			\end{array}\right),   &\text{as } z\in[z_0,c].	
		\end{array}\right.
	\end{equation}

$\blacktriangleright$ Singularity: $M^{mod}(z)$ has singularity at $z=\pm c$ with
\begin{align}
	&M^{mod}(z)\sim \mathcal{O}(z\mp p)^{-1/4} ,\ z\to p=\pm c,\ \pm 1,\ \pm z_0\text{ in }\mathbb{C}\setminus\mathbb{R}.
\end{align}
\end{RHP}
The  solution  $M^{mod}$ of the model RH problem     can be characterized  by  $\vartheta$ function on the Riemann surface with genus-2. We   define
\begin{align}
	\mathcal{N}(z)&=\frac{1}{2}\left(\begin{array}{cc}
		\kappa(z)+\kappa(z)^{-1}&\kappa(z)-\kappa(z)^{-1}\\
		\kappa(z)-\kappa(z)^{-1}&\kappa(z)+\kappa(z)^{-1}\\
	\end{array}\right),
\end{align}
where
\begin{align}
	\kappa(z)&=\left[ \frac{(z-c)(z-1)(z+z_0)}{(z-z_0)(z+1)(z+c)}\right] ^{\frac{1}{4}}, \ \  z\in\mathbb{C}\setminus\Sigma^{cut},\nonumber\\
	\kappa(z)&=1+\mathcal{O}(z^{-2}), \ \ z\to\infty.\nonumber
\end{align}
Let $\omega_i,a_i,b_i,i=1,2$ denote the standard holomorphic differentials, canonical $a,b$ periods  on the  genus 2 Riemann surface $\mathcal{M}$ which is covered by two sheets of $\mathbb{C}\setminus\Sigma^{cut}$.  And matrix $\tilde{B}\in GL_2(\mathbb{C})$, $\tilde{B}_{ij}=\oint_{b_j}\omega_i$, \  $i,j=1,2$.
Considering the Abel map
\begin{align}
	\mathcal{A}:\mathcal{M}&\to \mathbb{C}^2/\tilde{B}M+N, \ \ M, N\in \mathbb{Z}^2\\
	P&\mapsto\left(\int_{c}^{P}\omega_1, \ \int_{c}^{P}\omega_2\right)^T.
\end{align}
The $\vartheta$ function is defined by
\begin{align}
	\vartheta(z)=\sum_{n \in\mathbb{Z}^2}\exp(\pi i \langle B n, n \rangle+2\pi i\langle n, z\rangle )
\end{align}
which satisfies
\begin{align}
	\vartheta(z\pm e_j)&=	\vartheta(z),\\	\vartheta(z\pm B e_j)&=\exp(\mp2\pi iu_j-\pi iB_{jj})\vartheta (z).
\end{align}
According to \cite{BA},  there is a constant $\mathcal{K}\in \mathbb{C}^2$ such that  for arbitrary divisor $\mathcal{P}_0$,
\begin{align*}
	\vartheta(\mathcal{A}(P)-\mathcal{A}(\mathcal{P}_0)-\mathcal{K}) := \vartheta(\mathcal{A}(P)-K)
\end{align*}
has  two  zeros $P_1, P_2$ on $\mathcal{M}$ with $P_1+P_2=\mathcal{P}_0$.

Observing if $P_1,P_2$ is zeros of $\mathcal{N}_{11},\mathcal{N}_{22}$ only if $P'_1,P'_2$ is zeros of $\mathcal{N}_{12},
\mathcal{N}_{21}$ where $P'_1,P'_2$ are the same point of $P_1,P_2$ on the other sheet. Therefore, $\vartheta(\mathcal{A}(P)-K),\
\vartheta(\mathcal{A}(P)+K)$
have the same zeros as $\mathcal{N}_{11},\ \mathcal{N}_{22}$ and $\mathcal{N}_{12},\ \mathcal{N}_{21}$ have respectively at the time of $\mathcal{P}_0=P_1+P_2$.
We then can show that  the RHP \ref{RHP7} admits the solution
\begin{align}
	M^{mod}(z)&= F(\infty) \left(\begin{array}{cc}
	\mathcal{N}_{11}\frac{\vartheta(\mathcal{A}(z)-K+C)}{\vartheta(\mathcal{A}(z)-K)}&
	\mathcal{N}_{12}\frac{\vartheta(-\mathcal{A}(z)-K+C)}{\vartheta( \mathcal{A}(z)+K)}\\[10pt]
	\mathcal{N}_{21}\frac{\vartheta(\mathcal{A}(z)+K+C)}{\vartheta(\mathcal{A}(z)+K)}&
	\mathcal{N}_{22}\frac{\vartheta(-\mathcal{A}(z)+K+C)}{\vartheta(-\mathcal{A}(z)+K)}\\
\end{array}\right),\label{Mmod0}
\end{align}
where
$$ F(\infty)= \frac{1}{2}{\rm diag}  \left(  \frac{\vartheta(\mathcal{A}(\infty)-K)}{\vartheta(\mathcal{A}(\infty)-K+C)},
\frac{\vartheta(\mathcal{A}(\infty)-K)}{\vartheta(\mathcal{A}(\infty)-K-C)} \right).  $$
Noting that  as $z\to0\in\mathbb{C}^+$,
\begin{align}
	\mathcal{N}(z)=\frac{\sqrt{2}}{2}
	\left(\begin{array}{cc}
	1 & i\\
	i & 1\\
	\end{array}\right)+z\frac{\sqrt{2}}{4}\left(\frac{1}{z_0}-\frac{1}{c}-1 \right) \left(\begin{array}{cc}
	i & 1\\
	1 & i\\
\end{array}\right)+\mathcal{O}(z^2),
\end{align}
then by using (\ref{Mmod0}),  we have
\begin{align}
		M^{mod}(z)=M^{mod}(0)+M^{mod}_1z+\mathcal{O}(z^2),
\end{align}
where
\begin{align}
&M^{mod}_1=\frac{\sqrt{2}}{4}P(\infty)\left(\frac{1}{z_0}-\frac{1}{c}-1 \right) \left(\begin{array}{cc}
	i & 1\\
	1 & i\\
\end{array}\right) G(0)+  \frac{\sqrt{2}}{2}
	\left(\begin{array}{cc}
	1 & i\\
	i & 1\\
	\end{array}\right) G_z(0),\nonumber
\end{align}
and
$$G(z)=\left(\begin{array}{cc}
	 \frac{\vartheta(\mathcal{A}(z)-K+C)}{\vartheta(\mathcal{A}(z)-K)}&
	 \frac{\vartheta(-\mathcal{A}(z)-K+C)}{\vartheta(-\mathcal{A}(z)-K)}\\[10pt]
 \frac{\vartheta(\mathcal{A}(z)+K+C)}{\vartheta(\mathcal{A}(z)+K)}&
 \frac{\vartheta(-\mathcal{A}(z)+K+C)}{\vartheta(-\mathcal{A}(z)+K)}\\
\end{array}\right).  $$

\subsubsection{Localized  RH problem near phase points }\label{seclo}

\quad Although the exponential function in $\hat{V}^{(2)}(z)$  decays exponentially on $\Sigma^{(1)} \setminus\Sigma^{cut}$ as $t\to\infty$. This decay is not uniform with respect to $z$ as $z$ approaches $\Sigma^{cut}$. Thus, on the parts of the contour $\Sigma^{(1)}$,
that lie near $\Sigma^{cut}$, we need to introduce local solutions that are better approximations of  $M^{(2)}(z)$
than $M^{mod}(z)$ is. Using these local approximations, we can derive appropriate error estimates as well as higher order asymptotics beyond the $\mathcal{O}(1)$ term. In this subsection, we only give the details of the model around $z_0$. We consider $M^{lo,+}(z)$ here as an example. Firstly, we denote $P(z_0)$ is the neighborhood of $z_0$ in the Riemann surface corresponding to $U(z_0)$. And let $\omega$ be  the local coordinate of $P(z_0)$. It is a analytic homeomorphism. Observing  $g$ is an holomorphic function  in $P(z_0)$.  And $z_0$ is its zero of order 3. Therefore, there exist a holomorphic homeomorphism  $f_+$ on $P(z_0)$ such that
\begin{equation}
	g(\omega)=-\frac{2i}{3}f_+^3(\omega).
\end{equation}
Here, on the complex plane,
because of $g_+(\left( z_0,c\right)) \subset i\mathbb{R}^-$, we can choose $f_+\in \mathbb{R}^+ $ in $z\in\left( z_0,c\right) \cap U(z_0)$.
Let
\begin{equation}
	\lambda_+(\omega)=t^\frac{2}{3}f_+^2(\omega).\label{lam}
\end{equation}
Because of $g(\omega)=-g(-\omega)$, $z-z_0\mapsto\lambda_+$ is a  holomorphic homeomorphism from $U(z_0)$ to
a neighborhood  of zero. And
\begin{equation}
	\frac{4}{3}\lambda_+^{\frac{3}{2}}=2itg_+, \ \ z\in\left( z_0,c\right) \cap U(z_0),
\end{equation}
where $(\cdot)^{\frac{3}{2}}$ is the same as in the Airy model in the Apeendix \ref{appairy}.

From the definition of $\lambda_+$, we have
\begin{eqnarray}
	\frac{4}{3}\lambda_+^{\frac{3}{2}}=2itg,\ \ \text{Im}z>0,\ \ \ 	\frac{4}{3}\lambda_+^{\frac{3}{2}}=-2itg,\ \ \text{Im}z<0.
\end{eqnarray}
We choose the jump contour $\Sigma_3,\Sigma_4$ satisfying
\begin{eqnarray}
	\lambda_+:\Sigma_3\cap U(z_0)\mapsto e^{\frac{2i\pi}{3}}\mathbb{R}^+,\\
	\lambda_+:\Sigma_4\cap U(z_0)\mapsto e^{\frac{4i\pi}{3}}\mathbb{R}^+.
\end{eqnarray}
Define $M^{lo,+}(z)$ as follow
\begin{align}
	M^{lo,+}(z)=M^{mod}H(z;z_0)^{-1}N^{-1}\lambda_+^{\frac{\sigma_3}{4}}m^{Ai}(\lambda_+)H(z;z_0).\label{Mlo+}
\end{align}
The definition of $N$ comes from (\ref{asyairy}) and
\begin{align}
	H(z;z_0)=\left\{
	\begin{array}{ll}
		D^{\sigma_3}r_1^{\frac{\sigma_3}{2}},\hspace{1cm}  z-z_0\in\mathbb{C}^+\cap U(z_0),\\
		\sigma_3\sigma_1D^{\sigma_3}r_2^{-\frac{\sigma_3}{2}}, \ z-z_0\in\mathbb{C}^-\cap U(z_0).
	\end{array}
	\right.
\end{align}
Moreover, $M^{mod}H(z;z_0)^{-1}N^{-1}\lambda_+^{\frac{\sigma_3}{4}}$ is a analytic and  invertible function in $U(z_0)$. Similarly,
\begin{align}
	M^{lo,-}(z)=M^{mod}H(z;-z_0)^{-1}N^{-1}\lambda_-^{\frac{\sigma_3}{4}}m^{Ai}(\lambda_-)H(z;-z_0).\label{Mlo-}
\end{align}
with
\begin{align}
	H(z;-z_0)=\left\{
	\begin{array}{ll}
		e^{\frac{itB_2}{2}\sigma_3}\sigma_3D^{\sigma_3}r_1^{\frac{\sigma_3}{2}},\hspace{0.2cm}  z-z_0\in\mathbb{C}^+\cap U(-z_0),\\
		e^{\frac{itB_2}{2}\sigma_3}\sigma_1D^{\sigma_3}r_2^{-\frac{\sigma_3}{2}},z-z_0\in\mathbb{C}^-\cap U(-z_0),
	\end{array}
	\right.
\end{align}
\begin{equation}
	g=-\frac{2i}{3}f_-^3+\frac{B_2}{2},\hspace{1cm}	\lambda_-=t^\frac{2}{3}f_-^2,
\end{equation}
where $f_-\in \mathbb{R}+$ on $z\in (-c,-z_0)\cap U(-z_0)$ and $M^{mod}H(z;-z_0)^{-1}N^{-1}\lambda_-^{\frac{\sigma_3}{4}}$ is a analytic and  invertible function in $U(-z_0)$.
Moreover, as $z\to z_0$ in $\mathbb{C}\setminus[z_0,c]$, $\lambda_+$ has expending
\begin{align}
	\lambda_+=(\tilde{A}_+)^{\frac{2}{3}}(z-z_0)+\frac{2\tilde{B}_+}{3\tilde{A}_+^{\frac{1}{3}}}(z-z_0)^2+\mathcal{O}((z-z_0)^3),\\
	\lambda_-=(\tilde{A}_-)^{\frac{2}{3}}(z+z_0)+\frac{2\tilde{B}_-}{3\tilde{A}_-^{\frac{1}{3}}}(z+z_0)^2+\mathcal{O}((z+z_0)^3),
\end{align}
where
{\footnotesize\begin{align}
	\tilde{A}_+=&\frac{t}{2} \left(\frac{2z_0}{(z_0^2-1)(c^2-z_0^2)}\right)^{\frac{1}{2}}\left( \frac{\xi-1}{2}z_0+\frac{c}{z_0^2}(1+\frac{1}{c^2})-\frac{3c}{z_0^4}\right),\nonumber\\[6pt]
	\tilde{B}_+=&\frac{3t}{10}\left[-\frac{1}{2}(\frac{2z_0}{(z_0^2-1)(c^2-z_0^2)})^{-\frac{1}{2}}\frac{-7z_0^4+3(1+c^2)z_0^2+c^2}{(z_0^2-1)^2(c^2-z_0^2)^2}
	 \left(\frac{\xi-1}{2}z_0+\frac{c}{z_0^2}(1+\frac{1}{c^2})-\frac{3c}{z_0^4}\right)\right.\nonumber\\[4pt]
	&\left.+(\frac{2z_0}{(z_0^2-1)(c^2-z_0^2)})^{\frac{1}{2}}(\frac{\xi-1}{2}-\frac{c}{z_0^3}(1+\frac{1}{c^2}-\frac{1}{z_0^2})+\frac{6c}{z_0^5})\right],\nonumber\\[6pt]
	\tilde{A}_-=&-\frac{it}{2}\left(\frac{2z_0}{(z_0^2-1)(c^2-z_0^2)}\right)^{\frac{1}{2}}\left(-\frac{\xi-1}{2}z_0+\frac{c}{z_0^2}(1+\frac{1}{c^2})-\frac{3c}{z_0^4}\right),\nonumber\\[4pt]
	\tilde{B}_-=&\frac{3t}{10}\left[-\frac{i}{2}(\frac{2z_0}{(z_0^2-1)(c^2-z_0^2)})^{-\frac{1}{2}}\frac{-7z_0^4+3(1+c^2)z_0^2+c^2}{(z_0^2-1)^2(c^2-z_0^2)^2}
    (-\frac{\xi-1}{2}z_0+\frac{c}{z_0^2}(1+\frac{1}{c^2})-\frac{3c}{z_0^4})\right.\nonumber\\[4pt]
	&\left.-i(\frac{2z_0}{(z_0^2-1)(c^2-z_0^2)})^{\frac{1}{2}}(\frac{\xi-1}{2}+\frac{c}{z_0^3}(1+\frac{1}{c^2}-\frac{1}{z_0^2})-\frac{6c}{z_0^5})\right].\label{expf}
\end{align}}

\subsubsection{The small norm RH problem  for error function   }\label{secer}
\quad In this subsection,  we consider the error matrix-function $E(z;\xi,c)$.

\begin{RHP}\label{RHPE}
 Find a matrix-valued function $E(z;\xi,c)$  with following properties:

$\blacktriangleright$ Analyticity: $E(z;\xi,c)$ is analytical  in $\mathbb{C}\setminus  \Sigma^{ E } $, where
$$\Sigma^{ E }= \partial U_{ \xi }\cup
\left[ \Sigma^{(2)}\setminus \left( U_{(\xi)}\cup[-c,-z_1]\cup[-1,1]\cup[z_1,c]\right) \right] ;$$

$\blacktriangleright$ Asymptotic behaviors:
\begin{align}
	&E(z;\xi,c) \sim I+\mathcal{O}(z^{-1}),\hspace{0.5cm}|z| \rightarrow \infty;
\end{align}

$\blacktriangleright$ Jump condition: $E(z;\xi,c)$ has continuous boundary values $E_\pm(z;\xi,c)$ on $\Sigma^{E}$ satisfying
$$E_+(z;\xi,c)=E_-(z;\xi,c)V^{ E }(z),$$
where the jump matrix $V^{E}(z)$ is given by
\begin{equation}
	V^{ E }(z)=\left\{\begin{array}{llll}
		M^{mod}(z)V^{(2)}(z)M^{mod}(z)^{-1}, & z\in \Sigma^{ E }\setminus \partial U_{ \xi },\\[4pt]
	M^{lo,\pm}(z)M^{mod}(z)^{-1},  & z\in \partial U_{\pm z_0},
	\end{array}\right. \label{deVE}
\end{equation}
which is  shown in  Figure \ref{figE}.
\end{RHP}
The  above  RH problem, by (\ref{Mlo+}), (\ref{Mlo-}) and (\ref{asymlo}), satisfies
\begin{align*}
	\parallel V^{E}(z)-I\parallel_2 \lesssim \mathcal{O}(t^{-1}).
\end{align*}

\begin{figure}[H]
	\centering
		\begin{tikzpicture}
			\draw[dashed](0,-1.5)--(0,1.5)node[above]{ Im$z$};
			\draw[dashed](-6,0)--(6,0)node[right]{ Re$z$};
			\coordinate (I) at (0,0);
			\fill (I) circle (0pt) node[below] {$0$};
			\coordinate (a) at (3,0);
			\fill (a) circle (1pt) node[below] {$z_0$};
			\coordinate (aa) at (-3,0);
			\fill (aa) circle (1pt) node[below] {$-z_0$};
			\coordinate (b) at (1,0);
			\fill (b) circle (1pt) node[below] {$1$};
			\coordinate (ba) at (-1,0);
			\fill (ba) circle (1pt) node[below] {$-1$};
			\coordinate (c) at (5.5,0);
			\fill (c) circle (1pt) node[below] {$c$};
			\coordinate (ca) at (-5.5,0);
			\fill (ca) circle (1pt) node[below] {$-c$};
			\coordinate (y) at (4.5,0);
			\fill (y) circle (1pt) node[below] {$z_1$};
			\coordinate (cy) at (-4.5,0);
			\fill (cy) circle (1pt) node[below] {$-z_1$};
			\draw[thick](4.5,0)--(5.7,1.4);
			\draw[thick](-4.5,0)--(-5.7,1.4);
			\draw[thick](-4.5,0)--(-5.7,-1.4);
			\draw[thick](4.5,0)--(5.7,-1.4);
			\draw[-latex](-5.7,-1.4)--(-5.1,-0.7);
			\draw[-latex](-5.7,1.4)--(-5.1,0.7);
			\draw[-latex](4.5,0)--(5.1,0.7);
			\draw[-latex](4.5,0)--(5.1,-0.7);
			\draw[thick](3.5,0)--(4.5,0);
			\draw[thick](-3.5,0)--(-4.5,0);
			\draw[thick](-2.57,0.25)--(-2,0.6);
			\draw[thick](-1,0)--(-2,0.6);
			\draw[thick](2.57,0.25)--(2,0.6);
			\draw[thick](1,0)--(2,0.6);
			\draw[thick](-2.57,-0.25)--(-2,-0.6);
			\draw[thick](-1,0)--(-2,-0.6);
			\draw[thick](2.57,-0.25)--(2,-0.6);
			\draw[thick](1,0)--(2,-0.6);
			\draw[-latex](-2,0.6)--(-1.5,0.3);
			\draw[-latex](-2.57,0.25)--(-2.2,0.47);
			\draw[-latex](-2,-0.6)--(-1.5,-0.3);
			\draw[-latex](-2.57,-0.25)--(-2.2,-0.47);
			\draw[-latex](1,0)--(1.5,0.3);
			\draw[-latex](2,0.6)--(2.5,0.3);
			\draw[-latex](1,0)--(1.5,-0.3);
			\draw[-latex](2,-0.6)--(2.5,-0.3);
			\draw[-latex](-3.73,0)--(-3.72,0);
			\draw[-latex](3.73,0)--(3.8,0);
			\draw[thick,red](3,0) circle (0.5);
			\draw[thick,red](-3,0) circle (0.5);
			\draw[-latex,red](3,0.5)--(3.05,0.5);
			\draw[-latex,red](-3,0.5)--(-2.93,0.5);
		\end{tikzpicture}
	\caption{  The jump contour $\Sigma^{E}$ for the $E(z;\xi,c)$. The red circles are $U(\xi)$. }
	\label{figE}
\end{figure}

%
Similar with the discussion in Section \ref{erroranalysis},  the RHP  \ref{RHPE} admits  a unique  solution, which
  can be given by
\begin{equation}
	E(z;\xi,c)=I+\frac{1}{2\pi i}\int_{\Sigma^{ E }}\dfrac{\left( I+\varpi(s)\right) (V^{E}(s)-I)}{s-z}ds,\label{Ez}
\end{equation}
where the $\varpi\in L^\infty(\Sigma^{ E })$ is the unique solution of following equation
\begin{equation}
	(1-C_E)\varpi=C_E\left(I \right).
\end{equation}
In order to reconstruct the solution $u(y,t)$ of (\ref{mch}), we need the asymptotic behavior of $E(z;\xi,c)$ as $z\to 0\in \mathbb{C}^+$ and the long time asymptotic behavior of $E(0)$. Note that when we estimate its  asymptotic behavior, we only need to consider the calculation on $\partial U(\xi)$ because it  approach zero exponentially on other boundary. For convenience, we  denote
\begin{align}
	F^\pm=M^{mod}H(z;\pm z_0)^{-1}N^{-1}f_\pm^{\frac{1}{2}\sigma_3}=[F^\pm_{ij}]_{2\times2},
\end{align}
with $(F^\pm)^{-1}\triangleq[F^{\pm,*}_{ij}]_{2\times2}$.
\begin{Proposition}\label{asyE}
	As $z\to 0\in \mathbb{C}^+$, we have
	\begin{align}
		E(z;\xi,c)=E(0)+E_1z+\mathcal{O}(z^2),
	\end{align}
	where
	\begin{align}
	E(0)=I+\frac{1}{2\pi i}\int_{\Sigma^{E}}\dfrac{\left( I+\varpi(s)\right) (V^{ E }-I)}{s}ds,
	\end{align}
	with long time asymptotic behavior
	\begin{equation}
		E(0)=I+t^{-1}H^{(0)}+\mathcal{O}(t^{-2}).\label{E0t}
	\end{equation}
	And
	\begin{align}
		H^{(0)}=H^{(0)}(\xi,c)=&\sum_{ p=\pm z_0 }\frac{\text{d}}{\text{d}z}\left(\frac{1 }{z}F^{\pm}\left(\begin{array}{cc}
			0&-\frac{5i}{48}\tilde{A}_\pm^{-\frac{4}{3}}\\
			0&0\\
		\end{array}\right)(F^\pm)^{-1}\right)(p) \nonumber\\
		&+\sum_{ p=\pm z_0 }\frac{1 }{p}\left( F^{\pm}\left(\begin{array}{cc}
			0&\frac{5}{36}\tilde{B}_\pm\tilde{A}^{-\frac{7}{3}}_\pm\\
			-\frac{7i}{48}\tilde{A}^{-\frac{2}{3}}_\pm&0\\
		\end{array}\right)(F^\pm)^{-1}\right)(p).\nonumber
	\end{align}
Here, $\tilde{B}_\pm$, $\tilde{A}_\pm$ is shown in (\ref{expf}).
 And
	\begin{equation}
		E_1=\frac{1}{2\pi i}\int_{\Sigma^{E}}\dfrac{\left( I+\varpi(s)\right) (V^{ E }-I)}{s^2}ds,
	\end{equation}
	satisfying long time asymptotic behavior condition
	\begin{equation}
		E_1=t^{-1}H^{(1)}+\mathcal{O}(t^{-2}),\label{E1t}
	\end{equation}
	where
	\begin{align}
		H^{(1)}=H^{(1)}(\xi,c)=&\sum_{ p=\pm z_0 }\frac{\text{d}}{\text{d}z}\left(\frac{1 }{z^2}F^{\pm}\left(\begin{array}{cc}
			0&-\frac{5i}{48}\tilde{A}_\pm^{-\frac{4}{3}}\\
			0&0\\
		\end{array}\right)(F^\pm)^{-1}\right)(p) \nonumber\\
		&+\sum_{ p=\pm z_0 }\frac{1 }{p^2}\left( F^{\pm}\left(\begin{array}{cc}
			0&\frac{5}{36}\tilde{B}_\pm\tilde{A}^{-\frac{7}{3}}_\pm\\
			-\frac{7i}{48}\tilde{A}^{-\frac{2}{3}}_\pm&0\\
		\end{array}\right)(F^\pm)^{-1}\right)(p).\nonumber
	\end{align}
\end{Proposition}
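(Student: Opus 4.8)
The plan is to follow the same route used for Proposition \ref{AsyE0I} in Section \ref{erroranalysis}, the only structural change being that the local parametrices at the branch/stationary points $\pm z_0$ are now the Airy models (\ref{Mlo+})--(\ref{Mlo-}) rather than parabolic-cylinder models; this is precisely what upgrades the residual order from $t^{-1/2}$ to $t^{-1}$. First I would read off the two expansion coefficients directly from the Beals--Coifman representation (\ref{Ez}). Since $0\notin\Sigma^{E}$, I expand $\tfrac{1}{s-z}=\tfrac1s+\tfrac{z}{s^2}+\mathcal{O}(z^2)$ inside the integral and match powers of $z$; this gives exactly the stated formulas for $E(0)$ and $E_1$ with no approximation. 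This step is immediate and non-obstructive.

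The next step is the long-time reduction. By the small-norm theory (exactly as in Section \ref{erroranalysis}), the bound $\parallel V^{E}-I\parallel_2\lesssim\mathcal{O}(t^{-1})$ forces $\parallel\varpi\parallel_{L^\infty(\Sigma^{E})}\lesssim t^{-1}$, so $(I+\varpi)(V^{E}-I)=(V^{E}-I)+\mathcal{O}(t^{-2})$. On $\Sigma^{E}\setminus\partial U_\xi$ the jump is exponentially small (this is controlled by the sign table of $\mathrm{Im}\,g$, since $\Sigma_1^\pm,\Sigma_2^\pm$ lie in regions where $e^{\pm2itg}$ decays), so only the circles $\partial U(\pm z_0)$ contribute at order $t^{-1}$. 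Hence
\begin{align}
	E(0)=I+\frac{1}{2\pi i}\int_{\partial U(z_0)\cup\partial U(-z_0)}\frac{V^{E}(s)-I}{s}\,ds+\mathcal{O}(t^{-2}),\nonumber
\end{align}
and identically for $E_1$ with weight $s^{-2}$.

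The heart of the proof is the expansion of $V^{E}-I=M^{lo,\pm}(M^{mod})^{-1}-I$ on $\partial U(\pm z_0)$. On these circles $\lambda_\pm=t^{2/3}f_\pm^2$ is large, so I insert the large-argument asymptotics of the Airy parametrix $m^{Ai}$ (Appendix \ref{appairy}) into (\ref{Mlo+})--(\ref{Mlo-}). The $\lambda_\pm^{\sigma_3/4}$-conjugation is absorbed into the analytic, invertible prefactor $F^\pm=M^{mod}H(z;\pm z_0)^{-1}N^{-1}f_\pm^{\frac{1}{2}\sigma_3}$, so that $V^{E}-I=F^\pm\big(\cdots\big)(F^\pm)^{-1}$, where the middle factor is the Airy correction series in powers of $\lambda_\pm^{-3/2}=t^{-1}f_\pm^{-3}$ and its leading coefficient carries the universal Airy constants $\tfrac{5}{48}$ and $\tfrac{7}{48}$. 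Using the local expansions of $\lambda_\pm$ in $(z\mp z_0)$ with the coefficients $\tilde A_\pm,\tilde B_\pm$ from (\ref{expf}), the half-integer powers $(z\mp z_0)^{-3/2}$ and $(z\mp z_0)^{-1/2}$ combine with the $f_\pm^{\mp1/2}$ hidden in $F^\pm$ to yield a genuine meromorphic function on $U(\pm z_0)$ with a double pole and a simple pole at $\pm z_0$, whose coefficients are precisely the constant matrices $\begin{pmatrix}0&-\tfrac{5i}{48}\tilde A_\pm^{-4/3}\\0&0\end{pmatrix}$ and $\begin{pmatrix}0&\tfrac{5}{36}\tilde B_\pm\tilde A_\pm^{-7/3}\\-\tfrac{7i}{48}\tilde A_\pm^{-2/3}&0\end{pmatrix}$ dressed by $F^\pm$.

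Finally I apply the residue theorem to each circle integral. The double pole contributes the derivative terms $\tfrac{\mathrm d}{\mathrm dz}\big(\tfrac1z F^\pm(\cdots)(F^\pm)^{-1}\big)(\pm z_0)$, and the simple pole contributes the evaluation terms $\tfrac1p F^\pm(\cdots)(F^\pm)^{-1}$, producing $H^{(0)}$; the weight $s^{-2}$ produces $H^{(1)}$ in the same way. The symmetry $z\mapsto -z$ of the RH problem relates the $z_0$ and $-z_0$ contributions and halves the work. I expect the main obstacle to be the Airy matching itself: extracting the exact pole coefficients (both the numerical constants and the exact $\tilde A_\pm,\tilde B_\pm$ powers) and, in particular, correctly tracking which half-integer power becomes a double versus a simple pole after conjugation by $F^\pm$; a secondary technical point is verifying that the $\mathcal{O}(t^{-2})$ remainder is uniform, collecting the neglected $\varpi(V^{E}-I)$ term, the next Airy coefficient of size $\lambda_\pm^{-3}\sim t^{-2}$, and the exponentially small off-disk contributions.
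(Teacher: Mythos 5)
Your proposal is correct and follows essentially the same route as the paper's proof: the Beals--Coifman representation with the Taylor expansion of $\tfrac{1}{s-z}$, the small-norm reduction of the integral to $\partial U(\pm z_0)$, the insertion of the first Airy correction $m^{Ai}_1$ conjugated into the form $F^\pm f_\pm^{-\sigma_3/2}Nm^{Ai}_1N^{-1}f_\pm^{\sigma_3/2}(F^\pm)^{-1}$ producing a double pole plus a simple pole at $\pm z_0$, and the residue theorem. The pole coefficients and $\tilde A_\pm,\tilde B_\pm$ dependence you describe match the paper's computation exactly.
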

\begin{proof}
	By using  expansion  of $V^{ E }$  and $\varpi(s)=\mathcal{O}(t^{-1})$ , we have
	\begin{align}
		&\int_{\Sigma^{E}}\dfrac{\left( I+\varpi(s)\right) (V^{E}-I)}{s}ds\nonumber\\
		&=\frac{1}{t}\int_{\partial U(\pm z_0)}\dfrac{ M^{mod}H(z;\pm z_0)^{-1} m^{Ai}_1H(z;\pm z_0)(M^{mod})^{-1}}{sf_\pm^{3}}ds+\mathcal{O}(t^{-2}).
	\end{align}
Rewrite the as
\begin{align*}
	M^{mod}H(z;\pm z_0)^{-1} m^{Ai}_1H(z;\pm z_0)(M^{mod})^{-1}=F^\pm f_\pm^{-\frac{\sigma_3}{2}}Nm^{Ai}_1N^{-1}f_\pm^{\frac{\sigma_3}{2}}(F^\pm)^{-1}.
\end{align*}
Here, from (\ref{expf}) and the definition of $N$, $m^{Ai}_1$ in Appendix \ref{appairy}, as $z\to z_0\in \mathbb{C}^+$, we have
\begin{align}
	&f_\pm^{-3}f_\pm^{-\frac{\sigma_3}{2}}Nm^{Ai}_1N^{-1}f_\pm^{\frac{\sigma_3}{2}}=\nonumber\\
	&\frac{1}{(z\mp z_0)^2}\left(\begin{array}{cc}
		0&-\frac{5i}{48}\tilde{A}_\pm^{-\frac{4}{3}}\\
		0&0\\
	\end{array}\right)+\frac{1}{(z\mp z_0)}
\left(\begin{array}{cc}
	0&\frac{5}{36}\tilde{B}_\pm\tilde{A}^{-\frac{7}{3}}_\pm\\
	-\frac{7i}{48}\tilde{A}^{-\frac{2}{3}}_\pm&0\\
\end{array}\right)+\mathcal{O}(1).
\end{align}
Then by residue theorem we finally arrive at the result.
\end{proof}
\subsection{Opening the jump in the region $3/4<\xi<1$  }\label{subsecg2}
\quad
This region  include  two cases:   (1) $ 2< c^2<4$,  $ 1-\frac{2(c^2-2)}{c^4} <\xi <1$;\quad (2)  $c^2>4$, $\xi_m<\xi<1$.
We introduce a $g$ function by
 \begin{align*}
	\text{d}g=\dfrac{	Y(z)}{z^3}\left[ \frac{1-\xi}{2}z^4-\frac{c}{z_0}\left( 1+\frac{1}{c^2}-\frac{1}{z_0^2}\right) z^2+\frac{2c}{z_0}\right] \text{d}z
\end{align*}
will have another zero on $\mathbb{R}$ except on cut. It means $g$ have three pairs of stationary phase points on $\mathbb{R}$, which will gives  more contribution as $t\to\infty$. Consider the equation
\begin{align*}
	\frac{1-\xi}{2}z^4-\frac{c}{z_0}\left( 1+\frac{1}{c^2}-\frac{1}{z_0^2}\right) z^2+\frac{2c}{z_0}=0.
\end{align*}
It has two pairs of zeros on $\mathbb{R}$: $\pm z_1\in(z_0,c)$, $\pm z_2$. Note that,
$z_1^2z_2^2=\frac{4c}{z_0(1-\xi)}$.
For a given $c$, when $\xi$  decreases from $1$, $z_2$ as a function of $\xi$ decreases from $+\infty$. We denote $\xi_m$ as the
critical condition of $\xi$ that stationary phase point $z_2$ merge $c$. Under this case, the sign table of Im$g$ has the  following figure:
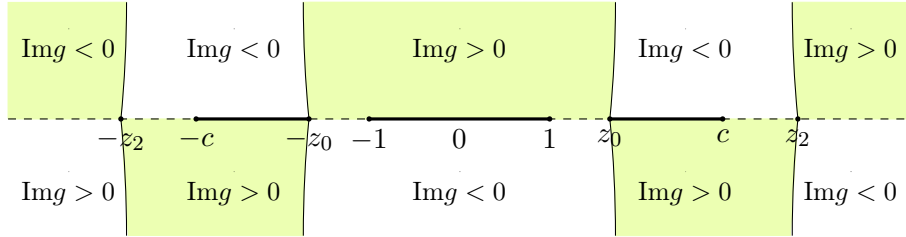
\begin{figure}[h]
	\centering
	\begin{tikzpicture}[node distance=2cm]
		\draw[lime!30, fill=lime!30](-4.5,0)--(-2,0)arc (165:180:2.2 and 6)--(-4.4,-1.55)--(-4.5,0);
		\draw[lime!30, fill=lime!30](4.5,0)--(2,0)arc (15:0:2.2 and 6)--(4.4,-1.55)--(4.5,0);
		\draw[lime!30, fill=lime!30](2,0)arc (-15:0:2.2 and 6)--(-2,1.55)--(-2,0)--(2,0);
		\draw[lime!30, fill=lime!30](-2,0)arc (195:180:2.2 and 6)--(2,1.55)--(2,0)--(-2,0);
		\draw[lime!30, fill=lime!30](-6,0)--(-4.5,0)arc (-15:0:2.2 and 6)--(-6,1.55)--(-6,0);
		\draw[lime!30, fill=lime!30](6,0)--(4.5,0)arc  (195:180:2.2 and 6)--(6,1.55)--(6,0);
		\draw[dashed](-6,0)--(6,0);
		\draw(2,0)arc (-15:0:2.2 and 6);
		\draw(2,0)arc (15:0:2.2 and 6);
		\draw[](-2,0)arc (165:180:2.2 and 6);
		\draw[](-2,0)arc (195:180:2.2 and 6);
		\draw[](4.5,0)arc (165:180:2.2 and 6);
		\draw[](4.5,0)arc (195:180:2.2 and 6);
		\draw[](-4.5,0)arc (-15:0:2.2 and 6);
		\draw[](-4.5,0)arc (15:0:2.2 and 6);
		\coordinate (a1) at (0,1.2);
		\fill (a1) circle (0pt) node[below] {\small $\text{Im}g>0$};
		\coordinate (e1) at (-3,-0.7);
		\fill (e1) circle (0pt) node[below] {\small $\text{Im}g>0$};
		\coordinate (o1) at (3,-0.7);
		\fill (o1) circle (0pt) node[below] {\small $\text{Im}g>0$};	
		\coordinate (w1) at (0,-0.7);
		\fill (w1) circle (0pt) node[below] {\small $\text{Im}g<0$};
		\coordinate (e2) at (-3,1.2);
		\fill (e2) circle (0pt) node[below] {\small $\text{Im}g<0$};
		\coordinate (o2) at (3,1.2);
		\fill (o2) circle (0pt) node[below] {\small $\text{Im}g<0$};				
		\coordinate (I) at (0,0);
		\fill (I) circle (0pt) node[below] {$0$};
		\coordinate (a) at (2,0);
		\fill (a) circle (1pt) node[below] {$z_0$};
		\coordinate (aa) at (-2,0);
		\fill (aa) circle (1pt) node[below] {$-z_0$};
		\coordinate (b) at (1.2,0);
		\fill (b) circle (1pt) node[below] {$1$};
		\coordinate (ba) at (-1.2,0);
		\fill (ba) circle (1pt) node[below] {$-1$};
		\coordinate (c) at (3.5,0);
		\fill (c) circle (1pt) node[below] {$c$};
		\coordinate (ca) at (-3.5,0);
		\fill (ca) circle (1pt) node[below] {$-c$};
		\coordinate (cr) at (4.5,0);
		\fill (cr) circle (1pt) node[below] {$z_2$};
		\coordinate (car) at (-4.5,0);
		\fill (car) circle (1pt) node[below] {$-z_2$};
		\draw[very thick](-3.5,0)--(-2,0);
		\draw[very thick](3.5,0)--(2,0);
		\draw[very thick](-1.2,0)--(1.2,0);
		\coordinate (ww) at (5.2,1.2);
		\fill (ww) circle (0pt) node[below] {\small $\text{Im}g>0$};	
		\coordinate (www) at (5.2,-0.7);
		\fill (www) circle (0pt) node[below] {\small $\text{Im}g<0$};
		\coordinate (ww) at (-5.2,1.2);
		\fill (ww) circle (0pt) node[below] {\small $\text{Im}g<0$};	
		\coordinate (www) at (-5.2,-0.7);
		\fill (www) circle (0pt) node[below] {\small $\text{Im}g>0$};
	\end{tikzpicture}
	\caption{\footnotesize  In the purpler region, Im$g$>0 while in the white region, Im$g$<0. }
	\label{figdg2}
\end{figure}

Similarly as the above section, we define the following contour relying on $\xi, c$:
\begin{align*}
	\Sigma_{1}^\pm =&\left\lbrace -z_1+e^{(\pi\mp \psi) i}l,\ l\in(0,\frac{z_1-z_2}{2\cos\psi})\right\rbrace \cup\left\lbrace z_1+e^{\pm\psi i}l,\ l\in(0,\frac{z_1-z_2}{2\cos\psi})\right\rbrace \nonumber\\
	&\cup\left\lbrace z_2+e^{(\pi\mp \psi) i},\ l\in(0,\frac{z_1-z_2}{2\cos\psi})\right\rbrace \cup\left\lbrace -z_2+e^{\pm\psi i},\ l\in(0,\frac{z_1-z_2}{2\cos\psi})\right\rbrace ,\\
	\Sigma_{2}^\pm=&\left\lbrace -z_0+e^{\pm\psi i}l,\ l\in(0,\frac{z_0-1}{2\cos\psi})\right\rbrace \cup\left\lbrace z_0+e^{(\pi\mp\psi) i}l,\ l\in(0,\frac{z_0-1}{2\cos\psi})\right\rbrace\nonumber\\
	&\cup\left\lbrace 1+e^{\pm \psi i}l,\ l\in(0,\frac{z_0-1}{2\cos\psi})\right\rbrace \cup\left\lbrace -1+e^{(\pi \mp \psi) i}l,\ l\in(0,\frac{z_0-1}{2\cos\psi})\right\rbrace \nonumber\\
	&\cup\left\lbrace -z_2+e^{(\pi\mp\psi) i}\mathbb{R}^+\right\rbrace \cup\left\lbrace z_2+e^{\pm\psi i}\mathbb{R}^+\right\rbrace.
\end{align*}
 Here  $ \psi \leq \frac{\pi}{4}$ is a small enough positive constant such that $\Sigma_j^\pm,\ j=1,2$ are  contained in the region of Im$g>0$. And similarly as above equation, $\Omega_j^\pm$ is a closed region,  the edge of which is made up by $\Sigma_j^\pm,\ j=1,2$ and $\mathbb{R}$.
\begin{figure}[h]
	\centering
	\begin{tikzpicture}[node distance=2cm]
		\draw[lime!30, fill=lime!30](-5,0)--(-2,0)arc (165:180:2.2 and 6)--(-4.9,-1.55)--(-5,0);
		\draw[lime!30, fill=lime!30](5,0)--(2,0)arc (15:0:2.2 and 6)--(4.9,-1.55)--(5,0);
		\draw[lime!30, fill=lime!30](2,0)arc (-15:0:2.2 and 6)--(-2,1.55)--(-2,0)--(2,0);
		\draw[lime!30, fill=lime!30](-2,0)arc (195:180:2.2 and 6)--(2,1.55)--(2,0)--(-2,0);
		\draw[lime!30, fill=lime!30](-6.5,0)--(-5,0)arc (-15:0:2.2 and 6)--(-6.5,1.55)--(-6.5,0);
		\draw[lime!30, fill=lime!30](6.5,0)--(5,0)arc  (195:180:2.2 and 6)--(6.5,1.55)--(6.5,0);
		\draw[dashed](-6.5,0)--(6.5,0);	
		\coordinate (I) at (0,0);
		\fill (I) circle (0pt) node[below] {\footnotesize$0$};
		\coordinate (a) at (2,0);
		\fill (a) circle (1pt) node[below] {\footnotesize$z_0$};
		\coordinate (aa) at (-2,0);
		\fill (aa) circle (1pt) node[below] {\footnotesize$-z_0$};
		\coordinate (b) at (0.8,0);
		\fill (b) circle (1pt) node[below] {\footnotesize$1$};
		\coordinate (ba) at (-0.8,0);
		\fill (ba) circle (1pt) node[below] {\footnotesize$-1$};
		\coordinate (c) at (3.65,0);
		\fill (c) circle (1pt) node[below] {\footnotesize$c$};
		\coordinate (ca) at (-3.65,0);
		\fill (ca) circle (1pt) node[below] {\footnotesize$-c$};
		\coordinate (cr) at (5,0);
		\fill (cr) circle (1pt) node[below] {\footnotesize$z_2$};
		\coordinate (car) at (-5,0);
		\fill (car) circle (1pt) node[below] {\footnotesize$-z_2$};
			\coordinate (a) at (2.8,0);
		\fill (a) circle (1pt) node[below] {\footnotesize$z_1$};
		\coordinate (aa) at (-2.8,0);
		\fill (aa) circle (1pt) node[below] {\footnotesize$-z_1$};
		\draw(-5,0)--(-6.5,1.5)node[above]{\footnotesize$\Sigma_2^+$};
		\draw(5,0)--(6.5,1.5)node[above]{\footnotesize$\Sigma_2^+$};
		\draw(-5,0)--(-6.5,-1.5)node[below]{\footnotesize$\Sigma_2^-$};
		\draw(5,0)--(6.5,-1.5)node[below]{\footnotesize$\Sigma_2^-$};
		\draw(-2,0)--(-1.4,0.4)node[above]{\footnotesize$\Sigma_2^+$};
		\draw[-latex](-6,1)--(-5.5,0.5);
		\draw[-latex](-6,-1)--(-5.5,-0.5);
		\draw[-latex](5,0)--(5.5,0.5);
		\draw[-latex](5,0)--(5.5,-0.5);
		\draw(-0.8,0)--(-1.4,0.4);
		\draw(2,0)--(1.4,0.4)node[above]{\footnotesize$\Sigma_2^+$};
		\draw(0.8,0)--(1.4,0.4);
		\draw(-2,0)--(-1.4,-0.4)node[below]{\footnotesize$\Sigma_2^-$};
		\draw(-0.8,0)--(-1.4,-0.4);
		\draw(2,0)--(1.4,-0.4)node[below]{\footnotesize$\Sigma_2^-$};
		\draw(0.8,0)--(1.4,-0.4);
		\draw[-latex](-1.4,0.4)--(-1.1,0.2);
		\draw[-latex](-2,0)--(-1.7,0.2);
		\draw[-latex](-1.4,-0.4)--(-1.1,-0.2);
		\draw[-latex](-2,0)--(-1.7,-0.2);
		\draw[-latex](0.8,0)--(1.1,0.2);
		\draw[-latex](1.4,0.4)--(1.7,0.2);
		\draw[-latex](0.8,0)--(1.1,-0.2);
		\draw[-latex](1.4,-0.4)--(1.7,-0.2);
		\draw[-latex](-2.31,0)--(-2.3,0);
		\draw[-latex](2.3,0)--(2.31,0);
		\draw[-latex](-0.01,0)--(0.01,0);
		\draw(5,0)--(3.9,0.6)node[above]{\footnotesize$\Sigma_1^+$};
		\draw(-5,0)--(-3.9,0.6)node[above]{\footnotesize$\Sigma_1^+$};
		\draw(3.9,0.6)--(2.8,0);
		\draw(-3.9,0.6)--(-2.8,0);
		\draw[-latex](-5,0)--(-4.45,0.3);
		\draw[-latex](-5,0)--(-4.45,-0.3);
		\draw[-latex](3.9,0.6)--(4.45,0.3);
		\draw[-latex](3.9,-0.6)--(4.45,-0.3);
		\draw[-latex](-3.9,0.6)--(-3.35,0.3);
		\draw[-latex](-3.9,-0.6)--(-3.35,-0.3);
		\draw[-latex](2.8,0)--(3.35,0.3);
		\draw[-latex](2.8,0)--(3.35,-0.3);
		\draw(5,0)--(3.9,-0.6)node[below]{\footnotesize $\Sigma_1^-$};
		\draw(-5,0)--(-3.9,-0.6)node[below]{\footnotesize$\Sigma_1^-$};
		\draw(3.9,-0.6)--(2.8,0);
		\draw(-3.9,-0.6)--(-2.8,0);
		\node at (0,1) {\footnotesize $\overline{\Omega} $};
		\node at (6,0.5) {\tiny $\Omega_2^+$};
		\node at (-6,0.5) {\tiny $\Omega_2^+$};
		\node at (1.42,0.15) {\tiny $\Omega_2^+$};
		\node at (-1.42,0.15) {\tiny $\Omega_2^+$};
		\node at (1.42,-0.15) {\tiny $\Omega_2^-$};
		\node at (-1.42,-0.15) {\tiny $\Omega_2^-$};
		\node at (6,-0.5) {\tiny $\Omega_2^-$};
		\node at (-6,-0.5) {\tiny $\Omega_2^-$};
		\node at (4.1,-0.25) {\tiny $\Omega_1^-$};
		\node at (-4.1,-0.25) {\tiny $\Omega_1^-$};
		\node at (4.1,0.2) {\tiny $\Omega_1^+$};
		\node at (-4.1,0.2) {\tiny $\Omega_1^+$};
	\end{tikzpicture}
	\caption{\footnotesize  The region of $\Omega_j^\pm$, $j=1,2$.
  The same as Figure \ref{figdg2}, purple region means Im$g(z)>0$ while white region means Im$g(z)<0$.}
	\label{figfj3}
\end{figure}
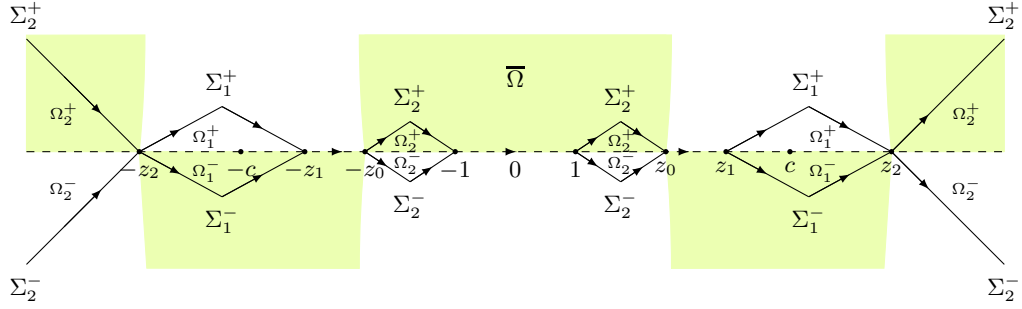

In this region of $\xi,c$, we introduce
a piecewise matrix interpolation function
\begin{equation}
	G(z) =\left\{ \begin{array}{ll}
		\left(\begin{array}{cc}
			1 &\frac{ r_2e^{-2itg}}{1-r_1r_2}\\
			0 & 1
		\end{array}\right),   &\text{as } z\in\Omega_1^+;\\[12pt]
		\left(\begin{array}{cc}
			1 &0\\
			\frac{r_1e^{2itg}}{1-r_1r_2} & 1
		\end{array}\right),   &\text{as } z\in\Omega_1^-;\\[12pt]
		\left(\begin{array}{cc}
			1 & 0\\
			-r_1e^{2itg} & 1
		\end{array}\right),   &\text{as } z\in\Omega_2^+;\\[12pt]
		\left(\begin{array}{cc}
			1 & -r_2e^{-2itg}\\
			0 & 1
		\end{array}\right),   &\text{as } z\in\Omega_2^-;\\
		I &\text{as } 	z \text{ in elsewhere},
	\end{array}\right..\label{funcG3}
\end{equation}
Same as above section,  $ G (z)$ bring a new singularity.
To deal with the jump on $\mathbb{R}$, we   introduce an auxiliary function $D(z)$, which admits the following jump condition
\begin{align*}
	&D_-(z)=D+(z)(1-r_1r_2),\hspace*{0.5cm}z\in [-z_2,z_2]\setminus[-c,c];\\
	&D_-(z)D_+(z)=i[r_2]_-,\hspace*{1.2cm}z\in[-c,c] \setminus[-z_0,z_0];\\
	&D_-(z)D_+(z)=1,\hspace*{2cm}z\in[-1,1].
\end{align*}
Define $Y_3(z)= (z^2-1)(z^2-c^2)Y(z),$ and
\begin{align}
	\log D(z)=&\frac{Y_3(z)}{2\pi i}\left(\int_{-c}^{-z_0}+\int_{z_0}^{c} \right) \dfrac{\log(i[r_2]_-(s))}{(s-z)[Y_3]_+(s)}ds\nonumber\\
	&-\frac{Y_3(z)}{2\pi i}\left( \int_{-z_2}^{-c}+\int_{c}^{z_2}\right) \dfrac{\log(1-r_1(s)r_2(s))}{(s-z)Y_3(s)}ds.
\end{align}
\begin{Proposition}\label{proD2}
	The scalar function $D(z)$ satisfies the following properties  \\
	(a) $D(z)$ is analytic on $\mathbb{C}\setminus\left( (-z_2,-z_0)\cup[-1,1]\cup(z_0,z_2)\right) $;\\
	(b) $D(z)$ has singularity at $z=\pm c$ with
	\begin{align}
		D(z)=\mathcal{O}(z-p)^{\mp 1/4},\hspace{0.3cm}z\to p\in \mathbb{C}^\pm\setminus\mathbb{R} ,\hspace{0.3cm}p=c,-c.
	\end{align}
	(c) As $z\to\infty\in\mathbb{C}\setminus\mathbb{R}$, $D(z)$ has limit  $D_\infty(z)$ with
	\begin{align}
		\log D_\infty(z)=&-\frac{1}{2\pi i}\left(\int_{-c}^{-z_0}+\int_{z_0}^{c}\right) \dfrac{\log(i[r_2]_-(s))}{[X]_+(s)}\left(z^2+zs+s^2-\frac{1+c^2+z_0^2}{2} \right) ds\nonumber\\
		&+\frac{1}{2\pi i}\left( \int_{-z_2}^{-c}+\int_{c}^{z_2}\right)\dfrac{\log(1-r_1(s)r_2(s))}{X(s)}\left(z^2+zs+s^2-\frac{1+c^2+z_0^2}{2} \right)ds. \nonumber
	\end{align}
	(d) As $z\to0\in\mathbb{C}^+$,
	\begin{align}
		D_\infty(z)=&D_\infty(0)\left(1 +D_\infty^{(1)}z\right) +\mathcal{O}(z^2),\\
		D(z)=&D(0)\left( 1+D^{(1)}z\right) +\mathcal{O}(z^2),
	\end{align}
	where
	{\small\begin{align}
	&	\log \left( D_\infty(0)\right) = \frac{1}{2\pi i}\left(\int_{-c}^{-z_0}+\int_{z_0}^{c} \right) \dfrac{\log(i[r_2]_-(s))}{[X]_+(s)}\left(\frac{1+c^2+z_0^2}{2} -s^2\right) ds\nonumber\\[5pt]
		&\qquad\qquad\qquad +\frac{1}{2\pi i}\left( \int_{-z_2}^{-c}+\int_{c}^{z_2}\right)\dfrac{\log(1-r_1(s)r_2(s))}{X(s)}\left(s^2-\frac{1+c^2+z_0^2}{2} \right)ds,\nonumber\\[5pt]
	&	D_\infty^{(1)}= -\frac{1}{2\pi i}\left(\int_{-c}^{-z_0}+\int_{z_0}^{c} \right) \dfrac{s\log(i[r_2]_-(s))}{[X]_+(s)}ds+\frac{1}{2\pi i}\left( \int_{-z_2}^{-c}+\int_{c}^{z_2}\right)\dfrac{s\log(1-r_1(s)r_2(s))}{X(s)}ds,\nonumber\\[5pt]
	&	\log \left( D(0)\right) =\frac{cz_0}{2\pi }\left(\int_{-c}^{-z_0}+\int_{z_0}^{c} \right) \dfrac{\log(i[r_2]_-(s))}{s[Y_3]_+(s)}ds -\frac{cz_0}{2\pi }\left( \int_{-z_2}^{-c}+\int_{c}^{z_2}\right)\dfrac{\log(1-r_1(s)r_2(s))}{sY_3(s)}ds,\nonumber\\[5pt]
&		D^{(1)}= \frac{cz_0}{2\pi }\left(\int_{-c}^{-z_0}+\int_{z_0}^{c} \right) \dfrac{\log(i[r_2]_-(s))}{s^2[Y_3]_+(s)}ds -\frac{cz_0}{2\pi }\left( \int_{-z_2}^{-c}+\int_{c}^{z_2}\right)\dfrac{\log(1-r_1(s)r_2(s))}{s^2Y_3(s)}ds.\nonumber
	\end{align}}
(e) As $z\to \pm z_2$, there is
\begin{align}
&	\log D(z)=\mp \frac{\log(1-r_1 r_2  )}{2\pi i}\log(z\mp z_2)+\mathcal{O}(1), \  z\to \pm z_2,\nonumber
\end{align}
where the logarithm function is analytic in   $U(\pm z_2)\setminus (\pm z_2, \pm z_2\mp \epsilon)$ respectively with some positive number $\epsilon$.
Further,  let $$\nu(z_2)= \pm \frac{1}{2\pi}\log(1-r_1r_2 ),  $$
Then we have
\begin{align}
&	D(z)= D_{\pm z_2}(z)(z\mp z_2)^{i\nu(\pm z_2)},\label{D+}
\end{align}
where $D_{\pm z_2}(z) $ are bounded analytic functions  in   $U(\pm z_2)\setminus (\pm z_2, \pm z_2\mp \epsilon)$ respectively. And have continued boundary extension. Further, $|D_{\pm z_2}(z)- D_{\pm z_2}(z_2)|/|z-z_2|$ is bounded on  $U(\pm z_2)$.
\end{Proposition}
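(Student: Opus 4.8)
The overall strategy is to check that the explicit formula for $\log D(z)$ solves the scalar jump problem listed just before the Proposition, and then to extract the local behaviours (b)--(e) from the endpoint analysis of the two Cauchy integrals. Fix the branch of $Y_3(z)=\sqrt{(z^2-1)(z^2-c^2)(z^2-z_0^2)}$ cut along $\Sigma^{cut}=[-c,-z_0]\cup[-1,1]\cup[z_0,c]$ and normalized by $Y_3(z)/z^3\to1$ as $z\to\infty$, with $[Y_3]_+$ the upper boundary value on each sub-arc. For part (a) I would apply the Sokhotski--Plemelj formula to the two transforms separately: the first is supported on $[-c,-z_0]\cup[z_0,c]\subset\Sigma^{cut}$, where $Y_3$ also changes sign, and the second on $[-z_2,-c]\cup[c,z_2]$, where $Y_3$ is single-valued. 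Combining the Plemelj jump of each integral with the sign behaviour of $Y_3$ reproduces $D_-D_+=1$ on $[-1,1]$, $D_-D_+=i[r_2]_-$ on $[-c,c]\setminus[-z_0,z_0]$, and $D_-=D_+(1-r_1r_2)$ on $[-z_2,z_2]\setminus[-c,c]$; away from these arcs both $Y_3$ and the transforms are analytic, giving (a). This is identical in spirit to the verification of Proposition \ref{proD}.

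Parts (c) and (d) are expansions. For $z\to\infty$ I would use that, off the cuts,
\[
\frac{Y_3(z)}{s-z}=-\Big(z^2+zs+s^2-\tfrac{1+c^2+z_0^2}{2}\Big)+\mathcal{O}(z^{-1}),
\]
which follows from $Y_3(z)=z^3\big(1-\tfrac{1+c^2+z_0^2}{2}z^{-2}+\mathcal{O}(z^{-4})\big)$ and the geometric expansion of $(s-z)^{-1}$; substituting and keeping the surviving polynomial-in-$z$ part yields the stated $\log D_\infty(z)$ and its value $\log D_\infty(0)$. For $z\to0$ I would instead expand $(s-z)^{-1}=s^{-1}+zs^{-2}+\mathcal{O}(z^2)$ together with $Y_3(z)=Y_3(0)+\mathcal{O}(z^2)$, noting $Y_3(0)/(2\pi i)=\pm cz_0/(2\pi)$, which produces the $\tfrac{cz_0}{2\pi}$ prefactors in $\log D(0)$ and $D^{(1)}$. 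Both computations are routine and I would not reproduce them in detail.

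The two delicate points are (b) and (e). For the quarter-power at $z=\pm c$, I would combine $Y_3(z)\sim(z\mp c)^{1/2}$ (a branch point of $Y_3$) with the square-root vanishing $1-r_1r_2=\mathcal{O}((z\mp c)^{1/2})$ from the Corollary above and the $-\tfrac14$-weak singularity of $r_1,r_2$ at $\pm c$; the densities $\log(i[r_2]_-)$ and $\log(1-r_1r_2)$ are therefore logarithmically singular at $\pm c$, and matching the endpoint contributions of the two Cauchy integrals against the prefactor $Y_3(z)$ leaves a net exponent $\mp\tfrac14$, i.e.\ $D(z)=\mathcal{O}((z\mp c)^{\mp1/4})$ in $\mathbb{C}^\pm$. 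I would carry this out via the local model already used in \cite{lenells} and Appendix C of \cite{MonvelCMP2}.

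Part (e) is the main new feature and the principal obstacle, because $z_2$ is an endpoint of the integration arc $[c,z_2]$ at which $Y_3(z_2)\neq0$ and $1-r_1r_2(z_2)\neq0$, so the density $h(s)=\log(1-r_1(s)r_2(s))/Y_3(s)$ is analytic there. With the branch of $\log(z-z_2)$ fixed analytic in $U(z_2)\setminus(z_2,z_2-\epsilon)$, the endpoint rule for a Cauchy integral with analytic density reads
\[
\frac{1}{2\pi i}\int_{c}^{z_2}\frac{h(s)}{s-z}\,ds=\frac{h(z_2)}{2\pi i}\log(z-z_2)+\mathcal{O}(1),\qquad z\to z_2 .
\]
Multiplying by $-Y_3(z)\to-Y_3(z_2)$ cancels the $Y_3(z_2)$ in $h(z_2)$ and leaves $\log D(z)=-\tfrac{1}{2\pi i}\log(1-r_1r_2(z_2))\log(z-z_2)+\mathcal{O}(1)=i\nu(z_2)\log(z-z_2)+\mathcal{O}(1)$, with $\nu(z_2)=\tfrac{1}{2\pi}\log(1-r_1r_2)$; the mirror computation at $-z_2$ gives the opposite sign, as in the stated $\mp$. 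Exponentiating yields $D(z)=D_{\pm z_2}(z)(z\mp z_2)^{i\nu(\pm z_2)}$. The remaining claim, that $|D_{\pm z_2}(z)-D_{\pm z_2}(z_2)|/|z-z_2|$ stays bounded on $U(\pm z_2)$, is where the real work lies: after subtracting the explicit log-singular term, the regularized integral $\tfrac{1}{2\pi i}\int_c^{z_2}\tfrac{h(s)-h(z_2)}{s-z}\,ds$ extends Lipschitz-continuously to $z_2$ since $h$ is analytic, by the Privalov--Plemelj Hölder estimates for Cauchy integrals with smooth density; establishing this bound uniformly up to the slit and checking that the chosen branches make $D_{\pm z_2}$ single-valued and bounded there is the step I expect to demand the most care.
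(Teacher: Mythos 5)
Your proposal is correct and follows exactly the route the paper implicitly intends: the paper states Proposition \ref{proD2} without any proof at all, and for the analogous Proposition \ref{prode1} it merely declares (a), (c), (d) ``trivial'' and refers (b) to \cite{lenells} and Appendix~C of \cite{MonvelCMP2}, which is precisely the Plemelj verification, Cauchy-kernel expansion at $z\to\infty$ and $z\to0$, and endpoint local analysis that you carry out. Your treatment of part (e) -- splitting off $h(z_2)\int_c^{z_2}(s-z)^{-1}ds$ to extract the $\log(z\mp z_2)$ term with coefficient $\mp\log(1-r_1r_2)/(2\pi i)$ and controlling the regularized remainder by Privalov--Plemelj estimates -- is the standard argument and supplies the one genuinely new ingredient of this proposition relative to Propositions \ref{prode1} and \ref{proD}, in more detail than the paper itself gives.
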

Denote
\begin{align}
	\Sigma^{(2)}=\left(\cup_{j=1}^2\Sigma_j^\pm\right) \cup[-1,1]\cup[-c,-z_0]\cup[z_0,c].
\end{align}
Through $D(z)$ and $G(z)$, in this region of $\xi,c$, same as above subsection we give series of transformations:
\begin{align}
	N\to M^{(1)} \to M^{(2)}=D_\infty^{-\sigma_3}e^{itg(\infty)\sigma_3}Ne^{it(p-_-g)\sigma_3}D^{\sigma_3},
\end{align}
which then satisfies the following RH problem.
\begin{RHP}\label{RHP8}
	Find a matrix-valued function  $  M^{(2)}(z )$ which satisfies
	
	$\blacktriangleright$ Analyticity: $M^{(2)}(z)$ is meromorphic in $\mathbb{C}\setminus \Sigma^{(2)}$;

	$\blacktriangleright$ Symmetry: $M^{(2)}(z)=\sigma_2M^{(2)}(-z)\sigma_2$=$\sigma_1\overline{M^{(2)}(\bar{z})}\sigma_1$;
	
	$\blacktriangleright$ Jump condition: $M^{(2)}$ has continuous boundary values $M^{(2)}_\pm(z)$ on $\Sigma^{(2)}$ and
	\begin{equation}
		M^{(2)}_+(z)=M^{(2)}_-(z)V^{(2)}(z),\hspace{0.5cm}z \in \Sigma^{(2)},
	\end{equation}
	where
	\begin{equation}
		V^{(2)}(z)=\left\{ \begin{array}{ll}
			\left(\begin{array}{cc}
				1 &\frac{ -r_2D^{-2}e^{-2itg}}{1-r_1r_2}\\
				0 & 1
			\end{array}\right),&\text{as } z\in\Sigma_1^+,\\[14pt]
			\left(\begin{array}{cc}
				1 &0\\
				\frac{r_1D^2e^{2itg}}{1-r_1r_2} & 1
			\end{array}\right),&\text{as } z\in\Sigma_1^-,\\[14pt]
			\left(\begin{array}{cc}
				1 & 0\\
				r_1D^2e^{2itg} & 1
			\end{array}\right),   &\text{as } z\in\Sigma_2^+;\\[12pt]
			\left(\begin{array}{cc}
				1 & -r_2D^{-2}e^{-2itg}\\
				0 & 1
			\end{array}\right),   &\text{as } z\in\Sigma_2^-;\\[12pt]
			\left(\begin{array}{cc}
				0 & ie^{-itB_2}\\
				ie^{itB_2} & 0
			\end{array}\right),   &\text{as } z\in[-c,-z_1] ,\\[12pt]
			\left(\begin{array}{cc}
				0 & ie^{-itB_2}\\
				ie^{itB_2} & \frac{D_-}{D_+}e^{itB_2}e^{-2itg_+}
			\end{array}\right),   &\text{as } z\in[-z_1,-z_0] ,\\[12pt]
			\left(\begin{array}{cc}
				0 & ie^{-itB_1}\\
				ie^{itB_1} & 0
			\end{array}\right),   &\text{as } z\in[-1,1],\\[12pt]	
			\left(\begin{array}{cc}
				0 & i\\
				i & \frac{D_-}{D_+}e^{-2itg_+}
			\end{array}\right),   &\text{as } z\in[z_0,z_1] ,\\[12pt]
			\left(\begin{array}{cc}
				0 & i\\
				i & 0
			\end{array}\right),   &\text{as } z\in[z_1,c],	
		\end{array}\right.;
	\end{equation}
	
	$\blacktriangleright$ Asymptotic behaviors: $M^{(2)}(z) = I+\mathcal{O}(z^{-1}),\hspace{0.5cm}z \rightarrow \infty; $

	$\blacktriangleright$ Singularity: $M^{(2)}(z)$ has singularity at $z=\pm c$ with:
	\begin{align}
		&M^{(2)}(z)\sim \mathcal{O}(z\mp c)^{-1/4} ,\ z\to \pm c,\ \pm 1\text{ in }\mathbb{C}\setminus\mathbb{R}.
	\end{align}
\end{RHP}

Except the  cut  away from $\mathbb{R}$, the jump $ {V}^{(2)}(z)$ exponentially approaches the identity matrix as $t\to\infty$. So we expect to only consider the jump on $\mathbb{R}$. However, different from above section, in this region, $g$ has another pair of stationary phase points on $\mathbb{R}$. So in this case, we
denote $ U(\xi)$ as the union set of neighborhood of $\pm z_0$ and $\pm z_2$:
\begin{equation}
	U(\xi)=U(\pm z_0)\cup U(\pm z_2),\ U(\pm z_j)= \left\lbrace z:|z\mp z_j|\leq \varrho \right\rbrace,\ j=0,2.
\end{equation}
Here, $\varrho$ is a small positive constant such that $\varrho<\min\left\lbrace \frac{z_0-1}{3}, \frac{z_1-z_0}{3},\frac{z_2-c}{3},\epsilon \right\rbrace $.
Thus, the jump matrix $V^{(2)}(z)$    uniformly goes to  $I$  on     $\Sigma^{(2)}\setminus  U(\xi)$.
So outside the $U(\xi)$ there is only exponentially small error (in $t$) by completely ignoring the jump condition of  $M^{(2)}(z)$.
And this proposition enlightens  us to construct the solution $M^{(2)}(z)$ as follow:
\begin{equation}
	M^{(2)}(z)= \left\{\begin{array}{ll}
		E(z;\xi,c)M^{mod}(z;\xi,c), & z\notin U(\xi),\\[4pt]
		E(z;\xi,c)M^{0,\pm}(z;\xi,c),  &z\in U(\pm z_0),\\[4pt]
		E(z;\xi,c)M^{mod}(z;\xi,c)M^{2,\pm}(z;\xi,c),  &z\in U(\pm z_2).\\
	\end{array}\right. \nonumber
\end{equation}
Here, same as  $M^{mod}(z)$ is the model  RH problem  on the Riemann surface, which solution is given by theta function in Subsection \ref{secmod}. The difference is $M^{j,\pm}(z)$ are local model of $\pm z_j$, $j=0,2$. When $j=0$, same as above subsection, its solution can be expressed in terms of Airy functions shown in Subsection \ref{seclo}. But when $j=2$, its solution can be expressed in terms of parabolic
cylinder shown in Subsection \ref{seclo2}. And  $E(z;\xi,c)$ is the error function, which will be discussed in subsection \ref{secer2}.
\subsubsection{Localized RH problem near phase points }\label{seclo2}
\quad
As $t\to +\infty$, we consider to reduce the  RHP \ref{RHP8} to a model RH problem  whose solution can be given explicitly in terms of parabolic cylinder
functions on every contour $\Sigma^{(0)}_\pm=\Sigma^{(2)}\cap U(\pm z_2)$ respectively. And we only give the details of $\Sigma^{(0)}_+$, the model of other  critical point can be  constructed similar. We denote $\hat{\Sigma}^{(0)}_+$ as the contour $\{z=z_2+le^{\pm\varphi i},\ l\in\mathbb{R}\}$ oriented from $\Sigma^{(0)}_1$, and $\hat{\Sigma}_{j}$ is the extension of $\Sigma_{j}$  respectively. And for $z$ near $z_2$, note that $g''(z_2)>0$, so we rewrite phase function $g$ as
\begin{align}
	g(z)=g(z_2)+(z-z_2)^2\frac{g''(z_2)}{2}+\mathcal{O}((z-z_2)^3).
\end{align}

\begin{figure}
	\centering
	\subfigure[]{
		\begin{tikzpicture}[node distance=2cm]
			\draw[->](0,0)--(2.5,1.4)node[right]{$\hat{\Sigma}_{3}$};
			\draw(0,0)--(-2.5,1.4)node[left]{$\hat{\Sigma}_{1}$};
			\draw(0,0)--(-2.5,-1.4)node[left]{$\hat{\Sigma}_{2}$};
			\draw[->](0,0)--(2.5,-1.4)node[right]{$\hat{\Sigma}_{4}$};
			\draw[dashed](-3.8,0)--(3.8,0)node[right]{\scriptsize Re$z$};
			\draw[->](-2.5,-1.4)--(-1.25,-0.7);
			\draw[->](-2.5,1.4)--(-1.25,0.7);
			\coordinate (A) at (-1.2,0.5);
			\coordinate (B) at (-1.2,-0.5);
			\coordinate (G) at (1.4,0.5);
			\coordinate (H) at (1.4,-0.5);
			\coordinate (I) at (0,0);
			\fill (A) circle (0pt) node[left] {\scriptsize$\left(\begin{array}{cc}
					1 &\frac{ -r_2D^{-2}e^{-2itg}}{1-r_1r_2}\\
					0 & 1
				\end{array}\right)$};
			\fill (B) circle (0pt) node[left] {\scriptsize$\left(\begin{array}{cc}
					1 &0\\
					\frac{r_1D^2e^{2itg}}{1-r_1r_2} & 1
				\end{array}\right)$};
			\fill (G) circle (0pt) node[right] {\scriptsize$\left(\begin{array}{cc}
					1 & 0\\
					r_1D^2e^{2itg} & 1
				\end{array}\right)$};
			\fill (H) circle (0pt) node[right] {\scriptsize$\left(\begin{array}{cc}
					1 & -r_2D^{-2}e^{-2itg}\\
					0 & 1
				\end{array}\right)$};
			\fill (I) circle (1pt) node[below] {$z_2$};
		\end{tikzpicture}
	}

	\caption{\footnotesize The contour $\hat{\Sigma}^{(0)}_+$ and the jump matrix on it near the point $z_2$.}
	\label{figS0}
\end{figure}
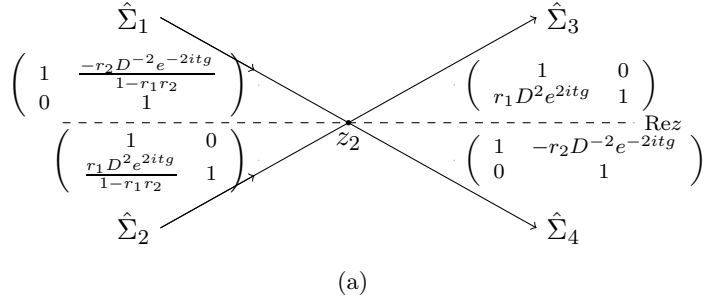
Consider following local RH problem:
\begin{RHP}\label{RHPlo1}
	Find a matrix-valued function  $ M^{2,+}(z)$ with following properties:
	
	$\blacktriangleright$ Analyticity: $M^{2,+}(z)$ is analytical  in $\mathbb{C}\setminus \hat{\Sigma}^{(0)}_+ $;

	$\blacktriangleright$ Jump condition: $M^{2,+}(z)$ has continuous boundary values $M^{2,+}_\pm$ on $\hat{\Sigma}^{(0)}_+$ and
	\begin{equation}
		M^{2,+}_+(z)=M^{2,+}_-(z)V^{2,+}(z),\hspace{0.2cm}z \in \hat{\Sigma}^{(0)}_+,
	\end{equation}
	where the jump matrix $V^{2,+}(z)$ is given  in Figure \ref{figS0}.

	$\blacktriangleright$ Asymptotic behaviors: $	M^{2,+}(z) =  I+\mathcal{O}(z^{-1}),\hspace{0.5cm}z \rightarrow \infty.$
\end{RHP}	

RHP \ref{RHPlo1} does not possess the symmetry condition shared by preceding RH  problem, because it is a local model and will only be used for bounded values of $z$.
In order to motivate the model, let $\zeta = \zeta(z)$ denote the rescaled local variable
\begin{align}
	\zeta(z)=2t^{1/2}\sqrt{g''(z_2)}(z-z_2).
\end{align}
 This map is a conformal bijection maps $U(z_2)$ to an expanding neighborhood of $\zeta= 0$. We choose the branch which maps the upper half plane to the lower half plane. Moreover, we denote:
 \begin{align}
 	r_{z_2}^\pm=r_1(\pm z_2)D^2_{\pm z_2}(\pm z_2)e^{2itg(\pm z_2)}(4tg''(\pm z_2))^{i\nu(\pm z_2)}.
 \end{align}
where, $D_{z_2}$ and $\nu(z_2)$ are defined in (\ref{D+}).

Through this change of variable,  the jump $V^{2,+}(z)$ approximates to  the jump of a parabolic cylinder model problem as follow:
\begin{RHP}\label{RHPpc}
	Find a matrix-valued function  $ M^{pc}(\zeta;\xi)$ with following properties:
	
	$\blacktriangleright$ Analyticity: $M^{pc}(\zeta;\xi)$ is analytical  in $\mathbb{C}\setminus \Sigma^{pc} $ with $\Sigma^{pc}=\left\lbrace\mathbb{R}e^{\varphi i} \right\rbrace \cup \left\lbrace\mathbb{R}e^{(\pi-\varphi) i} \right\rbrace$;

	$\blacktriangleright$ Jump condition: $M^{pc}$ has continuous boundary values $M^{pc}_\pm$ on $\Sigma^{pc}$ and
	\begin{equation}
		M^{pc}_+(\zeta;\xi)=M^{pc}_-(\zeta;\xi)V^{pc}(\zeta),\hspace{0.5cm}\zeta \in \Sigma^{\zeta},
	\end{equation}
	where
	\begin{align}
		V^{pc}(\zeta;\xi)=\left\{\begin{array}{ll}
			\left(\begin{array}{cc}
				1 & 0\\
				r_{z_2}^+\zeta^{2i\nu(z_2)}e^{\frac{i}{2}\zeta^2} & 1
			\end{array}\right),  & \zeta\in\mathbb{R}^+e^{\varphi i},\\[10pt]
			\left(\begin{array}{cc}
				1& -\bar{r}_{z_2}\zeta^{-2i\nu(z_2)}e^{-\frac{i}{2}\zeta^2}\\
				0&1
			\end{array}\right),   & \zeta\in \mathbb{R}^+e^{-\varphi i},\\[10pt]
			\left(\begin{array}{cc}
				1 & 0\\
				\frac{r_{z_2}^+}{1-|r_{z_2}^+|^2}\zeta^{2i\nu(z_2)}e^{\frac{i}{2}\zeta^2} & 1
			\end{array}\right),   & \zeta\in \mathbb{R}^+e^{(-\pi+\varphi) i},\\[10pt]
			\left(\begin{array}{cc}
				1 & -\frac{\bar{r}_{z_2}}{1-|r_{z_2}^+|^2}\zeta^{-2i\nu(z_2)}e^{-\frac{i}{2}\zeta^2}\\
				0 & 1
			\end{array}\right),   & \zeta\in \mathbb{R}^+e^{(\pi-\varphi) i}.
		\end{array}\right.
	\end{align}

	$\blacktriangleright$ Asymptotic behaviors: $M^{pc}(\zeta;\xi) =  I+M^{pc}_1\zeta^{-1}+\mathcal{O}(\zeta^{-2}),\hspace{0.5cm}\zeta \rightarrow \infty.$

\end{RHP}	

In  a   similar derivation   to the RHP\ref{RHPpc0},  we  can get
\begin{align}
	M^{2,\pm }(z)=I+\frac{t^{-1/2}}{z\mp z_2}\frac{i}{2\sqrt{g''(\pm z_2)}} \left(\begin{array}{cc}
		0 & \tilde{\beta}^\pm_{12}\\
		\tilde{\beta}^\pm_{21} & 0
	\end{array}\right)+\mathcal{O}(t^{-1}).\label{asyMpc}
\end{align}
where
\begin{align}
	&\tilde{\beta}^\pm_{12}=\frac{\sqrt{2\pi}e^{-\frac{1}{2}\pi\nu(\pm z_2)}e^{\frac{\pi i}{4} i}}{r_{z_2}^+\Gamma(-i\nu(\pm z_2))},\hspace{0.2cm}
\tilde{\beta}^\pm_{21}\tilde{\beta}^\pm_{12}=-\nu(\pm z_2).
\end{align}
We finally  obtain
\begin{Proposition}\label{asympc} $M^{2,\pm}(z)$ admits the following asymptotic expansion
	\begin{align}
		M^{2,\pm}(z)=I+t^{-1/2} \frac{A_\pm(\xi)}{z\mp z_2} +\mathcal{O}(t^{-1}), \ \ t\to+\infty,
	\end{align}
	where
	\begin{align}
		A_\pm(\xi)=\frac{i}{2\sqrt{g''(z_2)}} \left(\begin{array}{cc}
			0 & \tilde{\beta}^\pm_{12}\\
			\tilde{\beta}^\pm_{21} & 0
		\end{array}\right).\label{pcA}
	\end{align}
\end{Proposition}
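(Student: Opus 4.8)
The plan is to run the same local steepest-descent reduction already carried out at the genus-0 stationary points in Proposition \ref{asympc0}, now with the phase $g$ replacing $p_-$ and the point $z_2$ replacing $\lambda_j$. First I would use the conformal change of variable $\zeta=\zeta(z)=2t^{1/2}\sqrt{g''(z_2)}(z-z_2)$, which is a biholomorphism from $U(z_2)$ onto an expanding $t$-dependent neighbourhood of $\zeta=0$. Inserting the Taylor expansion $g(z)=g(z_2)+\tfrac12 g''(z_2)(z-z_2)^2+\mathcal{O}((z-z_2)^3)$ turns the oscillatory factor into $e^{2itg}=e^{2itg(z_2)}e^{\frac{i}{2}\zeta^2}(1+\mathcal{O}(t^{-1/2}|\zeta|^3))$, and combining this with the factorization $D(z)=D_{z_2}(z)(z-z_2)^{i\nu(z_2)}$ of Proposition \ref{proD2}(e) together with the definition of $r_{z_2}^+$ transforms the jump $V^{2,+}$ into exactly the parabolic cylinder jump $V^{pc}$ of RHP \ref{RHPpc}, up to a remainder that is uniformly $\mathcal{O}(t^{-1/2})$ on $\partial U(z_2)$.

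Second, I would invoke the explicit solvability of RHP \ref{RHPpc}: its solution is assembled from solutions $D_a(\cdot)$ of the parabolic cylinder equation, possessing the large-$\zeta$ expansion $M^{pc}(\zeta)=I+M^{pc}_1\zeta^{-1}+\mathcal{O}(\zeta^{-2})$. The connection formulas for $D_a$ (as in \cite{RN6}) show that $M^{pc}_1$ is purely off-diagonal, with $[M^{pc}_1]_{12}=i\tilde{\beta}^{+}_{12}$ and $[M^{pc}_1]_{21}=i\tilde{\beta}^{+}_{21}$, where $\tilde{\beta}^{+}_{12}=\frac{\sqrt{2\pi}\,e^{-\frac12\pi\nu(z_2)}e^{\frac{\pi i}{4}i}}{r_{z_2}^+\Gamma(-i\nu(z_2))}$; the normalization $\det M^{pc}=1$ forces the product relation $\tilde{\beta}^{+}_{21}\tilde{\beta}^{+}_{12}=-\nu(z_2)$, which pins down the remaining entry.

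Third, I would transfer this expansion back through $\zeta(z)$. Since $M^{2,+}(z)=M^{pc}(\zeta(z))$ on $U(z_2)$, substituting $\zeta^{-1}=\dfrac{t^{-1/2}}{2\sqrt{g''(z_2)}(z-z_2)}$ into $I+M^{pc}_1\zeta^{-1}$ gives at once
\begin{align}
M^{2,+}(z)=I+\frac{t^{-1/2}}{z-z_2}\,\frac{i}{2\sqrt{g''(z_2)}}\left(\begin{array}{cc}0 & \tilde{\beta}^{+}_{12}\\ \tilde{\beta}^{+}_{21} & 0\end{array}\right)+\mathcal{O}(t^{-1}),\nonumber
\end{align}
which is precisely (\ref{asyMpc}) with $A_+(\xi)$ as in (\ref{pcA}). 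The model at $-z_2$ is obtained by the same computation with $z_2$ replaced by $-z_2$ and the branch of $\zeta$ again chosen so that the upper half plane is mapped to the lower, producing $A_-(\xi)$ with the corresponding coefficients $\tilde{\beta}^{-}_{12},\tilde{\beta}^{-}_{21}$.

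The main obstacle is the error control in the first step, not the explicit solution in the second. The passage $V^{2,+}\to V^{pc}$ is only approximate, and one must verify that the cubic remainder of $g$ at $z_2$ and the deviation $D_{z_2}(z)-D_{z_2}(z_2)$ do not degrade the $\mathcal{O}(t^{-1})$ order once integrated against the Cauchy kernel on $\partial U(z_2)$. The decisive input is the Lipschitz bound of Proposition \ref{proD2}(e), namely that $|D_{z_2}(z)-D_{z_2}(z_2)|/|z-z_2|$ is bounded on $U(z_2)$; since $|z-z_2|\sim t^{-1/2}|\zeta|$ on the shrinking contour, this upgrades the naive $\mathcal{O}(1)$ jump discrepancy to the $\mathcal{O}(t^{-1/2})$ bound needed, after which the small-norm theory of subsection \ref{secer2} converts it into the stated $\mathcal{O}(t^{-1})$ residual.
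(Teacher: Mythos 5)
Your proposal follows essentially the same route as the paper: rescale to the local variable $\zeta=2t^{1/2}\sqrt{g''(z_2)}(z-z_2)$, use the Taylor expansion of $g$ and the factorization $D=D_{z_2}(z)(z-z_2)^{i\nu(z_2)}$ to match the jump $V^{2,+}$ to the parabolic cylinder jump $V^{pc}$, invoke the explicit solution of RHP \ref{RHPpc} from \cite{RN6} to identify $M^{pc}_1$, and substitute $\zeta^{-1}$ back to read off $A_\pm(\xi)$. Your added discussion of the error control (the cubic remainder of $g$ and the Lipschitz bound on $D_{z_2}$ from Proposition \ref{proD2}(e)) makes explicit what the paper only cites by analogy with RHP \ref{RHPpc0}, and your bookkeeping of the factor $i$ inside $M^{pc}_1$ is consistent with the stated formula.
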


\subsubsection{The small norm RH problem  for error function }\label{secer2}
\quad In this subsection,  we consider the error matrix-function $E(z;\xi,c)$ in this region.

\begin{RHP}\label{RHPE2}
	Find a matrix-valued function $E(z;\xi,c)$  with following properties:
	
	$\blacktriangleright$ Analyticity: $E(z;\xi,c)$ is analytical  in $\mathbb{C}\setminus  \Sigma^{E} $, where
	$$\Sigma^{E}= \partial U_{(\xi)}\cup
	\left[ \Sigma^{(2)}\setminus \left( U_{(\xi)}\cup[-c,-z_1]\cup[-1,1]\cup[z_1,c]\right) \right] ;$$

	$\blacktriangleright$ Asymptotic behaviors: $E(z;\xi,c) \sim I+\mathcal{O}(z^{-1}),\hspace{0.2cm}|z| \rightarrow \infty;$

	$\blacktriangleright$ Jump condition: $E(z;\xi,c)$ is continuous   $E_\pm(z;\xi,c)$ on $\Sigma^{E}$ satisfying
	$$E_+(z;\xi,c)=E_-(z;\xi,c)V^{E}(z),$$
	where the jump matrix $V^{E}(z)$ is given by
	\begin{equation}
		V^{E}(z)=\left\{\begin{array}{llll}
			M^{mod}(z)V^{(2)}(z)M^{mod}(z)^{-1}, & z\in \Sigma^{E}\setminus \partial U_{(\xi)},\\[4pt]
			M^{lo,\pm}(z)M^{mod}(z)^{-1},  & z\in \partial U_{\pm z_0},
			\\[4pt]
				M^{mod}(z)M^{2,\pm}(z)M^{mod}(z)^{-1},  & z\in \partial U_{\pm z_2},
		\end{array}\right. \label{deVE2}
	\end{equation}
	which is  shown in  Figure \ref{figE2}.
\end{RHP}

\begin{figure}[H]
	\centering
	\begin{tikzpicture}
		\draw[dashed](0,-1.5)--(0,1.5)node[above]{ Im$z$};
		\coordinate (I) at (0,0);
		\fill (I) circle (0pt) node[below] {$0$};
		\coordinate (a) at (2,0);
		\fill (a) circle (1pt) node[below] {$z_0$};
		\coordinate (aa) at (-2,0);
		\fill (aa) circle (1pt) node[below] {$-z_0$};
		\coordinate (b) at (0.8,0);
		\fill (b) circle (1pt) node[below] {$1$};
		\coordinate (ba) at (-0.8,0);
		\fill (ba) circle (1pt) node[below] {$-1$};
		\coordinate (c) at (3.65,0);
		\fill (c) circle (1pt) node[below] {$c$};
		\coordinate (ca) at (-3.65,0);
		\fill (ca) circle (1pt) node[below] {$-c$};
		\coordinate (cr) at (5,0);
		\fill (cr) circle (1pt) node[below] {$z_2$};
		\coordinate (car) at (-5,0);
		\fill (car) circle (1pt) node[below] {$-z_2$};
		\coordinate (a) at (2.8,0);
		\fill (a) circle (1pt) node[below] {$z_1$};
		\coordinate (aa) at (-2.8,0);
		\fill (aa) circle (1pt) node[below] {$-z_1$};
		\draw(-5,0)--(-6.5,1.5);
		\draw(5,0)--(6.5,1.5);
		\draw(-5,0)--(-6.5,-1.5);
		\draw(5,0)--(6.5,-1.5);
		\draw(-2,0)--(-1.4,0.4);
		\draw[-latex](-6,1)--(-5.5,0.5);
		\draw[-latex](-6,-1)--(-5.5,-0.5);
		\draw[-latex](5,0)--(5.5,0.5);
		\draw[-latex](5,0)--(5.5,-0.5);
		\draw(-0.8,0)--(-1.4,0.4);
		\draw(2,0)--(1.4,0.4);
		\draw(0.8,0)--(1.4,0.4);
		\draw(-2,0)--(-1.4,-0.4);
		\draw(-0.8,0)--(-1.4,-0.4);
		\draw(2,0)--(1.4,-0.4);
		\draw(0.8,0)--(1.4,-0.4);
		\draw[-latex](-1.4,0.4)--(-1.1,0.2);
		\draw[-latex](-2,0)--(-1.7,0.2);
		\draw[-latex](-1.4,-0.4)--(-1.1,-0.2);
		\draw[-latex](-2,0)--(-1.7,-0.2);
		\draw[-latex](0.8,0)--(1.1,0.2);
		\draw[-latex](1.4,0.4)--(1.7,0.2);
		\draw[-latex](0.8,0)--(1.1,-0.2);
		\draw[-latex](1.4,-0.4)--(1.7,-0.2);
		\draw(5,0)--(3.9,0.6);
		\draw(-5,0)--(-3.9,0.6);
		\draw(3.9,0.6)--(2.8,0);
		\draw(-3.9,0.6)--(-2.8,0);
		\draw[-latex](-5,0)--(-4.45,0.3);
		\draw[-latex](-5,0)--(-4.45,-0.3);
		\draw[-latex](3.9,0.6)--(4.45,0.3);
		\draw[-latex](3.9,-0.6)--(4.45,-0.3);
		\draw[-latex](-3.9,0.6)--(-3.35,0.3);
		\draw[-latex](-3.9,-0.6)--(-3.35,-0.3);
		\draw[-latex](2.8,0)--(3.35,0.3);
		\draw[-latex](2.8,0)--(3.35,-0.3);
		\draw(5,0)--(3.9,-0.6)node[below]{$\Sigma_2$};
		\draw(-5,0)--(-3.9,-0.6)node[below]{$\Sigma_2$};
		\draw(3.9,-0.6)--(2.8,0);
		\draw(-3.9,-0.6)--(-2.8,0);
		\draw[thick,red,fill=white](2,0) circle (0.15);
		\draw[thick,red,fill=white](-2,0) circle (0.15);
		\draw[-latex,red](2,0.15)--(2.05,0.15);
		\draw[-latex,red](-2,0.15)--(-1.93,0.15);
		\draw[thick,red,fill=white](5,0) circle (0.15);
		\draw[thick,red,fill=white](-5,0) circle (0.15);
		\draw[-latex,red](5,0.15)--(5.05,0.15);
		\draw[-latex,red](-5,0.15)--(-4.93,0.15);
		\draw[dashed](-6,0)--(6,0)node[right]{ Re$z$};
		\draw[dashed](-6.5,0)--(6.5,0);	
	\end{tikzpicture}
	\caption{  The jump contour $\Sigma^{E}$ for the $E(z;\xi,c)$. The red circles are $U(\xi)$. }
	\label{figE2}
\end{figure}

Similar with the discussion in Section \ref{erroranalysis},  the RHP  \ref{RHPE2} satisfies \begin{align}
	| V^{E}(z)-I|&=   \mathcal{O}(t^{-1/2}).\label{VE2}
\end{align} and admits  a unique  solution, which
can be given by
\begin{equation}
	E(z;\xi,c)=I+\frac{1}{2\pi i}\int_{\Sigma^{ E }}\dfrac{\left( I+\varpi(s)\right) (V^{E}(s)-I)}{s-z}ds,
\end{equation}
where the $\varpi\in L^\infty(\Sigma^{ E })$ is the unique solution of following equation
\begin{equation}
	(1-C_E)\varpi=C_E\left(I \right).
\end{equation}

In order to reconstruct the solution $u(y,t)$ of (\ref{mch}), we need the asymptotic behavior of $E(z;\xi,c)$ as $z\to 0\in \mathbb{C}^+$ and the long time asymptotic behavior of $E(0)$.
\begin{Proposition}\label{asyE2}
	As $z\to 0\in \mathbb{C}^+$, we have
	\begin{align}
		E(z;\xi,c)=E(0)+E_1z+\mathcal{O}(z^2),
	\end{align}
	where
	\begin{align}
		E(0)=I+\frac{1}{2\pi i}\int_{\Sigma^{E}}\dfrac{\left( I+\varpi(s)\right) (V^{E}-I)}{s}ds,
	\end{align}
	with long time asymptotic behavior
	\begin{equation}
		E(0)=I+t^{-1/2}H^{(0)}+\mathcal{O}(t^{-2}).\label{E0t2}
	\end{equation}
	And
	\begin{align}
		H^{(0)}=	H^{(0)}(\xi,c)=&\sum_{ p=\pm z_2 }\frac{M^{mod}(p)A_\pm(\xi)M^{mod}(p)^{-1}}{p}.
	\end{align}
	Here, $\tilde{B}_\pm$, $\tilde{A}_\pm$ is shown in (\ref{expf}).
	And
	\begin{equation}
		E_1=\frac{1}{2\pi i}\int_{\Sigma^{E}}\dfrac{\left( I+\varpi(s)\right) (V^{E}-I)}{s^2}ds,
	\end{equation}
	satisfying long time asymptotic behavior condition
	\begin{equation}
		E_1=t^{-1/2}H^{(1)}+\mathcal{O}(t^{-1}),\label{E1t2}
	\end{equation}
	where
	\begin{align}
		H^{(1)}=H^{(0)}(\xi,c)=&\sum_{ p=\pm z_2 }\frac{M^{mod}(p)A_\pm(\xi)M^{mod}(p)^{-1}}{p^2}.
	\end{align}
\end{Proposition}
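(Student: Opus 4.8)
The plan is to proceed exactly as in the proof of Proposition \ref{asyE}, starting from the Beals--Coifman representation of the solution of the small-norm problem RHP \ref{RHPE2},
\begin{align}
E(z;\xi,c)=I+\frac{1}{2\pi i}\int_{\Sigma^{E}}\frac{(I+\varpi(s))(V^{E}(s)-I)}{s-z}\,ds, \nonumber
\end{align}
and expanding the Cauchy kernel about $z=0$. Since $0\notin\Sigma^{E}$, writing $\frac{1}{s-z}=\frac{1}{s}+\frac{z}{s^2}+\mathcal{O}(z^2)$ term by term immediately yields the claimed formulas for $E(0)$ and $E_1$; the only analytic input needed here is that $\varpi\in L^\infty(\Sigma^E)$ exists, is unique, and obeys $\|\varpi\|_{L^\infty}\lesssim t^{-1/2}$, which follows from (\ref{VE2}) and the invertibility of $1-C_E$ for large $t$ by the same small-norm argument as in Section \ref{erroranalysis}.

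Next I would localize the integrals. On $\Sigma^{E}\setminus\partial U_\xi$ the jump $V^E-I$ is exponentially small in $t$ (this uses that $\mathrm{Im}\,g>0$ along the lens boundaries $\Sigma_j^\pm$ away from the branch points), so these arcs contribute only $\mathcal{O}(e^{-ct})$ and may be discarded. The remaining contribution comes from $\partial U_\xi=\partial U_{\pm z_0}\cup\partial U_{\pm z_2}$, and the crucial book-keeping is that the two families of local parametrices have different orders: at $\pm z_0$ the Airy parametrix gives $V^E-I=M^{lo,\pm}(M^{mod})^{-1}-I=\mathcal{O}(t^{-1})$, whereas at $\pm z_2$ the parabolic-cylinder parametrix gives, by Proposition \ref{asympc},
\begin{align}
V^E-I=M^{mod}M^{2,\pm}(M^{mod})^{-1}-I=\frac{t^{-1/2}}{s\mp z_2}M^{mod}(s)A_\pm(\xi)M^{mod}(s)^{-1}+\mathcal{O}(t^{-1}). \nonumber
\end{align}
Thus the parabolic-cylinder points $\pm z_2$ dominate at order $t^{-1/2}$ while the Airy points $\pm z_0$ are subleading, which is precisely why $H^{(0)}$ and $H^{(1)}$ are sums over $p=\pm z_2$ only.

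Finally I would evaluate the two leading integrals by residues. Because $M^{mod}$ is analytic and invertible in a neighborhood of each $\pm z_2$ (its cuts lie on $\Sigma^{cut}\subset[-c,c]$ while $z_2>c$), the integrand $\frac{M^{mod}(s)A_\pm M^{mod}(s)^{-1}}{s(s\mp z_2)}$ is meromorphic inside $U_{\pm z_2}$ with a single simple pole at $s=\pm z_2$. Collapsing $\partial U_{\pm z_2}$ onto that pole gives
\begin{align}
\frac{1}{2\pi i}\int_{\partial U_{\pm z_2}}\frac{t^{-1/2}M^{mod}(s)A_\pm M^{mod}(s)^{-1}}{s(s\mp z_2)}\,ds=t^{-1/2}\frac{M^{mod}(\pm z_2)A_\pm M^{mod}(\pm z_2)^{-1}}{\pm z_2}+\mathcal{O}(t^{-1}), \nonumber
\end{align}
and summing over the two points reproduces $H^{(0)}$; replacing $s^{-1}$ by $s^{-2}$ in the kernel yields $H^{(1)}$ with $p^{-2}$ in place of $p^{-1}$. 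I expect the main obstacle to be organizational rather than conceptual: one must track the orientation of $\partial U_{\pm z_2}$ so the residue carries the correct sign, confirm that the $\mathcal{O}(t^{-1})$ remainders from the Airy points and from the $(I+\varpi)$ correction do not disturb the stated leading term, and verify uniformity of all estimates in $(\xi,c)$ on compact subsets of the region $3/4<\xi<1$.
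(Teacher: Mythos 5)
Your proposal follows essentially the same route as the paper's own proof: expand the Beals--Coifman Cauchy kernel at $z=0$, discard the exponentially small arcs, observe that the Airy parametrices at $\pm z_0$ enter only at order $t^{-1}$ while the parabolic-cylinder parametrices at $\pm z_2$ give the $t^{-1/2}$ leading term via Proposition \ref{asympc}, and evaluate the resulting contour integrals over $\partial U_{\pm z_2}$ by residues. The extra care you flag (orientation of $\partial U_{\pm z_2}$, absorption of the $\varpi$ cross terms into the $\mathcal{O}(t^{-1})$ remainder, analyticity of $M^{mod}$ near $\pm z_2$ since $z_2>c$) is exactly what makes the paper's one-line residue computation legitimate.
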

\begin{proof}
	Substitute the long time asymptotic behavior of $V^{E}$, $\varpi(s)$ and Proposition \ref{asympc} into $2\pi i(E(0)-I)$:
	\begin{align}
		&\int_{\Sigma^{E}}\dfrac{\left( I+\varpi(s)\right) (V^{E}-I)}{s}ds\nonumber\\
		&=\int_{\partial U(\pm z_2)}\dfrac{ 	M^{mod}(s)(M^{2,\pm}(s)-I)M^{mod}(s)^{-1}}{s}ds\nonumber\\
		&+\mathcal{O}(t^{-1})\nonumber\\
		&=t^{-1/2}\int_{\partial U(\pm z_2)}\dfrac{ M^{mod}(s)A_\pm(\xi)M^{mod}(s)^{-1}}{s(z\mp z_2)}ds+\mathcal{O}(t^{-1}).
	\end{align}
	Then by residue theorem we finally arrive at the result.
\end{proof}
\section{The second-type genus-2 elliptic wave  region }\label{sec8}

  The  Region IV is corresponding to the case  $\frac{3}{4}<\xi<\xi_m$, $c>2$. Here, as denote in above section,  $\xi_m$ is the
 critical condition that stationary phase point $z_2$ merge $c$. Similarly, we need to construct new g-functions defined on genus 2
 Riemann surface which has real  branch points $\pm 1$, $\pm z_1$ and $\pm z_2$ with $z_1<z_2$. Different from above section, in Region  III, $z_1,$ $z_2$ both the stationary phase points of $g$. And the  canonical homology basis $\left\lbrace a_j,b_j \right\rbrace_{j=1}^2 $ is shown in Figure \ref{figab}.  In this two different cases, $g$ has different property. So after we proving the basic property of $g$, we will discuss separately.
 \begin{figure}[h]
 	\centering
 	\begin{tikzpicture}[node distance=2cm]
 		\draw[dashed](-6,0)--(6,0);
 		\draw(3.2,0)arc (0:180:3.2 and 1);
 		\draw[dashed](-3.2,0)arc (180:360:3.2 and 1);
 		\draw(1.1,0)arc (180:0:1 and 0.5);
 		\draw [->](0.01,1)--(0,1);
 		\draw[dashed](3.1,0)arc (0:-180:1 and 0.5);
 		\draw [->](2.06,0.5)--(2.05,0.5);
 		\draw [](4,0) ellipse (1 and 0.3);
 		\draw [->](4.01,0.3)--(4,0.3);
 		\draw [](0,0) ellipse (1.2 and 0.3);
 		\draw [->](0.01,0.3)--(0,0.3);	
 		\coordinate (a1) at (0,0.3);
 		\fill (a1) circle (0pt) node[above] {$a_2$};
 		\coordinate (a2) at (4,0.3);
 		\fill (a2) circle (0pt) node[above] {$a_2$};
 		\coordinate (b1) at (0,1);
 		\fill (b1) circle (0pt) node[above] {$b_2$};
 		\coordinate (r) at (2.1,0.5);
 		\fill (r) circle (0pt) node[below] {$b_1$};	
 		\coordinate (I) at (0,0);
 		\fill (I) circle (0pt) node[below] {$0$};
 		\coordinate (a) at (3,0);
 		\fill (a) circle (1pt) node[below] {$z_1$};
 		\coordinate (aa) at (-3,0);
 		\fill (aa) circle (1pt) node[below] {$-z_1$};
 		\coordinate (b) at (1.2,0);
 		\fill (b) circle (1pt) node[below] {$1$};
 		\coordinate (ba) at (-1.2,0);
 		\fill (ba) circle (1pt) node[below] {$-1$};
 		\coordinate (c) at (5,0);
 		\fill (c) circle (1pt) node[below] {$z_2$};
 		\coordinate (ca) at (-5,0);
 		\fill (ca) circle (1pt) node[below] {$-z_2$};
 		\draw[very thick](-5,0)--(-3,0);
 		\draw[very thick](5,0)--(3,0);
 		\draw[very thick](-1.2,0)--(1.2,0);
 	\end{tikzpicture}
 	\caption{\footnotesize  The  canonical homology basis $\left\lbrace a_j,b_j \right\rbrace_{j=1}^2 $ of the genius 2 Riemann surface. }
 	\label{figab2}
 \end{figure}
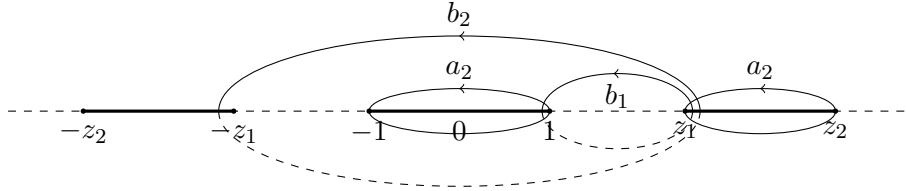

 \subsection{Constructing the $g$-function  }
 To construct the g-function, we first introduce 
 \begin{align}
 	Z(z)=\left[\dfrac{(z^2-z_1^2)(z^2-z_2^2)}{(z^2-1)} \right]^{1/2},
 \end{align}
Here the branch of the square root is such that $Z(z)\in i\mathbb{R}^+$ for $z\in[z_1,z_2]_+$. And $z_1$, $z_2$ admit:
\begin{align}
	\frac{1-\xi}{2}=\frac{1}{z_1z_2}\left(1-\frac{1}{z_1^2} -\frac{1}{z_2^2}\right) .\label{key}
\end{align}
 d$g$ is the  derivative of g-function:
\begin{align}
	\text{d}g=\dfrac{Z(z)}{z^3}\left[ \frac{1-\xi}{2}z^2-\frac{2}{z_1z_2}\right] \text{d}z.
\end{align}
$\text{d}\hat{g}$ is a meromorphic differential defined on the 2-genus Riemann surface, with d$g$ on the upper sheet and $-$d$g$ on the lower sheet.
Similarly,  the g-function is given by
\begin{align}
	g(z)=\int_{z_2}^{z}\text{d}g,\ z\in\mathbb{C}\setminus\Sigma^{mod}.
\end{align}

\begin{Proposition}	\label{prog2}
	There exist  a pair of real number $z_1=z_1(\xi,c)$, $z_2=z_2(\xi,c)$  in $(1,c)$ such that the function $g(z)$ defined above has the following properties:\\
	(a) The $a$-period of $g(z)$ is zero and the $b$-period of $g(z)$ is in $\mathbb{R}$;\\
	(b) the sign of Im$g$ has the same property in Figure \ref{figdg3};\\
	(c) $g(z)$ satisfies the following jump conditions across $[-z_2,z_2]$:
	\begin{align}
		&g_-(z)+g_+(z)=0,\hspace{0.5cm}z\in(z_1,z_2),\\
		&g_-(z)-g_+(z)=0,\hspace{0.5cm}z\in(1,z_1)\cup(-z_1,-1),\\
		&g_-(z)+g_+(z)=B_1,\hspace{0.5cm}z\in(-1,1),\\
		&g_-(z)+g_+(z)=B_2,\hspace{0.5cm}z\in(-z_2,-z_1),
	\end{align}
	here, $B_j=B_j(\xi)=\frac{1}{2}\oint_{b_j}dg$ is real;\\
	(d) $g(z)$ has another phase point $z_0=z_0(\xi)\in(z_1,z_2)$ which is the solution of equation $\frac{\xi-1}{2}z^2-\frac{2}{z_1z_2}=0$.\;\\
	(e) As $\xi\to\xi_m$, we have $z_2\to c$, while as $\xi\to\frac{3}{4}$, $z_1, z_2\to2$.
\end{Proposition}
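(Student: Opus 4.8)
The plan is to mirror the construction of Proposition~\ref{prog}, the essential new feature being that both endpoints $z_1,z_2$ of the middle band are now free branch points and must be fixed \emph{simultaneously}, so the single scalar condition of Proposition~\ref{prog} is replaced by two. First I would check that the prescribed $\mathrm dg$ already carries the correct behaviour at the two singular points of $p_-$. Laurent-expanding $Z(z)=z\bigl(1+\mathcal O(z^{-2})\bigr)$ at $\infty$ shows $\mathrm dg-\mathrm dp_-=\mathcal O(z^{-2})\mathrm dz$ automatically, since the leading coefficient $\tfrac{1-\xi}{2}$ of the bracket is shared with $p_-$. At $z=0$ one has $Z(0)=iz_1z_2$, so the constant $-\tfrac{2}{z_1z_2}$ in the bracket is exactly what makes the triple-pole coefficient of $\mathrm dg$ equal the $-2i/z^3$ term of $\mathrm dp_-$; expanding one further order, the coefficient of $z^{-1}$ in $\mathrm dg-\mathrm dp_-$ is a nonzero multiple of $\tfrac{1-\xi}{2}-\tfrac{1}{z_1z_2}\bigl(1-\tfrac1{z_1^2}-\tfrac1{z_2^2}\bigr)$. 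Hence the vanishing of this spurious simple pole is precisely the relation~(\ref{key}), which is the first of the two equations pinning down $(z_1,z_2)$.

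Second, I would obtain the remaining equation from the normalization of periods. By the $z\mapsto-z$ symmetry of $\mathrm dg$ the $a_2$-period vanishes identically (as in Proposition~\ref{prog}), so the only genuine constraint is $\oint_{a_1}\mathrm dg=2\int_{z_1}^{z_2}[\mathrm dg]_+=0$; since $[Z]_+\in i\mathbb R$ on $(z_1,z_2)$ this is one real transcendental equation $G(z_1,z_2;\xi)=0$. Together with~(\ref{key}) this is a $2\times2$ system. To solve it I would eliminate one variable by solving the algebraic relation~(\ref{key}) for $z_2=z_2(z_1,\xi)$, reduce to the scalar equation $G\bigl(z_1,z_2(z_1,\xi);\xi\bigr)=0$, and produce a root by the intermediate value theorem after establishing a sign change across an admissible $z_1$-interval — the exact analogue of the $F(1)$, $F(c)$ sign computation in Proposition~\ref{prog}. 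Once $g=\int_{z_2}^z\mathrm dg$ is normalized in this way, parts (a) and (c) follow: the zero $a$-periods give single-valuedness, the $b$-periods $B_j=\tfrac12\oint_{b_j}\mathrm dg$ are real by the same symmetry and sign analysis, and the jumps across $[-z_2,z_2]$ are read off from the standard rule that $g_++g_-$ is the constant period across a gap while $g_+-g_-=0$ across a band. For (d), the real stationary points of $g$ off the cuts are the zeros of the bracket polynomial, i.e.\ the root of $\tfrac{1-\xi}{2}z^2=\tfrac{2}{z_1z_2}$; using~(\ref{key}) this gives $z_0^2=2/\bigl(1-z_1^{-2}-z_2^{-2}\bigr)$, which I would check lies in $(z_1,z_2)$.

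The sign table of part (b) I would establish as in the genus-two literature: $\text{Im}\,g$ is harmonic off $\Sigma^{cut}$, vanishes on the real gaps and on the emanating critical trajectories, and its local behaviour at the branch points and at $z_0$, combined with the already-established realness of the periods, forces the distribution of signs drawn in Figure~\ref{figdg3}. For the boundary behaviour (e), I would analyze the limiting system directly: at $\xi=\tfrac34$ the pair $(z_1,z_2)=(2,2)$ solves~(\ref{key}) exactly and $\oint_{a_1}\mathrm dg$ vanishes trivially as the band collapses, so $z_1,z_2\to2$; as $\xi\uparrow\xi_m$ I would invoke the defining property of $\xi_m$ from Proposition~\ref{prog}, namely that the stationary point $z_2$ of that $g$-function merges with $c$, to identify the two constructions at the common boundary and conclude $z_2\to c$. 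Monotonicity of $z_1,z_2$ in $\xi$ would come from differentiating the $2\times2$ system and checking the sign of its Jacobian, paralleling the $\partial_s F$ and $\partial_{(1-\xi)/2}F$ computations of Proposition~\ref{prog}.

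I expect the \textbf{main obstacle} to be the existence step: converting the coupled system~(\ref{key}) and $\oint_{a_1}\mathrm dg=0$ into a tractable scalar problem and nailing the sign change of the reduced function $G$. Unlike the single free branch point of Proposition~\ref{prog}, here the period integral depends on $z_1$ both through the integrand and through the moving endpoint $z_2(z_1,\xi)$, so the endpoint sign analysis and the accompanying monotonicity are substantially more delicate and will require the most care.
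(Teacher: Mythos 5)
Your overall architecture agrees with the paper's: the pair $(z_1,z_2)$ is pinned down by two conditions, namely the algebraic relation (\ref{key}) and the vanishing of the period over the band $(z_1,z_2)$; the period around $[-1,1]$ vanishes by the $z\mapsto-z$ symmetry; the jumps in (c), the stationary point $z_0$ in (d), and the endpoint identifications in (e) all follow as you describe. Your derivation of (\ref{key}) as the condition killing the spurious simple pole of $\mathrm dg-\mathrm dp_-$ at $z=0$ is correct and is in fact more informative than the paper, which simply postulates (\ref{key}). Your value $(z_1,z_2)=(2,2)$ at $\xi=3/4$ also checks out against (\ref{key}).

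The genuine gap is the existence step, which you yourself flag as the main obstacle but do not carry out, and your proposed route would not go through as stated. You plan to fix $\xi$, eliminate $z_2$ via (\ref{key}), and find a sign change of the reduced period function over an interval of $z_1$. But the natural degenerate endpoint of that interval is the collapsed band $z_1=z_2$, where the period integral vanishes identically rather than having a definite sign, so there is no sign change to exploit without first differentiating at that endpoint and then controlling the sign of the reduced function at the opposite end of the admissible range --- neither of which you specify. The paper avoids this entirely by a continuation argument in $\xi$ rather than an IVT argument in the branch point: it sets $\eta=-2/(z_1z_2)$, uses (\ref{key}) to express $z_1^2+z_2^2=\frac{4}{\eta^2}\bigl(1+\frac{1-\xi}{\eta}\bigr)$ so that the period condition becomes a scalar equation $F(\eta,\xi)=0$, observes that this equation has the explicit solution $\eta=-\frac12$ at $\xi=\frac34$ (your $z_1=z_2=2$) and, at $\xi=\xi_m$, the solution inherited from the Region III construction where $z_2$ merges with $c$, and then shows $\partial F/\partial\eta\neq0$ throughout by reducing the bracket in the integrand to the quartic $f(x)=x^4+2(1-\xi)x+3(1-\xi)^2$ and using the inequality $-\eta>1-\xi$ forced by (\ref{key}). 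The implicit function theorem then propagates the solution branch across the whole interval $(\frac34,\xi_m)$, which simultaneously delivers the continuity in $\xi$ that your argument for part (e) tacitly assumes. To complete your proof you would either need to adopt this continuation device or supply the missing endpoint sign analysis for your reduced function $G$, including the derivative computation at the collapsed-band configuration.
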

\begin{figure}[h]
	\centering
	\begin{tikzpicture}[node distance=2cm]
		\draw[lime!30, fill=lime!30](-4.5,0)--(-2,0)arc (165:180:2.2 and 6)--(-4.4,-1.55)--(-4.5,0);
		\draw[lime!30, fill=lime!30](4.5,0)--(2,0)arc (15:0:2.2 and 6)--(4.4,-1.55)--(4.5,0);
		\draw[lime!30, fill=lime!30](2,0)arc (-15:0:2.2 and 6)--(-2,1.55)--(-2,0)--(2,0);
		\draw[lime!30, fill=lime!30](-2,0)arc (195:180:2.2 and 6)--(2,1.55)--(2,0)--(-2,0);
		\draw[lime!30, fill=lime!30](-6,0)--(-4.5,0)arc (-15:0:2.2 and 6)--(-6,1.55)--(-6,0);
		\draw[lime!30, fill=lime!30](6,0)--(4.5,0)arc  (195:180:2.2 and 6)--(6,1.55)--(6,0);
		\draw[dashed](-6,0)--(6,0);
		\draw(2,0)arc (-15:0:2.2 and 6);
		\draw(2,0)arc (15:0:2.2 and 6);
		\draw[](-2,0)arc (165:180:2.2 and 6);
		\draw[](-2,0)arc (195:180:2.2 and 6);
		\draw[](4.5,0)arc (165:180:2.2 and 6);
		\draw[](4.5,0)arc (195:180:2.2 and 6);
		\draw[](-4.5,0)arc (-15:0:2.2 and 6);
		\draw[](-4.5,0)arc (15:0:2.2 and 6);
		\coordinate (a1) at (0,1.2);
		\fill (a1) circle (0pt) node[below] {\footnotesize $\text{Im}g>0$};
		\coordinate (e1) at (-3,-0.7);
		\fill (e1) circle (0pt) node[below] {\footnotesize$\text{Im}g>0$};
		\coordinate (o1) at (3,-0.7);
		\fill (o1) circle (0pt) node[below] {\footnotesize$\text{Im}g>0$};	
		\coordinate (w1) at (0,-0.7);
		\fill (w1) circle (0pt) node[below] {\footnotesize$\text{Im}g<0$};
		\coordinate (e2) at (-3,1.2);
		\fill (e2) circle (0pt) node[below] {\footnotesize$\text{Im}g<0$};
		\coordinate (o2) at (3,1.2);
		\fill (o2) circle (0pt) node[below] {\footnotesize$\text{Im}g<0$};				
		\coordinate (I) at (0,0);
		\fill (I) circle (0pt) node[below] {\footnotesize$0$};
		\coordinate (a) at (2,0);
		\fill (a) circle (1pt) node[below] {\footnotesize$z_1$};
		\coordinate (aa) at (-2,0);
		\fill (aa) circle (1pt) node[below] {\footnotesize$-z_1$};
		\coordinate (b) at (1.2,0);
		\fill (b) circle (1pt) node[below] {\footnotesize$1$};
		\coordinate (ba) at (-1.2,0);
		\fill (ba) circle (1pt) node[below] {\footnotesize$-1$};
		\coordinate (c) at (3.5,0);
		\fill (c) circle (1pt) node[below] {\footnotesize$z_0$};
		\coordinate (ca) at (-3.5,0);
		\fill (ca) circle (1pt) node[below] {\footnotesize$-z_0$};
		\coordinate (cr) at (4.5,0);
		\fill (cr) circle (1pt) node[below] {\footnotesize$z_2$};
		\coordinate (car) at (-4.5,0);
		\fill (car) circle (1pt) node[below] {\footnotesize$-z_2$};
		\draw[very thick](-4.5,0)--(-2,0);
		\draw[very thick](4.5,0)--(2,0);
		\draw[very thick](-1.2,0)--(1.2,0);
		\coordinate (ww) at (5.2,1.2);
		\fill (ww) circle (0pt) node[below] {\footnotesize$\text{Im}g>0$};	
		\coordinate (www) at (5.2,-0.7);
		\fill (www) circle (0pt) node[below] {\footnotesize$\text{Im}g<0$};
		\coordinate (ww) at (-5.2,1.2);
		\fill (ww) circle (0pt) node[below] {\footnotesize$\text{Im}g<0$};	
		\coordinate (www) at (-5.2,-0.7);
		\fill (www) circle (0pt) node[below] {\footnotesize$\text{Im}g>0$};
	\end{tikzpicture}
	\caption{\footnotesize  In the purpler region, Im$g$>0 while in the white region, Im$g$<0. }
	\label{figdg3}
\end{figure}
\begin{proof}
 Denote  $\eta=-\frac{2}{z_1z_2}$. Thus, (\ref{key}) gives
 \begin{align*}
 	z_1^2+z_2^2=\frac{4}{\eta^2}\left( 1+\frac{1-\xi}{\eta}\right) .
 \end{align*}
Then the $a_2$-period of $g$ equals to zero if and only if $F(\eta,\xi)=0$ with
\begin{align*}
	F(\eta,\xi)=\int_{z_1}^{z_2}\dfrac{Z(z)}{z^3}\left[ \frac{1-\xi}{2}z^2-\frac{2}{z_1z_2}\right] \text{d}z.
\end{align*}
When $\xi=\frac{3}{4}$, $F(\eta,\xi)=0$ has solution $(-\frac{1}{2},\frac{3}{4})$, and on the other end $\xi=\xi_m$, $F(\eta,\xi)=0$ has solution as shown in Proposition \ref{prog}: $(-\frac{2}{cz_0(\xi_m)},\xi_m)$. And
\begin{align*}
	\frac{\partial  F(\eta,\xi)}{\partial \eta}=\int_{z_1}^{z_2}\dfrac{z}{\sqrt{(z^2-1)(z^2-z_1^2)(z^2-z_2^2)}}\left[ 1+(1-\xi)\left(\frac{2}{\eta^3}+\frac{2}{\eta^4}(1-\xi) \right) \right] \text{d}z.
\end{align*}
Consider the function $f(x)=x^4+2(1-\xi)x+3(1-\xi)^2$. Note that,  (\ref{key}) implies $-\eta>1-\xi$, so simple calculation gives that $f(\eta)>f(\xi-1)>0$. So $	\frac{\partial  F(\eta,\xi)}{\partial \eta}\neq0$, which give the existence of solution $\eta$.
\end{proof}

\subsection{Opening the jump contour  }

Similarly as the above section, we define the following contour relying on $\xi, c$
\begin{align*}
	\Sigma^\pm=&\left\lbrace -z_1+e^{\pm\psi i}l,\ l\in(0,\frac{z_1-1}{2\cos\psi})\right\rbrace \cup\left\lbrace z_1+e^{(\pi\mp\psi) i}l,\ l\in(0,\frac{z_1-1}{2\cos\psi})\right\rbrace\nonumber\\
	&\cup\left\lbrace 1+e^{\pm\psi i}l,\ l\in(0,\frac{z_1-1}{2\cos\psi})\right\rbrace \cup\left\lbrace -1+e^{(\pi\mp\psi) i}l,\ l\in(0,\frac{z_1-1}{2\cos\psi})\right\rbrace \nonumber\\
	&\cup\left\lbrace -z_2+e^{(\pi\mp\psi) i}\mathbb{R}^+\right\rbrace \cup\left\lbrace z_2+e^{\pm\psi i}\mathbb{R}^+\right\rbrace.
\end{align*} 
    The   $\frac{\pi}{4}\geq\psi$ is a small enough positive constant such that $\Sigma^\pm $ 
are  contained in the region of Im$g>0$. And similarly as above equation, $\Omega^\pm$ is a closed region,  the edge of which is made up by $\Sigma^\pm$ and $\mathbb{R}$.
\begin{figure}[h]
	\centering
	\begin{tikzpicture}[node distance=2cm]
		\draw[lime!30, fill=lime!30](-5,0)--(-2,0)arc (165:180:2.2 and 6)--(-4.9,-1.55)--(-5,0);
		\draw[lime!30, fill=lime!30](5,0)--(2,0)arc (15:0:2.2 and 6)--(4.9,-1.55)--(5,0);
		\draw[lime!30, fill=lime!30](2,0)arc (-15:0:2.2 and 6)--(-2,1.55)--(-2,0)--(2,0);
		\draw[lime!30, fill=lime!30](-2,0)arc (195:180:2.2 and 6)--(2,1.55)--(2,0)--(-2,0);
		\draw[lime!30, fill=lime!30](-6.5,0)--(-5,0)arc (-15:0:2.2 and 6)--(-6.5,1.55)--(-6.5,0);
		\draw[lime!30, fill=lime!30](6.5,0)--(5,0)arc  (195:180:2.2 and 6)--(6.5,1.55)--(6.5,0);
		\draw[dashed](-6.5,0)--(6.5,0);	
		\coordinate (I) at (0,0);
		\fill (I) circle (0pt) node[below] {$0$};
		\coordinate (a) at (2,0);
		\fill (a) circle (1pt) node[below] {$z_1$};
		\coordinate (aa) at (-2,0);
		\fill (aa) circle (1pt) node[below] {$-z_1$};
		\coordinate (b) at (0.8,0);
		\fill (b) circle (1pt) node[below] {$1$};
		\coordinate (ba) at (-0.8,0);
		\fill (ba) circle (1pt) node[below] {$-1$};
		\coordinate (cr) at (5,0);
		\fill (cr) circle (1pt) node[below] {$z_2$};
		\coordinate (car) at (-5,0);
		\fill (car) circle (1pt) node[below] {$-z_2$};
		\draw(-5,0)--(-6.5,1.5)node[above]{\small $\Sigma^+$};
		\draw(5,0)--(6.5,1.5)node[above]{\small $\Sigma^+$};
		\draw(-5,0)--(-6.5,-1.5)node[below]{\small $\Sigma^-$};
		\draw(5,0)--(6.5,-1.5)node[below]{\small $\Sigma^-$};
		\draw(-2,0)--(-1.4,0.4)node[above]{\small $\Sigma^+$};
		\draw[-latex](-6,1)--(-5.5,0.5);
		\draw[-latex](-6,-1)--(-5.5,-0.5);
		\draw[-latex](5,0)--(5.5,0.5);
		\draw[-latex](5,0)--(5.5,-0.5);
		\draw(-0.8,0)--(-1.4,0.4);
		\draw(2,0)--(1.4,0.4)node[above]{\small $\Sigma^+$};
		\draw(0.8,0)--(1.4,0.4);
		\draw(-2,0)--(-1.4,-0.4)node[below]{\small $\Sigma^-$};
		\draw(-0.8,0)--(-1.4,-0.4);
		\draw(2,0)--(1.4,-0.4)node[below]{\small $\Sigma^-$};
		\draw(0.8,0)--(1.4,-0.4);
		\draw[-latex](-1.4,0.4)--(-1.1,0.2);
		\draw[-latex](-2,0)--(-1.7,0.2);
		\draw[-latex](-1.4,-0.4)--(-1.1,-0.2);
		\draw[-latex](-2,0)--(-1.7,-0.2);
		\draw[-latex](0.8,0)--(1.1,0.2);
		\draw[-latex](1.4,0.4)--(1.7,0.2);
		\draw[-latex](0.8,0)--(1.1,-0.2);
		\draw[-latex](1.4,-0.4)--(1.7,-0.2);
		\draw[-latex](-4.01,0)--(-4,0);
		\draw[-latex](4,0)--(4.01,0);
		\draw[-latex](-0.01,0)--(0.01,0);
		\node at (6,0.5) {\footnotesize $\Omega^+$};
		\node at (-6,0.5) {\footnotesize $\Omega^+$};
		\node at (1.42,0.15) {\footnotesize $\Omega^+$};
		\node at (-1.42,0.15) {\footnotesize $\Omega^+$};
		\node at (1.42,-0.15) {\footnotesize $\Omega^-$};
		\node at (-1.42,-0.15) {\footnotesize $\Omega^-$};
		\node at (6,-0.5) {\footnotesize $\Omega^-$};
		\node at (-6,-0.5) {\footnotesize $\Omega^-$};
	\end{tikzpicture}
	\caption{\footnotesize  The region of $\Omega^\pm$.  The same as Figure \ref{figdg2}, purple region means Im$g(z)>0$ while white region means Im$g(z)<0$.}
	\label{figfj4}
\end{figure}

In this region of $\xi,c$, we introduce
a piecewise matrix interpolation function
\begin{equation}
	G(z)= G (z;\xi,c)=\left\{ \begin{array}{ll}
		\left(\begin{array}{cc}
			1 & 0\\
			-r_1e^{2itg} & 1
		\end{array}\right),   &\text{as } z\in\Omega^+;\\[12pt]
		\left(\begin{array}{cc}
			1 & -r_2e^{-2itg}\\
			0 & 1
		\end{array}\right),   &\text{as } z\in\Omega^-;\\
		I &\text{as } 	z \text{ in elsewhere},
	\end{array}\right. \label{funcG4}
\end{equation}
Same as above section,  $ G (z)$ bring a new singularity.
To deal with the jump on $\mathbb{R}$, we give a introduction of an auxiliary function $D(z)$, which admits the following jump condition:
\begin{align*}
	&D_-(z)D_+(z)=i[r_2]_-,\hspace*{0.5cm}z\in[-z_2,z_2] \setminus[-z_1,z_1];\\
	&D_-(z)D_+(z)=1,\hspace*{0.5cm}z\in[-1,1].
\end{align*}
Define
\begin{align}
	Z_3(z)=&(z^2-1)Z(z),\\
	\log D(z)=&\frac{Z_3(z)}{2\pi i}\left(\int_{-z_2}^{-z_1}+\int_{z_2}^{z_1} \right) \dfrac{\log(i[r_2]_-(s))}{(s-z)[Z_3]_+(s)}ds.
\end{align}
\begin{Proposition}\label{proD3}
	The scalar function $D(z)$ satisfies the following properties :\\
	(a) $D(z)$ is analytic on $\mathbb{C}\setminus\left( (-z_2,-z_1)\cup[-1,1]\cup(z_1,z_2)\right) $;\\
	(b) $D(z)$ has singularity at $z=\pm c$ with:
	\begin{align}
		D(z)=\mathcal{O}(z-p)^{\mp 1/4},\hspace{0.3cm}z\to p\in \mathbb{C}^\pm\setminus\mathbb{R} ,\hspace{0.3cm}p=c,-c.
	\end{align}
	(c) As $z\to\infty\in\mathbb{C}\setminus\mathbb{R}$, $D(z)$ has limit  $D_\infty(z)$ with
	\begin{align}
		\log D_\infty(z)=&-\frac{1}{2\pi i}\left(\int_{-z_2}^{-z_1}+\int_{z_2}^{z_1} \right) \dfrac{\log(i[r_2]_-(s))}{[X]_+(s)}\left(z^2+zs+s^2-\frac{1+z_2^2+z_1^2}{2} \right) ds.
	\end{align}

	(d) As $z\to0\in\mathbb{C}^+$,
	\begin{align}
		D_\infty(z)=&D_\infty(0)\left(1 +D_\infty^{(1)}z\right) +\mathcal{O}(z^2),\nonumber\\
		D(z)=&D(0)\left( 1+D^{(1)}z\right) +\mathcal{O}(z^2),\nonumber
	\end{align}
	where
{\small 	\begin{align}
		&\log \left( D_\infty(0)\right) = \frac{1}{2\pi i}\left(\int_{-z_2}^{-z_1}+\int_{z_2}^{z_1} \right) \dfrac{\log(i[r_2]_-(s))}{[X]_+(s)}\left(\frac{1+z_2^2+z_1^2}{2} -s^2\right) ds,\\
	&	D_\infty^{(1)}= -\frac{1}{2\pi i}\left(\int_{-z_2}^{-z_1}+\int_{z_2}^{z_1} \right) \dfrac{s\log(i[r_2]_-(s))}{[X]_+(s)}ds,\\
	&	\log \left( D(0)\right) = \frac{cz_1}{2\pi }\left(\int_{-z_2}^{-z_1}+\int_{z_2}^{z_1} \right) \dfrac{\log(i[r_2]_-(s))}{s[Z_3]_+(s)}ds,\\
	&	D^{(1)}= \frac{cz_1}{2\pi }\left(\int_{-z_2}^{-z_1}+\int_{z_2}^{z_1} \right) \dfrac{\log(i[r_2]_-(s))}{s^2[Z_3]_+(s)}ds.
	\end{align}}
\end{Proposition}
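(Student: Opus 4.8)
The plan is to treat Proposition~\ref{proD3} exactly as the scalar problems in Propositions~\ref{proD} and~\ref{proD2}: the function $D(z)$ is the exponential of a Cauchy-type integral carrying the prefactor $Z_3(z)$, so that each assertion reduces either to the Plemelj--Sokhotski formula on the integration contour or to an elementary expansion of $Z_3(z)$ and the Cauchy kernel. I would record once and for all that
\[
Z_3(z)=\bigl[(z^2-1)(z^2-z_1^2)(z^2-z_2^2)\bigr]^{1/2}
\]
is single valued and analytic off $[-z_2,-z_1]\cup[-1,1]\cup[z_1,z_2]$, is odd, satisfies $[Z_3]_-=-[Z_3]_+$ across each of these cuts, and behaves like $(z-p)^{1/2}$ near each endpoint $p\in\{\pm1,\pm z_1,\pm z_2\}$; write $h(s)=\log(i[r_2]_-(s))$ for the density.

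For part~(a) and the jump relations I would substitute the boundary values of the Cauchy operator, $C_\pm(f)=\mathrm{P.V.}\pm\tfrac12 f$. On $(z_1,z_2)$ and $(-z_2,-z_1)$ the integral is genuinely singular and $[Z_3]_-=-[Z_3]_+$, so the $\pm\tfrac12 h$ terms add while the principal values cancel, giving $\log D_++\log D_-=h$, that is $D_-D_+=i[r_2]_-$. On $[-1,1]$ the density is not supported there, so the integral is regular, but $[Z_3]$ still changes sign; hence $\log D_+=-\log D_-$ and $D_-D_+=1$. On every remaining real interval both $Z_3$ and the integral are analytic, and off $\mathbb{R}$ the product of the two analytic factors is analytic, which pins down the analyticity domain stated in~(a).

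Part~(b) is the step I expect to cost the most work, since it is a genuine local analysis rather than a formal expansion. Near an endpoint $p$ one has $Z_3(z)\sim(z-p)^{1/2}$, while a Cauchy integral over a contour terminating at $p$ produces an endpoint factor of logarithmic type; exponentiating the sum and combining the two contributions yields $\mathcal{O}\bigl((z-p)^{\mp1/4}\bigr)$, with the sign dictated by the half-plane from which $p$ is approached. I would make this rigorous following~\cite{lenells} and Appendix~C of~\cite{MonvelCMP2}. The relevant endpoints for the claim are the outer branch points $\pm z_2$ (note $z_2<c$ throughout Region~IV, so $\pm c$ are in fact regular for $D$ and $\pm z_2$ play the role that $\pm c$ played in Regions~II and~III).

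For parts~(c) and~(d) only routine expansions remain. Inserting $Z_3(z)=z^3-\tfrac12(1+z_1^2+z_2^2)z+\mathcal{O}(z^{-1})$ and $\tfrac1{s-z}=-\sum_{k\ge0}s^kz^{-k-1}$ gives
\[
\frac{Z_3(z)}{s-z}=-\Bigl(z^2+sz+s^2-\tfrac{1+z_1^2+z_2^2}{2}\Bigr)+\mathcal{O}(z^{-1}),\qquad z\to\infty,
\]
so integrating against $h/[Z_3]_+$ isolates the non-decaying part of $\log D$ as exactly the quadratic-in-$z$ integral defining $\log D_\infty(z)$ in~(c). For~(d) I would expand $Z_3(z)$ (respectively its quadratic part) together with $\tfrac1{s-z}=\tfrac1s+\tfrac{z}{s^2}+\mathcal{O}(z^2)$ about $z=0$, using the boundary value $Z_3(0+0i)=\pm i z_1z_2$; reading off the $z^0$ and $z^1$ coefficients reproduces $\log D(0),\,D^{(1)}$ and $\log D_\infty(0),\,D_\infty^{(1)}$, the prefactor $z_1z_2/(2\pi)$ arising from $Z_3(0+0i)/(2\pi i)$. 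Finally, the oddness of $Z_3$ together with the reflection symmetries of Section~\ref{sec2} make these expansions compatible with the $z\mapsto-z$ symmetry required of $M^{(2)}$.
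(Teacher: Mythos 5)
Your proposal is correct and follows the same route the paper uses for the parallel scalar functions in Propositions~\ref{prode1}, \ref{proD} and~\ref{proD2}: Plemelj--Sokhotski for the jump relations, the endpoint analysis of \cite{lenells} and Appendix~C of \cite{MonvelCMP2} for the quarter-power behavior, and elementary expansions of $Z_3(z)$ and the Cauchy kernel for (c) and (d). The places where your computation departs from the printed statement are typos inherited from Regions II--III rather than gaps on your side: in (b) the singular endpoints of this $D$ are $\pm z_2$ (together with $\pm z_1$, $\pm 1$), not $\pm c$, which as you correctly note are regular points in Region IV since $z_2<c$; the prefactor in $\log D(0)$ and $D^{(1)}$ should be $z_1z_2/(2\pi)$, i.e.\ $|Z_3(0+0i)|/(2\pi)$ as you derive, not $cz_1/(2\pi)$ (compare $cz_0/(2\pi)=|Y_3(0+0i)|/(2\pi)$ in Proposition~\ref{proD}); and $[X]_+$ in (c)--(d) should read $[Z_3]_+$, consistent with the kernel you expand.
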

Denote
\begin{align}
	\Sigma^{(2)}=\left(\cup_{j=1}^2\Sigma_j^\pm\right) \cup[-1,1]\cup[-z_2,-z_1]\cup[z_1,z_2].
\end{align}
Through $D(z)$ and $G(z)$, in this region of $\xi,c$, same as above subsection we give series of transformations:
Thus, we can give the same transformation in this case with
\begin{align*}
	N(z)\to M^{(1)}(z)\to  	M^{(2)}(z)=\left\{\begin{array}{ll}
		E(z;\xi,c)M^{mod}(z;\xi,c) & z\notin U(\xi)\\
		E(z;\xi,c)M^{j,pm}(z;\xi,c)  &z\in U(\pm z_j),\ j=1,2,\\
	\end{array}\right.
\end{align*}
where   $M^{mod}(z)$ is the model  RH problem  on the Riemann surface, which solution is given by theta function in Subsection \ref{secmod2}. $M^{j,pm}(z)$ are local model of $\pm z_j$ which solution can be expressed in terms of Airy functions similarly in Subsection \ref{seclo}. And  $E(z;\xi,c)$ is the error function, which has jump contour in Figure \ref{figfE}, and it is similarly in subsection \ref{secer}. We obtain  its  asymptotic property   directly as follow:
\begin{figure}[h]
	\centering
	\begin{tikzpicture}[node distance=2cm]
		\coordinate (I) at (0,0);
		\fill (I) circle (0pt) node[below] {$0$};
		\draw(-5,0)--(-6.5,1.5);
		\draw(5,0)--(6.5,1.5);
		\draw(-5,0)--(-6.5,-1.5);
		\draw(5,0)--(6.5,-1.5);
		\draw(-2,0)--(-1.4,0.4);
		\draw[-latex](-6,1)--(-5.5,0.5);
		\draw[-latex](-6,-1)--(-5.5,-0.5);
		\draw[-latex](5,0)--(5.5,0.5);
		\draw[-latex](5,0)--(5.5,-0.5);
		\draw(-0.8,0)--(-1.4,0.4);
		\draw(2,0)--(1.4,0.4);
		\draw(0.8,0)--(1.4,0.4);
		\draw(-2,0)--(-1.4,-0.4);
		\draw(-0.8,0)--(-1.4,-0.4);
		\draw(2,0)--(1.4,-0.4);
		\draw(0.8,0)--(1.4,-0.4);
		\draw[-latex](-1.4,0.4)--(-1.1,0.2);
		\draw[-latex](-2,0)--(-1.7,0.2);
		\draw[-latex](-1.4,-0.4)--(-1.1,-0.2);
		\draw[-latex](-2,0)--(-1.7,-0.2);
		\draw[-latex](0.8,0)--(1.1,0.2);
		\draw[-latex](1.4,0.4)--(1.7,0.2);
		\draw[-latex](0.8,0)--(1.1,-0.2);
		\draw[-latex](1.4,-0.4)--(1.7,-0.2);
		\draw[thick,red,fill=white](5,0) circle (0.2);
		\draw[thick,red,fill=white](-5,0) circle (0.2);
		\draw[-latex,red](5,0.2)--(5.05,0.2);
		\draw[-latex,red](-5,0.2)--(-4.93,0.2);
		\draw[thick,red,fill=white](2,0) circle (0.2);
		\draw[thick,red,fill=white](-2,0) circle (0.2);
		\draw[-latex,red](2,0.2)--(2.05,0.2);
		\draw[-latex,red](-2,0.2)--(-1.93,0.2);
			\draw[dashed](-6.5,0)--(6.5,0);	
					\coordinate (a) at (2,0);
			\fill (a) circle (1pt) node[below] {$z_1$};
			\coordinate (aa) at (-2,0);
			\fill (aa) circle (1pt) node[below] {$-z_1$};
			\coordinate (b) at (0.8,0);
			\fill (b) circle (1pt) node[below] {$1$};
			\coordinate (ba) at (-0.8,0);
			\fill (ba) circle (1pt) node[below] {$-1$};
			\coordinate (cr) at (5,0);
			\fill (cr) circle (1pt) node[below] {$z_2$};
			\coordinate (car) at (-5,0);
			\fill (car) circle (1pt) node[below] {$-z_2$};
	\end{tikzpicture}
	\caption{\footnotesize  The jump contour $\Sigma^{E}$ for the $E(z;\xi,c)$. The red circles are $U(\xi)$.}
	\label{figfE}
\end{figure}
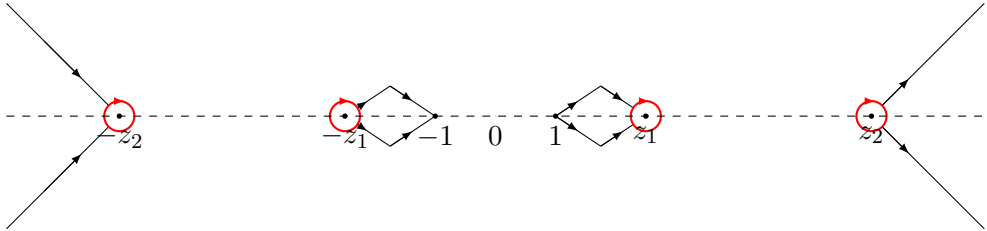

\begin{Proposition}\label{asyE3}
	As $z\to 0\in \mathbb{C}^+$, we have
	\begin{align}
		E(z;\xi,c)=E(0)+E_1z+\mathcal{O}(z^2),
	\end{align}
	where
	\begin{align}
		E(0)=I+\frac{1}{2\pi i}\int_{\Sigma^{E}}\dfrac{\left( I+\varpi(s)\right) (V^{E}-I)}{s}ds,
	\end{align}
	with long time asymptotic behavior
	\begin{equation}
		E(0)=I+t^{-1}H^{(0)}+\mathcal{O}(t^{-2}).\label{E0t}
	\end{equation}
	And
	{\small \begin{align}
		H^{(0)}=H^{(0)}(\xi,c)=&\sum_{ p=\pm z_j,j=1,2 }\frac{\text{d}}{\text{d}z}\left(\frac{1 }{z}F^{\pm}\left(\begin{array}{cc}
			0&-\frac{5i}{48}[\tilde{A}^j_\pm]^{-\frac{4}{3}}\\
			0&0\\
		\end{array}\right)(F^\pm)^{-1}\right)(p) \nonumber\\
		&+\sum_{ p=\pm z_j,j=1,2}\frac{1 }{p}\left( F^{\pm}\left(\begin{array}{cc}
			0&\frac{5}{36}\tilde{B}^j_\pm[\tilde{A}_\pm^j]^{-\frac{7}{3}}\\
			-\frac{7i}{48}[\tilde{A}_\pm^j]^{-\frac{2}{3}}&0\\
		\end{array}\right)(F^\pm)^{-1}\right)(p),\nonumber
	\end{align}}
	where   $\tilde{B}^j_\pm$ and  $\tilde{A}^j_\pm$  are  shown in (\ref{expf2}).
Further, $E_1$ admits   the following asymptotic expansion
	\begin{equation}
		E_1=t^{-1}H^{(1)}+\mathcal{O}(t^{-2}),
	\end{equation}
	where
	{\footnotesize\begin{align}
		H^{(1)}=H^{(1)}(\xi,c)=&\sum_{ p=\pm z_j,j=1,2 }\frac{\text{d}}{\text{d}z}\left(\frac{1 }{z^2}F^{\pm}\left(\begin{array}{cc}
			0&-\frac{5i}{48}[\tilde{A}_\pm^j]^{-\frac{4}{3}}\\
			0&0\\
		\end{array}\right)(F^\pm)^{-1}\right)(p) \nonumber\\
		&+\sum_{ p=\pm z_j,j=1,2}\frac{1 }{p^2}\left( F^{\pm}\left(\begin{array}{cc}
			0&\frac{5}{36}\tilde{B}^j_\pm[\tilde{A}_\pm^j]^{-\frac{7}{3}}\\
			-\frac{7i}{48}[\tilde{A}^j_\pm]^{-\frac{2}{3}}&0\\
		\end{array}\right)(F^\pm)^{-1}\right)(p). \nonumber
	\end{align}}

\end{Proposition}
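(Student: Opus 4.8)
The plan is to treat Proposition \ref{asyE3} exactly as the error analysis of Region III in subsection \ref{secer}, the only structural difference being that here \emph{both} pairs of phase points $\pm z_1$ and $\pm z_2$ are band edges of the cut $\Sigma^{cut}$ and are therefore resolved by Airy parametrices $M^{j,\pm}$, $j=1,2$, rather than by one Airy and one parabolic-cylinder model. First I would establish the small-norm estimate for $V^E$. On $\Sigma^E\setminus\partial U(\xi)$ the jump $M^{mod}V^{(2)}(M^{mod})^{-1}$ decays exponentially because the contours $\Sigma^\pm$ lie in the region $\mathrm{Im}\,g>0$ (Figure \ref{figfj4}) and $M^{mod}$ is bounded there. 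On each circle $\partial U(\pm z_j)$ the jump is $M^{j,\pm}(M^{mod})^{-1}$; invoking the Airy asymptotics of Appendix \ref{appairy} together with the local change of variable $\lambda_\pm=t^{2/3}f_\pm^2$ gives $M^{j,\pm}(M^{mod})^{-1}=I+\mathcal{O}(t^{-1})$, so that $\|V^E-I\|_{L^2\cap L^\infty(\Sigma^E)}=\mathcal{O}(t^{-1})$. By the same Beal--Coifman argument used in Section \ref{erroranalysis}, $1-C_E$ is invertible for large $t$, the resolvent $\varpi$ exists uniquely and obeys $\|\varpi\|_{L^\infty(\Sigma^E)}=\mathcal{O}(t^{-1})$.

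Second I would expand the Cauchy representation for $E(z)$ (of the same form as (\ref{Ez})) about $z=0$. Writing $E(z)=E(0)+E_1z+\mathcal{O}(z^2)$ term by term identifies $E(0)$ and $E_1$ with the stated integrals over $\Sigma^E$ carrying weights $1/s$ and $1/s^2$. Since $V^E-I$ is exponentially small away from $\partial U(\xi)$, only the four circles $\partial U(\pm z_j)$ contribute at order $t^{-1}$; substituting $\varpi=\mathcal{O}(t^{-1})$ and retaining the leading Airy term $m^{Ai}_1$ reduces each integrand to $\tfrac{1}{t}\,M^{mod}H(z;\pm z_j)^{-1}m^{Ai}_1H(z;\pm z_j)(M^{mod})^{-1}/f_\pm^{3}$, exactly as in the proof of Proposition \ref{asyE}.

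Third comes the residue computation, which is the crux. Using $F^\pm=M^{mod}H(z;\pm z_j)^{-1}N^{-1}f_\pm^{\frac12\sigma_3}$ I would write the integrand as $F^\pm\big(f_\pm^{-3}f_\pm^{-\frac{\sigma_3}{2}}Nm^{Ai}_1N^{-1}f_\pm^{\frac{\sigma_3}{2}}\big)(F^\pm)^{-1}$ and insert the local expansions of $\lambda_\pm$ about $\pm z_j$ whose coefficients $\tilde{A}^j_\pm,\tilde{B}^j_\pm$ are given in (\ref{expf2}). Since $f_\pm^{3}\sim\lambda_\pm^{3/2}/t$ vanishes to third order at the band edge, the bracketed factor has a Laurent expansion with a \emph{double} pole at $\pm z_j$, producing the block-triangular matrices with the explicit entries $-\tfrac{5i}{48}[\tilde{A}^j_\pm]^{-4/3}$, $\tfrac{5}{36}\tilde{B}^j_\pm[\tilde{A}^j_\pm]^{-7/3}$ and $-\tfrac{7i}{48}[\tilde{A}^j_\pm]^{-2/3}$ inherited from $m^{Ai}_1$. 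Applying the residue theorem to $\oint\,\cdot/s\,ds$ and $\oint\,\cdot/s^2\,ds$ then yields the two summands of $H^{(0)}$ and $H^{(1)}$: the double pole contributes the $\tfrac{d}{dz}$ terms and the simple pole the plain $1/p$ and $1/p^2$ terms, summed over $p=\pm z_j$, $j=1,2$.

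The main obstacle I anticipate is the bookkeeping of the Laurent coefficients through the triple composition --- the conformal-map expansion (\ref{expf2}), the fixed Airy constant $m^{Ai}_1$, and the conjugation by $F^\pm$ --- and verifying that the off-diagonal structure of $m^{Ai}_1$ collapses the double-pole term into the single nonzero $(1,2)$ entry while the simple-pole term retains both off-diagonal entries, precisely reproducing the stated matrices. Everything else (the exponential smallness off $\partial U(\xi)$, the small-norm solvability, and the residue evaluation itself) is a direct transcription of the Region III argument, with the sum now running over the two band-edge pairs $\pm z_1,\pm z_2$ in place of the single pair $\pm z_0$.
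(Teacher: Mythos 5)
Your proposal is correct and follows essentially the same route as the paper: the paper states Proposition \ref{asyE3} by direct analogy with the Region III error analysis, and its proof of Proposition \ref{asyE} is exactly the argument you reconstruct — exponential decay of $V^E-I$ off $\partial U(\xi)$, the $\mathcal{O}(t^{-1})$ bound on the Airy circles giving $\varpi=\mathcal{O}(t^{-1})$, reduction of the Cauchy integrals to $\partial U(\pm z_j)$, and the residue computation in which the double pole of $f_\pm^{-3}f_\pm^{-\sigma_3/2}Nm^{Ai}_1N^{-1}f_\pm^{\sigma_3/2}$ yields the $\tfrac{d}{dz}$ terms and the simple pole the $1/p$, $1/p^2$ terms. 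The only adaptation — summing over the two band-edge pairs $\pm z_1,\pm z_2$ instead of the single pair $\pm z_0$ — is exactly what you identify.
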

Here   we  denote
\begin{align}
	F^\pm=M^{mod}H(z;\pm z_0)^{-1}N^{-1}f_\pm^{\frac{\sigma_3}{2}}=[F^\pm_{ij}]_{2\times2},
\end{align}
with $(F^\pm)^{-1}\triangleq[F^{\pm,*}_{ij}]_{2\times2}$.
\begin{align}
	\frac{3}{2}itg(z)=\tilde{A}_{+}^1\sqrt{z-z_1}^3+\tilde{B}_{+}^1\sqrt{z-z_1}^4+\mathcal{O}(\sqrt{z-z_1}^5),
\end{align},where the definition of the square root is mapping positive real number to positive real number,  and
$\tilde{A}^1_-,\tilde{B}_-^1,\tilde{A}^2_+,\tilde{B}_+^2,\tilde{A}^2_-,\tilde{B}_-^2$ are as the  similar definition on $-z_1,z_2,-z_2$ respectively.
{\footnotesize\begin{align}
	\tilde{A}^1_+=&\frac{t}{2}\left( \frac{2z_1(z_2^2-z_1^2)}{(z_1^2-1)}\right) ^{\frac{1}{2}}\left( \frac{1-\xi}{z_1}-\frac{4}{z_1^4z_2}\right),	\label{expf2}\\[4pt]
	\tilde{B}_+^1=&\frac{3t}{10}\left( \frac{2z_1(z_2^2-z_1^2)}{(z_1^2-1)}\right) ^{\frac{1}{2}}
	\left[  -\frac{1-\xi}{z_1^2}+\frac{12}{z_1^5z_2}+ \left(\frac{1-\xi}{z_1}-\frac{4}{z_1^4z_2}\right)\left(\frac{1}{4z_1}+\frac{z_1}{z_1^2-z_2^2}-\frac{2z_1}{z_1^2-1}\right)\right] ,\nonumber\\[4pt]
	\tilde{A}^2_+=&\frac{it}{2}\left( \frac{2z_2(z_2^2-z_1^2)}{(z_2^2-1)}\right) ^{\frac{1}{2}}\left( \frac{1-\xi}{z_2}-\frac{4}{z_1z_2^4}\right) ,\nonumber\\[4pt]
	\tilde{B}_+^2=&\frac{3it}{10}\left( \frac{2z_1(z_2^2-z_1^2)}{(z_2^2-1)}\right) ^{\frac{1}{2}}
	\left[  -\frac{1-\xi}{z_2^2}+\frac{12}{z_1z_2^5}+ \left(\frac{1-\xi}{z_2}-\frac{4}{z_1z_2^4}\right)\left(\frac{1}{4z_2}+\frac{z_2}{z_2^2-z_1^2}-\frac{2z_2}{z_2^2-1}\right)\right] ,\nonumber\\[4pt]
	\tilde{A}^1_-=&- i \tilde{A}^1_+, \ \ \  \tilde{B}_-^1=i\tilde{B}_+^1, \ \ \tilde{A}^2_-=   i \tilde{A}^2_+, \ \ \ \tilde{B}_-^2=-i\tilde{B}_+^2.\nonumber
\end{align}}
\subsection{Model RH problem on Riemann surface}\label{secmod2}
\quad Similarly to Subsection \ref{secmod}, we arrive at the following model RH  problem
\begin{RHP}\label{RHP9}
	Find a matrix-valued function $M^{mod}(z)$  with following identities:
	
	$\blacktriangleright$ Analyticity: $M^{mod}(z)$ is analytical  in $\mathbb{C}\setminus  \Sigma^{cut} $, with $$\Sigma^{cut}=[-z_2,-z_1]\cup[-1,1]\cup[z_1,z_2];$$

	$\blacktriangleright$ Asymptotic behaviors: $M^{mod}(z) \sim I+\mathcal{O}(z^{-1}),\hspace{0.5cm}|z| \rightarrow \infty;$

	$\blacktriangleright$ Jump condition: $M^{mod}(z)$  satisfies the jump relation
	$$M^{mod}_+(z)=M^{mod}_-(z)V^{mod}(z), \ \ \ z\in \Sigma^{cut},$$
	where the jump matrix $V^{mod}(z) $ is given by
	\begin{equation}
		\hat{V}^{mod}(z)=\left\{ \begin{array}{ll}
			\left(\begin{array}{cc}
				0 & ie^{-itB_2}\\
				ie^{itB_2} & 0
			\end{array}\right),   &\text{as } z\in[-z_2,-z_1] ,\\[12pt]
			\left(\begin{array}{cc}
				0 & ie^{-itB_1}\\
				ie^{itB_1} & 0
			\end{array}\right),   &\text{as } z\in[-1,1],\\[12pt]	
			\left(\begin{array}{cc}
				0 & i\\
				i & 0
			\end{array}\right),   &\text{as } z\in[z_1,z_2];	
		\end{array}\right.
	\end{equation}
	
	$\blacktriangleright$ Singularity: $M^{mod}(z)$ has singularity at $z=\pm z_2$ with:
	\begin{align}
		&M^{mod}(z)\sim \mathcal{O}(z\mp p)^{-1/4} ,\ z\to p=\pm z_2,\ \pm 1,\ \pm z_1\text{ in }\mathbb{C}\setminus\mathbb{R}.
	\end{align}
\end{RHP}
$M^{mod}$can be derived by the so-called $\vartheta$ function on the Riemann surface of genus 2. To construct the model RH problem $M^{mod}$ we further need to let
\begin{align}
	\kappa(z)&=\left[ \frac{(z-z_2)(z-1)(z+z_1)}{(z-z_1)(z+1)(z+z_2)}\right] ^{\frac{1}{4}}, z\in\mathbb{C}\setminus\Sigma^{cut},\\
	\kappa(z)&=1+\mathcal{O}(z^{-1}),z\to\infty,\\
	\mathcal{N}(z)&=\frac{1}{2}\left(\begin{array}{cc}
		\kappa+\kappa^{-1}&\kappa-\kappa^{-1}\\
		\kappa-\kappa^{-1}&\kappa+\kappa^{-1}\\
	\end{array}\right).
\end{align}
Then  there is a constant $\mathcal{K}\in \mathbb{C}^2$ satisfies for arbitrary divisor $\mathcal{P}_0$
\begin{align*}
	\vartheta(\mathcal{A}(P)-\mathcal{A}(\mathcal{P}_0)-\mathcal{K}):= \vartheta(\mathcal{A}(P)-K)
\end{align*}
has $g=2$ zeros $P_1,P_2$ on $\mathcal{M}$ with $P_1+P_2=\mathcal{P}_0$.
Observing if $P_1,P_2$ is zeros of $\mathcal{N}_{11},\mathcal{N}_{22}$ only if $P'_1,P'_2$ is zeros of $\mathcal{N}_{12},\mathcal{N}_{21}$ where $P'_1,P'_2$ are the same point of $P_1,P_2$ on the other sheet. Therefore, $\vartheta(\mathcal{A}(P)-K),\
\vartheta(\mathcal{A}(P)+K)$
have the same zeros as $\mathcal{N}_{11},\ \mathcal{N}_{22}$ and $\mathcal{N}_{12},\ \mathcal{N}_{21}$ have respectively at the time of $\mathcal{P}_0=P_1+P_2$.
We then can show that the    RHP \ref{RHP9} admits the solution
\begin{align}
	M^{mod}(z)&= F(\infty) \left(\begin{array}{cc}
	\mathcal{N}_{11}\frac{\vartheta(\mathcal{A}(z)-K+C)}{\vartheta(\mathcal{A}(z)-K)}&
	\mathcal{N}_{12}\frac{\vartheta(-\mathcal{A}(z)-K+C)}{\vartheta(-\mathcal{A}(z)-K)}\\[10pt]
	\mathcal{N}_{21}\frac{\vartheta(\mathcal{A}(z)+K+C)}{\vartheta(\mathcal{A}(z)+K)}&
	\mathcal{N}_{22}\frac{\vartheta(-\mathcal{A}(z)+K+C)}{\vartheta(-\mathcal{A}(z)+K)}\\
\end{array}\right),\label{Mmod2}
\end{align}
where
$$ F(\infty)= \frac{1}{2}{\rm diag}  \left(  \frac{\vartheta(\mathcal{A}(\infty)-K)}{\vartheta(\mathcal{A}(\infty)-K+C)},
\frac{\vartheta(\mathcal{A}(\infty)-K)}{\vartheta(\mathcal{A}(\infty)-K-C)} \right).  $$
Noting that  as $z\to0\in\mathbb{C}^+$,
\begin{align}
		\mathcal{N}(z)=\frac{\sqrt{2}}{2}
	\left(\begin{array}{cc}
		1 & i\\
		i & 1\\
	\end{array}\right)+z\frac{\sqrt{2}}{4}\left(\frac{1}{z_1}-\frac{1}{z_2}-1 \right) \left(\begin{array}{cc}
		i & 1\\
		1 & i\\
	\end{array}\right)+\mathcal{O}(z^2),
\end{align}
then we have
\begin{align}
	M^{mod}(z)=M^{mod}(0)+M^{mod}_1z+\mathcal{O}(z^2),
\end{align}
where
\begin{align}
&M^{mod}_1=\frac{\sqrt{2}}{4}P(\infty)\left(\frac{1}{z_1}-\frac{1}{z_2}-1 \right) \left(\begin{array}{cc}
	i & 1\\
	1 & i\\
\end{array}\right) G(0)+  \frac{\sqrt{2}}{2}
	\left(\begin{array}{cc}
	1 & i\\
	i & 1\\
	\end{array}\right) G_z(0),\nonumber
\end{align}
and
$$G(z)=\left(\begin{array}{cc}
	 \frac{\vartheta(\mathcal{A}(z)-K+C)}{\vartheta(\mathcal{A}(z)-K)}&
	 \frac{\vartheta(-\mathcal{A}(z)-K+C)}{\vartheta(-\mathcal{A}(z)-K)}\\[10pt]
 \frac{\vartheta(\mathcal{A}(z)+K+C)}{\vartheta(\mathcal{A}(z)+K)}&
 \frac{\vartheta(-\mathcal{A}(z)+K+C)}{\vartheta(-\mathcal{A}(z)+K)}\\
\end{array}\right).  $$

\section{Long-time asymptotics   for the  mCH equation }\label{sec9}

\quad  In this section,  we give our main result  on  the long-time asymptotics of the mCH equation (\ref{mcho}),
which is discuss  as follows.

 For the region I,  we have  sequence of transformations
\begin{align}
	N(z)\to  M^{(1)}(z)\to M^{mod1}(z),
\end{align}
from which we obtain
{\small \begin{align*}
	N(z)=&E(z;\xi,c)M^{mod1}(z)\delta^{-\sigma_3}G(z)^{-1} =\sqrt{2}(I+H^{(0)}t^{-\frac{1}{2}})\left( \begin{array}{ll}
		0&i\\
		i&0
	\end{array}\right) \exp(-I_{\delta}^1\sigma_3)\\
&+z\sqrt{2}(I+H^{(0)}t^{-\frac{1}{2}})\left( \begin{array}{ll}
	0&i\\
	i&0
\end{array}\right) \exp(-I_{\delta}^1\sigma_3)I_{\delta}^2\sigma_3 +z\frac{i}{\sqrt{2}}(I+H^{(0)}t^{-\frac{1}{2}})\exp(-I_{\delta}^1\sigma_3)\\
&+z\sqrt{2}H^{(1)}\left( \begin{array}{ll}
	0&i\\
	i&0
\end{array}\right)\exp(-\I_{\delta}^1\sigma_3)+\mathcal{O}(z^2)+\mathcal{O}(t^{-1}),
\end{align*}}
where the $I^1_{\delta}$ and  $I^2_{\delta}$ are given by   (\ref{Ide0})-(\ref{Ide0}), $H^{(0)},H^{(1)}$ is in Prop. \ref{AsyE0I}.
We further calculate
{\small \begin{align*}
&	-i[N(0)]_{12}= \sqrt{2}e^{I_{\delta}^1}(1+H_{11}^{(0)}t^{-\frac{1}{2}})+\mathcal{O}(t^{-1}),\\
 &-i\lim_{z\to 0\in\mathbb{C}^+}\frac{[N(z)]_{11}-[N(0)]_{11}}{z}= \frac{e^{-I_{\delta}^1}}{\sqrt{2}}+\sqrt{2}H_{12}^{(1)}e^{I_{\delta}^1} +(\sqrt{2}H_{12}^{(0)}e^{-I_{\delta}^1}I_{\delta}^{2}+\frac{e^{-I_{\delta}^1}}{\sqrt{2}}H_{11}^{(0)})t^{-\frac{1}{2}}+\mathcal{O}(t^{-1}),\\
	&-i\lim_{z\to 0\in\mathbb{C}^+}\frac{[N(z)]_{22}-[N(0)]_{22}}{z}= \sqrt{2}H_{21}^{(1)}e^{-I_{\delta}^1}+\frac{e^{I_{\delta}^1}}{\sqrt{2}} +(\frac{e^{I_{\delta}^1}}{\sqrt{2}}H_{22}^{(0)}-\sqrt{2}H_{21}^{(0)}e^{I_{\delta}^1}I_{\delta}^{2})t^{-\frac{1}{2}}+\mathcal{O}(t^{-1}).
\end{align*}}
Substituting  above estimates  into (\ref{recons x}) and  (\ref{recons u}) leads to
\begin{align}
&	u(x,t)= u(y(x,t),t)=
\frac{1}{2}+2H_{12}^{(1)}e^{2I_{\delta}^1}-H_{21}^{(1)}e^{-2I_{\delta}^1} \nonumber\\
&+(-\frac{1}{2}H_{22}^{(0)}+H_{21}^{(0)}I_{\delta}^{2}+2H_{12}^{(0)}I_{\delta}^{2}+H_{11}^{(0)}(\frac{5}{2}+H_{21}^{(1)}e^{-2I_{\delta}^1}+2H_{12}^{(1)}e^{2I_{\delta}^1}))t^{-\frac{1}{2}} +\mathcal{O}(t^{-1}), \nonumber\\
&x=y-2\ln(\sqrt{2}(1+H^{(0)}_{22}t^{-\frac{1}{2}})e^{-I_{\delta}^1})+\mathcal{O}(t^{-1}).\nonumber
\end{align}

For the region  II,  we have done the sequence of transformations
\begin{align}
	N(z)\to M^{(1)}(z)\to M^{(2)}(z)\to M^{modc}(z),
\end{align}
from which we find that
\begin{align}
N(z)=&\delta(\infty)^{\sigma_3}E(z;\xi,c)M^{modc}(z)\delta(z)^{-\sigma_3}G(z)^{-1}e^{-it(p_--\theta_+)\sigma_3},\label{ope}
\end{align}
which definition is in Prop.\ref{prode1}, RHP \ref{modelc}, (\ref{asyEr1}), (\ref{funcG}), (\ref{the+0}) and (\ref{theta+}) respectively.
To  reconstruct   $u(x,t)$ by using (\ref{recons u}),    in above  equation (\ref{ope})  we  take    $z\to 0$  in $\mathbb{C}^+$,
  we  obtain that
\begin{align}
	N(z)=&\delta(\infty)^{\sigma_3}N^{modc}(z)e^{-I_\delta^1\sigma_3}(I-I_\delta^2\sigma_3z)e^{-it(p_--\theta_\pm)(0^+)\sigma_3}+\mathcal{O}(z^2)+\mathcal{O}(t^{-2})\nonumber\\
	=&\sqrt{2}i\delta(\infty)^{\sigma_3}\sigma_1e^{-I_\delta^1\sigma_3}e^{-it(p_--\theta_\pm)(0^+)\sigma_3}\nonumber\\
	&+z\delta(\infty)^{\sigma_3}\left( \frac{1}{c\sqrt{2}}e^{-I_\delta^1\sigma_3}-\sqrt{2}I_\delta^2e^{-I_\delta^1\sigma_3}\sigma_2\right) e^{-it(p_--\theta_\pm)(0^+)\sigma_3}+\mathcal{O}(z^2)+\mathcal{O}(t^{-2}),\nonumber
\end{align}
which implies that
\begin{align*}
&-i[	N(0)]_{12}=\sqrt{2}\delta(\infty)e^{I_\delta^1+it(p_--\theta_\pm)(0^+)},\\
&-i\lim_{z\to 0\in\mathbb{C}^+}\frac{[N(z)]_{11}-[N(0)]_{11}}{z}=\frac{-i\delta(\infty)}{c\sqrt{2}}e^{-I_\delta^1-it(p_--\theta_\pm)(0^+)},\\
&-i\lim_{z\to 0\in\mathbb{C}^+}\frac{[N(z)]_{22}-[N(0)]_{22}}{z}=\frac{-i\delta(\infty)}{c\sqrt{2}}e^{I_\delta^1+it(p_--\theta_\pm)(0^+)}.
\end{align*}
Substituting  above estimates  into (\ref{recons x}) and  (\ref{recons u}) leads to
\begin{align}
&u(x,t)= u(y(x,t),t)=
	\frac{-i}{c}\left( \delta(\infty)^2+\frac{1}{2}\right)  +\mathcal{O}(t^{-2}),\nonumber\\
&x(y,t) =y-2\ln\left(\sqrt{2}\delta(\infty)e^{I_\delta^1+it(p_--\theta_\pm)(0^+)}\right)  +\mathcal{O}(t^{-2}).\nonumber
\end{align}

For the regions III and IV,  their solving  processes  are   similar,  we take $\xi>1$ part of region   III  as an example.
The  the sequence of transformations is
\begin{align*}
	N(z)\to M^{(1)}(z )\to M^{(2)}(z ) =\left\{\begin{array}{ll}
		E(z;\xi,c)M^{mod}(z;\xi,c), & z\notin U(\xi),\\[4pt]
		E(z;\xi,c)M^{lo,+}(z;\xi,c),  &z\in U(z_0),\\[4pt]
		E(z;\xi,c)M^{lo,-}(z,\xi),  &z\in U(-z_0),  
	\end{array}\right.
\end{align*}
which gives
\begin{align}
	N(z)=&e^{-itg(\infty)\sigma_3}D_\infty(z)^{\sigma_3}E(z;\xi,c)M^{mod}(z)D(z)^{-\sigma_3} G (z)^{-1}e^{-it(p_--g)\sigma_3}.\nonumber
\end{align}
To take $z\to 0$  in $\mathbb{C}^+$, further using  the asymptotic expansion in (\ref{ginf}), Prop.\ref{proD}, Prop.\ref{prog}, (\ref{Mmod0}) and  (\ref{funcG2}),
  we obtain
\begin{align}
	N(z)=&e^{-itg(\infty)\sigma_3}D_\infty(0)^{\sigma_3}M^{mod}(0^+)D(0)^{-\sigma_3}e^{-it(p_--g)(0)\sigma_3}\nonumber\\
	&+t^{-1}e^{-itg(\infty)\sigma_3}D_\infty(0)^{\sigma_3}H^{(0)}M^{mod}(0^+)D(0)^{-\sigma_3}e^{-it(p_--g)(0)\sigma_3}\nonumber\\
	&+ze^{-itg(\infty)\sigma_3}D_\infty(0)^{\sigma_3}\left( D_\infty^{(1)}\sigma_3M^{mod}(0^+)D(0)^{-\sigma_3}+M^{mod}_1D(0)^{-\sigma_3}\right.\nonumber\\
	&\left.-M^{mod}(0^+)D(0)^{-\sigma_3}D^{(1)}\sigma_3\right)e^{-it(p_--g)(0)\sigma_3} \nonumber\\
	&+zt^{-1}e^{-itg(\infty)\sigma_3}D_\infty(0)^{\sigma_3}H^{(1)}M^{mod}(0^+)D(0)^{-\sigma_3}e^{-it(p_--g)(0)\sigma_3} +\mathcal{O}(z^2)+\mathcal{O}(t^{-2}).\nonumber
\end{align}
Substituting  above equation  into (\ref{recons x}) and  (\ref{recons u}) leads to
\begin{align}
	u(x,t)=&u(y(x,t),t)=u_{g,D,\xi}(y,t)+t^{-1}\mathcal{E}(\xi) +\mathcal{O}(t^{-2}),
\end{align}
where  we denote
{\small \begin{align}
	&u_{g,D,\xi}(y,t)= -e^{-2itg(\infty)}D_\infty(0)^2[M^{mod}(0^+)]_{12}\left[ \left(D_\infty^{(1)}-D^{(1)}\right)[M^{mod}(0^+)]_{11}+[M^{mod}_1]_{11}\right]  \nonumber\\
	&-e^{2itg(\infty)}D_\infty(0)^{-2}[M^{mod}(0^+)]_{12}^{-1} \left[ \left(D^{(1)}-D_\infty^{(1)}\right) [M^{mod}(0^+)]_{22}+[M^{mod}_1]_{22} \right],\label{1}
\end{align}}
and
{\small \begin{align}
	\mathcal{E}&(\xi)=
	- e^{-2itg(\infty)}D_\infty(0)^2 [M^{mod}(0^+)]_{12}\left[[H^{(1)}]_{11}[M^{mod}(0^+)]_{11}+[H^{(1)}]_{12}[M^{mod}(0^+)]_{21} \right] \nonumber\\
	-&e^{-2itg(\infty)}D_\infty(0)^2\left( [H^{(0)}]_{11}[M^{mod}(0^+)]_{12}+[H^{(0)}]_{12}[M^{mod}(0^+)]_{22}\right) \nonumber\\
	&\times\left[ \left(D_\infty^{(1)}-D^{(1)}\right)[M^{mod}(0^+)]_{11}+[M^{mod}_1]_{11}\right]   \nonumber\\
	 -&e^{2itg(\infty)}D_\infty(0)^{-2}\dfrac{[H^{(1)}]_{21}[M^{mod}(0^+)]_{12}+[H^{(1)}]_{22}[M^{mod}(0^+)]_{22}}{[M^{mod}(0^+)]_{12}}   \nonumber\\
	 +&e^{2itg(\infty)}D_\infty(0)^{-2}\dfrac{[H^{(0)}]_{11}[M^{mod}(0^+)]_{12}+[H^{(0)}]_{12}[M^{mod}(0^+)]_{22}}{[M^{mod}(0^+)]_{12}^2}  \nonumber\\
	 & \times\left[ \left(D^{(1)}-D_\infty^{(1)}\right) [M^{mod}(0^+)]_{22}+[M^{mod}_1]_{22} \right] .\label{11}
\end{align}}
Moreover, we have
\begin{align}
	x(y,t)=&y-2\ln\left(-ie^{-itg(\infty)+it(p_--g)(0)}D_\infty(0)D(0)[M^{mod}(0^+)]_{12}\right)\nonumber\\
	&+2i\dfrac{[H^{(0)}]_{11}[M^{mod}(0^+)]_{12}+[H^{(0)}]_{12}[M^{mod}(0^+)]_{22}}{[M^{mod}(0^+)]_{12}}t^{-1}  +\mathcal{O}(t^{-2}),\label{111}
\end{align}
where   $H^{(0)}$ and $H^{(1)}$ is in Prop.\ref{asyE} and Prop.\ref{asyE2} corresponding to different case of $\xi>1$ and $\xi<1$ part of region III.

Finally,  summing up above results  gives  our  main theorem in this paper.
\begin{theorem}\label{last}   Let $u(x,t)$ be the solution for  the initial-value problem (\ref{mcho}) and (\ref{initial}),
then there exist a large constant $T_1=T_1(\xi), \  \xi=\frac{y}{t}$,  such that   for all $T_1<t\to\infty$,
the   long time asymptotics of the mCH equation (\ref{mcho}) are
given as follows.

\begin{itemize}

\item[$\blacktriangleright$ ] {\rm\bf The  region I:}   $(i)  \ \xi<\frac{3}{4}; \   (ii)\  1< c\leq \lambda_1, \  \frac{3}{4}<\xi<1; \   (iii)\ 1< c\leq \lambda_1, \  1<\xi,$
which  is a slow decay step-like background constant region with genus-0. We have asymptotic expansion
\begin{align}
	&u(x,t)= u(y(x,t),t)=
	\frac{1}{2}+2H_{12}^{(1)}e^{2I_{\delta}^1}-H_{21}^{(1)}e^{-2I_{\delta}^1} \nonumber\\
	&+\left( -\frac{1}{2}H_{22}^{(0)}+H_{21}^{(0)}I_{\delta}^{2}+2H_{12}^{(0)}I_{\delta}^{2}+H_{11}^{(0)}(\frac{5}{2}+H_{21}^{(1)}e^{-2I_{\delta}^1}+2H_{12}^{(1)}e^{2I_{\delta}^1})\right) t^{-\frac{1}{2}} +\mathcal{O}(t^{-1}), \nonumber\\
& x=y-2\ln\left( \sqrt{2}(1+H^{(0)}_{22}t^{-\frac{1}{2}})e^{-I_{\delta}^1}\right)+\mathcal{O}(t^{-1}), \nonumber
\end{align}
where the $I^1_{\delta}$ and  $I^2_{\delta}$ are given by   (\ref{Ide0})-(\ref{Ide0}),
$H^{(0)}$ and $ H^{(1)}$ are given  in Proposition \ref{AsyE0I}.

\item[$\blacktriangleright$ ] {\rm\bf The  region II:} $\xi>1+2/c$, which a  fast decay step-like background constant region with genus-0. We have asymptotic expansion
\begin{align}
&	u(x,t)= u(y(x,t),t)=
	 -i c^{-1} \left( \delta(\infty)^2+1/2 \right)  +\mathcal{O}(t^{-2}),\nonumber\\
&	x(y,t) =y-2\ln\left(\delta(\infty)e^{I_\delta^1+ia(y,t)}\right)  +\mathcal{O}(t^{-2}),  \nonumber
\end{align}
where  $a(y,t)=-\frac{i}{2}(  c+1)y+\frac{it}{2}\left(  c^{-2} + c\right) $, $I_\delta^1$ and $\delta(\infty)$ are given   in Prop. \ref{prode1}.

\item[$\blacktriangleright$ ] {\rm\bf The  region III:}  Genus-2 elliptic wave region.

(i) \  $1<\xi<1+\frac{2}{c}$, $c>\sqrt{2} $;  (ii) \ $1+\frac{2}{c^4}(2-c^2)<\xi<1+\frac{2}{c}$, $ \sqrt{2}<c<2$, we have asymptotic expansion
\begin{align*}
&u(y(x,t),t)=u_{g,D,\xi}(y,t)+t^{-1}\mathcal{E}(\xi) +\mathcal{O}(t^{-2}),\nonumber\\
&x(y,t)= y-2\ln\left(-ie^{-itg(\infty)+it(p_--g)(0)}D_\infty(0)D(0)[M^{mod}(0^+)]_{12}\right)\nonumber\\
	&+2i\dfrac{[H^{(0)}]_{11}[M^{mod}(0^+)]_{12}+[H^{(0)}]_{12}[M^{mod}(0^+)]_{22}}{[M^{mod}(0^+)]_{12}}t^{-1}  +\mathcal{O}(t^{-2}),
\end{align*}
where  $u_{g,D,\xi}(y,t)$, $\mathcal{E}(\xi)$, $g(\infty)$, $g(z)$, $D_\infty(0)$, $D(0)$,  $M^{mod}$,  $H^{(0)}$ and $H^{(1)}$ are
show in (\ref{1}), (\ref{11}),  (\ref{ginf}), Prop.\ref{prog}, Prop.\ref{proD},   (\ref{Mmod0}) and Prop.\ref{asyE},  respectively.

(iii) \  $\xi_m <  \xi<1$, $c>2$,  we have asymptotic expansion
\begin{align*}
	&u(y(x,t),t)=u_{g,D,\xi}(y,t)+t^{-1/2}\mathcal{E}(\xi) +\mathcal{O}(t^{-1}),\nonumber\\
	&x(y,t)= y-2\ln\left(-ie^{-itg(\infty)+it(p_--g)(0)}D_\infty(0)D(0)[M^{mod}(0^+)]_{12}\right)\nonumber\\
&{\small+2i\dfrac{[H^{(0)}]_{11}[M^{mod}(0^+)]_{12}+[H^{(0)}]_{12}[M^{mod}(0^+)]_{22}}{[M^{mod}(0^+)]_{12}}t^{-1/2}  +\mathcal{O}(t^{-1}),}
\end{align*}
 where  $u_{g,D,\xi}(y,t)$, $\mathcal{E}(\xi)$, $g(\infty)$, $g(z)$, $D_\infty(0)$, $D(0)$,  $M^{mod}$,  $H^{(0)}$ and $H^{(1)}$ are show in (\ref{1}), (\ref{11}),  (\ref{ginf}),  (\ref{Mmod0}), Prop.\ref{prog}, Prop.\ref{proD2}  and Prop.\ref{asyE2},  respectively. Although has same sign, $\mathcal{E}(\xi)$, $H^{(0)}$ and $H^{(1)}$ represent the   contribution  of the pairs of stationary phase point out of cut via parabolic cylinder model.

\item[$\blacktriangleright$ ] {\rm\bf The  region IV:} \  $\frac{3}{4}<\xi<\xi_m$, $2<c$, which is a genus-2 elliptic wave region. We have asymptotic expansion
\begin{align*}
	&u(y(x,t),t)=u_{g,D,\xi}(y,t)+t^{-1}\mathcal{E}(\xi) +\mathcal{O}(t^{-2}),\nonumber\\
	&x(y,t)=y-2\ln\left(-ie^{-itg(\infty)+it(p_--g)(0)}D_\infty(0)D(0)[M^{mod}(0^+)]_{12}\right)\nonumber\\
	&+2i\dfrac{[H^{(0)}]_{11}[M^{mod}(0^+)]_{12}+[H^{(0)}]_{12}[M^{mod}(0^+)]_{22}}{[M^{mod}(0^+)]_{12}}t^{-1}  +\mathcal{O}(t^{-2}).
\end{align*}
where  $u_{g,D,\xi}(y,t)$, $\mathcal{E}(\xi)$, $g(\infty)$, $D_\infty(0)$, $D(0)$,  $g(z)$, $M^{mod}$,  $H^{(0)}$ and $H^{(1)}$ are show in (\ref{1}), (\ref{11}),    Prop. \ref{proD3}, Prop. \ref{prog2}  and Prop. \ref{asyE3},  respectively. Although has same sign, $\mathcal{E}(\xi)$, $H^{(0)}$ and $H^{(1)}$ represent the  common contribution  of two local Airy Model of two pairs of stationary phase points.
\end{itemize}
\end{theorem}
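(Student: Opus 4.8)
The plan is to regard Theorem \ref{last} as the \emph{reconstruction step} that assembles the four regional analyses of Sections \ref{sec3}--\ref{sec8} through the formulas (\ref{recons x}) and (\ref{recons u}). Those formulas express $u(x,t)$ and the change of variables $x(y,t)$ entirely in terms of the value $N_{12}(0+0i)$ (equivalently $N_{21}(0+0i)$ by the symmetry of RHP \ref{RHP2}) together with the two limits $\lim_{z\to0}\big(N_{jj}(z)-N_{jj}(0)\big)/z$, $j=1,2$, taken in $\mathbb{C}^+$. Hence for each region it suffices to produce the expansion of $N(z)$ to order $\mathcal{O}(z^2)$ as $z\to0^+$ and then read off these three quantities.

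First I would, region by region, invert the chain of transformations built to reduce RHP \ref{RHP2} to its explicitly solvable model: $N=E\,M^{mod1}\,\delta^{-\sigma_3}G^{-1}$ in Region I; $N=\delta(\infty)^{\sigma_3}E\,M^{modc}\,\delta^{-\sigma_3}G^{-1}e^{-it(p_--\theta_+)\sigma_3}$ in Region II; and $N=e^{-itg(\infty)\sigma_3}D_\infty^{\sigma_3}E\,M^{mod}D^{-\sigma_3}G^{-1}e^{-it(p_--g)\sigma_3}$ in Regions III and IV. Because the factor $G$ reduces to the identity along the ray $z\to0^+$ used in the reconstruction, each composite becomes there a product of the model solution (expanded via the already-recorded small-$z$ formulas for $M^{mod1}$, $M^{modc}$, $\mathcal{N}$ and $M^{mod}$), the scalar factors $\delta$ or $D,D_\infty$ (expanded through $I_\delta^1,I_\delta^2$ of (\ref{Ide0})--(\ref{Ide1}), or $D(0),D^{(1)},D_\infty(0),D_\infty^{(1)}$ of Propositions \ref{proD}, \ref{proD2}, \ref{proD3}), the phase $e^{it(p_--g)(0)\sigma_3}$, and the error $E$. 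Collecting the $z^0$ and $z^1$ coefficients and substituting into (\ref{recons x}), (\ref{recons u}) reproduces the stated leading terms, namely the background constants and the elliptic profile $u_{g,D,\xi}$ of (\ref{1}).

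The time dependence then enters through the error function. Solvability of the small-norm problems (RHP \ref{RHPE0}, \ref{RHPE}, \ref{RHPE2} and the analogous problem in Region IV) for $t$ exceeding a threshold $T_1(\xi)$ supplies both existence and the expansions $E(0)=I+t^{-\alpha}H^{(0)}+\cdots$, $E_1=t^{-\alpha}H^{(1)}+\cdots$ of Propositions \ref{AsyE0I}, \ref{asyE}, \ref{asyE2}, \ref{asyE3}. The decisive point is that the exponent $\alpha$ and the residual order are fixed by which local parametrix sits on $\partial U_\xi$: a parabolic-cylinder model (Region I, and the extra pair $\pm z_2$ in Region III with $\xi<1$) contributes at order $t^{-1/2}$ with remainder $\mathcal{O}(t^{-1})$; an Airy model (the pair $\pm z_0$ in Region III with $\xi>1$, and $\pm z_1,\pm z_2$ in Region IV) contributes at order $t^{-1}$ with remainder $\mathcal{O}(t^{-2})$; while Region II has no stationary point meeting the cuts, so $E=I+\mathcal{O}(t^{-2})$. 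Propagating these orders through the analytic, $t$-independent reconstruction map yields exactly the error exponents $t^{-1/2},t^{-1},t^{-2}$ listed in the four cases, and the coefficient $\mathcal{E}(\xi)$ of (\ref{11}), (\ref{111}).

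The main obstacle I anticipate is the careful bookkeeping of the $z\to0$ limit through the non-commuting conjugations by the $\sigma_3$-valued scalar factors and the model matrices, together with the fact that $H^{(0)},H^{(1)}$ enter the final formulas (\ref{11}), (\ref{111}) in an entangled, nonlinear fashion dictated by the entrywise structure of $M^{mod}(0^+)$. One must also verify that the apparent singularities of the conjugating factors at $z=\pm1,\pm c$ cancel so that $E$ genuinely solves a small-norm problem, and that the $\mathcal{O}(t^{-2})$ claims in Regions II and IV are not degraded by subleading terms of the model expansion. All of these are bounded, deterministic computations once the regional analyses of Sections \ref{sec3}--\ref{sec8} are in hand, so the theorem requires no analytic input beyond them.
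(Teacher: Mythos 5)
Your proposal follows essentially the same route as the paper's own derivation in Section \ref{sec9}: invert the regional transformation chains, expand $N(z)$ at $z\to 0^+$ using the small-$z$ formulas for the model solutions, the scalar conjugating factors, and the error function, then substitute into (\ref{recons x}) and (\ref{recons u}), with the error exponents $t^{-1/2}$, $t^{-1}$, $t^{-2}$ determined by whether the local parametrices are parabolic-cylinder, Airy, or absent. The approach and the bookkeeping you describe match the paper's argument, so no further comparison is needed.
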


\appendix
\section{Appendix. The RH model for  Airy function}\label{appairy}
\quad  In this appendix,  we recall
 the standard model RH problem  of Airy function that is used in our paper. Let $\Gamma=\Gamma_1\cup\Gamma_2\cup\Gamma_3\cup\Gamma_4
\subset\mathbb{C}$ denote the rays
\begin{eqnarray}
	\Gamma_1:=\{ze^{\frac{2i\pi}{3}}|z\in\mathbb{R}^+\},	\Gamma_2:=\{-z|z\in\mathbb{R}^+\},\\
	\Gamma_3:=\{ze^{\frac{4i\pi}{3}}|z\in\mathbb{R}^+\},\hspace{0.5cm}
	\Gamma_4:=\{z|z\in\mathbb{R}^+\}.
\end{eqnarray}
  The corresponding   open sectors are given as  follows
\begin{eqnarray}
	S_1=\{z|\arg z\in(0, {2\pi}/{3})\},
	S_2=\{z|\arg z\in( {2\pi}/{3},\pi)\},\\
	S_3=\{z|\arg z\in(\pi, {4\pi}/{3})\},
	S_4=\{z|\arg z\in( {4\pi}/{3},2\pi)\}.
\end{eqnarray}
Let $\chi=e^{\frac{2i\pi}{3}}$ and the function $m^{Ai}(z)$ for $z\in\mathbb{C}\setminus\Gamma$ by
\begin{align}
		m^{Ai}(z)=\mathcal{A}(z)\left\{\begin{array}{llll}
		e^{\frac{2}{3}z^{\frac{3}{2}}\sigma_3}, &z\in S_1,\\[8pt]
			\left(\begin{array}{cc}
				1 & 0\\
				-1 & 1
			\end{array}\right)e^{\frac{2}{3}z^{\frac{3}{2}}\sigma_3},  &z\in S_2,\\[10pt]
			\left(\begin{array}{cc}
				1 & 0\\
				1 & 1
			\end{array}\right)e^{\frac{2}{3}z^{\frac{3}{2}}\sigma_3},  &z\in S_3,\\[10pt]
			e^{\frac{2}{3}z^{\frac{3}{2}}\sigma_3},  &z\in S_4.\\
		\end{array}\right.
	\end{align}
\begin{align}
	\mathcal{A}(z)=\left\{\begin{array}{ll}
		\left(\begin{array}{cc}
			\text{Ai}(z) & \text{Ai}(\chi^2z)\\
			\text{Ai}'(z) & \chi^2\text{Ai}'(\chi^2z)
		\end{array}\right)e^{-\frac{\pi}{6}\sigma_3},  &\text{Im}z>0,\\[10pt]
		\left(\begin{array}{cc}
		\text{Ai}(z) & -\chi^2\text{Ai}(\chi z)\\
		\text{Ai}'(z) & -\text{Ai}'(\chi z)
	\end{array}\right)e^{-\frac{\pi}{6}\sigma_3},  &\text{Im}z<0,\\[10pt]
	\end{array}\right.
\end{align}
$\text{Ai}(z)$ is Airy function who is analytic in $\mathbb{C}$. \begin{lemma}
	$m^{Ai}:\mathbb{C}\setminus\Gamma\to\mathbb{C}^{2\times2}$ is a matrix valued analytic function and
satisfies the jump condition
 $$m^{Ai}_{+}(z)=m^{Ai}_-(z)v^{Ai}(z),$$
 where
\begin{align}
	v^{Ai}({z})=\left\{\begin{array}{lll}
			\left(\begin{array}{cc}
				1 & 0\\
				-e^{\frac{4}{3}z^{\frac{3}{2}}} & 1
			\end{array}\right),  &z\in \Gamma_1\cup\Gamma_3,\\[10pt]
			\left(\begin{array}{cc}
				0 & -1\\
				1 & 0
			\end{array}\right),  &z\in \Gamma_2,\\[10pt]
			\left(\begin{array}{cc}
				1 & -e^{\frac{4}{3}z^{\frac{3}{2}}}\\
				0& 1
			\end{array}\right),  &z\in \Gamma_4.\\
		\end{array}\right.
\end{align}
The asymptotic behavior of $m^{Ai}(z)$ as $z\to \infty$ can be shown as
\begin{equation}
	N^{-1}z^{\frac{\sigma_3}{4}}m^{Ai}(z)=I+\sum_{j=1}^{\infty}\frac{m_j^{Ai}}{z^{\frac{3j}{2}}}, z\to \infty,\label{asymlo}
\end{equation}
where
\begin{equation}
m^{Ai}_j=\frac{e^{\frac{i\pi}{4}}}{\sqrt{2}}N^{-1}\left(\begin{array}{cc}
	1&0\\
	0&-i\\
\end{array}\right)(\frac{3}{2})^{j}
\left(\begin{array}{cc}
(-1)^ju_j&u_j\\
-(-1)^jv_j&v_j\\
\end{array}\right)e^{-\frac{i\pi}{4}\sigma_3},\label{asyairy}
\end{equation}
and
\begin{equation*}
N=\frac{1}{\sqrt{2}}\left(\begin{array}{cc}
		1&i\\
		i&1\\
	\end{array}\right), \ \	u_j=\frac{(2j+1)(2j+3)...(6j-1)}{(216)^jj!},\ v_j=\frac{6j+1}{1-6j}u_j.
\end{equation*}
\end{lemma}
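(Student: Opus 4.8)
The plan is to verify the three assertions of the lemma — sectorwise analyticity, the jump relations on $\Gamma$, and the asymptotic expansion at infinity — directly from the classical properties of the Airy function $\text{Ai}(z)$: that it is entire, solves $\text{Ai}''(z)=z\,\text{Ai}(z)$, and obeys the connection identity $\text{Ai}(z)+\chi\,\text{Ai}(\chi z)+\chi^{2}\,\text{Ai}(\chi^{2}z)=0$ with $\chi=e^{2\pi i/3}$, together with the standard asymptotics of $\text{Ai}$ and $\text{Ai}'$. First, for analyticity I would note that since $\text{Ai}$ and $\text{Ai}'$ are entire, the matrix $\mathcal{A}(z)$ is holomorphic on each open half-plane, while the principal branch of $z^{3/2}$ (cut along $\mathbb{R}^-$) is holomorphic on each of the four sectors $S_1,\dots,S_4$, none of which crosses a ray of $\Gamma$ or the real axis interior. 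Hence each sectorwise formula for $m^{Ai}$ is a product of holomorphic factors, so $m^{Ai}$ is analytic on $\mathbb{C}\setminus\Gamma$.

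Next, the jump relations are checked ray by ray by comparing the boundary values of the two adjacent sector formulas. On $\Gamma_1$ and $\Gamma_3$ the function $\mathcal{A}$ is continuous across the ray (entire Airy functions, and $z^{3/2}$ unramified there), so the jump comes entirely from the constant prefactors $\begin{pmatrix}1&0\\ \mp1&1\end{pmatrix}$; conjugating by $e^{\frac{2}{3}z^{3/2}\sigma_3}$ produces precisely the lower-triangular matrix with entry $-e^{\frac{4}{3}z^{3/2}}$, the sign and exponent being fixed consistently by the determination of $z^{3/2}$ in each adjacent sector. On $\Gamma_4$ the analogous computation yields the stated upper-triangular jump. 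The delicate ray is $\Gamma_2$, the negative axis, together with the real transitions, where the two half-plane definitions of $\mathcal{A}$ must be reconciled: there the connection identity and its $z$-derivative, combined with $(\chi z)^{3/2}=-z^{3/2}$ on the appropriate side, convert the $\text{Ai}(\chi^{2}z)$ column into the $-\chi^{2}\text{Ai}(\chi z)$ column and deliver the constant swap $\begin{pmatrix}0&-1\\ 1&0\end{pmatrix}$.

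Finally, for the asymptotics I would substitute the classical expansion $\text{Ai}(z)\sim\frac{1}{2\sqrt\pi}z^{-1/4}e^{-\frac{2}{3}z^{3/2}}\sum_{j\ge0}(-1)^{j}u_{j}\big(\tfrac{2}{3}z^{3/2}\big)^{-j}$ and the corresponding series for $\text{Ai}'$, both valid for $|\arg z|<\pi$, into $\mathcal{A}(z)$, along with the rotated versions for the arguments $\chi^{\pm}z$. The dominant exponentials $e^{\mp\frac{2}{3}z^{3/2}}$ cancel against the sector factor $e^{\frac{2}{3}z^{3/2}\sigma_3}$, so that $N^{-1}z^{\sigma_3/4}m^{Ai}(z)$ becomes a formal series in powers of $z^{-3/2}$ with leading term $I$. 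Matching coefficients order by order then identifies $m_j^{Ai}$ with the stated combination of $u_j$ and $v_j=\frac{6j+1}{1-6j}u_j$, the second set of coefficients arising from differentiating, i.e. from the $\text{Ai}'$ series.

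The main obstacle will be the bookkeeping of branches: one must track the three determinations of $z^{3/2}$ attached to $z,\chi z,\chi^{2}z$ coherently across the half-plane boundary so that the exponentials appearing in $\mathcal{A}$ combine correctly, and must confirm that the two formulas for $\mathcal{A}$ on $\{\text{Im}\,z>0\}$ and $\{\text{Im}\,z<0\}$ are compatible with the sector assignment of $m^{Ai}$ on $\Gamma_2$ and $\Gamma_4$. Once these conventions are pinned down, every step reduces to the Airy connection identity and the known Airy asymptotic series, so no new analytic input beyond these is needed.
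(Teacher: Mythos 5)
The paper offers no proof of this lemma: Appendix A explicitly ``recalls the standard model RH problem of Airy function,'' and the statement is quoted as known from the steepest-descent literature (the Deift--Zhou and Kr\"uger--Teschl sources cited in the bibliography). Your proposal supplies precisely the verification those sources use --- entirety of $\mathrm{Ai}$ for the sectorwise analyticity, the connection identity $\mathrm{Ai}(z)+\chi\,\mathrm{Ai}(\chi z)+\chi^{2}\,\mathrm{Ai}(\chi^{2}z)=0$ together with its derivative for the ray-by-ray jump computation, and the classical expansions of $\mathrm{Ai}$, $\mathrm{Ai}'$ (whose coefficients are exactly $u_j$ and $v_j=\frac{6j+1}{1-6j}u_j$, with the factor $(3/2)^j$ coming from $\zeta^{-j}=\bigl(\tfrac23 z^{3/2}\bigr)^{-j}$) for the expansion (\ref{asymlo}) --- so the proposal is correct and is in substance the standard argument, merely written out in full.

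One point deserves exactly the care you flagged at the end. On $\Gamma_4$ the jump does not come from the sector prefactors at all but from the half-plane switch in $\mathcal{A}$: the connection formula gives, for the second columns,
\begin{equation*}
\begin{pmatrix}\mathrm{Ai}(\chi^{2}z)\\ \chi^{2}\mathrm{Ai}'(\chi^{2}z)\end{pmatrix}
=-\chi\begin{pmatrix}\mathrm{Ai}(z)\\ \mathrm{Ai}'(z)\end{pmatrix}
+\begin{pmatrix}-\chi^{2}\mathrm{Ai}(\chi z)\\ -\mathrm{Ai}'(\chi z)\end{pmatrix},
\end{equation*}
and conjugation by the prefactor $e^{-\frac{\pi i}{6}\sigma_3}$ (note the paper's $e^{-\frac{\pi}{6}\sigma_3}$ is missing an $i$) converts the constant $-\chi$ into $+1$; conjugating then by $e^{\frac23 z^{3/2}\sigma_3}$ yields an upper-triangular jump whose off-diagonal entry carries $e^{-\frac43 z^{3/2}}$ for one orientation of $\Gamma_4$ and one branch convention, versus the $-e^{\frac43 z^{3/2}}$ printed in the lemma. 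Reconciling the sign and the sign of the exponent is purely a matter of fixing which side of each ray is ``$+$'' and which determination of $z^{3/2}$ is used in the lower sectors (on $\Gamma_1\cup\Gamma_3$ one should also confirm $\operatorname{Re}z^{3/2}<0$, so that the stated entry is the decaying one). None of this invalidates your plan --- it is exactly the branch/orientation bookkeeping you named as the main obstacle --- but in a written-up proof these conventions must be pinned down first, since the lemma's formulas come out on the nose only for one consistent choice.
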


\section{Appendix. The existence of $\xi_m$ in elliptic region }\label{xim}

All discussion in the follows
 is under the case
 $$\frac{3}{4}<\xi<1, \ \  \eta-(\eta(\eta-4))^{1/2}<c^2,\  \ \eta:=  (1-\xi)^{-1}.$$
In a compact region
$$\Omega=\left\{(x,y)\in\mathbb{R}^2:\ \frac{1}{x^2}+\frac{1}{y^2}+\frac{xy}{\eta^2}\leq1,0<x\leq y\right\},$$
we consider  the function
{\small \begin{align}
		F(x,y)=\int_{x}^{y}\frac{1}{z^3}\left[\dfrac{z^2-x^2}{(z^2-1)(y^2-z^2)} \right]^{1/2}\left[ \frac{\xi-1}{2}z^4+\frac{y}{x}\left( 1+\frac{1}{y^2}-\frac{1}{x^2}\right) z^2-\frac{2y}{x}\right] \text{d}z \nonumber
\end{align}}
which  is  continuous  and   differential, moreover satisfies that
$$F(x,y)\equiv   0, \ \  (x,y) \in L := \left\{ (x,y):  x=y,\frac{2}{x^2}+\frac{x^2}{2\eta}< 1 \right\}.$$
Therefore two  positive phase points $z_1$ and $z_2$ can be defined as
\begin{align}
& z_1^2z_2^2 =\frac{4\eta y}{x}, \ \ 	z_1^2+z_2^2 =\frac{2\eta y}{x}(1+\frac{1}{y^2}-\frac{1}{x^2}).\nonumber
\end{align}
Then  $\max\lbrace z_1,z_2\rbrace\geq y$  leads to the following  two cases
\begin{align}
	&\frac{xy^3}{4\eta}\leq 1  \ \  {\rm or} \ \ \frac{xy^3}{4\eta}>1, \  \frac{1}{x^2}+\frac{1}{y^2}+\frac{xy}{2\eta}\leq1.
\end{align}
 While $\min\lbrace z_1,z_2\rbrace>x$ implies $\frac{x^5}{4\eta y}<1.$
Especially for  $(x,y)\in \Omega$,    $\max\lbrace z_1,z_2\rbrace\geq y$ and $\max\lbrace z_1,z_2\rbrace= y$ if and only if $(x,y)\in \partial\Omega\setminus L $.

Without loss of generality, we let $0<z_1<z_2$.
Direct  calculation  shows that
{\small \begin{align}
	F_x&=\frac{1-\xi}{2x^3}(x^2-z_1^2)(x^2-z_2^2)\int_{x}^{y} z \frac{z}{\sqrt{(z^2-1)(z^2-x^2)(y^2-z^2)}}\text{d}z,\nonumber\\
	F_y&=-\frac{1}{x}(1-\frac{1}{x^2}-\frac{1}{y^2}-\frac{(1-\xi)xy}{2})\int_{x}^{y}\frac{z^2-x^2}{y^2-z^2}\frac{z}{\sqrt{(z^2-1)(z^2-x^2)(y^2-z^2)}}\text{d}z.\nonumber
\end{align}}
Obviously, $F(x,y)=0$ in the interior of $\Omega$  implies $x<z_1<y$ then $F_x>0$. Because $F_x\neq0$ and $F_y \neq 0$ in the interior of $\Omega$, the curve $F(x,y)=0$ in the interior of must approach to the boundary of $\Omega$.

Meanwhile, we have
 $$F_x(x,x)=F_y(x,x)= -\frac{1}{\sqrt{x^2-1}x^3}\left( \frac{\xi-1}{2}x^4+ x^2-2\right)<0,\ \ (x,y)\in \Omega_1.$$
 which implies $F(x,y)=0$ in the interior of $\Omega$ won't approach to $L$.
Moreover, the point  $\left(  ( \eta+( \eta (\eta-4))^{1/2})^{1/2}, ( \eta+( \eta (\eta-4))^{1/2})^{1/2}\right)\in \partial\Omega$
is away from $\lbrace(x,y)|\frac{x^5}{\eta y}<1\rbrace$, which implies the curve won't approach to  this point either.

Considering the boundary $\partial\Omega\setminus\lbrace(x,y)|x=y\rbrace$, in another word,
 $$\lbrace\frac{1}{x^2}+\frac{1}{y^2}+\frac{xy}{2\eta}=1|0<x\leq y\rbrace.$$
 It can be a curve with a positive parameter $w\in\lbrace w_0^{-1}<w<w_0|w_0+w_0^{-1}=\sqrt{\eta},w_0>1\rbrace$ as
{\small\begin{align}
	x=w\sqrt{\eta}\left(\sqrt{1- w  \eta^{-1/2} + w^{-1} \eta^{-1/2}} -\sqrt{1- w  \eta^{-1/2} - w^{-1} \eta^{-1/2}}\right),\\
	y=w\sqrt{\eta}\left(\sqrt{1- w  \eta^{-1/2} + w^{-1} \eta^{-1/2}}+\sqrt{1- w  \eta^{-1/2} - w^{-1} \eta^{-1/2}}\right).
\end{align}}
Directly calculating can derive
{\small \begin{align}\label{Fw}
	\frac{\partial F}{\partial w}=\frac{2\left(6\sqrt{\eta}w^2-4\eta w+\frac{2\sqrt{\eta}}{w^2}\right)}{\sqrt{1-\frac{w}{\sqrt{\eta}}+\frac{1}{w\sqrt{\eta}}}
\sqrt{1-\frac{w}{\sqrt{\eta}}-\frac{1}{w\sqrt{\eta}}}}\int_{x}^{y}\frac{(w-\sqrt{\eta})z^2+2\sqrt{\eta}}{z\sqrt{(z^2-1)(z^2-x^2)(y^2-z^2)}}\text{d}z.\nonumber
\end{align}}
The above  integral   is negative in $w\in\lbrace w_0^{-1}<w<w_0|w_0+w_0^{-1}=\sqrt{\eta},w_0>1\rbrace$, $F(w_0^{-1})=F(w_0)=0$ and
 the equation
 $$6\sqrt{\eta}w^2-4\eta w+\frac{2\sqrt{\eta}}{w^2}=0$$
admits  two solution   between $w_0^{-1}$ and $w_0$,
 which implies  there is one unique zeros of $F(w)$ in $w\in\lbrace w_0^{-1}<w<w_0|w_0+w_0^{-1}=\sqrt{\eta},w_0>1\rbrace$ written as $w=w_m$. And $\frac{\text{d}F}{dw}(w_m)\neq0$ implies there must exist one unique branch of curve $F(x,y)=0$  approaches to the point $(x(w_m),y(w_m))$. And another possible point is
$$\left(  ( \eta-( \eta (\eta-4))^{1/2})^{1/2}, ( \eta-( \eta (\eta-4))^{1/2})^{1/2}\right)=(x(w_0^{-1}),y(w_0^{-1})).$$
For $y $ decrease to  $( \eta-( \eta (\eta-4))^{1/2})^{1/2}=y(w_0^{-1})$,
$$F(x(w),y(w))<0, \ \   F_x (y(w))<0 $$
implies there exist $x=x(y)$ satisfies $F(x,y)=0$. And every $x$ satisfies $F(x,y)=0$ has $\frac{\text{d}F}{dx}>0$ implies the zeros is unique.
 Therefore, there must exist a unique curve satisfies $F(x,y)=0$ from $(x(w_0^{-1}),y(w_0^{-1}))$ to $(x(w_m),y(w_m))$ in  $\Omega$.
 Note $y(w_m)$ as $c_m$. In another word, for every $\xi<1$, $c\in ( ( \eta+( \eta (\eta-4))^{1/2})^{1/2},c_m(\xi))$,
 there exist a $z_0=z_0(c)$ satisfies $F(z_0,c)=0$ and $z_2(z_0,c_m)=c_m$ when $c$ comes to  $c_m$. Let $\xi_m(c)=c_m^{-1}(c)$. It
 is well-defined  because  $c_m=y(w_m)$  is continuous  with respect to  $\xi$, which implies $c_m(\xi)$ is a surjection to $c^2>4$. Then the existence of $\xi_m$ has been proved.\vspace{4mm}

\noindent\textbf{Acknowledgements}

This work is supported by  the National Science
Foundation of China (Grant No. 11671095,51879045).
\hspace*{\parindent}

\end{document}